\newcommand{\Cr}[1]{\textbf{\textcolor{red}{{#1}}}}
\newcommand{\Cb}[1]{\textbf{\textcolor{blue}{{#1}}}}
\newcommand{\Cbr}[1]{\textbf{\textcolor{brown}{{#1}}}}
\newcommand{\Cp}[1]{\textcolor{violet}{{#1}}}
\numberwithin{equation}{section}
\let\Horig\H
\DeclareMathOperator{\E}{\mathbb{E}}
\DeclareMathOperator{\Var}{Var}
\DeclareMathOperator*{\diag}{diag}
\DeclareMathOperator*{\im}{Im}
\newcommand{\beq}{ \begin{equation} }
\newcommand{\eeq}{ \end{equation} }
\newcommand{\beqq}{ \begin{equation*} }
\newcommand{\eeqq}{ \end{equation*} }
\newcommand{\ii}{\mathrm{i}}
\newcommand{\bhp}[1]{\mathcal O\left(#1\right)}
\newcommand{\RN}[1]{%
  \textup{\uppercase\expandafter{\romannumeral#1}}%
}
\def \eqids {\overset{\mathcal{D}}{=}}
\def \eqidsgibbs {\overset{\mathfrak{D}}{=}}
\def \simeqids {\stackrel{\mathcal{D}}\simeq}
\def \simeqidsgibbs {\stackrel{\mathfrak{D}}\simeq}
\def \P {\mathbb{P}}
\def \R {\mathbb{R}}
\def \dd {\mathrm{d}}
\def \NN {\mathcal{N}}
\def \g {\gamma}
\def \e {\varepsilon}
\def \d {\delta}
\def \s {\sigma}
\newcommand{\rma}[1]{\mathcal{#1}}
\newcommand{\gib}[1]{\mathfrak{#1}}
\def \ham {\rma H}
\def \tmp {T}
\def \sGOE {M}
\def \eg {\lambda}
\def \pat {\rma Z}
\def \fe {\rma F}
\def \felim {F}
\def \ef {h}
\def \efv {\mathbf{g}}
\def \efve {g}
\def \efres {H}
\def \eigvm {O}
\def \cp {\gamma}
\def \scl{\dd \sigma_{scl}(x)}
\def \crv{\rma E_N}
\def \crvlim{\rma E}
\def \airy {\alpha}
\def \slim {\varsigma}
\def \twgoe  {\TW_{GOE}}
\def \egres {a}
\def \G {\rma G}
\def \ffl {\widetilde{\rma F}}
\def \mn {\alpha}
\newcommand{\go}{\rma S}
\newcommand{\lsl}{\rma L}
\def \mage {\mathcal M}
\def \mgn {\gib M}
\def \scp {\rma X}
\def \Gmgn {\rma G_{\mgn}}
\def \cpmgn {\cp_{\mgn}}
\def \vu {\mathbf{u}} 
\def \OM {\gib O}
\def \OG {\gib G}
\def \OP {\Omega}
\def \ve {\mathbf{e}}
\def \Gom {\rma G_\OM}
\def \cpm {\cp_\OM}
\def \evg {\nu}
\def \scpz {\scp^0}
\def \scpo {\scp^1}
\def \cO {\mathcal{O}}
\def \PP {\mathbb{P}}
\newtheorem{theorem}{Theorem}[section]
\newtheorem{lemma}[theorem]{Lemma}
\newtheorem{conjecture}[theorem]{Conjecture}
\newtheorem{definition}[theorem]{Definition}
\newtheorem{result}[theorem]{Result}
\theoremstyle{remark}
\newtheorem{remark}[theorem]{Remark}
\DeclareMathOperator{\sphve}{\sigma}
\DeclareMathOperator{\TW}{TW}
\def \ovl {\gib R}
\def \Govl {\rma G_{\ovl}}
\def \cpovl{\cp_{\ovl}}
\def \sovl{q_{\ovl}}
\DeclareMathOperator{\Bern}{\gib B} %{\gib {Bern}}
\def \bpm {b}
\def \OMz {\OM^0}
\def \ovlz {\ovl^0}
\def \sigmaovl {\sigma_{\ovl}}
\def \mgnz {\mgn^0}
\def \vecv {\mathbf{v}}
\def \stild {t} %{s_0}
\def \sumws {\Upsilon}
\def \so {p} %{q}
\def \sgr {r}
\def \Fos {F_0}
\def \chid {\scp_d}
\def \sva {\sigma_s}
\def \chimicro {\scp_{\text{micro}}}
\def \chidmic {\scp_{d,\text{micro}}}
\def \chidma {\scp_{d}^0}
\def \meso {\text{meso}}
\def \micro {\text{micro}}
\def \sigmamgn {\sigma_{\mgn}}
\def \sigmaom {\sigma_{\OM}}
\def \hefv {\hat{\efv}}
\def \mv {\mathbf{v}}
\def \spin {\boldsymbol{\sigma}}
\def \sphv {\spin}
\def \hsigma {\hat{\spin}}
\def \hmv {\hat{\mv}}
\begin{document}

\title{Spherical spin glass model with external field}
\author{Jinho Baik\footnote{Department of Mathematics, University of Michigan,
Ann Arbor, MI, 48109, USA, \texttt{baik@umich.edu}} \and Elizabeth Collins-Woodfin\footnote{Department of Mathematics, University of Michigan,
Ann Arbor, MI, 48109, USA, \texttt{elicolli@umich.edu}} \and 
Pierre Le Doussal\footnote{Laboratoire de Physique de l'Ecole Normale Sup\'erieure, ENS, Universit\'e PSL, CNRS, Sorbonne Universit\'e, Universit\'e de Paris, 75005 Paris, France,\texttt{ledou@lpt.ens.fr}} 
%\footnote{Laboratoire de Physique de l'\'Ecole Normale Sup\'erieure, 75231 Paris, France, 
\and 
Hao Wu\footnote{\texttt{lingluan@umich.edu}}}

\maketitle

\begin{abstract}
We analyze the free energy and the overlaps in the 2-spin spherical Sherrington Kirkpatrick spin glass model with an external field for the purpose of understanding the transition between this model and the one without an external field.  We compute the limiting values and fluctuations of the free energy as well as three types of overlaps in the setting where the strength of the external field goes to zero as the 
dimension of the spin variable grows. 
%of the system grows.  
In particular, we consider overlaps with the external field, the ground state, and a replica.  
Our methods involve a contour integral representation of the partition function along with random matrix techniques.  We also provide computations for the matching between different scaling regimes.  Finally, we discuss the implications of our results for susceptibility and for the geometry of the Gibbs measure. % at various scalings of $h$.  
Some of the findings of this paper are confirmed rigorously by Landon and Sosoe in their recent paper which came out independently and simultaneously. 
\end{abstract}

\setcounter{tocdepth}{1}
\tableofcontents

\section{Introduction}

\subsection{The model, definitions, and notation}

%\Cr{[P: In Refs. please use "Phys. Rev. Lett. and Rev. Mod. Phys. in [8,30]} \Cbr{[Fixed. Also changed [12] and [27] ]}

Spin glasses are disordered magnetic alloys \cite{BinderK_YoungAP_1986,Young1998} that provide a physical context for the development of various mathematical models with wide ranging applications, not only in physics, but also in computer science and other areas \cite{MR1026102}.  One of the most studied spin glass models is the Sherrington-Kirkpatrick (SK) model \cite{SherringtonKirkpatrick1975, Parisi1980, Talagrand2000, GuerraToninelli}, in which the spin variable $\spin$ is a random vector from the $N$-dimensional hypercube $\{-1,+1\}^N$.  In this paper, we focus on the continuous analog of SK - the spherical Sherrington-Kirkpatrick (SSK) model. 
The SSK model shares many properties with the SK model but is usually easier to analyze and thus allows us to obtain results that remain out of reach for the SK model.

Particular quantities of interest in the study of the SSK model are the free energy and overlaps in the presence of an external field.
In the absence of a field, 
the model exhibits, at large $N$, a transition to a spin glass phase at 
low temperature. In the presence of a field, this phase transition disappears.
However, there are interesting regimes when the field is scaled as a power of the dimension $N$.  Those transitional regimes (with respect to the external field) %regimes 
will be the focus of this paper. We compute the free energy as well as three types of overlaps, up to fluctuations, when $h$, the strength of the external field, converges to zero as the dimension, $N$, of the system grows.

For the SSK model, the spin variable $\spin=(\sigma_1, \cdots, \sigma_N)$ is in $S_{N-1}$, the sphere of radius $\sqrt{N}$ in $\R^N$:
\beqq
	S_{N-1}=\{ \spin \in \R^N : \|\spin\|=\sqrt{N}\}.
\eeqq
The 2-spin spherical Sherrington-Kirkpatrick (SSK) model with external field is defined by the Hamiltonian 
\beq
\label{eq:def_Hamilt}
\ham(\sphv) =  - \frac12 \sum_{i,j = 1}^{N} \sGOE_{ij} \sphve_i \sphve_j - \ef \sum_{i = 1}^N \efve_i \sphve_i 
	= - \frac 12\sphv \cdot \sGOE \sphv  - h  \, \efv\cdot \sphv 
\eeq 
for $\sphv\in S_{N-1}$, where $M$ and $\efv$ are respectively a random matrix and a random vector, specified below. 
The associated Gibbs measure is 
\beq\label{eq:Gibbsmeasure}
	p(\sphv)= \frac1{\pat_N} e^{-\beta \ham(\spin)} \quad \text{for $\sphv\in S_{N-1}$}
\eeq
where 
\beq
	\beta=1/T
\eeq
 denotes the inverse temperature. 
%The Gibbs average of an observable $O$  in a given sample is denoted by $\langle O\rangle$. 
%The fluctuations with respect to the Gibbs measure for given disorder variables are called thermal fluctuations (or quenched fluctuations) in a given sample. 
%The average with respect to the disorder variables, on the other hand, is denoted by $\overline{O}$. 
The partition function and the free energy per spin component are 
\beq
\label{eq:def_fe}
	\pat_N = \int_{S_{N - 1}} e^{-\beta \ham(\spin)} \dd \omega_N(\sphv) \quad\text{and} \quad \fe_N = \fe_N(T,h)= \frac{1}{N\beta} \log \pat_N,
\eeq 
where $\omega_N$ is the normalized uniform measure on $S_{N - 1}$. 
%We sometimes write $\fe_N(T,h)$ for the free energy to emphasize the dependence on temperature and external field strength.
Since the disorder variables $\sGOE$ and $\efv$ are random, the Gibbs measure is a random measure, which we also call a thermal measure, 
and the free energy $\fe_N$ is a random variable.  
We are interested in the fluctuations of the free energy when $\ef\to 0$ as $N\to \infty$. % in the critical regime to be determined below. 
%The purpose of this paper is to study the case $\ef\to 0$ as $N\to \infty$ systematically including up to the fluctuation term for the free energy.
%In a subsequent paper, we will study the three types of overlaps.  

We also consider the behavior of the spin variables taken from the Gibbs measure.
We focus on the following three particular overlaps.  
\begin{itemize}
\item (overlap with the external field) Define
\beq \label{eq:magntde}
	\mgn =\frac{\efv \cdot \sphv}{N}  .
\eeq
\item (overlap with the ground state) 
Let $\vu_1$ be a unit eigenvector corresponding to the largest eigenvalue of the disorder matrix $M$. The vectors $\pm\vu_1$ are the ground state in the absence of an external field, and we simply call them the ground states. Define 
\beq
	\OG  = \frac{|\vu_1\cdot \sphv|}{\sqrt{N}} \quad\text{and}\quad \OM=\OG^2
\eeq
\item (overlap with a replica) Let $\spin^{(1)}$ and $\spin^{(2)}$ be two independent spin variables from the Gibbs measure for the same sample (i.e. disorder variables $M_{ij}$ and $g_i$); $\spin^{(2)}$ is a replica of $\spin^{(1)}$. Define 
\beq
\mathfrak{R}=\frac{\spin^{(1)} \cdot\spin^{(2)}}N .
\eeq
\end{itemize}

The factors $N$ and $\sqrt{N}$ are included since $||\mathbf{u}_1||=1$, $\|\spin\|=\sqrt{N}$, and the expected value of $\|\efv\|^2=g_1^2+\cdots + g_N^2$ is $N$ (see below). 

The overlaps depend on the spin variable and also the disorder sample. Hence, there are two different expectations to consider. 
We consider the thermal (Gibbs) fluctuations of the overlaps for a given disorder sample. For some quantities, we also consider the sample-to-sample fluctuations of the thermal average.  
We denote the thermal (Gibbs) average for a given disorder sample by the bracket $\langle \cdot \rangle$. On the other hand, the sample-to-sample average of an observable $O$ is denoted by $\bar{O}$ or $\E_s[O]$.
%\Cr{[P:Maybe one should also indicate here how the disorder average is denoted, since it is used below]}. 
For example, the thermal averages 
\beqq
	\mage =\langle \mgn \rangle \quad \text{and}\quad \scp =\frac1{\ef} \langle \mgn \rangle 
\eeqq
are called magnetization and susceptibility, respectively. Many of the results of this paper are about the thermal fluctuations of  overlaps %$\mgn$
for a given disorder sample, i.e. for a given quenched disorder. 
%\Cbr{[Changed ``$\mgn$'' to ``overlaps'' since we have results many other overlaps other than $\mgn$]}

%%%%%%%%%%%%%%%%%%%%%%%%%%%%%%%%%%%%%
\subsection{Assumptions on disorder samples}

The disorder parameters in the Hamiltonian \eqref{eq:def_Hamilt} are chosen as follows.
%\Cr{[P: I use the command "slash eqref" which gives the parenthesis for equation numbers]}
We define 
\beqq
	\sGOE=(\sGOE_{ij})_{1\le i,j\le N}
\eeqq 
to be a disorder matrix given by a random symmetric matrix from the Gaussian orthogonal ensemble (GOE), which is a matrix ensemble whose probability is rotationally invariant.
For $i\le j$, the variables $\sGOE_{ij}$ are independent centered Gaussian random variables with variance $\frac1N(1 + \delta_{ij})$. 
By the symmetry matrix condition,  $\sGOE_{ij}= \sGOE_{ji}$ for $i>j$. 
We denote by 
\beq
	\lambda_1\ge\cdots\ge \lambda_N \quad \text{and} \quad \vu_1, \cdots, \vu_N
\eeq
the eigenvalues of the disorder matrix $\sGOE$ and corresponding unit eigenvectors. The GOE assumption implies that the eigenvalues and eigenvectors are independent of each other. 
The external field is given by the vector 
\beq
	\efv= (\efve_1, \efve_2, \cdots, \efve_N)^T, 
\eeq 
which we assume to be a standard Gaussian vector. 
The strength of the external field is denoted by a non-negative scalar $\ef$.
We also define
\beq
	n_i=\vu_i\cdot \efv,
\eeq
the overlap of the eigenvector and the external field. The external field and eigenvectors appear in the results and analysis of this paper only as this combination. 
The variables $\eg_i$ and $n_i$ are collectively called disorder variables. 
We call the joint realization of $\eg_i$ and $n_i$ a disorder sample throughout the paper.

Note that $(n_1, \cdots, n_N)$ is a standard Gaussian vector, whose entries are independent of the eigenvalues $\eg_1, \cdots, \eg_N$.
The analysis of this paper also applies, after some changes of formulas, to the case when $\efv=(1, \cdots, 1)^T$. However, we restrict to the Gaussian external field since the Gaussian assumption makes calculations simpler. 
% of in this case $n_i$ are ; the Gaussian vector $\efv$ is easier to work with since the 
%\Cr{[P: Maybe say that it is because the measure on $M$ is rotationally invariant- and maybe that rotational invariance should be also said above when you mention GOE - for non experts.]}

%%%%%%%%%%%%%%%%%%%%%%%%%%%%%%%%%%%%%
\subsection{Summary of prior research}

The purpose of this paper is to study the case $\ef\to 0$ systematically including up to the fluctuation term for the free energy and the three overlaps.  Here we provide a survey of some of the existing research as it connects to our study.

The free energy for the Hamiltonian \eqref{eq:def_Hamilt} above when $\ef=0$ converges to a deterministic value which was computed by Kosterlitz, Thouless and Jones in \cite{kosterlitz1976spherical}. 
The Hamiltonian \eqref{eq:def_Hamilt} %above 
is the 2-spin case of the more general $p$-spherical spin glass model which includes interactions between multiple spin coordinates. 
The limit of the free energy for the general spherical spin glass models which also includes the external field is given by the Crisanti-Sommers formula \cite{crisanti1992sphericalp}.
This formula is the spherical version of the Parisi formula \cite{parisi1980sequence} for the spins in hypercubes. 
The Parisi formula and Crisanti-Sommers formula are proved rigorously by Talagrand in \cite{TalagrandSK, TalagrandSSK}. 
The result of Kosterlitz, Thouless and Jones shows that when $\ef=0$, there are two phases: the spin glass phase when $T<1$ and the paramagnetic phase when $T>1$. On the other hand, they argued that when $\ef>0$, assuming that the external field is uniform, there is no phase transition.

The subleading (in $N$) term of the free energy depends on the disorder and hence it describes the fluctuations of the free energy. 
For $\ef=0$ and $T>1$, the fluctuation term is of order $N^{-1}$ and has the Gaussian distribution. 
This is proved for both the hypercube case \cite{AizenmanLebowitzRuelle, FrohlochZegarlinski, CometsNeveu} and the spherical case \cite{BaikLeeSSK}.
For $\ef=0$ and $T<1$, for the Hamiltonian above, the fluctuation term is of order $N^{-2/3}$ and has the GOE Tracy-Widom distribution \cite{BaikLeeSSK}. 
 Chen, Dey, and Panchenko performed a similar calculation for the case with Ising spins where $h>0$ is of order 1 and $g$ is the vector of all $1$s.  In this case, they find \cite{ChenDeyPanchenko} that 
the fluctuation term is of order $N^{-1/2}$ and has the Gaussian distribution for all temperature.
They claim that similar results hold for the spherical case and our results confirm this claim using a different method.  We note that their result also holds %not only in the 2-spin case but 
for mixed $p$-spin with even degree terms.
Chen and Sen \cite{ChenSen} computed the ground state energy for spherical mixed $p$-spin models (of which SSK is a specific case) and found that the fluctuations of the ground state energy are Gaussian in the presence of an external field.
 
 In \cite{FyodorovleDoussal}, the large deviations of the free energy distribution
was obtained at $T=0$ from a non-rigorous saddle point calculation of the 
moments of ${\cal Z}_N$ in the large $N$ limit (see also 
\cite{ DemboZeitouni} for a rigorous version). From this calculation
a transitional regime $h \sim N^{-1/6}$ for the fluctuations
of the free energy was 
conjectured. A proof of the existence of this regime was 
obtained in \cite{kivimae2019critical}.
In the current paper, we obtain explicitly the
fluctuations of the free energy in the regime $h \sim N^{-1/6}$ for any $T<1$
and in the regime $h \sim N^{-1/4}$ for $T>1$. 
As we show, our results match in the tail of the distribution with those of \cite{FyodorovleDoussal}.
Note also the recent physics work \cite{FyodorovRashel} where a different spherical model of random optimization was considered, which exhibits a similar phenomenology.

\medskip

The overlap with the external field has been studied in the context of magnetism and susceptibility.  
Kosterlitz, Thouless, and Jones \cite{kosterlitz1976spherical} computed the susceptibility as $h$ tends to zero and observed a transition at the temperature $T=1$.  
%Our results agree with theirs in the leading order term and we compute the next order term as well.  
Cugliandolo, Dean, and Yoshino \cite{cugliandolo2007nonlinear} computed two different versions of this limit of the susceptibility, in the first case taking $\lim_{h\to0}\lim_{N\to\infty}$ and in the second case taking $\lim_{N\to\infty}\lim_{h\to0}$.  In the first of these cases, they get the same result as \cite{kosterlitz1976spherical} with a transition at $T=1$, but in the second case they do not observe a transition.  Furthermore, they find that the two types of limits agree for $T>1$ but not for $T<1$.  
They also extend that results to a more general class of models (beyond Gaussian) and to non-linear susceptibility.  
We focus on the linear susceptibility and differential susceptibility in the Gaussian case, and obtain a more detailed picture. 
By considering the three regimes $h=O(1)$, $h\sim N^{-1/6}$, and $h\sim N^{-1/2}$, we see that the first limit considered by Cugliandolo et al agrees with our result for the $h\to0$ limit of the $h=O(1)$ case.  The second limit that they consider is analogous to our result for the $H\to0$ limit of the $h\sim N^{-1/2}$ case where we define $H=hN^{1/2}$.  However, we find in this case that the susceptibility depends on the sample and is a function of $\mathbf{g}\cdot \mathbf{u}_1$, the inner product of the external field and the ground state.  
This dependence was not apparent in \cite{cugliandolo2007nonlinear}, since their set-up fixes $\mathbf{g}\cdot\mathbf{u}_1=1$.  When $\mathbf{g}\cdot\mathbf{u}_1=1$ we find, as they do, that there is no transition in the susceptibility between high and low temperature.  However, a transition does exist for all other values of $\mathbf{g}\cdot \mathbf{u}_1$.

\medskip

The overlap with the ground state is relevant to understanding the geometry of the Gibbs measure.  Subag \cite{Subag_2017} examines the geometry of the Gibbs measure for general $p$-spin spherical models and finds that the Gibbs measure concentrates in spherical bands around the critical points of the Hamiltonian.  These bands are of the form $\text{Band}(\spin_0,q,q')=\{\spin\in S_{N-1}:q\leq R(\spin,\spin_o)\leq q'\}$ where $\spin_0$ is a critical point of $\ham$ and $R(\spin,\spin_0)$ is the overlap of $\spin$ and $\spin_0$.  We focus specifically on the overlap with the ground state (where $\spin_0$ is the critical point corresponding to the largest eigenvalue).  In the $h=0$ regime, as expected, we see the Gibbs measure concentrates in a band and we examine how this geometry changes for the case of positive constant $h$ as well as the cases of $h\sim N^{-1/6}$ and $h\sim N^{-1/3}$.

\medskip

The overlap with a replica has been  studied extensively, both for the Ising spin models and the spherical spin models with general $p$-spin interaction.
For $p=2$ the non-rigorous replica method 
used in \cite{kosterlitz1976spherical, crisanti1992sphericalp, FyodorovleDoussal} obtains a replica
symmetric saddle point leading to a prediction for the overlap $q$ as a function of $h$. In particular, at $h=0$, the prediction is that $q=1-T$ for $T<1$ and $q=0$ for $T>1$.
%For $p=2$ the non-rigorous replica method used in \cite{kosterlitz1976spherical, crisanti1992sphericalp, FyodorovleDoussal} predicts a replica
%symmetric saddle point leading at $h=0$ to an overlap $q=1-T$ for $T<1$ and $q=0$ for $T>1$.
These calculations were confirmed rigorously in \cite{panchenko2007overlap}.
%A key result for the spherical 2-spin case without the external field is that the overlap is typically close to zero in the high temperature regime, but concentrates around $\pm q=\pm(1-\frac1T)$ 
%in the low temperature regime (see  \cite{panchenko2007overlap}).  
Recently, Landon, Nguyen, and Sosoe extended the results further  to examine the fluctuations of the overlap at high temperature \cite{nguyen2018central} and low temperature \cite{LandonSosoe}.  They find, in particular, that the overlap has Gaussian fluctuations in the high temperature regime, whereas, in the low temperature regime, the fluctuations %have 
are of order $N^{-1/3}$ and converge to a random variable that has an explicit formula in terms of the GOE Airy point process (see subsection \ref{sec:edgebehavior} for a description of this).  
%We draw on some of the methods from Landon and Sosoe in our proofs and extend their result for the $h=0$ case to the cases where $h>0$ or $h\to0$.
In this paper, we obtain similar results for $h\sim N^{-1/6}$ and $h\sim N^{-1/2}$. 

%\Cr{[P: is this a misprint??]} \Cbr{[The sentence you pointed out was repetitive and there was a misprint. Changed to a different, shorter sentence.]}

%%%%%%%%%%%%%%%%%%%%%%%%%%%%%%%%
\subsection{Method of analysis}

Our computations are based on contour integral representations which we present in Section \ref{sec:integralreps}.
The free energy and the moment generating functions of two of the overlaps can be expressed in terms of a single integral, whereas the moment generating function in the case of the overlap with a replica can be written as a double integral.  The integrand for each of these integral representations contains disorder variables and hence we have random integral formulas. 
The single integral formula for the free energy was first observed by Kosterlitz, Thouless and Jones \cite{kosterlitz1976spherical} and the authors use the method of the steepest descent to evaluate the limiting free energy.  
For the case of $\ef=0$, this calculation was extended in \cite{BaikLeeSSK} to find the fluctuation terms using the recent advancements in random matrix theory, in particular the rigidity results on the eigenvalues \cite{MR3098073} and the linear statistics \cite{JohanssonCLT, Bai2005, Lytova2009}. 
Similar ideas were also used in \cite{BaikLeeFerromagnetic,BaikLeeBipartite,BaikLeeWu}, including the case for the overlap with a replica in \cite{LandonSosoe}. 
This paper extends the integral formula approach to the case when $h=O(1)$ and $h\to 0$ in the transitional regimes. 
%This paper extends those \Cr{[P: these?]} analysis to the case when $h=O(1)$ and $h\to 0$ in the transitional regimes. 
When there is an external field, the analysis becomes more involved. 
In this case, the dot products of the eigenvectors and the external field play an important role in the analysis. 
%The difference from the $h=0$ is that the eigenvectors in addition to the eigenvalues of the disorder matrix play a role in the analysis. 

The steepest descent analysis of this paper can be made mathematically rigorous after some efforts using probability theory and random matrix theory. However, this paper will focus on computations and interpretations assuming that various estimates in the steepest descent analysis can be obtained.  We use the label ``Result" for findings in which we do not provide rigorous proofs and the label ``Theorem" for findings that we cite from prior papers that include rigorous proof.  We use the label ``Lemma" for short findings that we prove in full detail.

In a recent preprint \cite{landon2020fluctuations} which was obtained independently and simultaneously with this paper, Landon and Sosoe consider a similar SSK model in which the external field is a fixed vector and the disorder matrix has zero diagonal entries. 
Their work is mathematically rigorous and contains proofs of some of the results obtained in this paper, namely for the free energy and some aspects of the overlaps with the external field and with a replica in Subsections \ref{sec:freeenergypof}, \ref{sec:fe1/4analysis}, \ref{sec:fetranltmp}, \ref{sec:mgnefpos}, \ref{sec:ext1/6}, \ref{sec:ovlpos}, and \ref{sec:ovl1/6}.
%We are currently considering rigorous proofs of other aspects of the above-mentioned overlaps as well as the overlap with the ground state.  Those results will appear in a separate paper. 
After the completion of this paper, one of us, Collins-Woodfin, also proved the results in Subsection \ref{sec:mgnlowtmn12} on the overlap with the microscopic external field rigorously in \cite{CollinsWoodfin}.

\subsection{Organization of the paper}

The results of the calculations are scattered throughout the paper. In Section \ref{sec:results}, we present some of the highlights of the results of this paper. 
The single and double integral representation of the free energy and the generating functions of the overlaps are given in Section \ref{sec:integralreps}. 
The next three sections of the paper address the free energy.  Section \ref{sec:feh>0} summarizes known results for the $h=0$ case and explains our findings for the $h>0$ case.  Section \ref{sec:fe1/4} addresses the $h\to0$ case for $T>1$ and section \ref{sec:fe1/6} addresses $h\to0$ for $T<1$.  Sections \ref{sec:ext}, \ref{sec:ground}, and \ref{sec:2ol} provide our results for each of the three types of overlaps. 
Section 8 also provides our results for magnetization and susceptibility.  
Section \ref{sec:Geometry} describes the geometry of the spin vector configuration under the Gibbs measure. We include as appendices the proof of the contour integral formulas and also a perturbation lemma. 

%%%%%%%%%%%

\subsection*{Acknowledgements}

The authors would like to thank Benjamin Landon and Philippe Sosoe for sharing their recent work with us.  
The work of Baik was supported in part by the NSF grants DMS-1664692 and DMS-1954790.  The work of Collins-Woodfin was supported in part by the NSF grants DMS-1701577 and DMS-1954790.  The work of Le Doussal was supported in part by the ANR grant ANR-17- CE30-0027-01 RaMaTraF.
Le Doussal would like to thank the Department of Mathematics of the University of Michigan for hospitality; this joint project started during his visit to the department.

%%%%%%%%%
\section{Highlights of the results}\label{sec:results}

\subsection{Results for the free energy}
We examine the behavior of the free energy, including its leading order and the sample-to sample fluctuation term, as $N\to\infty$ when $h=O(1)$ and when $h\to0$. 
We find that, in each case,
\beq
\fe_N (\tmp, h) \simeqids \felim(\tmp, h) + \text{ sample fluctuations}
\eeq
where $\simeqids$ denotes an asymptotic expansion in distribution with respect to the disorder variables. 
The limiting free energy $\felim(T,h)$ includes all deterministic (depending only on $h$ and $T$) terms whose order exceeds that of the sample fluctuations.  The ``sample fluctuations" refers to the largest order term that depends on the disorder sample.  Our findings in each case are summarized in Table \ref{tab:fesummary}.  Upon computing the leading term and sample fluctuations for $\fe_N(T,h)$ with $h=O(1)$, we made two key observations.  Firstly, the free energy for $h=O(1)$ does not exhibit a transition as we see in the $h=0$ case; this observation is consistent with the result of \cite{ChenDeyPanchenko} for Ising spins.  Secondly, while the limiting free energy is continuous in $T$ and $h$,  the sample fluctuations in the $h=O(1)$ case do not agree with those in the $h=0$ case (neither for $T>1$ nor for $T<1$).  This suggests the existence of transitional regimes.  We found that, for $T>1$, the transition occurs at $h\sim N^{-1/4}$ while, for $T<1$, the transition occurs at $h\sim N^{-1/6}$.  We computed the asymptotic expansion of $\fe_N(T,h)$ in these transitional regimes. 

\begin{table}[H]
\renewcommand{\arraystretch}{1.5}
\centering
\begin{tabular}{l|l|l|c}
Case & Limiting free energy $\felim(T,h)$ & Sample fluctuations&Result\\
\hline
$h=0,\;T>1$ & $\frac{1}{4T}$ &  $N^{-1}$ Gaussian distribution&  \ref{thm:freehigh} \\ %\eqref{eq:higtemzeef}\\

$h=0,\, T<1$ & $1 - \frac{3\tmp}{4} + \frac {\tmp\log \tmp} 2$ &  $N^{-\frac 23}$ $\twgoe$ distribution&  \ref{thm:freelow} \\ 

$h=O(1)$ & $\frac{\cp_0}{2} - \frac{\tmp s_0(\cp_0)}{2} - \frac{\tmp -\tmp\log \tmp}{2}  + \frac{\ef^2 s_1(\cp_0)}{2}$ &  $N^{-\frac 12}$ Gaussian distribution& \ref{thm:feh>0}\\

%\vspace{-0.1in}
$h\sim N^{-\frac 14},\;T>1$ & $\frac{1}{4T}+\frac{h^2}{2T}$ &  $N^{-1}$ Gaussian distribution& \ref{thm:fe1/4T>1}\\

\vspace{-0.1in}
$h\sim N^{-\frac 16},\, T<1$ & $1 - \frac{3\tmp}{4} + \frac {\tmp\log \tmp} 2+\frac{h^2}{2}$ &  $N^{-\frac 23}$ function of the GOE Airy& \ref{thm:fe1/6T<1}\\
&& point process and Gaussian r.v.'s&\\ 
\end{tabular}
\caption{This table summarizes our findings for the leading term and fluctuations of $\fe_N(T,h)$ in the various cases we considered.  The $h=0$ cases were already known \cite{BaikLeeSSK} but are included here for completeness.  In the limiting free energy for the $h=O(1)$ case, the quantity $\gamma_0$ is deterministic and depends only on $T$ and $h$.  The functions $s_0$ and $s_1$ are defined in section \ref{sec:RMT}.  For more details on the notation, derivation, and precise formulas for the fluctuation terms, see the corresponding result.}
\label{tab:fesummary}
\end{table}

When comparing the fluctuations in each regime, we observe that the order of the fluctuations are largest in the $h=O(1)$ case, where they have order $N^{-1/2}$ and Gaussian distribution.  This holds for all temperatures.  When $T>1$ but $h=0$ or $h\to0$, the fluctuations remain Gaussian, but their order shrinks to $N^{-1}$.  When $T<1$ and $h=0$ or $h\to0$, the fluctuations have order $N^{-2/3}$.  In the case of $h=0$ they have GOE Tracy-Widom distribution while, in the case of $h\sim N^{-1/6}$, their distribution is a function of the GOE Airy point process and of a sequence of i.i.d. standard Gaussian random variables.  See Table \ref{tab:fesummary} for the equations corresponding to each of these results.

%%%%%%%%%%%%%%%%%%%%%%%%%%%%%%%%%%%
\subsection{Results for the overlaps}

In the next three tables we state our findings for the overlap with the external field, with the ground state and with a replica. 
In each case the thermal average and thermal fluctuations are presented in interesting regimes of $h$ and $T$. 
The thermal average and fluctuations in most cases depend on the disorder sample. 
Our findings also have implications for magnetization and susceptibility, which will be described in more detail in section \ref{sec:ext}.

\begin{table}[H]
\renewcommand{\arraystretch}{1.7}
\centering
\begin{tabular}{l|l|l|c}
Case & Thermal average $\langle\mgn\rangle$ & Thermal fluctuations of $\mgn$ &Result\\
\hline
\vspace{-0.15in}
$h=O(1)$ for all $T$ & $hs_1(\gamma_0)+O(N^{-\frac12})$ & $N^{-\frac12}$ Gaussian & \ref{thm:exth>0}\\
(and $h=0,\, T>1$) &&& \ref{result:extfld0} \\
$h\sim N^{-\frac16},\, T<1$ & $h+O(N^{-\frac12})$ & $N^{-\frac12}$ Gaussian & \ref{thm:ext1/6}\\
\vspace{-0.15in}
$h\sim N^{-\frac12},\, T<1$ & $h +\frac{|n_1|\sqrt{1-T}}{\sqrt{N}}\tanh \big(\frac{|n_1|h\sqrt{N(1-T)}}{T} \big)  $ & $N^{-\frac12}$ [Gaussian + Bernoulli% (non-symmetric)
] & \ref{thm:ext1/2}\\
%$h=0,\, T<1$ & $0$ & $N^{-\frac12}$ [Gaussian + Bernoulli (symmetric)] & \eqref{eq:h0folt}\\
(and $h=0,\, T<1$) & &&  \ref{result:extfld0} \\
\end{tabular}
\caption{This table summarizes our finding for $\mgn$, the overlap with the external field. Here, $\cp_0=\cp_0(h,T)$ in the first row is  deterministic. The variable $n_1$ in the third row is $n_1=\vu_1\cdot \efv$.  
For the top two rows, the leading term in $\langle\mgn\rangle$ and the thermal fluctuations of $\mgn$ do not depend on the disorder sample.
However, the $O(N^{-\frac12})$ subleading terms in $\langle\mgn\rangle$ for the top two cases and both the leading term in $\langle\mgn\rangle$ and the thermal fluctuations of $\mgn$ of the last row do depend on the disorder sample. }
%The leading term of the thermal average $\langle\mgn\rangle$ for the top two and the last cases does not depend on the disorder samples ($\gamma_0$ is a deterministic function of $T$ and $h$). the leading term does depend on the disorder sample for the case $h\sim N^{-1/2}$. The $O(N^{-1/2})$ subleading terms in $\langle\mgn\rangle$ for the top two cases also depend on the disorder samples.
%The thermal fluctuations of $\mgn$ %terms 
%depend on the disorder samples for the two middle cases $h\sim N^{-1/6}$ and $h\sim N^{-1/2}$ with $T<1$.}
%\Cbr{[I found that there were mistakes in this table. The thermal average for the $h\sim N^{-1/2}$ case was missing important sample-dependent term. The table and caption are changed. I also removed the last case $h=0, T<1$ since it is simply a special case of $h\sim N^{-1/2}$]}} 
%In the thermal average for the $h=O(1)$ case, the quantity $hs_1(\gamma_0)$ does not depend on the disorder and approaches 1 as $h\to\infty$.  As $h\to0$, this quantity approaches $h$ for $T<1$ and $h/T$ for $T>1$.  The next order term depends on the disorder and can also be written as $O(hN^{-1/2})$ so it goes to 0 as $h\to0$. For more details on the notation, derivation, and precise formulas for the fluctuation terms, see the corresponding theorems/equations.}
\label{tab:extsummary}
\end{table}

\begin{table}[H]
\renewcommand{\arraystretch}{1.7}
\centering
\begin{tabular}{l|l|l|c}
Case & Thermal average $\langle\mathfrak{G}^2 \rangle$ & Thermal fluctuations of $\mathfrak{G}^2$ &Result\\
\hline
\vspace{-0.15in}
$h=O(1)$ for all $T$ & $\frac1{N}\left(\frac{h^2n_1^2}{(\gamma_0-2)^2}+\frac{T}{\gamma_0-2}\right)$ & $N^{-1}$ $\chi$-squared (non-centered) &  \ref{thm:groundh>0}\\
(and $h=0,\, T>1$) &&&  \ref{result:grounds0} \\
$h\sim N^{-\frac16},\, T<1$ & $1-T-  \sum_{i=2}^N\frac{ n_i^2 h^2N^{1/3}}{(\stild+a_1-a_i)^2}  $ & $N^{-\frac16}$ Gaussian & \ref{thm:ground1/6}\\
\vspace{-0.15in}
$h\sim N^{-\frac13},\, T<1$ & $1-T + O(N^{-\frac13})$ & $N^{-\frac13}$ r.v. that depends on disorder & \ref{thm:ground1/3} \\
(and $h=0,\, T<1$) & & &  \ref{result:grounds0} \\
% & $1-T$ & $N^{-\frac13}$ r.v. that depends on disorder & \eqref{eq:mainresutforovolhs} \\
%$h=0,T>1$ & $\frac{T^2}{N(T-1)^2}$ & $N^{-1}$ $\chi$-squared & \eqref{eq:mainresutforovolhs} \\
\end{tabular}
\caption{This table summarizes our finding for $\mathfrak{G}^2=\OM$, the squared overlap with the ground state.  Here $n_i=\mathbf{u}_i\cdot\mathbf{g}$ and $a_i=N^{2/3}(\lambda_i-2)$. The quantity $\cp_0$ in the top row is the same term from Table \ref{tab:extsummary}. 
In the second row, the variable $\stild$ and the total sum, which is $O(1)$, depends  
on the disorder sample. 
All leading and subleadings terms of $\langle \OG^2\rangle$, and the thermal fluctuations of $\OG^2$, except the leading term, $1-T$, of $\langle \OG^2\rangle$ in the last row, depend on the disorder sample.}
%\Cbr{Removed $h=0$ cases since they are special cases of other cases. The caption was not correct, so changed.} }
%Note that $h\sim N^{-1/6}$ with $T<1$ is the only regime in which the thermal average depends on the disorder sample. In this regime $H=hN^{1/6}$ and the thermal average \Cr{$\langle\mathfrak{G}^2 \rangle$} is strictly between $0$ and $1-T$. The $h\sim N^{-1/6}$ regime also has the largest thermal fluctuation \Cr{of the overlap $\mathfrak{G}$}, which are of order $N^{-1/12}$ as opposed to $N^{-1/6}$ or $N^{-1/2}$ in other regimes.  
%The thermal \Cr{fluctuations %term 
%depend} on the disorder sample for all cases except when $h=0$.\Cr{[P: The last statement contradicts the 
%table for $T<1$].}} 
%In the thermal average for the $h=O(N^{-1/6})$ case, the quantity $\sum_{i=2}^N\frac{H^2n_i^2}{(s_0+a_1-a_i)^2}$ is strictly between $0$ and $1-T$ and is an increasing function of $h$ that depends on the disorder variables.  For more details on the notation, derivation, and precise formulas for the fluctuation terms, see the corresponding theorems/equations.}
\label{tab:groundsummary}
\end{table}

\begin{table}[H]
\renewcommand{\arraystretch}{1.7}
\centering
\begin{tabular}{l|l|l|c}
Case & Thermal average $\langle\mathfrak{R}\rangle$ & Thermal fluctuations of $\mathfrak{R}$ &Result\\
\hline
\vspace{-0.15in}
$h=O(1)$ for all $T$ & $h^2s_2(\gamma_0)+O(N^{-\frac12})$ & $N^{-\frac12}$ Gaussian &  \ref{result:replica1} \\
(and $h=0,\, T>1$) & & &   \ref{result:replica0}  \\
$h\sim N^{-\frac16},\, T<1$ & $1-T + O(N^{-\frac13})$ & $N^{-\frac13}$ r.v. that depends on disorder &  \ref{result:replica16}\\
\vspace{-0.15in}
$h\sim N^{-\frac12},\, T<1$ & $ (1-T)\tanh^2 \big(\frac{|n_1|h\sqrt{N(1-T)}}{T} \big) $ & $O(1)$ Bernoulli &  \ref{result:ovl12mainr} \\
(and $h=0,\, T<1$) & & &  \ref{result:replica0} \\
\end{tabular}
\caption{This table summarizes our finding for $\mathfrak{R}$, the overlap between two independent spins.  
The quantity $\gamma_0$ is the same term from Table \ref{tab:extsummary} and $n_1=\vu_1\cdot \efv$. 
The subleading terms of $\langle \mathfrak{R} \rangle$ in the top two rows and the leading term of  $\langle \mathfrak{R} \rangle$  in the third row depend on the disorder sample. The thermal fluctuations of $\mathfrak{R}$ also depend on the disorder sample for the bottom two rows.}%\Cbr{[Chagned the caption]}}
% and is deterministic and depends only on $h$ and $T$.  The quantity $p>\frac12$ depends on $h,T,\mathbf{g}$ and the first eigenvector of $M$. Furthermore, $p$ is an increasing function of $h$ that is given explicitly in \eqref{eq:ovl1/2main}.  In particular, $p\to\frac12$ as $h\to0$ and $p\to1$ as $h$ grows to order $N^{-1/6}$.
%The $O(N^{-1/2})$ term in the thermal average for $h=O(1)$ depends on the disorder sample, as do the thermal fluctuations terms, except in the case of $h=0$ with $T<1$. 
%Note that, for $h\sim N^{-1/2}$ and $h=0$ with $T<1$, the thermal fluctuations are of order 1 and thus contribute \Cr{to} the leading term in the thermal average. \Cr{[P: people may wonder why Bernoulli is denoted 
%differently in this table and table 2, if there is a reason (thermal versus disorder?) please explain]}.}
%In the thermal average for the $h=O(1)$ case, the quantity $\gamma_0$ depends of $h$ and $T$.  The error term $O(N^{\frac12})$ depends on $h,T,$ and the disorder variables.  In the $h=0$ and $h=O(N^{-1/2})$ cases, a precise computation of the thermal average is not provided since the leading order term comes from the thermal fluctuations rather than any deterministic term.  The thermal fluctuations in both of these cases are order 1 and $\Bern(c)$ denotes a random variable that takes value $1$ with probability $c$ and $-1$ with probability $1-c$.  The function $p$ depends on $T$ and $h$ but not on the disorder variables.  For more details on the notation, derivation, and precise formulas for the fluctuation terms, see the equations referenced.}
\label{tab:2olsummary}
\end{table}

\subsection{Geometry of the Gibbs measure}\label{sec:gibbsgeometry}

The results for the overlaps give us information on the geometry of the spin configuration under the Gibbs measure, some of which we summarize here. 
Recall that the spin configuration is parameterized by the vector $\spin=(\sigma_1, \cdots, \sigma_N)$ which belongs to the $N-1$ dimensional sphere of radius $\sqrt{N}$ and we consider the limit of large $N$.
At high temperature, $T>1$, the spin vector $\spin$ is nearly orthogonal to the ground state $\pm \vu_1$ when $h=0$.
For $h=O(1)$, the spin vector concentrates on the intersection of the sphere and the single cone around the vector $\efv$. 
The leading term of the cosine of the angle between the spin and the external field $\efv$ depends on the temperature and the field but not on the disorder sample, and, as one can expect, is an increasing function of the field. See Figure \ref{fig:hs1cp0} (a). This implies that as the field becomes stronger, the cone becomes narrower. There are no transitions between $h=0$ and $h=O(1)$.

\begin{figure}[H]
\centering
\begin{subfigure}{0.45\textwidth}
\includegraphics[width = 0.9\textwidth]{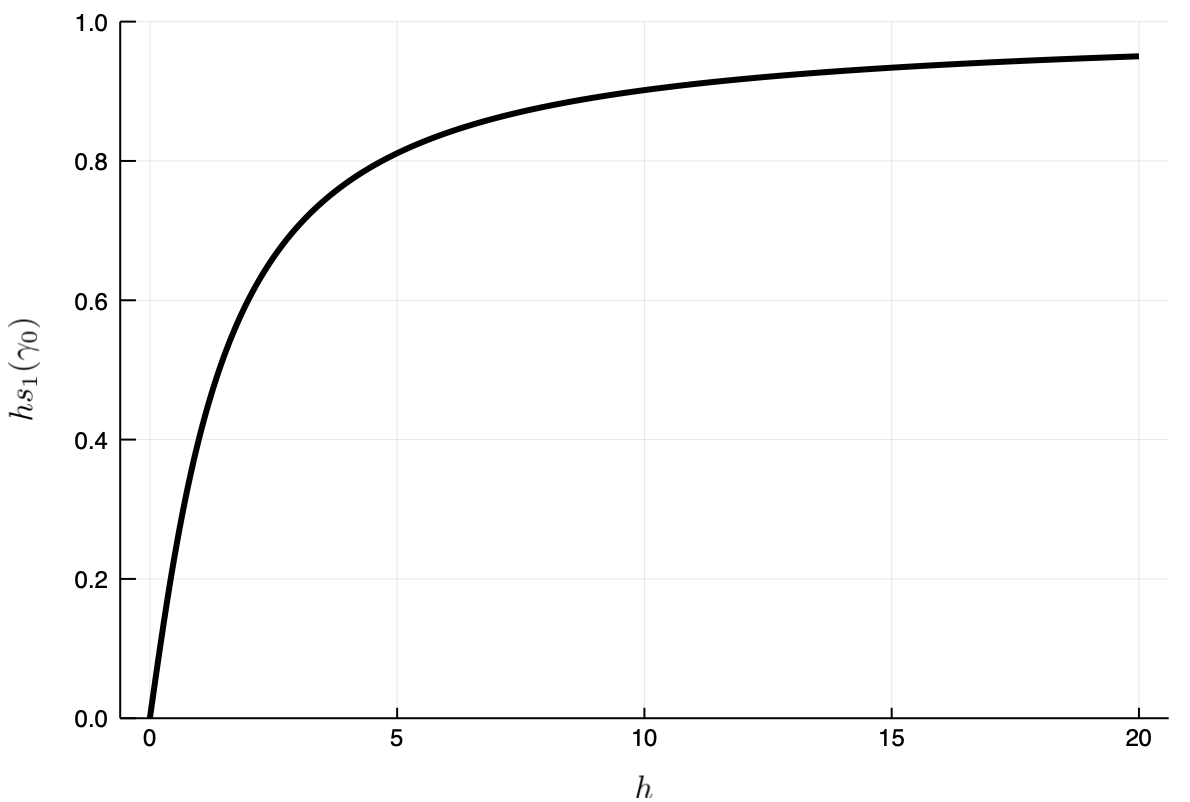}
\caption{$T  = 2$}
\end{subfigure}
\begin{subfigure}{0.45\textwidth}
\includegraphics[width = 0.9\textwidth]{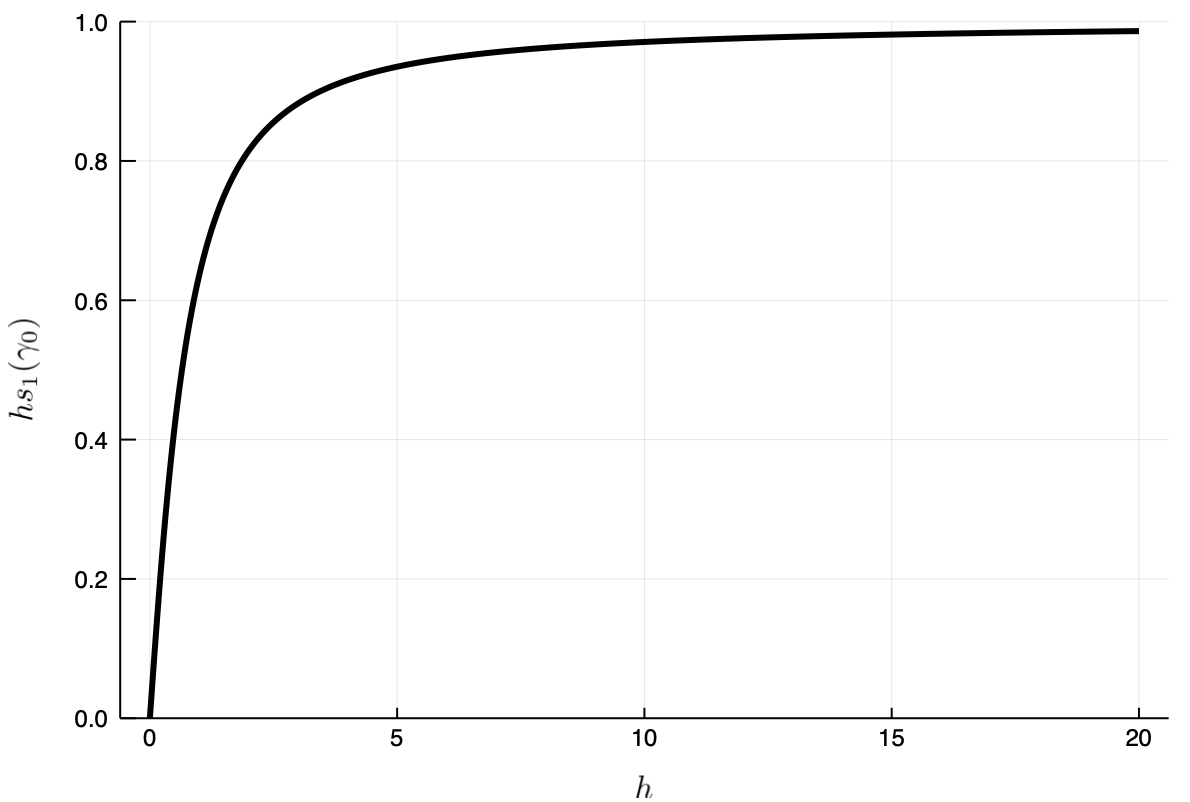}
\caption{$T =  0.5$}
\end{subfigure}
\caption{These are plots of the leading term of the angle between the spin and $\efv$. The formula is given by $\mgn^0$ in Subsection \ref{sec:mgnhplt}. The function depends only on $T$ and $h$. (a) $\tmp = 2$, (b) $\tmp = 0.5$.}
%The plots show that $\mgn^0(h,T)$ is an increasing function of $\ef$.}
\label{fig:hs1cp0}
\end{figure}

Now consider the low temperature regime $0<T<1$. When $h=0$, the spins are concentrated on the intersection of the sphere with the double cone around the ground state $\pm \vu_1$ such that the leading term of the cosine of the angle is $\sqrt{1-T}$. 
This angle was found in \cite{kosterlitz1976spherical, crisanti1992sphericalp, FyodorovleDoussal} and in particular, \cite{panchenko2007overlap} showed that spins are distributed uniformly on the intersection of this double cone with the sphere. 
Consider increasing the external field strength $h$. When $h=O(1)$, the spin vector concentrates on the intersection of the sphere and the single cone around the vector $\efv$ just like the high temperature case. See Figure \ref{fig:hs1cp0} (b), which is qualitatively same as Figure (a). 
However now between $h=0$ and $h=O(1)$, there are two interesting transitional regimes, $h\sim N^{-1/2}$, which we call the microscopic regime, and $h\sim N^{-1/6}$, the mesoscopic regime.

In the microscopic regime, $h\sim N^{-1/2}$, at low temperature $0<T<1$, 
the results of this paper lead us to the Conjecture \ref{conj:signedoverlap}, which implies that the double cone % (of the cosine angle $\sqrt{1-T}$ around $\pm \vu$)  
becomes polarized into a single cone. % which is closer to the field $\efv$. 
%The computation of this paper shows that t
The spin vector prefers the cone which is closer to $\efv$ to the other cone by the 
\beqq
	\text{$e^{\frac{2h\sqrt{N} |n_1|\sqrt{1-T}}{T}}$ to $1$ probability ratio.}
\eeqq
The spin vector is more or less uniformly distributed on the cones.
In this regime, the response of the spin to the field is the sum of (i) a linear response in the direction
transverse to $\pm \vu_1$ (i.e. along the cones) and, (ii)
the response of an effective 2-level system, which may be modeled
as a single one-component effective Ising spin $\frac{\spin}{\sqrt N} = \pm S \vu_1$
of size $S=\frac{|n_1| \sqrt{1-T}}{\sqrt{N}}$ with energy scale $E=N h S = \sqrt{N} h |n_1| \sqrt{1-T}$
(leading to a mean magnetization $S \tanh (E/T)$). Note that both $S$ and $E$ are sample dependent,
but depend only on $|n_1|$, the overlap of the ground state and the field.

%system acts like a 2-level system consisting of a simple linear response to the field and a single one-component effective Ising spin $\pm S$ where $S=\frac{|n_1| \sqrt{1-T}}{\sqrt{N}}$ with energy scale $E=N h S = \sqrt{N} h |n_1| \sqrt{1-T}$, giving textbook mean magnetization $M= S\tanh (E/T)$. 
%The leading order geometry of the spins is determined by the external field strength $h$ and the overlap $n_1$ of the ground state $\vu_1$ and the field $\efv$. 
%
%
%\Cbr{[Do we need to explain this more in a later section?]}
%

For $h \sim N^{-1/6}$, progressively all eigenvectors and eigenvalues become important. 
In this regime, the spins are concentrated on the intersection of the sphere and a single cone around the ground state, but the cone depends on the disorder sample. 
The cosine of the angle between the spin and $\vu_1$ changes from $\sqrt{1-T}$ to a function which depends on all eigenvalues $\eg_i$ and the overlaps $n_i=\vu_i\cdot \efv$ of the eigenvectors and the external field. 
Furthermore, the spins are no longer uniformly distributed on the cone. They are pulled into the direction of $\efv$. 
%When $h=O(1)$, the angle between the spin and $\vu_1$ becomes $0$ when $h=O(1)$ but at the point the spins on the cone are pulled into the $\efv$ direction, and hence they are on a cone around $\efv$ with a specific angle which is determined by $T$ and $h$ only. 
This regime can be called ``mesoscopic" as sample to sample fluctuations are strong and non trivial.
Note that in the present model the magnetic response to the field, although non-trivial and sample dependent, does not exhibit jumps (so-called static avalanches or shocks) at very low temperature, as were observed and studied in other mean-field models such as the SK model; see \cite{YoungBrayMoore, YoshinoRizzo, LeDoussalMullerWiese08, LeDoussalMullerWiese10, LeDoussalMullerWiese12}. 

For more details on the geometry of the Gibbs measure see Section \ref{sec:Geometry}, 
in particular, the Table \ref{table:geometrysummary} and the summary in Subsection \ref{sec:GeometrySummary}.

%%%%%%%%%%%%%%%%%%%%%%%%%%%%%%%%%%%
\subsection{Magnetization and susceptibility}

We also evaluate the magnetization, susceptibility, and differential susceptibility. 
One of the results is that at low temperature $0<T<1$, for $h\sim N^{-1/2}$, the linear susceptibility defined by $\scp= \frac1{h}\langle \mgn\rangle$ satisfies 
\beq% \label{eq:scpcmicro}
	\scp \simeq  1 + \frac{|n_1| \sqrt{1 - \tmp}}{hN^{1/2}} \tanh \left(\frac{hN^{1/2}|n_1|\sqrt{1 - T}}{T}\right) %=: \chimicro
	% \qquad 0<T<1, \quad h\sim N^{-1/2}%\text{$0<T<1, \, h\sim N^{-1/2}$}
\eeq
for asymptotically almost every disorder sample. This formula is a consequence of the spin geometry which has an interpretation as an effective 2 level system, as discussed above.
 
The above formula implies the zero external field limit of the susceptibility:
% defined by $\scp= \frac1{h}\langle \mgn\rangle$. We show that, for $0<T<1$, 
%\Cr{[P: it seems to me that the disorder average symbol has not been defined yet, see remark above]}
\beq \label{eq:sc2}
	\lim_{H\to 0} \lim_{\substack{N\to \infty \\ h=HN^{-1/2}}} \scp = 1 +  \frac{ n_1^2  (1 - \tmp)}{\tmp}
	\quad \text{and} \quad  
	\lim_{H\to 0} \lim_{\substack{N\to \infty \\ h=HN^{-1/2}}} \bar{ \scp}  = \frac1{T}. 
\eeq
The limit of $\scp$ depends on the disorder variable $n_1^2$. See Figure \ref{fig:susvarvsh} (b). 
This result shows that the Curie law holds for the sample-to-sample average, but not for a given disorder sample. 
If we take a different limit, namely if we let $N\to\infty$ with $h>0$ first and then let $h\to 0$, then the limit
of the susceptibility is deterministic and given by $\min \{ T^{-1}, 1\}$. See Figure \ref{fig:susvarvsh} (a). This formula was previously obtained in \cite{kosterlitz1976spherical}, and also in \cite{cugliandolo2007nonlinear}. 
See Subsections \ref{sec:suscep} and \ref{sec:diffsusc} for details. 

\begin{figure}[H]
\centering
\begin{subfigure}{0.45\textwidth}
\includegraphics[width = 0.9\textwidth]{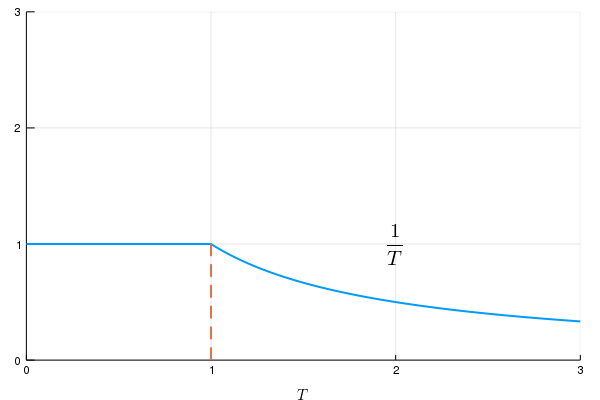}
\caption{$\displaystyle \lim_{\ef\to 0} \lim_{\substack{N\to \infty \\ \ef>0}} \scp$}
\label{fig:sus_vs_h}
\end{subfigure}
\begin{subfigure}{0.45\textwidth}
\includegraphics[width = 0.9\textwidth]{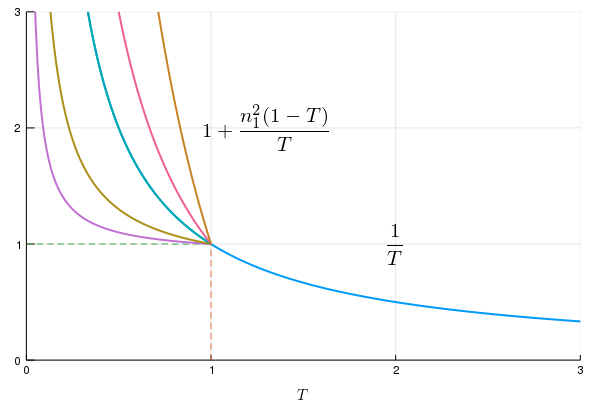}
\caption{$\displaystyle \lim_{H\to 0} \lim_{\substack{N\to \infty \\ h=HN^{-1/2}}}\scp$ for sevaral values of $n_1^2$}
\label{fig:susvar_vs_h}
\end{subfigure}
\caption{Graph of the zero external field limit of the susceptibility as a function of $T$.}
%(a) is the graph of the limit in \eqref{eq:zerosunaive}. (b) is the graph of the limit in \eqref{eq:zerosucsoph}. For $T<1$, the limit depends on the disorder sample.}
\label{fig:susvarvsh}
\end{figure}

%%%%%%%%%
\section{Contour integral representations}\label{sec:integralreps}

The partition function is an $N$-fold integral over a sphere. 
Using the Laplace transform and Gaussian integrations,  Kosterlitz, Thouless and Jones showed in  \cite{kosterlitz1976spherical} that this integral can be expressed as a single contour integral which involve the disorder sample.
We state this result and also include its derivation in Subsection \ref{sec:integralfreeeg}.
By the same method, the moment generating functions of the overlaps can also be written as a ratio of single or double contour integrals.  These results are presented in section \ref{sec:integraloverlaps} 
%with respect to the Gibbs measure can be written as ratios of two single integrals which are presented in Subsection \ref{sec:magemgf} and \ref{sec:ommgf}.

\subsection{Free energy} \label{sec:integralfreeeg}

The following result holds for any disorder sample. % applies to an arbitrary fixed symmetric matrix $M$ and a vector $\efv$.

\begin{lemma}[\cite{kosterlitz1976spherical}]
Let $M$ be an arbitrary $N$ by $N$ symmetric matrix and let $\efv$ be an $N$ dimensional vector. 
Let $\eg_1\ge\cdots \ge\eg_N$ be the eigenvalues of the matrix $M$ and let $\vu_i$ be a corresponding unit eigenvector. 
Then, the partition function $\pat_N$ defined in \eqref{eq:def_fe} can be written as 
\beq
\label{eq:laplace_trans}
	\pat_N = C_N\int_{\cp - \ii \infty}^{\cp + \ii \infty} e^{\frac{N}{2}\G(z)}\dd z \quad \text{where} \quad C_N = \frac{\Gamma(N/2)}{2\pi \ii (N\beta/2)^{N/2 - 1}}
\eeq
and 
\beq
\label{eq:def_G}
	\G(z) = \beta z - \frac1N\sum_{i = 1}^N \log(z - \eg_i) + \frac{\ef^2\beta}{N}\sum_{i = 1}^N \frac{n_i^2}{z - \eg_i}
	\quad \text{with} \quad  n_i=  \vu_i\cdot\efv. 
\eeq
Here, the integration is over the vertical line $\cp+\ii\R$ where $\cp$ is an arbitrary constant satisfying $\cp> \eg_1$. 
\end{lemma}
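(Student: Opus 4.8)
The plan is to derive the contour integral formula for $\pat_N$ by inserting a Laplace-transform identity for the spherical constraint and then performing the resulting Gaussian integral over $\R^N$. First I would diagonalize: since $M$ is symmetric, write $M = O \diag(\eg_1,\dots,\eg_N) O^{\tran}$ with $O$ orthogonal, and change variables $\spin = O \boldsymbol\tau$. Because the uniform measure $\omega_N$ on $S_{N-1}$ and the constraint $\|\spin\| = \sqrt N$ are rotation-invariant, this gives $\pat_N = \int_{S_{N-1}} \exp\bigl(\frac{\beta}{2}\sum_i \eg_i \tau_i^2 + \beta h \sum_i n_i \tau_i\bigr)\,\dd\omega_N(\boldsymbol\tau)$ with $n_i = \vu_i \cdot \efv$, so the problem is reduced to a diagonal quadratic form plus a linear term.

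Next I would remove the spherical constraint. Writing $\omega_N$ in terms of Lebesgue measure on $\R^N$ restricted to the sphere introduces a surface-area normalization; the cleanest route is to use the identity
\beq
	\int_{S_{N-1}} f(\boldsymbol\tau)\,\dd\omega_N(\boldsymbol\tau)
	= \frac{\text{const}_N}{2\pi\ii}\int_{\cp-\ii\infty}^{\cp+\ii\infty} e^{\frac{N}{2}z}\left(\int_{\R^N} f(\mathbf{x})\, e^{-\frac{z}{2}\|\mathbf{x}\|^2}\,\dd\mathbf{x}\right)\dd z,
\eeq
valid for $\cp$ large enough that the inner Gaussian integral converges, which here forces $\cp > \eg_1$ so that $z - \eg_i$ has positive real part for all $i$. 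With $f(\boldsymbol\tau) = \exp\bigl(\frac{\beta}{2}\sum_i \eg_i \tau_i^2 + \beta h \sum_i n_i \tau_i\bigr)$ the inner integral factorizes over coordinates into one-dimensional Gaussian integrals $\int_{\R} \exp\bigl(-\frac{z-\eg_i}{2}x_i^2 + \beta h n_i x_i\bigr)\,\dd x_i = \sqrt{\tfrac{2\pi}{z-\eg_i}}\,\exp\bigl(\tfrac{\beta^2 h^2 n_i^2}{2(z-\eg_i)}\bigr)$. Collecting the factors $\prod_i (z-\eg_i)^{-1/2} = \exp\bigl(-\tfrac12\sum_i \log(z-\eg_i)\bigr)$ and the exponential of $\sum_i \tfrac{\beta^2 h^2 n_i^2}{2(z-\eg_i)}$, together with the $e^{\frac N2 z}$ from the Laplace identity, and pulling a $\beta$ out of the exponent by rescaling $z \mapsto \beta z$ (which converts $e^{\frac N2 z}$ into $e^{\frac N2 \beta z}$ and each $\log(z-\eg_i)$ term appropriately, absorbing constants into $C_N$), I would arrive at the exponent $\frac N2 \G(z)$ with $\G$ as stated, while all the $N$- and $\beta$-dependent prefactors combine into $C_N = \Gamma(N/2)/(2\pi\ii(N\beta/2)^{N/2-1})$.

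The routine parts are the coordinate change and the Gaussian integrals; the step that needs the most care is justifying the Laplace-transform identity for $\omega_N$ and tracking the normalization constant exactly. The factor $\Gamma(N/2)$ comes from the surface area of $S_{N-1}$ (equivalently from the Gamma-function representation $\int_0^\infty r^{N-1} e^{-z r^2/2}\,\dd r = \tfrac12 (2/z)^{N/2}\Gamma(N/2)$ used to reconstruct the radial delta), and the bookkeeping of where the $\beta$'s and the $(N\beta/2)^{N/2-1}$ land after the rescaling $z \mapsto \beta z$ is the main place an error could creep in; I would check it by verifying the formula against the trivial case $M = 0$, $h = 0$, where $\pat_N = 1$ and the contour integral must evaluate to $1$. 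Since the statement is claimed for an arbitrary symmetric $M$ and arbitrary $\efv$ with no probabilistic input, no random matrix estimates are needed here — this is a deterministic identity — and the only analytic subtlety is the convergence of the inner Gaussian integral, which is exactly what the hypothesis $\cp > \eg_1$ guarantees, and the legitimacy of the contour, which is the standard Bromwich contour for the inverse Laplace transform of $r \mapsto e^{-z r^2/2}$ type kernels.
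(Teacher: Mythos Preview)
Your proposal is correct and follows essentially the same approach as the paper: diagonalize, then replace the spherical constraint by an inverse Laplace transform so the remaining integral over $\R^N$ becomes Gaussian and factorizes. The paper phrases the Laplace step slightly differently --- it computes the forward Laplace transform of $J(t)=t^{N/2-1}I(t,s)$ with $t=\beta N/2$ and then inverts, rather than inserting a delta-function identity directly --- but this is the same computation; note that in your write-up the inner quadratic coefficient should be $\tfrac{z-\beta\eg_i}{2}$ before your rescaling $z\mapsto\beta z$, which is exactly the bookkeeping you flagged as needing care.
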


\begin{proof}
Let $\Lambda = \diag(\eg_1, \eg_2, \cdots, \eg_N)$. Let $O=(\vu_1, \cdots, \vu_N)$ be an orthogonal matrix so that $M = O\Lambda O^T$. 
Let $S^{N-1}$ be the sphere of radius $1$ in $\R^N$ and let $\dd \Omega_{N-1}$ be the surface area element on $S^{N-1}$. 
Then, using the changes of variables $\frac1{\sqrt{N}} O^T\spin =x$, 
\beqq
	\pat_N = \frac{1}{|S^{N-1}|} I\left( \frac{\beta N}{2}, \ef \sqrt{2\beta} \right)  
	\quad \text{where} \quad I(t,s) = \int_{S^{N-1}} e^{t\sum_{i=1}^N \lambda_i x_i^2 + s\sqrt{t} \sum_{i=1}^N n_i x_i} \dd\Omega_{N-1}(x).
\eeqq
where $n_i= (\eigvm^T\efv)_i = \vu_i \cdot \efv$. 
We take the Laplace transform of $J(t)= t^{N/2-1} I(t,s)$. Making a simple change of variables $t=r^2$ and using Gaussian integrals, the Laplace transform is equal to 
\beqq
	L(z) = \int_0^\infty e^{-zt} J(t) \dd t = 2 \int_{\R^N} e^{-\sum_{i=1}^N (z-\lambda_i) y_i^2 + s\sum_{i=1}^N n_i y_i} \dd^N y
	= 2\prod_{i=1}^N e^{\frac{s^2n_i^2}{4(z-\eg_i)}} \sqrt{\frac{\pi}{z-\eg_i}}
\eeqq
for $z$ satisfying $z>\lambda_1$. 
We obtain a single integral formula of the partition function by taking the inverse Laplace transform. 
\end{proof} 

Note that the sign ambiguity of $\vu_i$ does not affect the result since the formula depends only on $n_i^2$.

%%%%%%%%%%%%%%%%%%%
\subsection{Overlaps}\label{sec:integraloverlaps}

In this section, we give the moment generating function of each of the overlaps, expressed as a ratio of contour integrals.  The proofs are similar to the computations for the free energy case and we give the proof in Appendix \ref{sec:integraloverlapproof}.
%for the first two overlaps ($\mathfrak{M}$ and $\mathfrak{O}$) are straightforward by adapting the computations from the free energy case.  The overlap with a replica is a bit different, as it involves double integrals.  The proof for that case is provided in Appendix \ref{sec:integraloverlapproof}.

\begin{definition}. The following three functions are related to the function $\G$ and will be used to compute the three overlaps respectively.  We denote by $\eta\in\R$ the parameter that will be used for the moment generating function of each overlap.
\begin{itemize}
\item For the overlap with the external field, we use the function
\beq
\label{eq:G_mgn}
	\Gmgn(z) := \beta z - \frac1N\sum_{i = 1}^N \log(z - \eg_i) + \frac{(\ef + \frac{\eta}{N})^2\beta}{N} \sum_{i = 1}^N \frac{n_i^2}{z - \eg_i}. 
\eeq
Note that this is $\G(z)$ with $h$ replaced by $h+\eta N^{-1}$.

\item For the (square of the) overlap with the %squared 
ground state, we use the function 
\beq
\begin{aligned}
\label{eq:G_om}
\Gom(z)  := & \beta z - \frac1N \log\left(z - \left(\eg_1 + \frac{2\eta}{N}\right)\right)- \frac{1}{N}\sum_{i = 2}^N \log(z - \eg_i) \\
		& \quad + \frac{\ef^2\beta}{N} \frac{n_1^2}{z - (\eg_1 + \frac{2\eta}{N})} + \frac{\ef^2\beta}{N} \sum_{i = 2}^N \frac{n_i^2}{z - \eg_i}.
\end{aligned}
\eeq
Note that this is $\G(z)$ with $\lambda_1$ replaced by $\lambda_1+\frac{\eta}{\beta N}$.

\item For the overlap with a replica, we use the function
\beq
	\Govl(z, w;a) := \beta (z+w) - \frac1{N} \sum_{i=1}^N \log \left( (z-\eg_i)(w-\eg_i) - a^2 \right) 
	+ \frac{\ef^2 \beta}{N} \sum_{i=1}^N \frac{n_i^2 (z+w-2\eg_i + 2a) }{(z-\eg_i)(w-\eg_i) - a^2}.
\eeq

\end{itemize}
\end{definition}

\begin{lemma} \label{lem:contour}
For real parameter $\eta$, the moment generation functions of the three overlaps are as follows:
\beq   %\label{eq:mgf_mgn}% \label{eq:mgf_om}\label{eq:doubleintovl}
	\langle e^{\beta \eta \mgn}\rangle =  \frac{\int e^{\frac N2 \Gmgn (z)} \dd z}{\int e^{\frac N2 \G(z)} \dd z}, \qquad
	\langle e^{\beta \eta \OM} \rangle = \frac{\int e^{\frac N2 \Gom (z)} \dd z}{\int e^{\frac N2 \G(z)} \dd z}, \qquad
	\langle e^{\eta \ovl} \rangle = \frac{\iint e^{\frac{N}2 \Govl(z, w; \frac{\eta}{\beta N})} \dd z \dd w}{\iint e^{\frac{N}2 \Govl(z, w; 0)} \dd z \dd w}.	
\eeq
The contours are vertical lines in the complex plane such that all singularities lie on the left of the contour. 
See Appendix \ref{sec:integraloverlapproof} for the derivation.% of this representation in the case of the overlap with a replica.
\end{lemma}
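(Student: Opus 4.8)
The plan is to mimic the derivation of the single–integral formula in Subsection~\ref{sec:integralfreeeg}, carrying along the extra exponential factor that produces the moment generating function. For the overlap $\mgn = \efv\cdot\spin/N$, observe that
\[
	\langle e^{\beta\eta\mgn}\rangle = \frac{1}{\pat_N}\int_{S_{N-1}} e^{-\beta\ham(\spin) + \frac{\beta\eta}{N}\efv\cdot\spin}\,\dd\omega_N(\spin),
\]
and that the numerator is exactly the partition function of the same Hamiltonian with the field strength $h$ replaced by $h + \eta/N$, since $-\beta\ham(\spin) + \frac{\beta\eta}{N}\efv\cdot\spin = \frac{\beta}{2}\spin\cdot M\spin + \beta(h+\tfrac{\eta}{N})\,\efv\cdot\spin$. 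Applying the Lemma of Subsection~\ref{sec:integralfreeeg} to that modified partition function gives the numerator as $C_N\int e^{\frac N2 \Gmgn(z)}\,\dd z$ with $\Gmgn$ as in \eqref{eq:G_mgn}, and the constant $C_N$ cancels against the same constant in $\pat_N = C_N\int e^{\frac N2 \G(z)}\,\dd z$. This yields the first identity.

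For the overlap with the ground state, the trick is to absorb $\OM = (\vu_1\cdot\spin)^2/N$ into the quadratic form. Write $-\beta\ham(\spin) + \beta\eta\,\OM = \frac{\beta}{2}\spin\cdot(M + \tfrac{2\eta}{N}\vu_1\vu_1^{\mathsf T})\spin + \beta h\,\efv\cdot\spin$, a rank-one perturbation of $M$ whose spectrum is $\{\lambda_1 + \tfrac{2\eta}{N},\lambda_2,\dots,\lambda_N\}$ with unchanged eigenvectors (so the overlaps $n_i$ are unchanged). Again applying the free-energy Lemma to this perturbed matrix produces $C_N\int e^{\frac N2 \Gom(z)}\,\dd z$ in the numerator, and $C_N$ cancels. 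For the replica overlap one first writes, using two independent copies,
\[
	\langle e^{\eta\ovl}\rangle = \frac{1}{\pat_N^2}\int_{S_{N-1}}\int_{S_{N-1}} e^{-\beta\ham(\spin^{(1)}) - \beta\ham(\spin^{(2)}) + \frac{\eta}{N}\spin^{(1)}\cdot\spin^{(2)}}\,\dd\omega_N(\spin^{(1)})\,\dd\omega_N(\spin^{(2)}).
\]
One then diagonalizes $M$, changes variables to $x^{(k)} = \tfrac{1}{\sqrt N}O^{\mathsf T}\spin^{(k)}$, takes a two-dimensional Laplace transform in the radial variables $t_1,t_2$ (one for each sphere), performs the resulting $2N$-dimensional Gaussian integral over $y^{(1)},y^{(2)}\in\R^N$ — which now couples the two copies through the coordinatewise quadratic form $\begin{psmallmatrix}z-\lambda_i & -a \\ -a & w-\lambda_i\end{psmallmatrix}$ with $a=\eta/(\beta N)$ — and inverts the Laplace transform. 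The $2\times 2$ Gaussian determinant per mode is $(z-\lambda_i)(w-\lambda_i) - a^2$ and the quadratic completion in the linear term $s\sum n_i(y^{(1)}_i + y^{(2)}_i)$ produces the $n_i^2(z+w-2\lambda_i+2a)/((z-\lambda_i)(w-\lambda_i)-a^2)$ term; this gives the double-integral formula, with normalization constants again cancelling between numerator and denominator.

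The routine pieces are the Gaussian integral identities $\int_{\R^d} e^{-y^{\mathsf T}Ay + b\cdot y}\,\dd y = \pi^{d/2}(\det A)^{-1/2}e^{\frac14 b^{\mathsf T}A^{-1}b}$ and the inverse-Laplace representation $f(t) = \frac{1}{2\pi\ii}\int_{\cp-\ii\infty}^{\cp+\ii\infty} e^{zt}\hat f(z)\,\dd z$; these are already used in the proof of the free-energy Lemma. The one genuine subtlety — the place where some care is needed — is tracking the domain of convergence and the admissible location of the vertical contour. For the single-integral cases one needs $\cp$ to exceed the largest singularity of the integrand: $\eg_1$ for $\Gmgn$, but $\max\{\eg_1 + 2\eta/(\beta N),\,\eg_2\}$ for $\Gom$, so the statement ``all singularities lie on the left of the contour'' must be read with the shifted eigenvalue in mind. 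For the double integral one must check that the $2N$-dimensional Gaussian converges, i.e. that the $2\times 2$ matrices $\begin{psmallmatrix}z-\lambda_i & -a \\ -a & w-\lambda_i\end{psmallmatrix}$ are all positive definite on the chosen contours — equivalently $\re z,\re w$ large enough and $(\re z - \lambda_i)(\re w - \lambda_i) > a^2$ — and that the two successive Laplace inversions can be carried out with contours in this region; this interchange-of-integrals / analytic-continuation bookkeeping, rather than any single computation, is the main obstacle, and it is exactly the point deferred to Appendix~\ref{sec:integraloverlapproof}.
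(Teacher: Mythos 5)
Your proposal is correct and follows essentially the same route as the paper's appendix: reduce the first two MGFs to ratios of partition functions (shift $h\mapsto h+\eta/N$, respectively perturb $M$ by $\frac{2\eta}{N}\vu_1\vu_1^{\mathsf T}$ so that $\lambda_1\mapsto\lambda_1+\frac{2\eta}{N}$), and for the replica overlap diagonalize, take a two-variable Laplace transform of the radial parameters, evaluate the resulting $2N$-dimensional Gaussian with the $2\times 2$ coupling blocks $\begin{pmatrix}z-\lambda_i & -a\\-a& w-\lambda_i\end{pmatrix}$, and invert. One small slip in your closing discussion: the shifted singularity for $\Gom$ sits at $\lambda_1+\frac{2\eta}{N}$, not $\lambda_1+\frac{2\eta}{\beta N}$ (your own rank-one decomposition $M+\frac{2\eta}{N}\vu_1\vu_1^{\mathsf T}$ already has this right), so the contour constraint should be $\cp>\max\{\lambda_1+\frac{2\eta}{N},\lambda_2\}$.
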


\section{Results from random matrices}\label{sec:RMT}

Since the disorder matrix $M$ is a GOE matrix, the eigenvectors are uniformly distributed on the sphere. 
On the other hand, the eigenvalues statistics are well studied in random matrix theory. 
We summarize several definitions and properties of the eigenvalues and other related quantities that we use in this paper. 

\subsection{Probability notations}

There are two types of randomness, one from the disorder sample $M$ and $\efv$, and the other from the Gibbs (thermal) measure. 
%.  First, we have the randomness coming from the matrix $M$ and the vector $\mathbf{g}$.  We refer to the joint realization of $M$ and $\mathbf{g}$ as a disorder sample.  Second, we have the randomness coming from the selection of a spin configuration $\sigma$ from the Gibbs distribution.  
We often need to distinguish them. %Here we explain some notations. % and asymptotic notations. 
%When we discuss overlaps, we consider the thermal fluctuations for a given disorder sample. The result holds for disorder samples whih 
%When we discuss the free energy, the only source of randomness is the disorder sample.  In our discussions of the overlaps, however, we %must handle both types of randomness.  For those results, we fix the disorder sample and then examine the distribution of the overlap as a function of that fixed sample \Cr{[P: previous few words not too clear]}.  In particular, we consider Gibbs (thermal) average of each overlap as well as the thermal fluctuations. 
We add the subscript $s$ to denote sample probability or sample expectation such as $\PP_s$ and $\E_s$. %The quantities with the subscription refer to thermal probability and expectation. 
In addition, we use the following notations. 

\begin{definition} 
When describing the limiting distributions in our results, we consider two classes of random variables, which we refer to as sample random variables and thermal random variables.  To distinguish between these two classes, we denote them with the calligraphic font and the gothic font respectively.  For example a standard Gaussian sample random variable and a standard Gaussian thermal variable will be denoted below by
\beq
\mathcal{N} \quad\text{and}\quad\mathfrak{N}
%\mathcal{N} (\mu,v)\quad\text{and}\quad\mathfrak{N}(\mu,v)
\eeq
respectively. %, where $\mu$ is the mean and $v$ is the variance.
\end{definition}

\begin{definition} Asymptotic notations:
\begin{itemize}
\item If $\{E_N\}_{N=1}^\infty$ is a sequence of events, we say that $E_N$ holds asymptotically almost surely (or everywhere) if $\PP_s(E_N)\to1$ as $N\to\infty$.  This probability is with respect to the choice of disorder sample.
\item For two $N$-dependent random variables $A:= A_N$ and $B :=B_N$, the notation
\beq
		A = \bhp{B}
\eeq
means that, for any $\e>0$, the inequality $A\le BN^\e$ holds asymptotically almost surely.

\item The notation $\simeq$ means an asymptotic expansion up to the terms indicated on the right-hand side and the notation 
 $\asymp$ denotes two sides are of the same order.  When we say $A\asymp\cO(B)$ we mean that, for any $\e>0$, the inequality $BN^{-\e}<A<BN^\e$ holds asymptotically almost surely.
 \end{itemize}
 \end{definition}
 
 \begin{definition} Convergence notations:
 \begin{itemize}
\item The convergence in distribution of a sequence of random variables $X_N$ to a random variable $X$ with respect to the disorder variables is denoted by $X_N\Rightarrow X$. 
\item We use the notations $\eqids$ and $\simeqids$ to denote an equality and an asymptotic expansion in distribution with respect to the disorder sample, respectively. 
\item We use similar notations with a different font, $\eqidsgibbs$ and $\simeqidsgibbs$, to denote an equality and an asymptotic expansion in distribution with respect to the Gibbs (thermal) measure, respectively. 
\end{itemize}
\end{definition}
It is worth noting that many of our results actually hold with high probability (i.e., there exist some $D > 0,N_0>0$ such that, for all $N \geq N_0$,
$\P(E_N) >1- N^{-D}$).  While high probability is much stronger than asymptotically almost sure probability, it is much more delicate to prove and we do not discuss those proofs in the current paper.

%%%%%%%
\subsection{Semicircle law}

The empirical distribution of eigenvalues of $\sGOE$ converges to the semicircle law \cite{Mehta}: for every continuous bounded function $f(x)$,  
	\beq
	\label{eq:lln}
		\frac{1}{N} \sum_{i = 1}^N f(\eg_i) \rightarrow \int f(x)\scl
		\quad \text{where} \quad 
		\scl = \frac{\sqrt{4 - x^2}}{2\pi}\mathbbm{1}_{x \in [-2,2]}\dd x
	\eeq
with probability $1$ as $N\to \infty$. 

\begin{definition} We define the following functions for later use: 
\beq
\label{eq:def_sk}
	s_0(z) := \int \log(z - x)\scl \quad \text{and} \quad s_k(z) := \int \frac{\scl}{(z - x)^k} \quad\text{for $k=1,2,\cdots$,}
\eeq
\end{definition}
\hspace{-0.2in}\textbf{Properties:}
These functions can be evaluated explicitly as
\beq \label{eq:s_1}
\begin{split}
	s_0(z) =& \frac14 z(z - \sqrt{z^2 - 4}) + \log(z + \sqrt{z^2 - 4}) - \log 2 - \frac 12,\\
	s_1(z) =& \frac{z - \sqrt{z^2 - 4}}{2}, \quad 
	s_2(z) = \frac{z - \sqrt{z^2 - 4}}{2\sqrt{z^2 - 4}},
	\quad s_3(z) = \frac{1}{(z^2 - 4)^{3/2}}, 
	\quad s_4(z)= \frac{z}{(z^2-4)^{5/2}}
\end{split} \eeq
for $z$ not in the real interval $[-2,2]$. As $z\to 2$, we have 
\beq \label{eq:stjaspt}
\begin{split}
	s_1(z) \simeq 1-\sqrt{z-2}, \quad 
	s_2(z) \simeq \frac{1}{2\sqrt{z-2}} - \frac12,
	\quad s_3(z) \simeq \frac{1}{8 (z-2)^{3/2}}, 
	\quad s_4(z) \simeq \frac{1}{16(z-2)^{5/2}}. 
\end{split} \eeq

%%%%%%%
\subsection{Rigidity}\label{sec:rigidity}

\begin{definition}
For $i = 1,2,\cdots, N$, let  $\widehat{\eg}_i$ be the classical location defined by the quantile conditions 
\beq	\label{eq:def_classical_eg}
		\int_{\widehat{\eg}_i}^2 \scl  = \frac i N.
\eeq
We set $\widehat{\eg}_0 = 2$. We also set $\widehat{a}_i = (\widehat{\eg}_i-2) N^{2/3}$.
\end{definition}
\hspace{-0.2in}\textbf{Rigidity property:} 
The rigidity result  \cite{EYY, MR3098073} states that
\beq	\label{eq:rigidity}
		|\eg_i- \widehat \eg_i| \leq  (\min\{i, N + 1 - i\})^{-1/3}\bhp{N^{-2/3} }
\eeq
uniformly for $i = 1, 2,  \cdots, N$. 

The rigidity property allows us to apply the method of steepest descent to evaluate the integrals involving the eigenvalues since the eigenvalues are close enough to the classical location, and the fluctuations are small enough. 

%%%%%%%%
\subsection{Edge behavior}\label{sec:edgebehavior}

\begin{definition}\mbox{}
\vspace*{-\parsep}
%\vspace*{-\baselineskip}
\begin{itemize}
\item Define the rescaled eigenvalues
\beq
\label{eq:scaledevii}
	a_i := N^{2/3}(\eg_i - 2). 
\eeq
% scaling is so that the rescaled eigenvalues are $O(1)$ for finite $i$, i.e. the top eigenvalues. 
\item Define $\{\airy_i\}_{i = 1}^\infty$ to be the GOE Airy point process process to which the rescaled eigenvalues converge in distribution as $N \rightarrow \infty$  \cite{TracyWidom94, soshnikov1999universality}: 
\beq
	\{a_i\}\Rightarrow \{\airy_i\}.
\eeq
\end{itemize}
\end{definition}
%The GOE Airy point process is a certain Pfaffin point process.  
\hspace{-0.2in}\textbf{Properties:}\\
%\begin{itemize}
 The rightmost point $\airy_1$ of the GOE Airy point process has the $GOE$ Tracy-Widom distribution  
	\beq
	\label{eq:eg1dis}
		a_1 \Rightarrow \alpha_1 \eqids \twgoe.
	\eeq
 The GOE Airy point process satisfies the asymptotic property that 
\beq \label{eq:airypotamp}
		\airy_i \simeq  -\left(\frac{3\pi i}{2} \right)^{2/3} \quad \text{as $i\to \infty$.}
\eeq
This asymptotic is due to the fact that the semicircle law is asymptotic to $\frac{\sqrt{2-x}}{\pi} \dd x$ as $x\to 2$. 
 The above formula and the rigidity imply that, with high probability, 	
\beq \label{eq:egresk}
		\egres_i \asymp - i^{2/3} \quad \text{as $i, N\to \infty$ satisfying $i\le N$}
\eeq
%\end{itemize}

%%%%%%%%	
\subsection{Central limit theorem of linear statistics}

%The next order term of the asymptotic formula~\eqref{eq:lln} is of order $N^{-1}$ and has the Gaussian distribution. 
For a function $f$ which is analytic in an open neighborhood of $[-2,2]$ in the complex plane, consider the sum of $f(\eg_i)$. 
The semicircle law \eqref{eq:lln} gives its leading behavior. If we subtract the leading term, the difference 
\beq \label{eq:linearstcs}
		\sum_{i = 1}^N f(\eg_i) - N\int f(x)\scl  
\eeq
converges to a Gaussian distribution with explicit mean and variance; see, for example, \cite{JohanssonCLT, Bai2005, Lytova2009}. 
Note that unlike the classical central limit theorem, we do not divide by $\sqrt{N}$. 
%This is due to strong correlations between eigenvalues.

\begin{definition} %Applying the result above to the function $f(x)=\log(z-x)$, we define the following linear statistic for real $z>2$ where $s_0(z)$ is given by \eqref{eq:s_1}
Define
\beq \label{eq:defoflsl}
	\lsl_N(z):= \sum_{i=1}^N \log(z-\eg_i) - N s_0(z). 
\eeq
for $z>2$ where $s_0(z)$ is given by \eqref{eq:s_1}.
\end{definition}
\hspace{-0.2in}\textbf{Properties:}\\
The above-mentioned central limit theorem implies in this case that %In this case, 
\beq \label{eq:clt} 
	\lsl_N(z) \Rightarrow \NN(M(z), V(z))
\eeq
where (see Lemma A.1 in \cite{BaikLeeSSK})
\beq \label{eq:cltmv}
	M(z) %= \frac14 \left( \varphi(2)+ \varphi(-2) \right) - \frac12 \tau_0(\varphi) = \frac14 \log(z^2-4)- \frac12 \log \left( \frac{z+\sqrt{z^2-4}}{2} \right)
	= \frac12 \log \left( \frac{2\sqrt{z^2-4}}{z+\sqrt{z^2-4}} \right), \qquad
	V(z)= 2 \log \left( \frac{z+\sqrt{z^2-4}}{2\sqrt{z^2-4}} \right). 
\eeq
For later uses, we record that for $0<\beta<1$, 
\beq \label{eq:MandVforcp0}
	M(\beta+\beta^{-1})= \frac12\log(1-\beta^2), \qquad V(\beta+\beta^{-1})=-2 \log(1-\beta^2).
\eeq

%%%%%%%
\subsection{Special sums}
In this section we collect several important results about convergence of various types of sums that we will use in this paper.  Many of the results are motivated by the need to work with sums of the form
\beq \label{eq:rationalfunk}
	\frac1N \sum_{i = 2}^N \frac{1}{(\eg_1 - \eg_i)^k} , \qquad k = 1, 2, \cdots,
\eeq
or its variations. 
The above quantity looks superficially close to the linear statistics \eqref{eq:linearstcs} with $f(x)= \frac1{(\eg_1-x)^k}$ with one term removed but the function $f(x)$ is singular at $x=\eg_1$. 
We note that if we replace $f(x)$ by $\frac1{(2-x)^k}$ and use the semicircle law, we obtain $s_k(2)$ which diverges for $k\ge 2$. 
Hence, the result of the previous subsection does not apply. 
On the  hand, for $k=1$, $s_1(2)=1$. This fact indicates that the above sum still converges when $k=1$. 
%\beqq
%	s_1(2)=1
%\eeqq
%For $k=1$, the sum $\sum_{i=2}^N\frac1{\egres_1-\egres_i}=\frac{1}{N^{2/3}}\sum_{i=2}^N\frac{1}{\lambda_1-\lambda_i}$ does not converge.  However, we note that
%\beq
%	\int_{-2}^2 \frac{\scl}{2-x} = s_1(2)=1. 
%\eeq

We present several definitions, followed by their related convergence results and some brief explanation of why these results hold.
Recall the definition $a_i=(\eg_i-2)N^{2/3}$ and $n_i=\vu_i\cdot \mathbf{g}$. 

\begin{definition} We define the following random sums, which depend on the disorder sample:
\begin{itemize}
\item Define %a random function 
	\beq \label{eq:def_crvbp}
		\Xi_N := N^{1/3} \left(\frac1N \sum_{i = 2}^N \frac{1}{\eg_1 - \eg_i} - 1\right) = \sum_{i=2}^N \frac1{a_1-a_i} - N^{1/3}. 
	\eeq 
	
\item Define, for $w>0$, 
\beq \label{eq:crvdnf}
	\crv(w) := N^{1/3} \left[ \frac{1}{N} \sum_{i=1}^N \frac{n_i^2}{w N^{-2/3}+\eg_1-\lambda_i} -1 \right] 
	= \sum_{i=1}^N \frac{n_i^2}{w+\egres_1-\egres_i}-N^{1/3}. 
\eeq

\item %Let $n_i$ denotes i.i.d standard normal random variable. In our setting, we have $n_i=(O^T\efv)_i$. 
Define, for $z>2$, 
\beq
\label{eq:go00}
	\go_N(z;k) := \frac{1}{\sqrt{N}} \sum_{i = 1}^N \frac{n_i^2 - 1}{(z - \widehat \eg_i)^k} \qquad\textit{for }k\ge 1.
\eeq

\end{itemize}
\end{definition}

\begin{definition} We define the following limits, which depend on the GOE Airy point process $\{ \alpha_i\}$:
\begin{itemize}
\item Define %random variable $\Xi$ as follows (it is shown in \cite{LandonSosoe} that this limit exists almost surely):
	\beq
	\label{eq:def_crvbplim}
		\Xi := \lim_{n \rightarrow \infty} \left(\sum_{i = 2}^n \frac{1}{ \airy_1 - \airy_i} - \frac1{\pi} \int_0^{\left(\frac{3\pi n}{2}\right)^{2/3}} \frac{\dd x}{\sqrt{x}}\right).
	\eeq
Landon and Sosoe showed that the limit exists almost surely \cite{LandonSosoe}. 

\item Define $\crvlim(s)$ as follows, where $\nu_i$ are i.i.d. Gaussian random variables with mean 0 and variance 1 independent of the GOE Airy point process $\alpha_i$:
\beq	%\label{eq:def_crv_lim}
	\crvlim(s) := \lim_{n \rightarrow \infty} \left(\sum_{i = 1}^n \frac{\evg_i^2}{s +\airy_1 - \airy_i} - \frac1{\pi} \int_0^{\left(\frac{3\pi n}{2}\right)^{2/3}} \frac{\dd x}{ \sqrt{x}}\right). 
\eeq
This limit exists almost surely by a similar argument as in \cite{LandonSosoe} showing that $\Xi$ exists. 
\end{itemize}
\end{definition}

\begin{result} Using the notations above, we have the following convergence results.
\begin{itemize}

\item Landon and Sosoe proved in \cite{LandonSosoe} that
\beq\label{eq:def_crvbp}
	\Xi_N \Rightarrow \Xi .
\eeq
They use this result to describe the fluctuations of the overlap with a replica when $h=0$ and $T<1$.  
%Their analysis also uses an integral representation and a connection to random matrix theory as in this paper. 

\item We also need another version of the result \eqref{eq:def_crvbp} where the constant numerators are replaced $n_i^2$:
%For standard Gaussian external field $\mathbf{g}$, $(n_1, \cdots, n_N)$ is a standard Gaussian vector and 
\beq
\label{eq:weightedsum1}
	N^{1/3}\left(\frac1N \sum_{i = 2}^N \frac{n_i^2}{\eg_1 - \eg_i} - 1\right) \Rightarrow  \lim_{n \rightarrow \infty} \left(\sum_{i = 2}^n \frac{\nu_i^2}{\airy_1 - \airy_i} - \frac1{\pi} \int_0^{\left(\frac{3\pi n}{2}\right)^{2/3}} \frac{\dd x}{\sqrt{x}}\right)
\eeq
where $\nu_i$ are i.i.d standard Gaussians, independent of the GOE Airy point process $\alpha_i$.  
This follows from \eqref{eq:def_crvbp} and the fact that 
\beq
	\frac1{N^{2/3}} \sum_{i = 2}^N \frac{n_i^2-1}{\eg_1 - \eg_i} \Rightarrow \sum_{i = 2}^\infty \frac{\nu_i^2-1}{\airy_1 - \airy_i}
\eeq
which is a convergent series due to Kolmogorov's three series theorem and \eqref{eq:airypotamp}. 

\item By the same argument as for \ref{eq:weightedsum1}, %we have%, %when $\mathbf{g}$ is Gaussian, 
\beq \label{eq:def_crv_lim}
	\crv(w)\Rightarrow\crvlim(w)\quad\textit{for }w>0.
\eeq
%where the limit $\crvlim(s)$ exists by a similar argument as in \cite{LandonSosoe}.
\item By the Lyapunov central limit theorem and the definition of $\widehat \eg_i$, we have %when $\mathbf{g}$ is Gaussian, 
\beq
\label{eq:go}
	\go_N(z;k) 
	\Rightarrow \NN(0, 2s_{2k}(z))
\eeq
as $N\to \infty$ for $z>2$.
(Note that the variance of $n_i^2-1$ is $2$.)
\end{itemize}
\end{result}

\begin{result} In addition to the convergence results listed above, we also need estimates that hold for asymptotically almost every disorder sample. % to consider a similar class of sums in which the denominators are raised to some power $k\geq2$.
\begin{itemize}
\item A consequence of \eqref{eq:def_crvbp} is that 
\beq
	\label{eq:res_edge}
		\frac1N \sum_{i = 2}^N \frac{1}{\eg_1 - \eg_i} = \frac1{N^{1/3}} \sum_{i=2}^N \frac1{a_1-a_i} = 1 + \bhp{N^{-1/3}}.
	\eeq
	 for asymptotically almost every disorder sample.
	 
\item We have 
\beq\label{lem:specialsum}% \label{eq:weightsumk}
	\sum_{i=2}^N\frac{1}{(a_1-a_i)^k}=\bhp{1} \quad \text{and} \quad 
	\sum_{i=2}^N\frac{n_i^2}{(a_1-a_i)^k}=\bhp{1}  , \qquad\text{$k\geq2$,}% with high probability.}
\eeq
 for asymptotically almost every disorder sample.
This follows from the fact that the $a_i\asymp -i^{2/3}$ and the difference $(a_1-a_2)^{-1}$ is of order 1 with vanishing probability.
%high probability due to the rigidity of eigenvalues.  
%The result \ref{lem:specialsum} is sufficient for the purposes of this paper, but one can also show the more general result $\sum_{i=j+1}^N\frac{1}{(a_1-a_i)^k}=O(j^{-2k/3+1})$ where $j$ is allowed to grow with $N$.  This more general statement is useful if one wants to rigorously verify the upper bound on the quantity $\tilde{s}$ in section \ref{sec:fetranltmp} of the paper.  However, we do not provide the details of that proof.

\item We also need the result
\beq \label{eq:weightls}
	\frac1{N} \sum_{i = 1}^N \frac{n_i^2}{(z - \eg_i)^k}  = s_k(z) + \frac{\go_N(z;k)}{\sqrt{N}} + \bhp{N^{-1}}, 
	\qquad z>2, \quad k>1
\eeq  
for asymptotically almost every disorder sample.
 %It is important to note here that, while $\go_N(z;k)$ depends on the $n_i$, it does not depend on the eigenvalues $\lambda_i$ (instead it is a function of their classical locations).  
To justify \eqref{eq:weightls}, we observe that 
\beq
	\frac1{N} \sum_{i = 1}^N \frac{n_i^2}{(z - \eg_i)^k} = \frac1{N} \sum_{i = 1}^N \frac{1}{(z - \eg_i)^k} + \frac1{N} \sum_{i = 1}^N \frac{n_i^2-1}{(z - \eg_i)^k}.
\eeq
We then use the central limit theorem \eqref{eq:linearstcs} for linear statistics for the first sum and replace $\eg_i$ by $\widehat \eg_i$ in the second sum using the rigidity \eqref{eq:rigidity}.  
%Recall that  $n_i$'s and $\eg_i$'s are independent of each other. 
%Applying the central limit theorem \eqref{eq:linearstcs} for linear statistics for the first sum and using  the rigidity \eqref{eq:rigidity} of the eigenvalues for the second sum, we get \ref{eq:weightls}.

%\item More generally, if, for some sequence of positive numbers $\{b_i\}$, we know the asymptotics of the sum $\sum_{i=1}^N\frac{1}{b_i}$ as $N\to\infty$, then we can conclude
%\beq\label{lem:weightedsumni2}
%\sum_{i=1}^N\frac{n_i^2}{b_i}=\mathcal{O}\left(\sum_{i=1}^N\frac{1}{b_i}\right)
%\eeq
%where $\{n_i^2\}$ is the same sequence of standard gaussians described above.  We will use this result for the case of $b_i=(a_1-a_i)^k$.  To verify \eqref{lem:weightedsumni2}, it suffices to observe that the $n_i$'s are i.i.d and $n_i^2$ follows a chi-squared distribution which means, in particular, that it is subexponential.  Using properties of weighted sums of subexponential random variables, \eqref{lem:weightedsumni2} follows.

\end{itemize}
\end{result}

%%%%%%%%%
\section{Fluctuations of the free energy}\label{sec:feh>0}
%\Cp{Note: In terms of formatting, the results for free energy are split across three sections (5,6,7) whereas the results for each type of overlap get one section each.  We could consider combining 5,6,7 into one section, although it probably doesn't matter a whole lot.}

From the integral formula  \eqref{eq:laplace_trans},  using 
\beq	
	C_N = \frac{\sqrt{N} \beta}{2\ii \sqrt{\pi}(\beta e)^{N/2}} (1 + O(N^{-1})), 
\eeq
the free energy can be written as 
\beq\label{eq:fe_stedes}
	\fe_N = \frac{1}{2\beta}(\G(\cp) - 1 - \log\beta) + \frac1{N\beta} \log \left(  \frac{\sqrt{N} \beta}{2\ii \sqrt{\pi}} \int_{\cp - \ii \infty}^{\cp + \ii \infty}  e^{\frac N2 (\G(z) - \G(\cp))} \dd z\right) %+ \frac1{N\beta} \log \left( \frac{\sqrt{N} \beta}{2\ii \sqrt{\pi}} \right) 
	+ O(N^{-2})
\eeq
where $O(N^{-2})$ is a constant that does not depend on the disorder sample $M$ and $\efv$. 
We evaluate the integral asymptotically using the method of steepest descent.  
The formula for $\G(z)$ is given in \eqref{eq:def_G} and  
\beq
\label{eq:dG}
	\G'(z) = \beta - \frac 1 N\sum_{i = 1}^N \frac{1}{z - \eg_i} - \frac{\ef^2\beta}{N}\sum_{i = 1}^N \frac{n_i^2}{(z - \eg_i)^2}
	\qquad \text{where} \quad  n_i= \vu_i\cdot \efv.
\eeq
For real $z$, $\G'(z)$ is an increasing function taking values from $-\infty$ to $\beta$ as $z$ moves from $\eg_1$ to $\infty$. 
Hence, there is a unique real critical point $\gamma$ satisfying 
\beqq
	\G'(\gamma)=0, \qquad \gamma>\eg_1. 
\eeqq
We set $\cp$ for the contour of \eqref{eq:fe_stedes} to be this critical point. 

In this section, we use the formula \eqref{eq:fe_stedes} to evaluate the fluctuations of the free energy when the external field strength $\ef$ is fixed. 
For the case  %when 
$\ef=0$, this computation was done in \cite{kosterlitz1976spherical} for the leading deterministic term and in \cite{BaikLeeSSK} for the subleading term. 
For fixed $\ef>0$, the fluctuations for the SK model were computed in \cite{ChenDeyPanchenko} using a method different from the one of this paper. 
We first review the computation of \cite{BaikLeeSSK} for $\ef=0$ and then give a new computation for fixed $\ef>0$ using the above integral formula. 
%Our result when $\ef>0$ is in agreement with the result of \cite{ChenDeyPanchenko}. 

\medskip

The following formula will be used in one of the subsections: Since $\G(z)-\G(\cp)= \G(z)-\G(\cp)-\G'(\cp)(z-\cp)$, 
we can write %the exponent can be written as
\beq	\label{eq:integconsfe}
	N(\G(z) - \G(\cp)) = -\sum_{i = 1}^N \left[\log(1 + \frac{z - \cp}{\cp - \eg_i}) - \frac{z - \cp}{\cp - \eg_i}\right] + \ef^2 \beta \sum_{i = 1}^N \frac{n_i^2(z - \cp)^2}{(z - \eg_i)(\cp - \eg_i)^2}.
\eeq
%This expression will also be used in the next two sections. 

%%%%%%
\subsection{No external field: $\ef = 0$}
%As mentioned above, the computations in this subsection were performed previously in \cite{kosterlitz1976spherical} for the leading deterministic terms and in \cite{BaikLeeSSK} for the fluctuation terms.  

%%%%%
\subsubsection{High temperature regime: $\tmp > 1$} \label{sec:seconhitzef}

When $\ef = 0$, we write, using the notation \eqref{eq:defoflsl}, 
\beq
	\G(z) = \beta z - \frac 1 N\sum_{i = 1}^N \log(z - \eg_i) = \beta z - s_0(z) - \frac{\lsl_N(z)}{N} , 
	\qquad s_0(z)= \int \log(z-x) \scl.
\eeq
From \eqref{eq:clt}, $\lsl_N(z)= \bhp{1}$ for fixed $z>2$. 
Thus, $\G_0(z):= \beta z - s_0(z)$ is an approximation of the function $\G(z)$ and we first find the critical point $\cp_0$ of $\G_0(z)$ satisfying $\cp_0>2$, where we recall that the largest eigenvalue $\eg_2\to 2$. 
Since $\G_0''(z)>0$, we find that $\min_{z\ge 2} \G'_0(z)= \G'_0(2)= \beta -1$ from the formula \eqref{eq:s_1} of $s_0'(z)=s_1(z)$. 
Thus, the critical point of $\G_0(z)$ exists only when $\frac1{\beta}=T>1$. From the formula, we find that for $T>1$, it is given by 
\beq
	\cp_0:= \beta + \beta^{-1}= T+T^{-1}. 
\eeq
In this case, a simple perturbation argument (see Appendix \ref{sec:pert}) implies that $\cp = \cp_0 + \bhp{N^{-1}}$ and 
\beq
\label{eq:Gcp_h0}
	\G(\cp) =\G(\cp_0) - \frac{\lsl_N(\cp_0)}{N} + \bhp{N^{-2}}=   \frac{\beta^2}{2} + 1 + \log \beta - \frac{\lsl_N(\cp_0)}{N} + \bhp{N^{-2}}.
\eeq
%The random variable $\lsl_N(\cp_0)$ converges to a Gaussian distribution, whose mean and the variance are given by \eqref{eq:MandVforcp0}.

Even though the integral in \eqref{eq:fe_stedes} involves the disorder sample, the rigidity of the eigenvalues from Section \ref{sec:rigidity} implies that, with high probability, the eigenvalues are close to the non-random classical locations (i.e. the quantiles of the semicircle law). 
Thus, we can still apply the method of steepest descent when the disorder sample is in an event of the high probability. 
Using 
\beqq
	\G''(\gamma)\simeq \G_0''(\gamma_0)=s_2(\cp_0) = \frac{\beta^2}{1 - \beta^2}
\eeqq
and $\G^{(k)}(\cp)=\bhp{1}$ for all $k\ge 2$, the Gaussian approximation of the integral is valid and we find that  
\beq \label{eq:ich0highT}
	\int_{\cp - \ii \infty}^{\cp + \ii \infty}  e^{\frac N2 (\G(z) - \G(\cp))} \dd z 
	\simeq  \frac{\ii \sqrt{4\pi}}{\sqrt{N s_2(\cp_0)}}  = \frac{\ii \sqrt{4\pi (1-\beta^2)}}{\sqrt{N \beta^2}} .
\eeq

Inserting everything into \eqref{eq:fe_stedes} and using the fact that $\lsl_N(\cp_0)$ converges to a Gaussian distribution with mean and variance given by \eqref{eq:MandVforcp0}, we obtain the following result. This result was proved rigorously in \cite{BaikLeeSSK}. 

\begin{theorem}[\cite{BaikLeeSSK}] \label{thm:freehigh}
For $h=0$ and $T>1$, 
\beq
\label{eq:feh0htmp}
	\fe_N(\tmp, 0) = \frac{1}{4T} +\frac{T}{2 N} \left[ \log(1-T^{-2}) - \lsl_N(\cp_0) \right]	+ \bhp{N^{-3/2}}
\eeq
as $N\to \infty$ with high probability, where $\cp_0=T+T^{-1}$ and $\lsl_N(z)$ is defined in \eqref{eq:defoflsl}. As a consequence, 
\beq \label{eq:higtemzeef}
	\fe_N (\tmp, 0) \simeqids  \frac1{4\tmp} + \frac{\tmp}{2N} \NN(-\mn, 4\mn)  \qquad \mn := -\frac12 \log(1-\tmp^{-2}),
\eeq
where $\NN(a, b)$ is a (sample) Gaussian distribution of mean $a$ and variance $b$. 
%where $\mn = -\frac12 \log(1-\tmp^{-2})$. 
\end{theorem}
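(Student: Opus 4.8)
The plan is to feed the method of steepest descent into the integral representation \eqref{eq:fe_stedes} and to keep track of the first subleading term. The first step is to separate the deterministic part of $\G$: writing $\G(z)=\G_0(z)-N^{-1}\lsl_N(z)$ with $\G_0(z)=\beta z-s_0(z)$, the linear-statistics CLT \eqref{eq:clt} gives $\lsl_N(z)=\bhp{1}$ for fixed $z>2$, so $\G_0$ is an $\bhp{N^{-1}}$ approximation of $\G$ near such $z$. Since $\G_0''>0$ and $\min_{z\ge 2}\G_0'(z)=\G_0'(2)=\beta-1<0$ precisely when $T>1$ (using $s_1(2)=1$), the equation $\G_0'(z)=0$ has in that regime a unique root $\cp_0=\beta+\beta^{-1}>2$, with $\G_0(\cp_0)=\tfrac{\beta^2}{2}+1+\log\beta$ so that $\tfrac1{2\beta}(\G_0(\cp_0)-1-\log\beta)=\tfrac{\beta}{4}=\tfrac1{4T}$.

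Next I would locate the true critical point $\cp$ of $\G$. Using rigidity \eqref{eq:rigidity} to control the random sums appearing in $\G'$ together with the perturbation lemma of Appendix \ref{sec:pert}, one obtains $\cp=\cp_0+\bhp{N^{-1}}$, and hence the expansion $\G(\cp)=\G_0(\cp_0)-N^{-1}\lsl_N(\cp_0)+\bhp{N^{-2}}$ recorded in \eqref{eq:Gcp_h0}; the order-$(\cp-\cp_0)$ contribution drops out because $\cp_0$ is critical for $\G_0$ and $\G-\G_0$ is already $\bhp{N^{-1}}$. Substituting into the leading piece $\tfrac1{2\beta}(\G(\cp)-1-\log\beta)$ of \eqref{eq:fe_stedes} yields $\tfrac1{4T}-\tfrac{T}{2N}\lsl_N(\cp_0)+\bhp{N^{-2}}$.

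Then I would carry out the steepest descent on the integral itself. The contour is the vertical line through the real critical point $\cp$, which is the direction of steepest descent; with $\G''(\cp)\simeq\G_0''(\cp_0)=s_2(\cp_0)=\tfrac{\beta^2}{1-\beta^2}>0$ and all higher derivatives $\G^{(k)}(\cp)=\bhp{1}$ by rigidity, the Gaussian/Laplace approximation gives $\int_{\cp-\ii\infty}^{\cp+\ii\infty}e^{\frac N2(\G(z)-\G(\cp))}\,\dd z\simeq \tfrac{\ii\sqrt{4\pi}}{\sqrt{N s_2(\cp_0)}}$ as in \eqref{eq:ich0highT}. Inserting this into the logarithmic term of \eqref{eq:fe_stedes} produces $\tfrac1{N\beta}\log\sqrt{1-\beta^2}=\tfrac{T}{2N}\log(1-T^{-2})$ up to $\bhp{N^{-3/2}}$; combining with the previous paragraph gives \eqref{eq:feh0htmp}. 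Finally the distributional statement \eqref{eq:higtemzeef} follows from the CLT \eqref{eq:clt}--\eqref{eq:cltmv} evaluated at $z=\cp_0=\beta+\beta^{-1}$, where \eqref{eq:MandVforcp0} gives $M(\cp_0)=-\mn$ and $V(\cp_0)=4\mn$ with $\mn=-\tfrac12\log(1-T^{-2})$, so that $\log(1-T^{-2})-\lsl_N(\cp_0)\Rightarrow\NN(-\mn,4\mn)$.

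The main obstacle is making the steepest-descent step genuinely rigorous: one must bound the integrand uniformly along the entire vertical contour and show that the part with $|\im z|$ bounded away from $0$ is exponentially negligible despite $\G$ depending on the random eigenvalues. This requires quantitative control (via rigidity) of $\re\big(\G(z)-\G(\cp)\big)$ on the contour, including near the spectral edge $\lambda_1\to 2$ where $\G'$ blows up, and the accompanying estimate $\cp-\cp_0=\bhp{N^{-1}}$. These are precisely the estimates established in \cite{BaikLeeSSK}, which is why the statement is quoted here as a theorem rather than re-proved.
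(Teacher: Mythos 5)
Your proposal reproduces the paper's own argument step for step: the decomposition $\G=\G_0-N^{-1}\lsl_N$ with $\G_0(z)=\beta z-s_0(z)$, locating $\cp_0=\beta+\beta^{-1}$ via $\G_0'$ increasing and $\G_0'(2)=\beta-1$, the perturbation-lemma expansion $\cp=\cp_0+\bhp{N^{-1}}$ with $\G(\cp)=\G_0(\cp_0)-N^{-1}\lsl_N(\cp_0)+\bhp{N^{-2}}$, the Gaussian steepest-descent estimate with $\G''(\cp)\simeq s_2(\cp_0)=\beta^2/(1-\beta^2)$, and the CLT applied at $z=\cp_0$ using \eqref{eq:MandVforcp0}. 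This is exactly the computation given in Subsection 5.1.1 of the paper, including the honest remark that the contour-tail estimates needed to make the steepest descent rigorous are the content of \cite{BaikLeeSSK}.
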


%%%%%%%%%
\subsubsection{Low temperature regime: $\tmp < 1$}
\label{subsubsec:ltmph0}

In contrast to the previous section, the function $\G_0(z)= \beta z - s_0(z)$ is no longer a good approximation of $\G(z)$ for $0<\tmp<1$  when $h=0$.
Indeed, the function $\G_0(z)$ does not have a critical point satisfying $z>2$. Hence,  we need to find the critical point $\cp$ of $\G(z)$ directly. 
Since the critical point of $\G_0(z)$ when $T=1$ is given by $\cp_0=2$, it is reasonable to assume that  when $0<T<1$, $\cp$ is close to the large eigenvalue $\eg_1$. It turns out that $\cp=\eg_1+ \bhp{N^{-1}}$. 
We set $\cp = \eg_1+ sN^{-1}$ with $s=\bhp{1}$ and determine $s$. 
Separating out the term with $i = 1$ in the equation \eqref{eq:dG} and using \eqref{eq:res_edge}, 
\beq	
\label{eq:cph0ltmp}
	\G'(\cp) = \beta - \frac1{N(\cp-\eg_1)}- \frac1N \sum_{i=2}^N \frac1{\cp-\eg_i} = \beta  -  \frac{1}{s} - 1 + \bhp{N^{-1/3}} =0. 
\eeq
Thus $s = \frac{1}{\beta - 1} + \bhp{N^{-1/3}}$, which is consistent with our assumption that $s = \bhp{1}$. To evaluate 
%\Cr{[P:linebreak eliminated]} 
%\Cr{[P: why is there $h$ is the following formula? has not $h$ been set to zero here?]} \Cbr{[It was a typo. Removed the term involving $h$]} \Cb{
\beqq
	\G(\cp) = \beta \cp - \frac1N\sum_{i = 1}^N \log( \cp - \eg_i) , %+ \frac{\ef^2\beta}{N}\sum_{i = 1}^N \frac{n_i^2}{ \cp - \eg_i},
\eeqq
we use \eqref{eq:defoflsl}-\eqref{eq:cltmv}. We need to evaluate $\sum_{i=1}^N \log(z-\lambda_i)$ for $z= 2+O(N^{-2/3})$.  Observe that 
\beqq
	M(z)=O(\log(z-2)) \quad \text{and} \quad V(z)=O(\log(z-2))\quad \text{as $z\to 2$.}
\eeqq
Hence, a formal application of \eqref{eq:clt} to this case using
	$s_0(z)%= \frac14 z(z-\sqrt{z^2-4}) -\frac12 + \log \left( \frac{z+\sqrt{z^2-4}}2 \right) 
	= \frac12 + (z-2) + O((z-2)^{3/2})$
implies that for $z\to 2$ such that $|z-2|\ge N^{-d}$ for some $d>0$, 
\beq \label{eq:smoflz}
	\frac1{N} \sum_{i=1}^N \log\left(z-\lambda_i \right) = s_0\left( z \right) + \bhp{N^{-1}}
	= \frac12 + (z-2)+ \bhp {N^{-1}}+O((z-2)^{3/2}).
\eeq
This heuristic computation indicates that 
\beq
	\G(\cp) = \beta \cp - \frac 1 N\sum_{i = 1}^N \log( \cp - \eg_i)
	=2 \beta - \frac12 + (\beta - 1)(\eg_1 - 2) + \bhp{N^{-1}}.
\eeq

We now consider the integral in \eqref{eq:fe_stedes}. 
For $k\ge 2$, we have, using the notation \eqref{eq:scaledevii} for the scaled eigenvalues $\egres_i=N^{2/3}(\eg_i-2)$ and the estimate \eqref{lem:specialsum}, 
\beq \begin{split}
	\frac{\G^{(k)}(\cp)}{(-1)^k(k-1)!}= \frac1{N} \sum_{i=1}^N \frac1{(\cp-\eg_i )^k}  
	= \frac{N^{k-1}}{s^k} + N^{\frac23k-1} \sum_{i=2}^N \frac1{(\egres_1+sN^{-1/3}-\egres_i)^k}
	= \bhp{N^{k-1}}
%	= 2 N^{2/3} H^2\beta \sum_{i=1}^N  \frac{n_i^2}{(s+\egres_1-\egres_i)^3} 
\end{split} \eeq
with high probability. The estimate $\G''(\cp)=\bhp{N}$ indicates that the main contribution to the integral comes from a neighborhood of radius $N^{-1}$ of the critical point. 
However, all terms of the Taylor series
\beqq
	N\left( \G(\cp+u N^{-1}) - \G(\cp) \right) = \sum_{k=2}^N N^{1-k} \frac{\G^{(k)}(\cp)}{k!} u^k
\eeqq
are of the same order $\bhp{1}$ for finite $u$.  Hence, we cannot replace the integral with a Gaussian integral. 
Instead, we proceed as follows. Using the formula \eqref{eq:integconsfe}, separating out the $i=1$ term from the sum, using a Taylor approximation for the remaining sum, and using \eqref{lem:specialsum}, 
\beq
\label{eq:integparth0ltmp} \begin{split}
	N \left( \G(\cp+u N^{-1}) - \G(\cp) \right) 
	&=  - \log \left(1+ \frac{u}{s}\right) + \frac{u}{s} + O\left(  \sum_{i=2}^N   \frac{u^2 N^{-2/3}}{ (\egres_1-sN^{-1/3} - \egres_i)^2}   \right) \\	
	&= - \log \left(1 + \frac{u}{s}\right) + \frac{u}{s} + \bhp{N^{-2/3}}
\end{split} \eeq
with high probability for finite $u$. 
From this, 
\beq
	\int_{\cp - \ii \infty}^{\cp + \ii \infty}  e^{\frac N2 (\G(z) - \G(\cp))} \dd z 
	\simeq \frac1{N} \int_{-\ii\infty}^{\ii \infty} \frac{e^{\frac{u}{s}}}{1+ \frac{u}{s}} \dd u \asymp \bhp{N^{-1}}.
\eeq
We do not need the exact value of the integral, but only the estimate that its log is $\bhp{\log N}$. 

We thus obtain the following result, which was proved rigorously in \cite{BaikLeeSSK}.  

\begin{theorem}[\cite{BaikLeeSSK}] \label{thm:freelow}
For $h=0$ and $0\le T<1$, 
\beq  \label{eq:fe10ds}
	\fe_N(\tmp, 0) =  1- \frac{3T}{4} + \frac{ T \log T }{2} + \frac{1-T} {2N^{2/3}} a_1 + \bhp{N^{-1}}
\eeq
as $N\to\infty$
with high probability. 
As a consequence,  
\beq
\label{eq:fe_h0_ltemp}
	\fe_N(\tmp, 0) \simeqids  1 - \frac{3\tmp}{4} + \frac {\tmp\log \tmp} 2+ \frac{1-T}{2N^{2/3}} \twgoe.
\eeq
\end{theorem}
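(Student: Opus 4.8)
The plan is to evaluate the contour integral in \eqref{eq:fe_stedes} by steepest descent, with the saddle point collapsing onto the largest eigenvalue $\eg_1$, and then read off both the deterministic leading term and the $N^{-2/3}$ fluctuation. Since the saddle of the macroscopic approximation $\G_0(z)=\beta z-s_0(z)$ sits at $z=2$ exactly at $T=1$ and disappears for $T<1$, one looks for the real critical point $\cp>\eg_1$ of the true $\G$ in the form $\cp=\eg_1+sN^{-1}$ with $s=\bhp{1}$: isolating the $i=1$ term in $\G'(\cp)=0$ and applying \eqref{eq:res_edge} to the remainder gives $\beta-\tfrac1s-1+\bhp{N^{-1/3}}=0$, hence $s=\tfrac1{\beta-1}+\bhp{N^{-1/3}}>0$, consistent with $\cp>\eg_1$ and with the ansatz, as in \eqref{eq:cph0ltmp}. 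I would then compute $\G(\cp)=\beta\cp-\tfrac1N\sum_{i=1}^N\log(\cp-\eg_i)$: the $i=1$ term contributes $\tfrac1N\log(sN^{-1})=\bhp{N^{-1}}$ and is negligible, and for the rest I would combine rigidity \eqref{eq:rigidity} with the edge expansion $s_0(z)=\tfrac12+(z-2)+O((z-2)^{3/2})$ and the linear-statistics CLT \eqref{eq:clt}--\eqref{eq:cltmv}, pushed down to the scale $z=2+\bhp{N^{-2/3}}$ with an $\bhp{N^{-1}}$ error, to obtain $\G(\cp)=2\beta-\tfrac12+(\beta-1)(\eg_1-2)+\bhp{N^{-1}}$.

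The delicate step is the integral itself, because (unlike the high-temperature case) no Gaussian approximation is available: the bounds $\G^{(k)}(\cp)=\bhp{N^{k-1}}$ show that $\G$ varies on scale $N^{-1}$ near $\cp$ and that every Taylor coefficient of $N(\G(\cp+uN^{-1})-\G(\cp))$ is of order one. Instead I would use the exact identity \eqref{eq:integconsfe}: separating the $i=1$ term produces $-\log(1+u/s)+u/s$, while \eqref{lem:specialsum} bounds the $i\ge2$ contribution by $\bhp{N^{-2/3}}$ uniformly for bounded $u$, so after the substitution $z=\cp+uN^{-1}$,
\[
	\int_{\cp-\ii\infty}^{\cp+\ii\infty}e^{\frac N2(\G(z)-\G(\cp))}\,\dd z\simeq\frac1N\int_{-\ii\infty}^{\ii\infty}\frac{e^{u/s}}{1+u/s}\,\dd u .
\]
One needs only that this integral is finite and nonzero, so that together with the prefactor $\sqrt N\beta/(2\ii\sqrt\pi)$ the term $\tfrac1{N\beta}\log(\cdots)$ in \eqref{eq:fe_stedes} is $\bhp{N^{-1}}$; this must be complemented by a tail bound showing that the part of the vertical contour with $|z-\cp|\gg N^{-1}$ is negligible because $\re(\G(z)-\G(\cp))$ stays uniformly negative there.

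Substituting into \eqref{eq:fe_stedes}, the deterministic part is $\tfrac1{2\beta}(2\beta-\tfrac12-1-\log\beta)=1-\tfrac3{4\beta}-\tfrac{\log\beta}{2\beta}=1-\tfrac{3T}{4}+\tfrac{T\log T}{2}$, the edge part is $\tfrac1{2\beta}(\beta-1)(\eg_1-2)=\tfrac{1-T}{2}(\eg_1-2)=\tfrac{1-T}{2N^{2/3}}a_1$, and all remaining contributions are $\bhp{N^{-1}}$, giving \eqref{eq:fe10ds}; then \eqref{eq:fe_h0_ltemp} follows from $a_1\Rightarrow\twgoe$ (see \eqref{eq:eg1dis}) by the continuous mapping theorem. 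The boundary case $T=0$ is handled directly, since then $\fe_N=\tfrac1N\max_{\boldsymbol\sigma}\tfrac12\spin\cdot\sGOE\spin=\tfrac{\eg_1}2$, which matches the formula in the limit $T\to0$. I expect the main obstacle to be the second step above: making rigorous both the localization/tail control of the vertical contour integral and the extension of the linear-statistics CLT down to the edge scale $z=2+\bhp{N^{-2/3}}$ while retaining an $\bhp{N^{-1}}$ error — which is precisely where rigidity of the eigenvalues is used in an essential way.
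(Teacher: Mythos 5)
Your proposal follows the paper's own argument (Subsection 5.1.2) essentially step by step: the same ansatz $\cp=\eg_1+sN^{-1}$, the same identification $s=\frac1{\beta-1}+\bhp{N^{-1/3}}$ via \eqref{eq:res_edge}, the same combination of rigidity, the edge expansion of $s_0$, and the near-edge linear-statistics heuristic \eqref{eq:smoflz} to get $\G(\cp)=2\beta-\tfrac12+(\beta-1)(\eg_1-2)+\bhp{N^{-1}}$, and the same exact-identity trick \eqref{eq:integconsfe} with separation of the $i=1$ pole to reduce the contour integral to $\frac1N\int\frac{e^{u/s}}{1+u/s}\dd u=\bhp{N^{-1}}$. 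The obstacles you flag (contour tail control and pushing the CLT to edge scale) are exactly the two points the paper treats heuristically and defers to \cite{BaikLeeSSK} for rigour, and your $T=0$ discussion matches the remark following the theorem.
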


\begin{remark}
The zero temperature case $T=0$ of the theorem is the standard random matrix theory result that the largest eigenvalue of a GOE matrix converges to the Tracy-Widom distribution. We see that a formal  $T\to 0$ limit of the result implies this statement. 
Similarly, all results of this paper, other than those that have $T>1$ restrictions, have a convergent formal limit if we take $T\to 0$. 
Hence, even though we need a separate argument since there is no integral representation, we expect that all results are valid for the $T=0$ case as well. 
%\Cr{[P: there are exceptions such as the susceptibility right?]} \Cbr{[If we fix $H$ and take $N\to \infty$, the susceptibility formula \eqref{eq:sucatt0} should still be true. However, if we take $T\to 0$ and $H\to 0$, then the susceptibilities can diverge depending on how fast $T$ and $H$ approach $0$. ]} 
\end{remark}

%%%%%%%%
\subsection{Positive external field: $\ef=O(1)$} \label{sec:freeenergypof}

Fix $\ef>0$. We use  \eqref{eq:clt} and \eqref{eq:weightls} to write 
\beqq
	\G(z) = \beta z - s_0(z) + \ef^2\beta \left[ s_1(z) + \frac{1}{\sqrt{N}}\go_N(z;1) \right]+ \bhp{N^{-1}} 
\eeqq
for $z>\eg_1$. 
The random variable $\go_N(z;k)$ is defined in \eqref{eq:go00} and it converges in distribution to $\NN(0, 2s_{2k}(z))$; see \eqref{eq:go}. 
This time, $\G(z)$ is approximated by the function $\G_0(z)=\beta z - s_0(z) + \ef^2\beta  s_1(z)$. 
Its derivative $\G_0'(z)= \beta - s_1(z)-\ef^2\beta s_2(z)$ is an increasing function for $z>2$ and $\G'(z)\to-\infty$ as $z\downarrow 2$ while $\G'(z)\to+\infty$ as $z\to +\infty$. 
Hence, unlike in the case of $h=0$, there is a point $\cp_0>2$ satisfying $\G_0'(\cp_0)=0$ for all $T>0$. It satisfies the equation
\beq
\label{eq:cp0h>0}
	\G_0'(\cp_0)= \beta - s_1(\cp_0) - \ef^2\beta  s_2(\cp_0) = 0 .
\eeq
A perturbation argument  (see Appendix \ref{sec:pert}) implies that the critical point $\cp$ of $\G(z)$ has the form
\beq
\label{eq:cp_h>0}
	\cp = \cp_0 + \cp_1 N^{-1/2} + \bhp{N^{-1}}.
\eeq
We do not need a formula for $\cp_1$ in this section, but we record it here since we use it in later sections;  
\beq \label{eq:cp1h>0}
	\cp_1 = \frac{\ef^2 \beta \go_N(\cp_0; 2)}{s_2(\cp_0) + 2\ef^2 \beta s_3(\cp_0)}
\eeq
where we used the fact that $\frac{\dd}{\dd z} \go_N(z; 1) = -\go_N(z; 2)$. 
The perturbation argument also implies that 
\beq \label{eq:Gcph>0}
	\G(\cp) = \beta\cp_0 - s_0(\cp_0) + \ef^2\beta s_1(\cp_0) + \frac{\ef^2\beta}{\sqrt{N}}\go_N(\cp_0; 1) + \bhp{N^{-1}}.
\eeq

The integral term in \eqref{eq:fe_stedes} can be evaluated using the steepest descent method as in the case of $h=0$ and $T>1$ since $\G^{(k)}(\cp)=\bhp{1}$ for all $k\ge 2$. From the Gaussian integral approximation, 
\beq\label{lem:freeh>0integral}
	\int_{\cp - \ii \infty}^{\cp + \ii \infty}  e^{\frac N2 (\G(z) - \G(\cp))} \dd z \simeq \frac{\ii \sqrt{4\pi}}{\sqrt{N\G''(\cp)}}\asymp \bhp{N^{-1/2}}.
\eeq
%We do not need the exact value of the integral. 
%\beq
%\G''(\cp)=\frac1N\sum_{i=1}^N\frac{1}{(\cp-\lambda_i)^2}+\frac{2h^2\beta}{N}\sum_{i=1}^N\frac{n_i^2}{(\cp-\lambda_i)^3}\simeq s_2(\cp)+2h^2\beta s_3(\cp)
%\eeq

\begin{remark} We do not focus in this paper on justifying the use of steepest descent in this context, but instead provide the computations based on this method.  One can rigorously check that the steepest descent method works here, but it is also worth noting that all the contour integral computations needed in this paper can be achieved without the use of steepest descent.  In fact, for the contour integrals in sections \ref{sec:freeenergypof}, \ref{sec:fe1/4analysis}, \ref{sec:fetranltmp}, \ref{sec:mgnefpos}, \ref{sec:ext1/6}, \ref{sec:om_h>0}, \ref{sec:gs1/6}, \ref{sec:gs1/3}, \ref{sec:ovlpos}, and \ref{sec:ovl1/6} require no contour deformation at all.  Using the straight line contour and crude bounds on the order of the integrand, one can compute, up to leading order, the value of the integral in a neighborhood of $\gamma$ and then show that the tails are of smaller order.  These computations are fairly lengthy and will be omitted from this paper.  The integrals in sections \ref{sec:mgnlowtmn12} and \ref{sec:ovl12} can be treated by a similar method, but require a slight deformation of the original contour.  For ease of computation, we instead employ the steepest descent method here, but without providing rigorous justification.
\end{remark}

Combining the preceding information in this section, we obtain the following result.

\begin{result}\label{thm:feh>0} For fixed $\ef>0$ and $T>0$, 
\beq
\label{eq:feh>101}
	\fe_N(T, \ef) %=\left[  \frac{\cp_0}{2} - \frac{s_0(\cp_0)}{2\beta} + \frac{\ef^2 s_1(\cp_0)}{2} - \frac{1+ \log \beta}{2\beta} \right] 
	= \felim(\tmp, h)
	+ \frac{\ef^2 \go_N(\cp_0; 1)}{2\sqrt{N}} + \bhp{N^{-1}}
\eeq
as $N\to \infty$ with high probability where $\go_N(z;k)$ is defined in \eqref{eq:go00} and 
\beq \label{eq:felim_h>0} 
\begin{split}
	\felim(\tmp, h) 
	&:= \frac{\cp_0}{2} - \frac{\tmp s_0(\cp_0)}{2} - \frac{\tmp -\tmp\log \tmp}{2}  + \frac{\ef^2 s_1(\cp_0)}{2} 
\end{split} \eeq
with  $\cp_0$ being the solution of the equation 
\beq \label{eq:apprxcpeqhp}
	1 - T s_1(\cp_0) - \ef^2 s_2(\cp_0) = 0, \qquad \cp_0>2. 
\eeq
\end{result}

Since $\go_N(\cp_0; 1)$ converges in distribution to $\NN(0, 2s_{2}(\cp_0))$ from \eqref{eq:go}, we conclude the following result. 

\begin{result}
For fixed $\ef>0$ and $T>0$, as $N\to \infty$, 
\beq \label{eq:fluct_h>0}
	\fe_N(\tmp, h) \simeqids  \felim(\tmp, h) + \frac{1}{\sqrt{N}}\NN \left(0, \frac{h^4 s_2(\cp_0)}{2} \right) . 
\eeq
\end{result}

This result shows that the order of the fluctuations of the free energy is $N^{-1/2}$ for all $T>0$, which is different from both $N^{-1}$ for  $h=0$, $T>1$ and $N^{-2/3}$ for $h=0$, $0<T<1$.  
%This result shows that for $\ef>0$, there is no phase transition between the low and the high temperature regimes. 

%The dependence of $\cp_0$ for small $\ef$ will be studied in the next subsection. 

%%%%%
\subsection{Comparison with the result of Chen, Dey, and Panchenko}

Chen, Dey, and Penchenko computed the fluctuations of the free energy of the SK model with $h>0$ in  \cite{ChenDeyPanchenko} when $\efv = \mathbf{1}$. We compare our result with theirs. %one from \cite{ChenDeyPanchenko}. 
The adaptation of the approach of \cite{ChenDeyPanchenko} to the SSK model with $\efv=\mathbf{1}$ implies that 
$\sqrt{N} \left ( \fe_N(\tmp, h) -  \E[ \felim(\tmp, h)] \right)$ 
converges in distribution as $N \to \infty$ to the centered Gaussian distribution with variance 
\beq	
\label{eq:fevarh>0parisi}
	\frac{\ef^4 (1 - q_0)^4}{2T^2(T^2-(1 - q_0))}
\eeq
where
\beq	
\label{eq:q0overlap}
	q_0 + \ef^2  = \frac{T^2 q_0}{(1 - q_0)^2}	.
\eeq
The quantity $q_0$ has the interpretation as the overlap of two independent spins from the Gibbs measure involving the same disorder sample, i.e. the overlap of a spin with a replica. 
The formula \eqref{eq:q0overlap} was predicted using the replica saddle point method in \cite{crisanti1992sphericalp} (equation (4.5)) and \cite{FyodorovleDoussal} (equation (29) with $n=0$).
%The equation for $q_0$ was also obtained in \cite{FyodorovleDoussal} (equation (29) with $n=0$) using replica saddle point methods and also in \cite{crisanti1992sphericalp} (equation (4.5)).
%\Cbr{[See also p.61 where we refer \eqref{eq:q0overlap}]}

Our result \eqref{eq:fluct_h>0} above is for the SSK model when $\efv$ is a Gaussian vector, but it extends to the case $\efv = \mathbf{1}$.  The only difference is that the variance of the limiting Gaussian distribution \eqref{eq:fluct_h>0} changes to %from $\frac{h^4}{2} s_2(\cp_0)$ to 
\beq \label{eq:vofgiag}
	\frac{\ef^4}{2}(s_2(\cp_0) - (s_1(\cp_0))^2) .
	%\quad\text{where} \quad 
	%1-T s_1(\cp_0)-\ef^2 s_2(\cp_0)=0. 
\eeq
Using the fact that $s_2(z)= \frac{s_1(z)^2}{1-s_1(z)^2}$ for $z>2$, it is easy to check that \eqref{eq:fevarh>0parisi} and \eqref{eq:vofgiag} are same with $q_0$ and $\cp_0$ related by the equation 
\beq
	q_0=1- Ts_1(\cp_0). 
\eeq

%%%%%
\subsection{Matching between $\ef> 0$ and $\ef = 0$} \label{sec:freeenegcop}

We have considered three different regimes: (a) $\ef=0$ and $T<1$, (b)  $\ef=0$ and $T>1$, and (c) $\ef=O(1)$. 
The order of the fluctuations of the free energy in these regimes are $N^{-1}$, $N^{-2/3}$, and $N^{-1/2}$, respectively. 
In these cases, the fluctuations are governed by the disorder variables given by (a) all eigenvalues $\eg_1, \cdots, \eg_N$, (b) the top eigenvalue $\eg_1$, and (c) the combinations $n_i=\vu_i\cdot \efv$ of the eigenvectors and the external field. 
%The limiting distributions are (a) the Gaussian distribution, (b) the GOE Tracy-Widom distribution, and (c) the Gaussian distribution. 
These differences indicate that there should be transitional regimes as $h\to 0$. 
We now study the limit $\ef\to 0$ of the result obtained for the case $h>0$ and determine the transitional scaling of $h$ %in the limit $\ef\to 0$ 
heuristically by matching the order of the fluctuations. % when $\ef>0$ and when $\ef=0$. 
We need to consider the high temperature case and the low temperature case separately. 

%\marginpar{\Cb{Subsection changed a bit. Lower terms in the expansion included}}

\subsubsection{Asymptotic property of $\cp_0$}

Throughout this paper, we will make use of following property of the leading term $\cp_0$ of the critical point of $\G(z)$ when $h=O(1)$.

\begin{lemma} \label{lem:cp0behro}
Let $\cp_0>2$ be the solution of the equation \eqref{eq:apprxcpeqhp}, $1 - T s_1(\cp_0) - \ef^2 s_2(\cp_0) = 0$. Then,
as $h\to 0$, %\Cb{[$T>1$, do we we need $O(h^4)$ explicitly?]}
\beq \label{eq:cpohs} % \label{eq:cpcpforht}
	\cp_0 = 
	\begin{dcases} \tmp +T^{-1}  + \frac{\ef^2}{\tmp} + O(\ef^4)  \qquad &\text{for $T>1$,}\\
	2+ \frac{\ef^4}{4(1-T)^2} - \frac{\ef^6}{4(1- \tmp)^4} + O(\ef^8) \quad &\text{for $0< T<1$.} 
	\end{dcases}
\eeq
On the other hand, as $h\to\infty$, 
\beq \label{eq:cpvalueforhplh}
	\cp_0 =  h + \frac{T}2 + O(h^{-1}) \qquad \text{for all $T>0$.} 
\eeq 

\end{lemma}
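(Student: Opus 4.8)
The plan is to treat the defining equation $1 - T s_1(\cp_0) - \ef^2 s_2(\cp_0) = 0$ as an implicit equation for $\cp_0$ as a function of $\ef$, and to extract the asymptotics in the two regimes $\ef \to 0$ and $\ef \to \infty$ by bootstrapping from the explicit formulas for $s_1$ and $s_2$ in \eqref{eq:s_1} and their edge expansions in \eqref{eq:stjaspt}.

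For the regime $\ef \to 0$ with $T > 1$: when $\ef = 0$ the equation reduces to $1 - T s_1(\cp_0) = 0$, i.e. $s_1(\cp_0) = T^{-1} = \beta$, whose solution is $\cp_0 = \beta + \beta^{-1} = T + T^{-1}$ (using $s_1(z) = \tfrac12(z - \sqrt{z^2-4})$, so that $s_1(\beta+\beta^{-1}) = \beta$ since $\beta < 1$). Since $s_1$ and $s_2$ are analytic at this point (it lies strictly to the right of $2$) and $s_1'(\cp_0) \neq 0$, the implicit function theorem gives $\cp_0 = \cp_0(\ef)$ analytic in $\ef^2$, and I would write $\cp_0 = T + T^{-1} + c_1 \ef^2 + O(\ef^4)$ and substitute into the equation, using $s_1'(z) = -s_2(z)$ and the values $s_1(T+T^{-1}) = \beta$, $s_2(T+T^{-1}) = \beta^2/(1-\beta^2)$, to solve for $c_1$; a short computation should give $c_1 = 1/T$, matching the claimed expansion.

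For the regime $\ef \to 0$ with $0 < T < 1$: here $\cp_0 \to 2$ as $\ef \to 0$, so the analytic expansion is in powers of $\sqrt{\cp_0 - 2}$ rather than in $\cp_0 - 2$. I would set $u := \sqrt{\cp_0 - 2}$ and use the edge expansions $s_1(\cp_0) \simeq 1 - u + \cdots$ and $s_2(\cp_0) \simeq \tfrac{1}{2u} - \tfrac12 + \cdots$ from \eqref{eq:stjaspt} (carried to sufficient order — I would need the next terms, obtainable from \eqref{eq:s_1} by expanding $\sqrt{z^2-4} = \sqrt{(z-2)(z+2)}$ at $z = 2$). Substituting into $1 - T s_1 - \ef^2 s_2 = 0$ gives, to leading order, $1 - T(1-u) - \ef^2/(2u) \approx 0$, i.e. $(1-T) + Tu \approx \ef^2/(2u)$, so $u \sim \ef^2/(2(1-T))$ to leading order; feeding this back in and iterating produces $u = \tfrac{\ef^2}{2(1-T)} + (\text{higher order})$, hence $\cp_0 - 2 = u^2 = \tfrac{\ef^4}{4(1-T)^2} + \cdots$, and one more round of the bootstrap yields the $\ef^6$ correction term $-\tfrac{\ef^6}{4(1-T)^4}$ stated in the lemma.

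Finally, for $\ef \to \infty$ with arbitrary $T > 0$: here $\cp_0 \to \infty$, so I would use the large-$z$ expansions $s_1(z) = 1/z + 1/z^3 + \cdots$ and $s_2(z) = 1/z^2 + \cdots$ (from expanding $\tfrac12(z - \sqrt{z^2-4}) = \tfrac12 z(1 - \sqrt{1 - 4/z^2})$). The balance in $1 - T s_1(\cp_0) - \ef^2 s_2(\cp_0) = 0$ is then $1 \approx \ef^2/\cp_0^2$, giving $\cp_0 \sim \ef$; writing $\cp_0 = \ef + d_0 + O(\ef^{-1})$ and matching the next order (the $T s_1 \sim T/\ef$ term must cancel against the subleading term of $\ef^2 s_2$) should pin down $d_0 = T/2$, as claimed.

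The main obstacle I anticipate is purely bookkeeping: in the $0 < T < 1$ case one must expand $s_1$ and $s_2$ at the edge to one order beyond what \eqref{eq:stjaspt} records, and carefully track how the fractional powers of $\ef$ combine through two iterations of the bootstrap to confirm that the $\ef^6$ coefficient is exactly $-1/(4(1-T)^4)$ and that no $\ef^2$ or odd-power-of-$\ef$ terms survive in $\cp_0 - 2$. All three cases otherwise reduce to the implicit function theorem plus Taylor/Laurent expansion of the explicitly known functions $s_1, s_2$.
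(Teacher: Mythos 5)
Your proposal is correct and follows essentially the same route as the paper's own proof: for $T>1$ perturb around the $h=0$ solution $T+T^{-1}$, for $0<T<1$ use the edge expansions $s_1(z)\simeq 1-\sqrt{z-2}$, $s_2(z)\simeq\tfrac{1}{2\sqrt{z-2}}-\tfrac12$ from \eqref{eq:stjaspt} and bootstrap, and for $h\to\infty$ use $s_k(z)=z^{-k}+O(z^{-k-2})$. Your flagged bookkeeping in the $0<T<1$ case does indeed produce the stated $\ef^6$ coefficient (writing $u=\sqrt{\cp_0-2}$, the iteration gives $u=\tfrac{\ef^2}{2(1-T)}-\tfrac{\ef^4}{4(1-T)^3}+O(\ef^6)$, and squaring recovers both stated terms), matching what the paper leaves implicit.
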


\begin{proof}
Consider the limit of $\cp_0$ as $h\to 0$. 
For $T>1$, the equation for $\cp_0$ becomes $1-Ts_1(\cp_0)=0$ when $h=0$, and its solution is $T+T^{-1}$. A simple perturbation argument applied to the equation for small $h$ implies the result. 
For $0<T<1$, we use the asymptotics
\beqq
	s_2(z) = \frac{1}{2\sqrt{z-2}} + O(1) \quad \text{and}\quad s_1(z)= 1 + O(\sqrt{z-2})  \quad \text{as $z\to 2$,}
\eeqq
which follow from the formulas in \eqref{eq:s_1}. Then, the equation for $\cp_0$ becomes 
\beq
	1 -T -\frac{\ef^2}{2\sqrt{\cp_0-2}} + O(\ef^2)+O(\sqrt{\cp_0-2})=0
\eeq
as $h\to 0$ and $\cp_0\to 2$. From this equation we find the result as $h\to 0$. The limit as $h\to \infty$ follows from $s_k(z)= z^{-k}+O(z^{-k-1})$ as $z\to \infty$. 
\end{proof}

\subsubsection{High temperature case, $T>1$}

From \eqref{eq:cpohs}, we find that for $T>1$, as $h\to 0$, 
\beqq \begin{split} 
	&s_0(\cp_0)= \frac1{2T^2} + \log T + \frac{\ef^2}{\tmp^2} + \bhp{\ef^4}, \quad s_1(\cp_0)= \frac1T - \frac{\ef^2}{T(T^2 - 1)} + \bhp{\ef^4}, \\
 	&s_2(\cp_0) =\frac{1}{\tmp^2 - 1} - \frac{2\tmp^2\ef^2}{(\tmp^2 - 1)^3} + \bhp{\ef^4}.
\end{split} \eeqq
Inserting the formulas into \eqref{eq:felim_h>0}, %we see that, for $T>1$, as $h\to 0$, 
\beq \label{eq:feythl}
	\felim(\tmp, h) = \frac1{2\tmp} + \frac{\ef^2}{2\tmp} - \frac{h^4}{4\tmp(\tmp^2-1)} + O(h^6).
\eeq
Therefore, we find that if we first take $N\to\infty$ with fixed $h>0$ and then let $\ef\to 0$, then 
\beq \label{eq:freeegnhht}
	\fe_N(\tmp, h) \simeqids  \left[ \frac1{2\tmp} + \frac{\ef^2}{2\tmp} - \frac{h^4}{4\tmp(\tmp^2-1)} \right] + \frac{\ef^2}{\sqrt{2N (\tmp^2-1)}} \NN \left(0, 1 \right) 
\eeq
where the terms of orders $\ef^6$ and $\ef^4N^{-1/2}$ have been dropped. The fluctuations are of order $\frac{h^2}{\sqrt{N}}$. 
On the other hand, when $\ef=0$, the fluctuations are of order $N^{-1}$ (see \eqref{eq:higtemzeef}).
These two terms are of same order when $h\sim N^{-1/4}$.

%%%%%%%%%%%%%%%%%%%%%%%%%%%%
\subsubsection{Low temperature case, $T<1$}

Using the $T<1$ case of \eqref{eq:cpohs}, the leading term \eqref{eq:felim_h>0} becomes 
\beq \label{eq:FtHhz} \begin{split}
	\felim(\tmp, h) 
	%&= \frac{\cp_0}{2} - \frac{\tmp s_0(\cp_0)}{2} - \frac{\tmp -\tmp\log \tmp}{2}  + \frac{\ef^2 s_1(\cp_0)}{2} 
	= 1 - \frac{3T}{4}  + \frac{ T\log T}{2} + \frac{\ef^2}{2}- \frac{\ef^4}{8(1 - T)} + O(h^6) 
\end{split} \eeq
and the variance of the Gaussian distribution in \eqref{eq:fluct_h>0} becomes $\frac{h^4 s_2(\cp_0)}{2} =\frac{h^2(T-1)}{2} + O(h^4)$. 
Thus, from \eqref{eq:fluct_h>0}, for $T<1$, we find that if we take $N\to \infty$ first and then take $h\to 0$, then 
\beq \label{eq:formallimiofhpaaaaa}
	\fe_N(\tmp, h) \simeqids  \left[ 1 - \frac{3T}{4}  + \frac{ T\log T}{2} + \frac{\ef^2}{2}- \frac{\ef^4}{8(1 - T)} \right] + \frac{1}{\sqrt{N}} \NN\left(0, \frac{\ef^2(1 - T)}{2}\right) 
%		\quad \text{as first $N\to\infty$ and then $\ef\to 0$.}
\eeq 
where the terms of orders $h^6$ and $h^3N^{-1/2}$ have been dropped.
This implies that the fluctuations of the free energy are of order $\frac{h}{\sqrt{N}}$. 
On the other hand, when $\ef=0$, the fluctuations are of order $N^{-2/3}$ (see \eqref{eq:fe_h0_ltemp}).
These two terms are of same order when $h \sim N^{-1/6}$. 

\subsubsection{Summary}

In summary, a heuristic matching computation suggests that the transitional scaling is 
\beq \begin{split}
	h &= O(N^{-1/4}) \quad \text{for $T>1$,}\\
	h &= O(N^{-1/6}) \quad \text{for $T<1$.}
\end{split} \eeq
In next two sections, we compute the fluctuations of the free energy in the above transitional regimes.

%%%%%%%%%%
%\section{Critical regime for the free energy at small $\ef$ when $T>1$}\label{sec:fe1/4}
\section{Free energy for $T>1$ and $\ef\sim N^{-1/4}$}\label{sec:fe1/4}

\subsection{Analysis}\label{sec:fe1/4analysis}

Assume that $T>1$ and set  
\beq
	\ef = \efres N^{-1/4}
\eeq 
for fixed $\efres > 0$.
In this case, using the notations \eqref{eq:defoflsl} and \eqref{eq:go00}, % and results \eqref{eq:clt} and \eqref{eq:weightls}, \eqref{eq:infmafe} leads to 
\beq
	\G(z)= \beta z - s_0(z) -  \frac{\lsl_N(z)}{N}  +  \frac{\efres^2\beta}{\sqrt{N}} \left[ s_1(z)+ \frac{\go_N(z; 1)}{\sqrt{N}}  \right] + \bhp{N^{-3/2}}
\eeq
where we recall that $\lsl_N(z)$ and $\go_N(z;1)$ are $\bhp{1}$ for $z>2$. 
We approximate the function by $\G_0(z)= \beta z- s_0(z)$ and, as we discussed in sub-subsection \ref{sec:seconhitzef}, this function has the critical point $\cp_0=\beta+\beta^{-1}$ for $T>1$. 
Applying a perturbation argument (see Appendix \ref{sec:pert}) and using the formulas of $s_0(z)$ and $s_1(z)$, 
the critical point of $\G(z)$ is given by 
\beq
	\cp = \cp_0 + \bhp{N^{-1/2}}  \qquad \text{with $\cp_0 = \beta + \beta^{-1}$.} 
	%+ \cp_1 N^{-1/2} + \bhp{N^{-1}} \quad \text{with $\cp_0 = \beta + \beta^{-1}$ and $\cp_1 = \efres^2 \beta $.}
\eeq
Furthermore, 
\beq \label{eq:GcphNhalf}\begin{split}
	\G(\cp) & = \frac{\beta^2}2 + 1 + \log \beta  + \frac{\efres^2 \beta^2}{\sqrt{N}} + \frac1N \left[ - \frac{\efres^4 \beta^4}{2(1 - \beta^2)} + \efres^2 \beta \go_N(\cp_0;1) -  \lsl_N(\cp_0) \right] + \bhp{N^{-3/2}} .
\end{split}
\eeq
Since 
\beq
	\G''(\cp)=\frac1N\sum_{i=1}^N\frac{1}{(\cp-\lambda_i)^2}+\frac{2H^2\beta}{N^{3/2}}\sum_{i=1}^N\frac{n_i^2}{(\cp-\lambda_i)^3}
	\simeq s_2(\gamma)+\frac{2H^2\beta}{N^{1/2}}s_3(\gamma)
	\simeq s_2(\gamma_0)
\eeq
and $\G^{(k)}(\gamma)=O(1)$ for all $k\geq2$, the method of steepest descent implies that 
\beq
	\int_{\cp - \ii \infty}^{\cp + \ii \infty}  e^{\frac N2 (\G(z) - \G(\cp))} \dd z
	\simeq \frac{\ii}{N^{1/2}}\sqrt{\frac{4\pi}{s_2(\cp_0)}} \asymp \bhp{N^{-1/2}}.
\eeq

\begin{result}
For $\ef=\efres N^{-1/4}$ with fixed $H>0$ and $T>1$, 
\beq \label{eq:freenghitr} \begin{split}
	\fe_N(T, \ef)  = &\frac1{4T} + \frac{\efres^2}{2T \sqrt{N}} + \frac{T}{2 N} \left[ \log(1 - T^{-2}) - \frac{\efres^4 }{2T^2(T^2-1)} + \frac{\efres^2}{T} \go_N(\cp_0;1) -  \lsl_N(\cp_0) \right] %+ \bhp{N^{-3/2}}
\end{split} \eeq
plus $\bhp{N^{-3/2}}$, 
as $N\to \infty$ with high probability where $\lsl_N(z)$ and $\go_N(z;1)$  are defined in \eqref{eq:defoflsl} and \eqref{eq:go00}, respectively, and $\cp_0= \cp_0(h=0)=T+T^{-1}$. % and results \eqref{eq:clt} and \eqref{eq:weightls},
\end{result}

The sample random variables $\go_N(\cp_0;1)$ and $\lsl_N(\cp_0)$ both converge to Gaussian distributions. 
Since $\go_N(\cp_0;1)$ depends only on $n_i$'s and $\lsl_N(\cp_0)$ depends only on $\eg_i$'s, these two random variables are independent. 
Therefore, we obtain the following result. 

\begin{result}\label{thm:fe1/4T>1} 
For $\ef=HN^{-1/4}$ and $T>1$, as $N\to \infty$, 
\beq
\label{eq:fluct_htemp_tran}
	\fe_N(T, \ef) \simeqids  \left[ \frac{1}{4T} + \frac{\efres^2}{2\tmp\sqrt{N}} \right] + \frac{T}{2N} \NN(-\alpha, 4\alpha), 
	\qquad \mn:= \frac{\efres^4}{2\tmp^2(\tmp^2-1)} - \frac12 \log(1-\tmp^{-2}) . 
\eeq
\end{result}

%%%%%%
\subsection{Matching with $\ef = 0$ and $\ef =O(1)$ cases}

If we set $\efres=0$ in \eqref{eq:freenghitr}, we recover the result \eqref{eq:higtemzeef} for the case of $\ef=0$. 
We now consider the limit $\efres \rightarrow \infty$. 
If we formally set $\efres=hN^{1/4}$ in \eqref{eq:freenghitr} with $h$ small but fixed and $N$ large, then we have 
\beq \label{eq:htranlimHl}
	\fe_N(T, \ef) \simeq \frac1{4\tmp} + \frac{h^2}{2\tmp} - \frac{h^4}{4\tmp(\tmp^2-1)} + \frac{h^2}{2\sqrt{N}}  \go_N(\cp_0;1) 
\eeq
for asymptotically almost every disorder sample. 
This is the same as \eqref{eq:feh>101} when $h\to 0$ since  $\felim(\tmp, h)$ satisfies \eqref{eq:feythl} as $h\to 0$. 
Therefore, \eqref{eq:freenghitr} matches well with both regimes.

%%%%%%%%%%%%%%%%%%%%%%%%%%%%%%%%%%%%%%%%%%%%
\section{Free energy for $T<1$ and $\ef\sim N^{-1/6}$}\label{sec:fe1/6}

\subsection{Analysis} \label{sec:fetranltmp}

Assume that $0<\tmp <1$ and we set 
\beq \label{eq:hNtl1a6q}
	h= \efres N^{-1/6}
\eeq
for fixed $\efres>0$.
We find the critical point $\cp>\eg_1$. Previously we had $\cp=\eg_1+\bhp{N^{-1}}$ when $h=0$ and $\cp=\eg_1+\bhp{1}$ when $h>0$. For $h\sim N^{-1/6}$, we make the ansatz  
\beq 
\label{eq:cpnots_N16}
	\cp = \eg_1 + sN^{-2/3}
\eeq
and find $s>0$ assuming that $s=\bhp{1}$. From the equation $\G'(\cp)=0$, see
\eqref{eq:dG}, the equation of $s$ is 
\beq \label{eq:seqgeneral}
	\beta-\frac1{N^{1/3}} \sum_{i=1}^N\frac{1}{s+a_1-a_i}-h^2\beta N^{1/3} \sum_{i=1}^N\frac{n_i^2}{(s+a_1 -a_i)^2} =0
\eeq
where we recall $a_i=N^{2/3}(\eg_i-2)$. Here, we did not change $h$ to $HN^{-1/6}$ since we will cite this equation in several places in the paper. 
From \eqref{lem:specialsum}, the second sum converges with high probability. 
The first sum is $1+\bhp{N^{-1/3}}$ from \eqref{eq:res_edge}. 
Thus, with $h=HN^{-1/6}$ the equation becomes,  under the assumption that $s=\bhp{1}$,
\beq \label{eq:bsao13}
	\beta-1 -H^2\beta\sum_{i=1}^N\frac{n_i^2}{(a_1+s -a_i)^2} + \bhp{N^{-1/3}}=0. 
\eeq
Let $\stild$ be the solution of the equation 
\beq\label{lem:fe1/6s0bound}
	\beta - 1 - \efres^2 \beta \sum_{i = 1}^N \frac{n_i^2}{(\stild + \egres_1 - \egres_i)^2} =0, \qquad \stild>0. 
\eeq
Using the rigidity, we can show that $\stild\asymp \bhp{1}$% and $\frac{1}{\stild}$ are both $\bhp{1}$ 
with high probability. From this, comparing the equations for $s$ and $\stild$, we find that 
\beq
		s= \stild+ \bhp{N^{-1/3}}.
\eeq
which is consistent with the ansatz. 
The last equation can also be verified by checking the inequalities  
\beqq
	\G'(\lambda_1+\stild N^{-2/3}(1-N^{-\e}))<0<\G'(\lambda_1+\stild N^{-2/3}(1+N^{-\e}))
\eeqq
for any $0<\epsilon<1/3$. 

We now evaluate $G(\cp)$ which is given by % In this case, \eqref{eq:def_G} becomes 
\beq \label{eq:Gfortrnlowmt}
	\G(\cp)= \beta \cp - \frac1{N} \sum_{i=1}^N \log(\cp-\lambda_i) + \frac{\efres^2\beta}{N^{4/3}} \sum_{i=1}^N \frac{n_i^2}{\cp-\lambda_i}.
\eeq
Insert $\cp=\eg_1+ sN^{-2/3}= 2+ (a_1+s)N^{2/3}$. 
By \eqref{eq:smoflz}, the sum involving the log function becomes 
\beqq
	\frac1{N} \sum_{i=1}^N \log (\cp-\eg_i) = \frac12 + N^{-2/3}(\egres_1+s) + \bhp{N^{-1}} . 
\eeqq
The other sum is equal to 
\beqq
	\frac{\efres^2\beta}{N^{2/3}} \sum_{i=1}^N \frac{n_i^2}{a_1+s - a_i} = \frac{\efres^2\beta}{N^{2/3}} \left(N^{1/3}+ \crv(s) \right) 
\eeqq
using the random variable $\crv(w)$ defined by \eqref{eq:crvdnf}, which is  $\bhp{1}$ outside of a set whose probability shrinks to zero.
Thus, 
\beq
	\G(\cp) = 2 \beta - \frac12 + \frac{\efres^2 \beta}{N^{1/3}}   + \frac1{N^{2/3}} \left[  (\beta-1) (\egres_1+s) 
+ \efres^2 \beta  \crv(s) \right]  + \bhp{N^{-1}} . 
\eeq

To evaluate the integral in \eqref{eq:fe_stedes}, we observe that 
for $k\ge 2$,
\beqq \begin{split}
	\frac{\G^{(k)}(\cp)}{(-1)^k (k-1)!} = N^{\frac{2k}{3}-1}\sum_{i=1}^N \frac1{(s+\egres_1-\egres_i )^k}  
	+ k N^{\frac23k-\frac23} H^2\beta \sum_{i=1}^N  \frac{n_i^2}{(s+\egres_1-\egres_i )^{k+1}} 
	= \bhp{N^{\frac23k-\frac23}}. 
%	= 2 N^{2/3} H^2\beta \sum_{i=1}^N  \frac{n_i^2}{(s+\egres_1-\egres_i)^3} 
\end{split} \eeqq
For $k=2$, the leading term is 
\beq
	\G''(\cp) = 2 N^{2/3} H^2\beta \sum_{i=1}^N  \frac{n_i^2}{(s+\egres_1-\egres_i)^3} + \bhp{N^{1/3}} . 
%	= 2 N^{2/3} H^2\beta \sum_{i=1}^N  \frac{n_i^2}{(s+\egres_1-\egres_i)^3} 
\eeq
Since $\G''(\cp) \sim N^{2/3}$, the main contribution to the integral comes from a neighborhood of radius $N^{-5/6}$ 
near the critical point. By the Taylor series, for $u=O(1)$, 
\beq
	N\left( \G(\cp+ u N^{-5/6} ) - \G(\cp) \right) 
	= \sum_{k=2}^\infty  \frac{N^{1-\frac56k}}{k!} \G^{(k)}(\cp) u^k
	= H^2\beta \left( \sum_{i=1}^N  \frac{n_i^2}{(s+\egres_1-\egres_i)^3}\right) u^2 + \bhp{N^{-5/6}} 
%	= 2 N^{2/3} H^2\beta \sum_{i=1}^N  \frac{n_i^2}{(s+\egres_1-\egres_i)^3} 
\eeq
where all terms but $k=2$ are $\bhp{N^{-5/6}}$. 
Thus, from the Gaussian integral approximation, 
\beq\label{lem:fe1/6int}
	\int_{\cp - \ii\infty}^{\cp + \ii\infty} e^{\frac N2 (\G (z) - G(\cp))} \dd z
	\simeq  
	\frac{1}{N^{5/6}}\int_{-\ii\infty}^{\ii\infty} e^{H^2\beta \left( \sum_{i=1}^N  \frac{n_i^2}{(s+\egres_1-\egres_i)^3}\right) u^2 } \dd u \asymp \bhp{N^{-5/6}}.
\eeq

\begin{comment}
\Cp{As with the integral computations for the previous two sections, we use the parametrization $z=\gamma-\ii tN^{-\d}$ with $\d>0$ to be determined.  
Unlike in the previous two sections, the quantity $\gamma-\lambda_1$ is now of order $N^{-2/3}$ rather than order 1.  
Using the parametrization, the second derivative of $N\G(z)$ evaluated at $\cp$ becomes
\beq
N\G_{tt}(\cp)=-\sum_{i=1}^N\frac{N^{-2\delta}}{(\cp-\lambda_i)^2}-\frac{2H^2}{N^{1/3}}\beta\sum_{i=1}^N\frac{n_i^2N^{-2\delta}}{(\cp-\lambda_i)^3}
=-\sum_{i=1}^N\frac{N^{\frac43-2\delta}}{(s+a_1-a_i)^2}-\frac{2H^2}{N^{1/3}}\beta\sum_{i=1}^N\frac{n_i^2N^{2-2\delta}}{(s+a_1-a_i)^3}
\eeq
When we take $\d=\frac56$, the first sum is of order $N^{-1/3}$ while the second sum is of order 1 and negative.  Thus, by the steepest descent method, we get
\beq
\int_{\cp - \ii \infty}^{\cp + \ii \infty}  e^{\frac N2 (\G(z) - \G(\cp))} \dd z
\simeq \frac{\ii}{N^{5/6}}\sqrt{\frac{4\pi}{-N_{tt}\G(\gamma)}}\asymp\frac{\ii}{N^{5/6}}.
\eeq
}\end{comment}

Combining all together in \eqref{eq:fe_stedes} and replacing $s$ by $\stild$, we obtain the following 

\begin{result}
For $h=H N^{-1/6}$ and $0<T<1$,  %\Cb{[Changed $F$ to $\Fos$ to avoid the notation for $h>0$.]}
\beq \label{eq:fenfortltm}
	\fe_N =   \Fos(T, h) %1- \frac{3\tmp}{4} + \frac{\tmp \log \tmp}2 + \frac{\efres^2}{2N^{1/3}}  
	+ \frac{\ffl(T, H)}{N^{2/3}}   
	+ \bhp{N^{-1}}, 
	\qquad \Fos(\tmp, h)  := 1- \frac{3\tmp}{4} + \frac{\tmp \log \tmp}2 + \frac{\ef^2}{2}, 
\eeq
as $N\to \infty$ for asymptotically almost every disorder sample. Here, %\Cr{[P:label added]} 
\beq \label{new} 
	\ffl(T, H) =\frac12 (1-T)(\stild+ \egres_1) + \frac12\efres^2 \crv(\stild)
\eeq
where $\crv(z)$ is defined in \eqref{eq:crvdnf} and $\stild$ is the solution of the equation \eqref{lem:fe1/6s0bound}, 
\beq \label{eq:stildeq}
	1-T= H^2 \sum_{i=1}^N \frac{n_i^2}{(\stild+a_1-a_i)^2}, \qquad \stild>0. 
\eeq
\end{result}

The function $\Fos(T,h)$ is equal to $F(T, h)$ of \eqref{eq:felim_h>0} if we set $\cp_0=2$. 
The order of fluctuations is $N^{-2/3}$ as in the $h=0$ case. 
But the fluctuations depend on all eigenvalues and $n_1, \cdots, n_N$. In contrast, when $h=0$ they depend only on the largest eigenvalue. Using \eqref{eq:def_crv_lim} for $\crv(\stild)$, we obtain the next distributional result. 

\begin{result}\label{thm:fe1/6T<1}
For $h=HN^{-1/6}$ and $0<T<1$, 
\beq \label{eq:fe_ltemp_htran} 
	\fe_N\simeqids \Fos(\tmp, h) + \frac{ (1-\tmp) (\varsigma + \airy_1) + \efres^2\crvlim(\slim) }{2N^{2/3}}
\eeq
as $N\to \infty$, where 
\beq  \label{eq:fluct_ltemp_htran}
		\crvlim(w) = \lim_{n \rightarrow \infty} \Big(\sum_{i = 1}^n \frac{\evg_i^2}{w +\airy_1 - \airy_i} - \frac1{\pi} \int_0^{\left(\frac{3\pi n}{2}\right)^{2/3}} \frac{\dd x}{ \sqrt{x}}\Big) 
\eeq
and $\slim$ is the solution of the equation 
\beq \label{eq:dG_t23_limit}
	1 - T = \efres^2 \sum_{i = 1}^\infty \frac{\evg_i^2}{(\varsigma + \airy_1 - \airy_i)^2} , \qquad \slim>0,
\eeq
where $\airy_i$ is the GOE Airy point process and $\evg_i$ are independent standard normal sample random variables. 
\end{result}

%%%%%%%%
\subsection{Asymptotic behavior of the scaled limiting critical point $\stild$} %\Cr{$s_0$}} 
\label{sec:s0smalllarge}
%\Cr{[P: I added $s_0$ so people do not confuse "critical point" with $T_c=1$]}.

The solution $\stild$ of the equation \eqref{lem:fe1/6s0bound}, 
\beq \label{eq:stildeqabao}
	1-\tmp - \efres^2 \sum_{i=1}^N \frac{n_i^2}{(\stild+\egres_1-\egres_i)^2} =0, \qquad \stild >0,
\eeq
%and it 
is the scaled limiting critical point that 
is used in the result \eqref{eq:fenfortltm} above. We now describe the behavior of $\stild$ as $H\to 0$ and $H\to\infty$. 
The following result is useful in the next two subsections and in two later sections.

\begin{result}
The solution $\stild$ of the equation \eqref{eq:stildeqabao} satisfies: 
\beq\label{eq:slimH0}
	\stild =  \frac{|n_1|}{\sqrt{1-T}} \efres + O(H^2) \qquad \text{as $H\to 0$}
\eeq
and 
\beq \label{eq:sqrs0whenehf} %\label{eq:t_largeH}
	\sqrt{\stild}  
	\simeq \frac{\efres^2}{2(1 - T)} \left[ 1+  \frac{\efres^2 \go_N\big( 2+\frac{\efres^4N^{-2/3}}{4(1 - \tmp)^2} ; 2 \big)}{(1 - \tmp)N^{5/6}} \right]  
%	\stild \simeq   \frac{\efres^4}{4(1 - \tmp)^2} +  \frac{\efres^6}{2(1 - \tmp)^3} \sum_{i = 1}^N \frac{n_i^2 - 1}{(\frac{\efres^4}{4(1 - \tmp)^2 }+ a_1 - a_i)^2} 
	\qquad \text{as $H\to \infty$.}
\eeq
The second term inside the bracket of the equation \eqref{eq:sqrs0whenehf} %\eqref{eq:t_largeH} 
is $\bhp{H^{-3}}$. % \Cbr{[Changed the statement. This is equivalent, but this version is what we need later, and easier to see.]}
% Label eq:tlargeH is replaced by eq:sqrs0whenehf since the equation of the same label 2 pages later is removed]} 
%\Cbr{[Note: We will need to change eqref in a sentence before \eqref{eq:omh1/6Hlarge}]}
%of \eqref{eq:t_largeH} is $\bhp{H}$ and hence is a subleading term. 
\end{result}

For the $H\to 0$ limit, we see from the equation \eqref{eq:stildeqabao} that $\stild \to 0$ as $H\to 0$. If we set $\stild= y H$, then separating the term $i=1$, the equation becomes 
$1 - T = \frac{n_1^2}{y^2}  + O(H^2)$. Solving it, we obtain \eqref{eq:slimH0}.

We now consider the large $H$ behavior of $\stild$. %From the equation, $\stild\to \infty$ as $H\to \infty$. 
We write the equation \eqref{eq:stildeqabao} as 
\beq \label{eq:1TH2aq}
	\frac{1-T}{H^2}=\sum_{i=1}^N \frac{n_i^2}{(\stild+\egres_1-\egres_i)^2} 
	= \frac1{N^{4/3}} \sum_{i=1}^N \frac{n_i^2}{(z-\eg_i)^2} , \qquad z= 2+(\stild+a_1) N^{-2/3}. 
\eeq
Note that $\stild\to \infty$ as $H\to \infty$. We evaluate the leading term of the right-hand of the above equation when $z\to 2$ such that $z-2\gg N^{-2/3}$. 
The equation \eqref{eq:weightls} when $k=2$ is 
\beqq
	\frac1{N} \sum_{i = 1}^N \frac{n_i^2}{(z - \eg_i)^2}  = s_2(z) + \frac{\go_N(z;2)}{\sqrt{N}} + \bhp{N^{-1}} 
\eeqq
for $z-2=O(1)$. We expect that this formula is still applicable to $z= 2+(\stild+a_1) N^{-2/3}$ since $\stild\to \infty$. 
Since $z\to 2$, we have $s_2(z) \simeq \frac{1}{2\sqrt{z-2}}$ from \eqref{eq:stjaspt}.  
The equation \eqref{eq:1TH2aq} becomes
\beq \label{eq:1ahqqnq}
	\frac{1-T}{H^2} 
	\simeq 
	\frac1{2N^{1/3}\sqrt{z-2}} + \frac{\go_N(z;2)}{N^{5/6}}. 
%	\simeq \frac1{2\sqrt{t}} +  \frac{\go_N(2+tN^{-2/3};2)}{N^{5/6}} .
\eeq
The sample expectation of $\go_N(z;2)$ with respect to $n_i$s is $0$ and the variance is
\beqq
		\E_s[\go_N(z;2)^2] = \frac{2}{N} \sum_{i=1}^N \frac1{(z-\widehat \eg_i)^4} \simeq 2 s_4(z) \simeq \frac{1}{8(z-2)^{5/2}}
\eeqq
from \eqref{eq:stjaspt}. Thus, we expect that $\go_N(z;2)= \bhp{(z-2)^{-5/4}}$ as $z\to 0$ and \eqref{eq:1ahqqnq} becomes
\beqq
	\frac{1-T}{H^2} \simeq \frac1{2\sqrt{t}} +  \frac{\go_N(2+tN^{-2/3};2)}{N^{5/6}}
	\simeq \frac1{2\sqrt{\stild}} + \bhp{\stild^{-5/4}}. 
\eeqq
Solving it gives $\stild \simeq   \frac{\efres^4}{4(1 - \tmp)^2}$, the leading term of \eqref{eq:sqrs0whenehf}, %\eqref{eq:t_largeH}, 
as $H\to\infty$. 
Inserting it bask to the same equation, we obtain the next term and obtain \eqref{eq:sqrs0whenehf}. %\eqref{eq:t_largeH}. 
The last computation also shows that the second term in the bracket of \eqref{eq:sqrs0whenehf} %\eqref{eq:t_largeH} 
is $\bhp{H^2t^{-5/4}}=\bhp{H^{-3}}$.
%\Cr{[P:Is it possible to explain why and what order in $H$ it is?]} 
%\Cr{[P: same remark, how do you estimate that variance?]}
%\Cbr{The calculation is changed slightly and I hope that the changes make the calculation clearer.}

%%%%%%%%
\subsection{Matching with $h=0$} \label{sec:comph0ltmp}

We show that a formal limit \eqref{eq:fenfortltm} as $H\to 0$ agrees with \eqref{eq:fe10ds} which is the result for $h=0$. 
The leading term satisfies
\beq
	\Fos(T, h) = 1 - \frac{3\tmp}{4}  + \frac{\tmp\log \tmp}{2} + O(H^2N^{-1/3}) .%O\left( \frac{\efres^2}{N^{1/3}} \right).
\eeq
For the subleading term %\eqref{eq:fluct_ltemp_htran}, \Cr{maybe refer to \eqref{new} instead ?}, 
\eqref{new}, 
we use \eqref{eq:slimH0} for $\stild$ and find that
\beq
\label{eq:crvlimh0}
	\crv(\stild) 
	= \frac{n_1^2}{\stild} +  \sum_{i = 2}^N \frac{n_i^2}{\stild+a_1-a_i} -N^{1/3}
	= \frac{|n_1|\sqrt{1-T}}{H} + \bhp{1} %\bhp{H^{-1}}. 
\eeq
where the $\bhp{1}$ term follows from \eqref{eq:res_edge}. 
%\Cr{maybe quote a formula for estimate that second term is $O(1)$}.
Therefore, if we set $h=HN^{-1/6}$ and take the limits $N\to \infty$ first and $H\to 0$ second, then
\beq \label{eq:ampqfe}
	\fe_N(\tmp, h) =  1 - \frac{3\tmp}{4}  + \frac{\tmp\log \tmp}{2}  + \frac{1-T}{2N^{2/3}} a_1 + O\left( \efres^2 N^{-1/3}\right) + \bhp{\efres N^{2/3}}
\eeq
for asymptotically almost every disorder sample. This agrees with result \eqref{eq:fe10ds} obtained when $h=0$. 

We remark that the two subleading terms in \eqref{eq:ampqfe} are comparable in size when $H= O(N^{-1/3})$, or equivalently when $h=O(N^{-1/2})$. 
This regime is not important for the computation of the free energy, but it will become important when we discuss the overlap of the spin variable with the external field in Subsection \ref{sec:mgnlowtmn12}.

%%%%%%%%
\subsection{Matching with $h>0$} \label{sec:fenglmhp}

We show that the formal limit of \eqref{eq:fenfortltm} as $\efres \rightarrow \infty$ is consistent with the result \eqref{eq:felim_h>0} for $h>0$.

\subsubsection{Large $w$ limit of $\crv(w)$}

We first consider the behavior of $\crv(w)$, defined in \eqref{eq:crvdnf}, as $w\to \infty$ and then we insert $w=\stild$ which tends to $\infty$ from \eqref{eq:sqrs0whenehf}. % \eqref{eq:t_largeH}. 
This result is also used in other sections later. 

\begin{result}
As $w\to \infty$, 
\beq \label{eq:crvwlargew}
		\crv(w) \simeq -\sqrt{w} +\frac{\go_N(W; 1)}{N^{1/6}} + \bhp{w^{-1/2}}, \qquad W:=2+wN^{-2/3}. 
\eeq
where $\go_N(z;k)$ is defined in \eqref{eq:go00}. 
\end{result}

Let $\widehat \egres_i := N^{2/3}(\widehat \eg_i - 2)$ be the scaled classical location of the eigenvalues. Write 
\beq \label{eq:cvvrslt}
	\crv(w)= \sum_{i=1}^N \frac{n_i^2}{w- \widehat \egres_i}-N^{1/3} + \sum_{i=1}^N \frac{n_i^2 (a_i- \widehat \egres_i-a_1)}{(w+\egres_1-\egres_i)(w- \widehat \egres_i)} .
\eeq
Since $a_i\asymp -i^{2/3}$, we find that for any $\epsilon>0$, 
\beqq
	 \sum_{i=1}^N \frac{1}{(w-\egres_i)^2} \le \frac1{w^{1/2-\epsilon}} \sum_{i=1}^N \frac{1}{(w-\egres_i)^{3/2+\epsilon}}
	 =\bhp{w^{-1/2}}
\eeqq
as $w\to \infty$. Thus, considering in a similar way, the last sum in \eqref{eq:cvvrslt} is $\bhp{w^{-1/2}}$ 
since $\egres_1=\bhp{1}$, $\egres_i - \widehat \egres_i = \bhp{1}$, and $w\to \infty$. 
Setting $W = 2 + w N^{-2/3}$, \eqref{eq:cvvrslt} can be written as 
\beqq
	\crv(w) %=  \frac{1}{N^{2/3}} \sum_{i = 1}^N \frac{n_i^2}{W - \widehat \eg_i} -N^{1/3} + \bhp{w^{2/3}}
	= N^{1/3}\left[ \frac1{N} \sum_{i = 1}^N \frac{1}{W- \widehat \eg_i} - 1 \right] +\frac{\go_N(W; 1)}{N^{1/6}} + \bhp{w^{-1/2}}. 
\eeqq
From a formal application of the semicircle law, % as was done for the case of the logarithmic function in \eqref{eq:smoflz}, 
\beqq \label{eq:specsum1exp}
	\frac{1}{N}\sum_{i = 1}^N \frac{1}{W -\widehat \eg_i} \simeq s_1(W) 
	= 1 - \sqrt{W-2} +O(W-2) = 1 - \frac{\sqrt{w}}{N^{1/3}} + O(w N^{-2/3}). 
\eeqq
Thus, we obtain \eqref{eq:crvwlargew}.

\medskip

The equations \eqref{eq:sqrs0whenehf} %\eqref{eq:t_largeH} 
and \eqref{eq:crvwlargew} imply the next result. 

\begin{result}
Let $\stild$ be the solution of \eqref{eq:stildeq}.% \eqref{lem:fe1/6s0bound} 
%and set $\Gamma_0:= 2+ \stild N^{-2/3}$.
Then, as $H\to \infty$, 
%\beq \label{eq:sqrs0whenehf}
%	 \sqrt{\stild}	 \simeq \frac{\efres^2}{2(1 - T)} \left[ 1+  \frac{\efres^2 \go_N(\Gamma_0; 2)}{(1 - \tmp)N^{5/6}} \right]  
%\eeq
%and
\beq \label{eq:jsjsjsj}
	\crv(\stild) \simeq  - \frac{\efres^2}{2(1 - T)}-   \frac{\efres^4 \go_N(\Gamma_0; 2)}{2(1 - \tmp)^2 N^{5/6}}    +\frac{\go_N(\Gamma_0; 1)}{N^{1/6}},\qquad  \Gamma_0=2+\frac{\efres^4N^{-2/3}}{4(1 - \tmp)^2}.
\eeq
%\Cbr{[Removed the displayed equation $\sqrt{\stild}	 \simeq \frac{\efres^2}{2(1 - T)} \left[ 1+  \frac{\efres^2 \go_N(\Gamma_0; 2)}{(1 - \tmp)N^{5/6}} \right]  $ since \eqref{eq:sqrs0whenehf} is changed to exactly same form. Changed the definition of $\Gamma_0$ from $2+ \stild N^{-2/3}$ to $2++\frac{\efres^4N^{-2/3}}{4(1 - \tmp)^2}$ since \eqref{eq:sqrs0whenehf} changed.]}
\end{result}

\subsubsection{Large $H$ limit}

From \eqref{eq:jsjsjsj}, we see that  the $N^{-2/3}$ term in \eqref{eq:fenfortltm} satisfies 
\beq \label{eq:FtilLargeH}
	\frac{\ffl(\tmp, H)}{N^{2/3}}%= \frac{(1-T)(\stild+ \egres_1) + \efres^2 \crv(\stild)}{2 N^{2/3}}
	\simeq  \frac{(1-T)a_1}{2N^{2/3}} - \frac{H^4}{8(1-T)N^{2/3}} + \frac{H^2 \go_N(\Gamma_0; 1)}{2N^{5/6}} 
	\simeq - \frac{h^4}{8(1-T)}+ \frac{h^2 \go_N(\Gamma_0; 1)}{2\sqrt{N}}
\eeq
writing in terms of $h=HN^{-1/6}$. %$h$ using the relation $h=HN^{-1/6}$. 
%Including $\Fos(\tmp, h) the other terms of  \eqref{eq:fenfortltm} and using $h=HN^{-1/6}$ again, we find that
Thus, we find that if we take $h=HN^{-1/6}$ and $N\to\infty$ and then take $H\to \infty$, then 
\beq \label{eq:feninh16m}
	\fe_N \simeq \left[ 1 - \frac{3T}{4}  + \frac{ T\log T}{2} + \frac{\ef^2}{2} - \frac{\ef^4}{8(1 - T)} \right]+
	 \frac{h^2 \go_N(\Gamma_0; 1)}{2\sqrt{N}},\qquad  \Gamma_0=2+\frac{\efres^4N^{-2/3}}{4(1 - \tmp)^2}
\eeq
for asymptotically almost every disorder sample. 
The point $\Gamma_0$ is approximately equal to $\cp_0$. 
The terms is the bracket are the same as the limit of $F(T,h)$ as $h\to 0$ in \eqref{eq:FtHhz}. 
The $O(N^{-1/2})$ term in \eqref{eq:FtHhz} agrees with the last term of \eqref{eq:feninh16m} since $\cp_0\simeq 2+ \frac{\ef^4}{4(1-T)^2}=\Gamma_0$ from \eqref{eq:cpohs}.
Hence, we find that the above formula is the same as the formal $h\to 0$ limit of the result \eqref{eq:felim_h>0}, which was obtained by taking $N\to \infty$ first with $h=O(1)$ fixed. 
Hence, the result matches with the $h=O(1)$ regime. 

\medskip

The last term of \eqref{eq:feninh16m} depends on the disorder sample. We consider its sample distribution and show that the sample distributions of the  $h=HN^{-1/6}$ regime and $h>0$ regime match for $0<T<1$.  
Using \eqref{eq:go}, we replace $\go_N(\Gamma_0; 1)$ by  $\NN(0; 2s_2(\Gamma_0))$. Using $s_2(z) \simeq \frac{1}{2\sqrt{z-2}}$ as $z\to 2$, we find that
\beq \label{eq:hsgobyGan}
 	\frac{h^2 \go_N(\Gamma_0; 1)}{2\sqrt{N}} \simeqids \frac{h \sqrt{1-T}}{\sqrt{2 N}} \NN\left(0,1\right) .
\eeq
The right-hand side is same as the fluctuation term in \eqref{eq:formallimiofhpaaaaa}, which shows the matching. 
This computation shows the matching of $h=HN^{-1/6}$ regime and $h>0$ regime for $0<T<1$ in terms of the sample distribution as well. 

%\Cbr{[Added the matching property of the sample distribution. The calculation in the previous paragraph is the matching property of the free energy for a given disordered sample.]} 

%%%%%%%%%%%%%%%%%%%%%%%%%%%
\subsection{Comparison with the large deviation result of \cite{FyodorovleDoussal}}

%\Cbr{[Changed $\E$ to $\E_s$. Used \eqref{eq:hsgobyGan} to simplify a bit]}

We now compare our results with the large deviation result of \cite{FyodorovleDoussal}. To this aim we first
extend their calculation from $T=0$ to any $0<T<1$, which is straightforward. Denoting by $\E_s$ the sample expectation, we find that 
\beq \label{eq:largedecfort1}
	\E_s[\pat_N^n] = \E_s[ e^{\beta N n \fe_N} ] \simeq e^{\beta N n F^0} e^{N2^6\ef^6 G(\frac{\beta n}{8\ef^2})}
\eeq
where $F^0$ is the same as the terms in the bracket in \eqref{eq:feninh16m}, the sample-independent terms, and  
\beq
	G(x) = \frac{(1-T)^3}{3} x^3+ \frac{1-T}{4} x^2. 
\eeq
This formula is valid for fixed $T<1$, $n$, and $\ef$ to the leading order as $N\to \infty$ and in a second stage as $n, \ef\to 0$ so that $\frac{n}{\ef^2}$ is fixed. 
The full result for fixed $n$ and $\ef$ is in (94) and (95) of \cite{FyodorovleDoussal} and the above formula follows from it after changing $T\to 2T$, $\sigma\to 2\ef$, and $J_0=2$. 
Note that the term $e^{N2^6\ef^6 G(\frac{ n}{8T \ef^2})}$ is $O(1)$ when $\ef=O(N^{-1/6})$ and $n=O(\ef^2)= O(N^{-1/3})$. 
We have 
\beq \label{eq:pqpq}
	N2^6\ef^6 G\left( \frac{ n}{8T \ef^2} \right) 
	= \frac{N (1-T)^3 n^3}{24 T^3} + \frac{N\ef^2(1-T) n^2}{4T^2}. 
\eeq

We compare the above formula with the one obtained using the result \eqref{eq:fenfortltm}. 
From \eqref{eq:fenfortltm}, we find that 
\beq
	\E_s[\pat_N^n] = \E_s[ e^{ \frac{N n}{T} \fe_N} ] \simeq e^{\frac{N n }{T} \Fos(\tmp, h) } \E_s[ e^{ 
	\frac{N^{1/3} n}{T} \ffl(\tmp, H)} ] .
\eeq
Now we let $H\to \infty$. This term was computed in \eqref{eq:FtilLargeH} in which we neglected the contribution from $\egres_1$. 
Including this term, using \eqref{eq:hsgobyGan}, and also noting that $\go_N(z;1)$ and $a_1$ are independent, we obtain %\Cr{[P: I corrected $e^{- \frac{N n H^4}{8T (1-T)}} \to e^{- \frac{N^{1/3} n H^4}{8T (1-T)}}$]}
\beq 
	\E_s[ e^{ \frac{N^{1/3} n}{T} \ffl(\tmp, H)} ]  \simeq 
%	e^{- \frac{N^{1/3} n H^4}{8T (1-T)}} \E \Big[ e^{ \frac{N^{1/6}nH^2}{2T} \go_N(\Gamma_0;1)} \Big] 
%	\E_s\Big[e^{\frac{N^{1/3}n(1-T)}{2T} a_1 } \Big]
%	= 
	e^{- \frac{N^{1/3} n H^4}{8T (1-T)}} e^{ \frac{N^{2/3}n^2H^4}{8T^2 \sqrt{\stild}} }
	\E_s\Big[e^{\frac{N^{1/3}n(1-T)}{2T} a_1 } \Big] .
\eeq
We can replace $\sqrt{\stild} \simeq \frac{H^2}{2(1-T)}$ from \eqref{eq:sqrs0whenehf} in the middle term. 
For the remaining expectation, using the right tail of the GOE Tracy Widom distribution %\Cr{[P:sign changed]}
$F_1(s) = \PP(\alpha_1<s) \sim \exp(- \frac23 s^{3/2})$, 
\beq
	\E[ e^{ \frac{N^{1/3} n (1-T) }{2T} a_1 } ]
	\simeq \int e^{\frac{N^{1/3} n (1-T) }{2T} a_1 - \frac23 \airy_1^{3/2}} \dd \airy_1
	\simeq \exp\left( \frac13 \Big( \frac{N^{1/3} n (1-T)}{2T} \Big)^3 \right).
\eeq
Combining the calculations together, we find that %\Cr{[P: powers of $N$ corrected accordingly, and it works]}
\beq \label{eq:ldlas}
	\E[\pat_N^n]   \simeq e^{\frac{N n }{T} \Fos(\tmp, h) }
	e^{- \frac{N^{1/3} n H^4}{8T (1-T)}} e^{ \frac{N^{2/3}n^2H^2(1-T)}{4T^2 } } 
	e^{\frac{N n^3 (1-T)^3}{24T^3} } .
\eeq
The exponents of the last two factors, upon writing $H=hN^{1/6}$ agree with \eqref{eq:pqpq}. 
Since $F^0= \Fos(\tmp, h)- \frac{\ef^4}{8(1 - T)}$, 
%Comparing the formula of $\Fos(\tmp, h)$ and $F^0$, 
we find that \eqref{eq:ldlas} is the same as \eqref{eq:largedecfort1}. 
This shows that the tail of the typical fluctuations obtained here matches the large deviation tails at the exponential order. 

%\Cr{[P:Maybe add a comment why the $h^2$ term in the difference $\Fos(\tmp, h) - F^0$ is not present (subdominant ? is it clear?)} \Cbr{[I believe that there is no $h^2$ term in the difference. I included $F^0= \Fos(\tmp, h)- \frac{\ef^4}{8(1 - T)}$ to clarify. ]}

%%%%%%%%%%%%%%
\section{Overlap with the external field}\label{sec:ext}

The overlap of a spin with the external field is
\beqq
	\mgn =\frac{\efv\cdot \sphv}{N}  .
\eeqq
We study the thermal fluctuation of the overlap for a given disorder sample in several regimes: $h=O(1)$, $h\sim N^{-1/6}$ and $h\sim N^{-1/2}$.
We also consider the magnetization, susceptibility,  and differential susceptibility,
\beqq
	\mage = \langle \mgn \rangle, \qquad \scp= \frac{\mage}{\ef}, \qquad \chid= \frac{\dd \mage}{\dd h}. 
\eeqq

%%%%%%%%%%%%%%
\subsection{Thermal average from free energy} \label{sec:theavfrfe}

Before we discuss the thermal fluctuations of $\mgn$, we first derive the thermal average, the magnetization, from the results for the free energy in two regimes, $h=O(1)$ and $h\sim N^{-1/6}$, using 
\beq
	\mage =\langle \mgn\rangle = \frac{\dd \fe_N}{\dd h} . 
\eeq

\subsubsection*{Case $h=O(1)$:}

For $h>0$ and $T>0$, 
the result \eqref{eq:feh>101} for the free energy implies that 
\beq
	\langle \mgn\rangle    =\frac{\dd \fe_N}{\dd h} \simeq \frac{\dd F(T,h)}{\dd h} + \frac1{2\sqrt{N}} \frac{\dd }{\dd h} (h^2 \go_N(\cp_0;1))
\eeq
for asymptotically almost every disorder sample. 
Using $s_0'(z)=s_1(z)$ and $s_1'(z)=-s_2(z)$, 
\beq
	\frac{\dd F(T,h)}{\dd h} = hs_1(\cp_0) + \frac12 (1- Ts_1(\cp_0)-h^2 s_2(\cp_0)) \frac{d \cp_0}{\dd h} 
\eeq 
However, the equation for $\cp_0$ implies that the second term is zero. 
On the other hand, since $\go_N'(z;1)= - \go_N(z;2)$, 
\beq
	\frac{\dd }{\dd h} (h^2 \go_N(\cp_0;1)) = 2h \go_N(\cp_0;1) - h^2 \go_N(\cp_0;2) \frac{d \cp_0}{\dd h} .
\eeq
Using the equation for $\cp_0$ and $s_2'(z)=-2s_3(z)$, we find that
\beq
	\frac{d \cp_0}{\dd h} = \frac{2h s_2(\cp_0)}{Ts_2(\cp_0)+2h^2 s_3(\cp_0)}.
\eeq
Therefore, we conclude that, for fixed $h>0$ and $T>0$,
\beq \label{eq:thermavehpos}
	\langle \mgn\rangle \simeq  
	\ef s_1(\cp_0) + \frac1{\sqrt{N}} \left[ \ef \go_N(\cp_0;1)- \frac{\ef^3 s_2(\cp_0) \go_N(\cp_0; 2)}{T s_2(\cp_0) + 2\ef^2  s_3(\cp_0)} \right]
\eeq
for asymptotically almost every disorder sample. 

\subsubsection*{Case $h\sim N^{-1/6}$ and $T<1$:}

If we use the result \eqref{eq:fenfortltm} for the free energy when $h=H N^{-1/6}$ and $0<T<1$, we find that 
%\Cr{[P:replaced by $\frac{\dd \stild}{\dd H}$]}
\beq
	\langle \mgn\rangle    = N^{1/6}  \frac{\dd \fe_N}{\dd H}    \simeq h + \frac{\efres \crv (\stild)}{\sqrt{N}}   
	 + \frac{\left( 1-T  + \efres^2 \crv' (\stild)\right)  }{2\sqrt{N}}  \frac{\dd \stild}{\dd H}
\eeq
for asymptotically almost every disorder sample. 
The formula for $\crv$ is given in \eqref{eq:crvdnf} and 
\beq 
	\crv'(w)= - \sum_{i=1}^N \frac{n_i^2}{(w+\egres_1-\egres_i)^2}.
\eeq
Since $\stild$ satisfies the equation \eqref{eq:stildeq}, we see that the term $1-T  + \efres^2 \crv' (\stild)=0$. Hence, for $h=H N^{-1/6}$ and $0<T<1$, 
\beq \label{eq:thermave16}
	\langle \mgn \rangle   \simeq h + \frac{\efres \crv (\stild)}{\sqrt{N}}  
\eeq
for asymptotically almost every disorder sample. 

In both of these regimes, it turns out that the thermal average is indeed the leading term. However, this calculation does not give us the thermal fluctuation term. 
To obtain that, we use the integral representation of the overlap in the following subsections. 
For the overlap and magnetization, it turns out that there is another interesting regime, $h\sim N^{-1/2}$, for $0<T<1$. 
This is the regime that occurs when the two terms in \eqref{eq:thermave16} have the same order; it was shown in \eqref{eq:crvlimh0} that $\efres \crv (\stild) \simeq \bhp{1}$ as $H\to 0$. See the following subsections for the details.

%%%%%%%%%%%%%%%%%
\subsection{Setup}

We obtain the thermal probability of the overlap by considering the moment generating function $\langle e^{\beta\eta \mgn}\rangle$ with respect to the Gibbs measure \eqref{eq:Gibbsmeasure}.  Here, $\eta$ is the variable for the generating function and we added $\beta$ for the convenience of subsequent formulas. 
It turns out that the thermal fluctuations of $\mgn$ are of order $N^{-1/2}$ in all regimes. Hence, we scale $\eta= \xi \sqrt{N}$ and use $\xi$ as the scaled variable for the moment generating function. 
%We note that $\eta$ will be scaled with $N$ in a few different ways in the computations. The rescaled variable will be denoted by $\xi$. 
From Lemma \ref{lem:contour}, we have the following formula: 
\beq \label{eq:ovwefdn}
	\langle e^{\beta \xi \sqrt{N} \mgn}\rangle = e^{\frac N2(\Gmgn(\cpmgn) - \G(\cp))} \frac{\int_{\cpmgn - \ii\infty}^{\cpmgn + \ii\infty} e^{\frac N2 (\Gmgn (z) - \Gmgn(\cpmgn))} \dd z}{\int_{\cp - \ii \infty}^{\cp + \ii \infty} e^{\frac N2 (\G(z) - \G(\cp))} \dd z}
\eeq
where 
\beq 
	\Gmgn(z) = \beta z - \frac1N\sum_{i = 1}^N \log(z - \eg_i) + \frac{(\ef + \frac{\xi}{\sqrt{N}})^2\beta}{N} \sum_{i = 1}^N \frac{n_i^2}{z - \eg_i}.
\eeq
Here, we take $\cpmgn>\eg_1$ to be the critical point of $\Gmgn(z)$ satisfying 
\beq
	\Gmgn'(\cpmgn)=0
\eeq
and we take $\cp>\eg_1$ to be the critical point of $\G(z)$.
The only difference between $\Gmgn$ and $\G$, which we studied extensively in the previous sections, is that $\ef$ is changed to $\ef+ \xi N^{-1/2}$. 

\medskip

We record two formulas that we use below. 
From the explicit formulas for $\Gmgn$ and $\G$, the equation $\Gmgn'(\cpmgn)- \G'(\cp)=0$ implies that 
\beq \label{eq:Gpcpcp} 
\begin{split}
	 (\cpmgn-\cp) & \left[ \frac{1}{N} \sum_{i=1}^N \frac{1}{(\cpmgn-\eg_i)(\cp-\eg_i)}
	+ \frac{h^2\beta }{N} \sum_{i=1}^N  \frac{n_i^2(\cp + \cpmgn-2\eg_i)}{(\cpmgn -\eg_i)^2(\cp-\eg_i)^2} \right] \\
	 = &\left( \frac{2\xi \ef}{N^{3/2}} +\frac{\xi^2}{N^2} \right)  \beta \sum_{i=1}^N \frac{n_i^2}{(\cpmgn-\eg_i)^2} . %=0 .
\end{split} \eeq
The other formula that we will need is 
\beq \label{eq:diffGpGoher} \begin{split}
	N( \Gmgn(\cpmgn) - \G(\cp) ) %& = N( \Gmgn(\cpmgn) - \G(\cp) - \G'(\cp)(\cpmgn-\cp))  \\
	& = -  \sum_{i=1}^N \left[ \log \left( 1+ \frac{\cpmgn-\cp}{\cp-\eg_i} \right) -  \frac{\cpmgn-\cp}{\cp-\eg_i}  \right] 
	+ \ef^2\beta\sum_{i=1}^N  \frac{n_i^2(\cpmgn-\cp)^2}{(\cpmgn-\eg_i)(\cp-\eg_i)^2} \\
&\qquad + \left( \frac{2\xi h}{\sqrt{N}} +\frac{\xi^2}{N} \right)  \beta \sum_{i=1}^N \frac{n_i^2}{\cpmgn-\eg_i}
	=: A_1+A_2+A_3,
\end{split} \eeq
which can be seen using $\Gmgn(\cpmgn) - \G(\cp)= \Gmgn(\cpmgn) - \G(\cp) - \G'(\cp)(\cpmgn-\cp)$.

%%%%%%%%%%%
\subsection{Positive external field: $\ef =O(1)$} \label{sec:mgnefpos}

\subsubsection{Analysis}

Fix $\ef>0$. The critical point $\cp$ of $\G(z)$ is evaluated in subsection \ref{sec:freeenergypof}.
It is shown in \eqref{eq:cp_h>0} that 
\beqq
	\cp = \cp_0 + \cp_1 N^{-1/2} + \bhp{N^{-1}}
\eeqq
where $\cp_0$ and $\cp_1$ are deterministic functions of $h$ and $T$. 
From the formulas for $\G$ and $\Gmgn$, we see that $\Gmgn'(z) = \G'(z) + \bhp{N^{-1/2}}$ for $z>\eg_1+O(1)$ (cf. \eqref{eq:weightls}). 
This implies that $\cpmgn- \cp=\bhp{N^{-1/2}}$.
We need to evaluate the difference precisely. 
From \eqref{eq:Gpcpcp}, we find, using the semicircle law, that 
\beqq \begin{split}
	(\cpmgn-\cp) \left( s_2(\cp) + 2h^2\beta s_3(\cp) + \bhp{N^{-1/2}} \right) = \frac{2\xi h\beta}{\sqrt{N}} s_2(\cp) + \bhp{N^{-1}}.
\end{split} \eeqq
Thus, % $\cp$ by $\cp_0$, we find that 
\beq \label{eq:cpmgnandcpD}
	\cpmgn = \cp + \Delta N^{-1/2}, %= \cp_0 + (\cp_1+\Delta) N^{-1/2} + \bhp{N^{-1}}
	\qquad \Delta = \frac{2 \ef \beta \xi s_2(\cp_0)}{s_2(\cp_0) + 2\ef^2 \beta s_3(\cp_0)} + \bhp{N^{-1/2}}.
\eeq

We now evaluate $N(\Gmgn(\cpmgn) - \G(\cp))$ for \eqref{eq:ovwefdn} via the equation \eqref{eq:diffGpGoher}. 
%We evaluate each of the three terms in \eqref{eq:diffGpGoher} noting that $\cpmgn$ and $\cp$ are both $2+\bhp{1}$, and $\cpmgn = \cp + \Delta N^{-1/2}$. 
Using the Taylor expansion of the logarithm function, % the first term of \eqref{eq:diffGpGoher} is equal to 
\beq
	A_1= \frac{\Delta^2}{2N} \sum_{i = 1}^N \frac{1}{(\cp - \eg_i)^2} + \bhp{\frac{1}{N^{3/2}}\sum_{i = 1}^N \frac{1}{(\cp - \eg_i)^3}}= \frac{\Delta^2 s_2(\cp) }{2} + \bhp{N^{-1/2}}. 
\eeq
Similarly, %Using \eqref{eq:weightls}, %we find that the second term satisfies 
\beq
	 A_2%= \ef^2\beta\sum_{i=1}^N  \frac{n_i^2(\cpmgn-\cp)^2}{(\cpmgn-\eg_i)(\cp-\eg_i)^2}
	 = \frac{\ef^2 \beta \Delta^2}{N} \sum_{i = 1}^N  \frac{n_i^2}{(\cp - \eg_i)^3} + \bhp{N^{-1/2}}
	  = \ef^2 \beta \Delta^2 s_3(\cp)  + \bhp{N^{-1/2}}.
\eeq
In these two equations, we replaced $\cpmgn$ by $\cp$ . For $A_3$, using \eqref{eq:cpmgnandcpD}  and the notation \eqref{eq:go00}, we have
\beq
\begin{aligned}
	&A_3 %\left(\frac{2\xi \ef }{\sqrt{N}} + \frac{\xi^2 }{N}\right) \beta \sum_{i = 1}^N \frac{n_i^2}{\cpmgn - \eg_i} 
	 = 2\xi\ef  \beta ( s_1(\cpmgn) \sqrt{N}  +  \go_N(\cpmgn;1) )  + \xi^2 \beta s_1(\cpmgn) + \bhp{N^{-1/2}}\\
	& \qquad = 2\xi \ef \beta s_1(\cp) \sqrt{N} + \left[ 2\xi\ef  (\go_N(\cp;1)- s_2(\cp) \Delta) + \xi^2 s_1(\cp) \right] \beta + \bhp{N^{-1/2}} . 
\end{aligned}
\eeq
Combining the three terms and inserting the formulas of $\cp$ and  $\Delta$, 
\beq \label{eq:NGSAAAA}
\begin{split}
	N( \Gmgn(\cpmgn) - \G(\cp) ) 
	=& 2\xi \ef \beta \left[ \sqrt{N} s_1(\cp_0)   -s_2(\cp_0) \cp_1 + \go_N(\cp_0;1) \right] \\
	&\qquad + \xi^2 \left[ \beta s_1(\cp_0)  - \frac{2\ef^2 \beta^2 s_2(\cp_0)^2} {s_2(\cp_0) +2\ef^2\beta s_3(\cp_0)  } \right] + \bhp{N^{-1/2}}.
\end{split}
\eeq
%This term gives the main contribution to the formula \eqref{eq:ovwefdn}. 

Now we consider the integrals in \eqref{eq:ovwefdn}.
Since $\G^{(k)}(\cp)= \bhp{1}$ for all $k\ge 2$, the method of steepest descent applies. 
It is also straightforward to check that 
\beqq
	\Gmgn''(\cpmgn) = \G''(\cpmgn) + \bhp{N^{-1/2}} = \G''(\cp) + \bhp{N^{-1/2}}.
\eeqq
Hence,  
\beqq
	 \frac{\int_{\cpmgn - \ii\infty}^{\cpmgn + \ii\infty} e^{\frac N2 (\Gmgn (z) - \Gmgn(\cpmgn))} \dd z}{\int_{\cp - \ii \infty}^{\cp + \ii \infty} e^{\frac N2 (\G(z) - \G(\cp))} \dd z}
	 \simeq \sqrt{\frac{\G''(\cp)}{\Gmgn''(\cpmgn)}} \simeq  1.
\eeqq

Inserting the above computations into \eqref{eq:ovwefdn}, moving the term involving $\sqrt{N}$ to the left, replacing $\beta \xi$ by $\xi$, using $\beta=1/T$, and inserting the formula \eqref{eq:cp1h>0} for $\cp_1$, we obtain the following.

\begin{result} For $h=O(1)$ and $T>0$, 
\beq \label{eq:olmg}
	\langle e^{ \xi \sqrt{N} \left( \mgn - \ef  s_1(\cp_0) \right) } \rangle 
	\simeq 
	e^{ \xi \ef \left[ \go_N(\cp_0; 1)  - \frac{\ef^3 s_2(\cp_0) \go_N(\cp_0; 2)}{T s_2(\cp_0) + 2\ef^2  s_3(\cp_0)} \right] + \frac{\xi^2}{2} \left[ T s_1(\cp_0)  - \frac{2T h^2 s_2(\cp_0)^2} {2T s_2(\cp_0) +h^2 s_3(\cp_0)  } \right] }
\eeq
as $N\to \infty$ for asymptotically almost every disorder sample, where $\cp_0>2$ is the solution of the equation \eqref{eq:cp0h>0}
and $\go_N(z;k)$ is defined in \eqref{eq:go00}.
%\beq \label{eq:qeqfoc}
%	1 -T  s_1(\cp_0) - \ef^2  s_2(\cp_0) = 0 
%\eeq
%and we recall \eqref{eq:go00}, 
%\beq \label{eq:gonzqk}
%	\go_N(z;k) = \frac{1}{\sqrt{N}} \sum_{i = 1}^N \frac{n_i^2 - 1}{(z - \widehat \eg_i)^k} .
%\eeq
\end{result}

Since the right-hand side is an exponential of a quadratic function of $\xi$, we obtain the following distributional result. 

\begin{result}\label{thm:exth>0}
For $h=O(1)$ and $T>0$, 
\beq \label{eq:mgnphc}
	\mgn %\simeq 
	\simeqidsgibbs 
	\ef s_1(\cp_0) + \frac1{\sqrt{N}} \left[ \ef \go_N(\cp_0;1)- \frac{\ef^3 s_2(\cp_0) \go_N(\cp_0; 2)}{T s_2(\cp_0) + 2\ef^2  s_3(\cp_0)}
	+ \sigmamgn  \gib N \right]
\eeq
as $N\to \infty$ for asymptotically almost every disorder sample. The thermal random variable $\gib N$ is a standard normal random variable and the term $\sigmamgn>0$ is given by the formula  % with the variance 
\beq \label{eq:varmagh>0}
	\sigmamgn^2=  \tmp s_1(\cp_0)  - \frac{2\tmp \ef^2 s_2(\cp_0)^2} {\tmp s_2(\cp_0) + 2\ef^2 s_3(\cp_0)  } .
\eeq
\end{result}

The thermal average is given by the first three terms and they agree with the formula \eqref{eq:thermavehpos} obtained from the free energy. 
%\Cr{[you dont want to say there that they are normal from \eqref{eq:go} since this is discussed below.. but maybe then announce that it will be discussed below because it looks strange why one does not say it]}
%The subleading term depends on the disorder sample, since $\go_N(\cp_0, 1)$ and $\go_N(\cp_0;2)$ depend on $n_1, \cdots, n_N$. However, they do not depend on the eigenvalues.

%%%%%%%%%%%%%%%%%%%%
\subsubsection{Discussion on the leading term} \label{sec:mgnhplt}

The leading term 
\beq
	\mgn^0(h, T):=hs_1(\cp_0(h))
\eeq
in \eqref{eq:mgnphc} is deterministic. See Figure \ref{fig:mgnvarvsh} (a) for a graph as a functions of $h$. 
% with respect to the Gibbs measure (i.e. $\overline{\mgn^0}=\mgn^0$). Furthermore, it does not depend on the disorder sample (see Figure \ref{fig:mgnvarvsh} for a graph). 
The function $\mgn^0$ satisfies the following properties:
\begin{itemize}
\item For every $T>0$, $\mgnz(h, T)$ is an increasing function of $h$. 
\item As $h\to\infty$, 
\beq \label{eq:qqqqqq1}
	\mgn^0(h, T)= 1- \frac{T}{2h} + O(h^{-2}) \quad\text{for all $T>0$. }
\eeq 
\item As $h\to 0$, 
\beq \label{eq:qqqqqq2}
	\mgn^0(h, T) \simeq 
	\begin{cases}
	\frac{\ef}{T} - \frac{\ef^3}{T(T^2-1)}   \quad &\text{for $T>1$,}\\ 
	\ef - \frac{\ef^3}{2(1 - \tmp)} \quad & \text{for $0<T<1$.} 
	\end{cases}
\eeq
\end{itemize}

The first property is consistent with the intuition that the overlap of the spin with the external field becomes larger as the external field becomes stronger. 
The proof follows from 
%\beq \label{eq:derivativeofcpz}
%	\cp_0'(h)= \frac{2hs_2(\cp_0)}{Ts_2(\cp_0)+2h^2 s_3(\cp_0)}
%\eeq
%and 
\beq
	\frac{\dd}{\dd h} \mgn^0= s_1(\cp_0)- hs_2(\cp_0) \cp_0' = \frac{Ts_1(\cp_0)s_2(\cp_0) + 2h^2(s_1(\cp_0)s_3(\cp_0)-s_2(\cp_0)^2)}{Ts_2(\cp_0)+2h^2 s_3(\cp_0)}
\eeq
and from checking that $s_1(z)s_3(z)- s_2(z)^2 >0$ for all $z>2$ using \eqref{eq:s_1}. 
The large $h$ and small $h$ limits follow from Lemma \ref{lem:cp0behro}.

\begin{figure}[H]
\centering
\begin{subfigure}{0.45\textwidth}
\includegraphics[width = 0.9\textwidth]{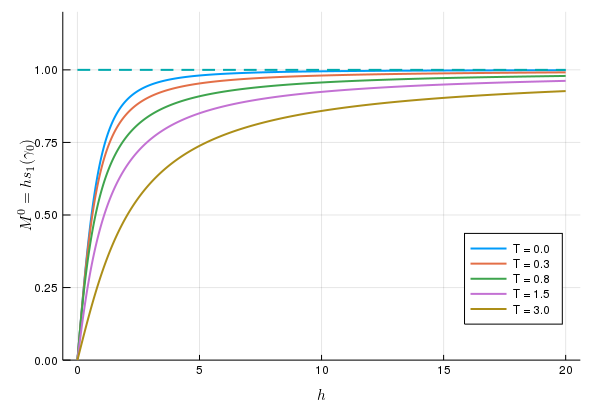}
\caption{$\mgn^0$}
\label{fig:m0_vs_h}
\end{subfigure}
\begin{subfigure}{0.45\textwidth}
\includegraphics[width = 0.9\textwidth]{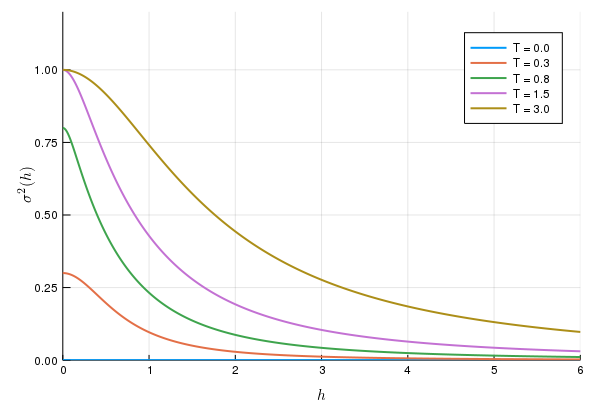}
\caption{$\sigmamgn^2$}
\label{fig:var_vs_h}
\end{subfigure}
\caption{Graph of $\mgn^0$ and $\sigmamgn^2$ as function of $\ef$ for various values of  $\tmp$}
\label{fig:mgnvarvsh}
\end{figure}

%%%%%
\subsubsection{Discussion on the variance}

%\Cb{[This subsection is new.]}

The variance of the overlap satisfies
\beq
	\langle \mgn^2 \rangle - \langle \mgn\rangle^2 \simeq \frac{\sigmamgn^2}{N} .
\eeq
The term $\sigmamgn^2(h, T)=\sigmamgn^2$ is given in \eqref{eq:varmagh>0} and does not depend on the disorder sample. See Figure \ref{fig:mgnvarvsh} for the graph. Here are some properties of $\sigmamgn^2$. 

\begin{itemize}
\item For every $T$, $\sigmamgn^2(h,T)$ is a decreasing function. 
\item As $h\to\infty$, 
\beq \label{eq:qqqqqq3}
	\sigmamgn^2 = \frac{T}{h} + O(h^{-2}) \quad\text{for all $T>0$. }
\eeq 
\item As $h\to 0$, 
\beq \label{eq:qqqqqq4}
	\sigmamgn^2 \to  
	\begin{cases}
	1   \quad &\text{for $T>1$, }\\ 
	T \quad & \text{for $0<T<1$.} 
	\end{cases}
\eeq
\end{itemize}

%We expect that the variance of the overlap with the external field gets smaller as the external field becomes stronger and the first property confirms this. The proof follows from 
The first property follows from \beq \label{eq:sigmaderz}
	\frac{\dd }{\dd \ef} \sigmamgn^2 = - \frac{T^2 s_2(\cp_0) \left[ \left( T s_2(\cp_0)^2- 12 \ef^4 s_3(\cp_0)^2 + 12 \ef^4 s_2(\cp_0)s_4(\cp_0) \right) \cp_0'(\ef)
	+ 4\ef T s_2(\cp_0)^2\right]  }{(\tmp s_2(\cp_0) + 2\ef^2 s_3(\cp_0))^2}
\eeq
by checking that $s_2(z) s_4(z)- s_3(z)^2>0$ for all $z>2$. 
The large and small $h$ limits follow from Lemma \ref{lem:cp0behro}. 

%%%%%
\subsubsection{Limit as $h\to \infty$} \label{sec:magnhla}

%\Cb{[This subsection is new]}

Consider the formal limit of the result \eqref{eq:mgnphc} as $\ef\to \infty$. 
Using \eqref{eq:cpvalueforhplh}, we have 
\beq \label{eq:isthiscorrect}
	h^k \go_N(\cp_0 ;k) = \frac{1}{\sqrt{N}} \sum_{i = 1}^N \frac{n_i^2 - 1}{(\frac{\cp_0}{h} - \frac{\widehat \eg_i}{h})^k}
	\simeq \frac{1}{\sqrt{N}} \sum_{i = 1}^N (n_i^2 - 1)
\eeq
and $s_k(\cp_0)\simeq h^{-k}$ as $h\to\infty$. 
Therefore, using \eqref{eq:qqqqqq1} and \eqref{eq:qqqqqq3},  we find that if we take $N\to \infty$ with $h>0$ and then take $h\to \infty$, we get
\beq \label{eq:malha}
	\mgn \simeqidsgibbs 1- \frac{T}{2h}+ \frac1{\sqrt{N}} \left[ \frac{\sum_{i = 1}^N (n_i^2-1) }{2\sqrt{N}} 
	+  \frac{\sqrt{T}}{\sqrt{\ef}} \gib N \right] . 
\eeq
The leading term $\mgn\simeq 1$ is trivial since the spin is likely to be pulled to the direction of the external field if $h$ is large.

%%%%%
\subsubsection{Limit as $\ef \to 0$ when $T>1$} \label{sec:limiofahtp}

Since $\cp_0 \to T+T^{-1}$ as $h\to 0$ for $T>1$ from \eqref{eq:cpohs}, the terms $\go_N(\cp_0 ;1)$ and $\go_N(\cp_0 ;2)$ remain $O(1)$. 
Hence the deterministic terms in the square brackets in \eqref{eq:mgnphc} converge to zero as $\ef\to 0$. 
We thus find, using \eqref{eq:qqqqqq2} and \eqref{eq:qqqqqq4} that, if we take the limit $N\to \infty$ with $\ef>0$ and then take $\ef \to 0$, the result for $T>1$ becomes
\beq \label{eq:724}
	\mgn \simeqidsgibbs \frac{\ef}{T} - \frac{\ef^3}{T(T^2-1)} + \frac1{\sqrt{N}}\left[  \gib N  + \ef \go_N(T+\frac1{T} ;1) \right].  %\qquad\text{when $h=0$ and $T>1$.}
\eeq

%%%%%
\subsubsection{Limit as $\ef \to 0$ when $T<1$}\label{sec:limiofahtl}

The small $h$ limit \eqref{eq:cpohs}  of $\cp_0$ and the limit of $s_k(z)$ as $z\to 2$ obtained in \eqref{eq:stjaspt} imply that, as $h\to0$,
\beq \label{eq:skwcpa}
	%s_1(\cp_0) \simeq 1 - \frac{\ef^2}{2(1 - \tmp)}, \quad 
	s_2(\cp_0) \simeq \frac{(1 - \tmp)}{\ef^2} + \frac{\tmp}{2(1 - \tmp)}, \qquad s_3(\cp_0) \simeq \frac{(1 - \tmp)^3}{\ef^6}, 
	\qquad s_4(\cp_0) \simeq \frac{2(1 - \tmp)^5}{\ef^{10}} 
\eeq
when $0<T<1$. 
From these, we see that 
\beq \label{eq:mgnphc10101}
	 \ef \go_N(\cp_0;1)- \frac{\ef^3 s_2(\cp_0) \go_N(\cp_0; 2)}{T s_2(\cp_0) + 2\ef^2  s_3(\cp_0)}
	 \simeq  \ef \go_N(\cp_0;1)  - \frac{\ef^5  \go_N(\cp_0; 2)}{2(1 - T)^2}. 
\eeq
Thus,  by \eqref{eq:qqqqqq2} and \eqref{eq:qqqqqq4}, if take the limit $N\to \infty$ with $\ef>0$ and then take $\ef \to 0$, then 
\beq \label{eq:mhnatlh}
	\mgn \simeqidsgibbs \ef - \frac{\ef^3}{2(1 - \tmp)} + \frac1{\sqrt{N}} \left[ \ef \go_N(\cp_0;1)  - \frac{\ef^5  \go_N(\cp_0; 2)}{2(1 - T)^2}
	+  \sqrt{\tmp} \gib N \right] , \quad \cp_0\simeq 2+ \frac{\ef^4}{4(1-T)^2}.
\eeq 
Finally, we consider the terms $\ef \go_N(\cp_0;1)$ and $\ef^5  \go_N(\cp_0; 2)$. The sample-to-sample variance of $\go_N(\cp_0;k)$ is 
\beq \label{eq:goNh0ltmp}
	\frac{2}{N} \sum_{i=1}^N \frac{1}{(\cp_0 - \widehat \eg_i)^{2k}} \simeq 2 s_{2k}(\cp_0),
\eeq
which is expected to hold for $\cp_0-2\gg N^{-2/3}$, i.e., $h\gg N^{-1/6}$, Thus the sample-to-sample variance is $\bhp{\ef^{-2}}$ for $k=1$ and $\bhp{\ef^{-10}}$ for $k=2$ from \eqref{eq:skwcpa}. Hence, we expect that $\ef \go_N(\cp_0;1)$ and $\ef^5  \go_N(\cp_0; 2)$ are $\bhp{1}$ for $h\gg N^{-1/6}$. 
%This last discussion implies that there may be a transition when $h\sim N^{-1/6}$. We discuss this regime in a subsection below. 

%%%%%
%\subsubsection{Comparison with the $\ef=0$ case}
\subsection{No external field: $\ef=0$}

When $\ef=0$ and $T>1$, it is well-known in spin glass theory \cite{panchenko2007overlap, nguyen2018central} that the overlap of two independently chosen spins are asymptotically orthogonal, indicating that the spin variable becomes uniformly distributed on the sphere $\|\sphv\|=\sqrt{N}$ as $N\to\infty$. 
For $\ef=0$ the Gibbs measure is independent of $\efv$. 
Hence, the overlap $\mgn =\frac1{N} \efv\cdot \sphv$ of the spin with the random Gaussian vector $\sphv$ is the cosine of the angle of two independent vectors which are chosen more or less uniformly at random from the sphere.
Thus, we expect that $\mgn$ is approximately $\frac1{\sqrt{N}}$ times a standard normal distribution. 
%\Cbr{[Removed the displayed equation and moved it to Result \ref{result:extfld0}]}
%\beq \label{eq:comparhs}
%	\mgn \simeqidsgibbs\frac1{\sqrt{N}} \gib N (0,1) \qquad \text{when $h=0$ and $T>1$.}
%\eeq   
The formal limit of \eqref{eq:724} as $\ef\to 0$ coincides with this result. 
%Hence, we expect that when $T>1$, the result for $\ef>0$ holds for all $\ef\ge 0$. 
Indeed when $T>1$, the analysis for $h>0$ with $h=O(1)$ extends to $h\ge 0$ and \eqref{eq:mgnphc} holds. % and , but we do not provide details. 
%\Cr{you say “we do not provide details” [why? the limit is subtle?]}
%Indeed, the analysis of this section with $\ef=0$ is still applicable to the case when $h=0$ and $T>1$ and we obtain \eqref{eq:comparhs}. However, we do not provide details. 

When $\ef=0$ and $T<1$, it was argued in \cite{kosterlitz1976spherical} that $ \frac{ \langle | \vu_1\cdot \spin| \rangle }{\sqrt{N}} $ converges to $\sqrt{1-T}$. 
(In  \cite{kosterlitz1976spherical}, the authors claim that $\frac{ \langle  \vu_1\cdot\spin \rangle }{\sqrt{N}} \to \sqrt{1-T}$, but this seems to be a typographical error since $\langle \vu_1\cdot\spin \rangle=0$ due to the symmetry of the Gibbs measure under the transformation $\spin\mapsto -\spin$.) 
%Hence, the spin becomes concentrated on the intersection of the sphere $\|\sphv\|=\sqrt{N}$ and the double cone about the ground state $\vu_1$ with the half vertex angle $\cos^{-1}(\sqrt{1-T})$ as $N\to \infty$. 
It was also proven in \cite{panchenko2007overlap} that the absolute value of the overlap of two independently chosen spins
converges to $1-T$. 
%This result indicates that the spin becomes uniformly distributed on the intersection of the sphere and the double cone in the large $N$ limit. 
Hence, a spin variable may be written as $\frac{\sphv}{\sqrt{N}} =\pm \sqrt{1-T} \vu_1+ \sqrt{T} \mathbf{v}$, where the unit vector $\mathbf{v}$ is taken uniformly at random from the hyperplane perpendicular to $\vu_1$ and the signs $\pm$ are each taken with probability $1/2$: 
See more discussions on such decomposition of the spin variable in Section \ref{sec:Geometry}. 
Thus, using the notation $n_1=\vu_1\cdot \efv$, we expect that $\mgn\simeq  \frac{\pm n_1 \sqrt{1-T} + \sqrt{T} \gib N}{\sqrt{N}}.$ %when $\ef=0$ and $T<1$, 
%\beq
%	\mgn\simeq  \frac{\pm n_1 \sqrt{1-T} + \sqrt{T} \gib N (0,1)}{\sqrt{N}}.
%\eeq
Recall that $\vu_1$ has the sign ambiguity and hence $n_1$ is defined up to its sign.
%\Cbr{[Wrote as a Result]} 
Thus, we find the following result for $h=0$. 
%Re-writing this using $|n_1|$ instead of $n_1$, 
%Since the spin is independent of the vector $\efv$, the sign $\pm$ in the above should be taken with equal probability. 
%We thus expect that 
%we expect that 
%\beq \label{eq:h0folt}
%	\mgn\simeq  \frac{ |n_1| \sqrt{1-T} \gib B+ \sqrt{T} \gib N (0,1)}{\sqrt{N}} \qquad \text{when $\ef=0$ and $T<1$} 
%\eeq
%where $\gib B$ is independent of $\gib N(0,1)$ and has the distribution $\P(\gib B=1)= \P(\gib B=-1)= \frac12$.
\begin{result} \label{result:extfld0}
For $h=0$, 
\beq \label{eq:h0folt}
	\mgn\simeq  \begin{dcases}
	\frac1{\sqrt{N}} \gib N \qquad &\text{for $T>1$,} \\
	\frac{ |n_1| \sqrt{1-T} \gib B+ \sqrt{T} \gib N}{\sqrt{N}} \qquad &\text{for $0<T<1$} 
\end{dcases} \eeq
as $N\to \infty$, for asymptotically almost every disorder sample, where $\gib N$ is a standard normal random variable, and $\gib B$ is independent of $\gib N$ and has the distribution $\P(\gib B=1)= \P(\gib B=-1)= \frac12$.
\end{result}

The right-hand side of  \eqref{eq:mhnatlh} involves the thermal random variable $\gib N$ but does not involve the other thermal random variable $\gib B$ in \eqref{eq:h0folt}.
Hence, the formal limit of \eqref{eq:mhnatlh} as $h\to 0$ is not equal to \eqref{eq:h0folt}  when $T<1$. 
This implies that there should be a transitional regime. 
It turns out that there are two transitional regimes, $h\sim N^{-1/6}$ and $h\sim N^{-1/2}$.
The first regime can be expected, since $\cp_0 = 2 + O(h^4)$ as $\ef\to 0$, and the subleading term $O(h^4)$ is of same order  as the fluctuations of the top eigenvalue $\eg_1$ when $h\sim N^{-1/6}$. This is the same transitional regime that was observed for the free energy. 
The second regime $h\sim N^{-1/2}$ arises because the ratio of the integrals in  \eqref{eq:ovwefdn}, which was approximately equal to $1$ when $\ef>0$ (and when $h\sim N^{-1/6}$ as well), is no longer close to $1$ when $h\sim N^{-1/2}$.
This will be responsible for the appearance of $\gib B$. 
We discuss these two transitional regimes in the next subsections. 
We will see in Subsection \ref{sec:mgnlowtmn12} that the result for $h=H N^{-1/2}$ actually holds even when $H=0$, implying that \eqref{eq:h0folt} indeed holds.

%%%%%%%%
\subsection{Mesoscopic external field: $\ef \sim N^{-1/6}$ and $T < 1$}
\label{sec:ext1/6}

\subsubsection{Analysis}

We scale $\ef$ as
\beqq
	\ef= \efres N^{-1/6}
\eeqq
for fixed $H>0$. 
This scale is the same as the one considered in Subsection \ref{sec:fetranltmp}. 
We showed in that section that the critical point of $\G(z)$ is $\cp = \eg_1 + sN^{-2/3}$ where $s>0$ satisfies the equation \eqref{eq:bsao13}. To find the critical point of $\Gmgn(z)$, we make the ansatz that $\cpmgn \simeq \cp$, Then, the equation \eqref{eq:Gpcpcp} becomes 
\beqq
\begin{split}
	& (\cpmgn-\cp) \left[ N^{1/3} \sum_{i=1}^N \frac1{(s+\egres_1-\egres_i)^2}
	+ H^2\beta N^{2/3} \sum_{i=1}^N \frac{2n_i^2}{(s+\egres_1-\egres_i)^3}  \right] \\
	& \simeq \left( \frac{2\xi H}{N^{5/3}} +\frac{\xi^2}{N^2} \right)  \beta N^{4/3} \sum_{i=1}^N \frac{n_i^2}{(s+\egres_1-\egres_i)^2}, 
\end{split} \eeqq
implying that 
\beq
	\cpmgn-\cp = \bhp{N^{-1}},
\eeq
which is consistent with the ansatz. We do not need to determine the $\bhp{N^{-1}}$ term in this subsection.

We now evaluate $N(\Gmgn(\cpmgn) - \G(\cp))$ using  \eqref{eq:diffGpGoher}. From the Taylor series of the log function, 
\beqq
	A_1 \simeq \sum_{i=1}^N \frac{(\cpmgn-\cp)^2}{(\cp-\eg_i)^2} =  \sum_{i=1}^N \frac{(\cpmgn-\cp)^2 N^{4/3}}{(s+a_1-a_i)^2}  
	= \bhp{N^{-2/3}}. 
\eeqq
Inserting $h=HN^{-1/6}$, 
\beqq
	 A_2 %\ef^2\beta\sum_{i=1}^N  \frac{n_i^2(\cpmgn-\cp)^2}{(\cpmgn-\eg_i)(\cp-\eg_i)^2} 
	 \simeq \frac{H^2\beta}{N^{1/3}} \left[ N^2 \sum_{i=1}^N \frac{n_i^2}{(s+a_1-a_i)^2} \right] (\cpmgn-\cp)^2
	 = \bhp{N^{-1/3}}.
\eeqq
The third term is % the third term is 
\beqq \begin{split}
	A_3 %\left( \frac{2\xi H}{H^{2/3}} +\frac{\xi^2}{N} \right)  \beta \sum_{i=1}^N \frac{n_i^2}{\cpmgn-\eg_i}
%	&= \left( \frac{2\xi H}{H^{2/3}} +\frac{\xi^2}{N} \right)  \beta \left[ \sum_{i=1}^N \frac{n_i^2}{\cp-\eg_i}
%	+ O \left( \sum_{i=1}^N \frac{n_i^2(\cpmgn-\cp)}{(\cp-\eg_i)^2}  \right)\right] \\
	&= \left(  2\xi H +\frac{\xi^2}{N^{1/3}} \right)  \beta \left[ \sum_{i=1}^N \frac{n_i^2}{s+\egres_1-\egres_i}
	+ \bhp{N^{-1/3}} \right] .
\end{split} \eeqq
Using the random variable $\crv(s)$ defined in \eqref{eq:crvdnf}, which is $\bhp{1}$, 
and combining all three terms, %we find that  
\beq
	N(\Gmgn(\cpmgn) - \G(\cp) ) 
	= 2\xi \efres \beta N^{1/3} +   2\beta\xi \efres  \crv(s)+  \beta\xi^2   + \bhp{N^{-1/3}}.
\eeq

Finally, consider the integrals in \eqref{eq:ovwefdn}. The denominator is computed in Section \ref{sec:fetranltmp}. 
The numerator can be computed in the same manner. Indeed, we can check, as with the denominator, that $\Gmgn^{(k)}(\cpmgn)  = \bhp{N^{\frac23k-\frac23}}$ for all $k\ge 2$ and 
\beq
	\Gmgn''(\cpmgn) = 2 N^{2/3} H^2\beta \sum_{i=1}^N  \frac{n_i^2}{(s+\egres_1-\egres_i)^3} + \bhp{N^{1/2}},
%	= 2 N^{2/3} H^2\beta \sum_{i=1}^N  \frac{n_i^2}{(s+\egres_1-\egres_i)^3} 
\eeq
which is the same as the denominator. Hence, the Gaussian integral approximations of the integrals imply that 
\beq
	 \frac{\int_{\cpmgn - \ii\infty}^{\cpmgn + \ii\infty} e^{\frac N2 (\Gmgn (z) - \Gmgn(\cpmgn))} \dd z}{\int_{\cp - \ii \infty}^{\cp + \ii \infty} e^{\frac N2 (\G(z) - \G(\cp))} \dd z}
	 \simeq \sqrt{\frac{\G''(\cp)}{\Gmgn''(\cpmgn)}} \simeq 1.
\eeq 

Combining the above computations into \eqref{eq:ovwefdn}, % that 
%\beqq
%	\langle e^{\beta \xi \sqrt{N} \mgn} \rangle  	\simeq e^{\frac{N}{2} (\Gmgn(\cpmgn) - \G(\cp) ) }
%	\simeq  e^{\xi \efres \beta N^{1/3} + \left[\xi \efres \beta \crv (s)+ \frac{\xi^2 \beta}{2}\right] }.
%\eeqq
%We may replace $s$ by $\stild$ which solves the equation \eqref{lem:fe1/6s0bound} and $s-\stild= \bhp{N^{-1/3}}$. 
replacing $s$ by $\stild$ (the solution to  \eqref{lem:fe1/6s0bound}),  replacing $\beta \xi$ by $\xi$, and using $1/ \beta=T$, we obtain the following result. %after we move the term involving $N^{1/3}$. 
%Moving the term involving $N^{1/3}$, replacing $\beta \xi$ by $\xi$, and using $\frac1{\beta}=T$, we obtain the following result.

\begin{result}
For $h=HN^{-1/6}$ and $0< T<1$,
\beq \label{eq:mgfmgn}
	\langle e^{ \xi \sqrt{N} \left( \mgn - \ef \right) } \rangle \simeq e^{ \xi \efres \crv(\stild) + \frac{T \xi^2}{2}}, 
	\qquad \crv(\stild) :=  \sum_{i = 1}^N \frac{n_i^2}{\stild + a_1 - a_i} - N^{1/3}, 
\eeq
as $N\to \infty$ for asymptotically almost every disorder sample, where
%\beq \label{eq:mgnh16cvrs}
%	\crv(\stild) =  \sum_{i = 1}^N \frac{n_i^2}{\stild + a_1 - a_i} - N^{1/3}
%\eeq
%which is $\bhp{1}$, and 
$\stild>0$ is the solution of the equation  \eqref{eq:stildeq}. 
%\beq\label{eq:eqn_s_H1/6}
%	1-\tmp - \efres^2 \sum_{i=1}^N \frac{n_i^2}{(\stild+\egres_1-\egres_i)^2} =0, \qquad \stild >0. 
%\eeq
\end{result} 

Since the exponent of the right-hand side of \eqref{eq:mgfmgn} is a quadratic function of $\xi$, we obtain 

\begin{result}\label{thm:ext1/6} For $h=HN^{-1/6}$ and $0<T<1$,
\beq
\label{eq:mgn_H1/6}
		\mgn \simeqidsgibbs \ef + \frac1{\sqrt{N}} \left[ \efres\crv(\stild)   +  \sqrt{T} \gib N \right]
\eeq
as $N\to \infty$ for asymptotically almost every disorder sample, where the thermal random variable $\gib N$ has the standard Gaussian distribution.  
\end{result}

The thermal average is obtained from the first two terms. The average is the same as \eqref{eq:thermave16} that we obtained from the free energy. 

%%%%
\subsubsection{Matching with $\ef=O(1)$}\label{sec:16maef0}

We take the formal limit $H\to \infty$ of \eqref{eq:mgn_H1/6}.
The  limit of $\crv(\stild)$ as $H\to \infty$ is obtained in \eqref{eq:jsjsjsj}. From this, we find that,
if we take $\ef = \efres N^{-1/6}$ and let $N\to \infty$ first and then take $H\to\infty$, then 
\beq	\label{eq:bbcc}
	\mgn \simeqids \ef - \frac{\ef^3}{2(1 - \tmp)} + \frac1{\sqrt{N}} \left[  \ef\go_N(\cp_0; 1)  - \frac{\ef^5 \go_N(\cp_0; 2)}{2(1 - T)^2}  +  \sqrt{\tmp} \gib N \right]
\eeq
as $\efres \to \infty$ where $\cp_0\simeq  2+ \frac{\ef^4}{4(1 - \tmp)^2}$. 
This result agrees with \eqref{eq:mhnatlh}, which is obtained by taking $h>0$ fixed and letting $N\to \infty$ first and then taking $h\to 0$.

%%%%%%%%%
\subsubsection{Formal limit as $\efres \to 0$}\label{sec:formlimh1/6mng}

We take the formal limit $H\to 0$ of  \eqref{eq:mgn_H1/6}. 
We obtained the limit of $\crv(\stild)$ as $H\to 0$ in \eqref{eq:crvlimh0}. Hence, we find that, if we take $N\to \infty$ with $\ef = \efres N^{-1/6}$ first and then take $\efres \to 0$, then 
\beq
\label{eq:mgnhN1/6H0}
	\mgn \simeqids \ef + \frac{1}{\sqrt{N}} \left[|n_1| \sqrt{1 - T} + \sqrt{\tmp} \gib N\right].
\eeq
This formula evaluated at $H=0$ is different from \eqref{eq:h0folt}.  In particular, the Bernoulli random variable $\gib B(1/2)$ is missing. 
In the next subsection, we consider a new regime $\ef=O( N^{-1/2})$ in which the two terms in \eqref{eq:mgnhN1/6H0} are of the same order.  We will show that this new regime interpolates between $\ef=O(N^{-1/6})$ and $\ef=0$.

%%%%%%%%%
\subsection{Microscopic external field: $\ef \sim N^{-1/2}$ and $\tmp < 1$} \label{sec:mgnlowtmn12}

\subsubsection{Analysis}

We set, for fixed $H>0$,  
\beq
	h=HN^{-1/2}.
\eeq
This is a new regime which did not appear in previous sections. The appearance of this scaling regime was first noticed in \cite{FyodorovleDoussal} for the zero temperature case.  

\subsubsection*{Critical points} 

We first compute the critical point $\cp$ of $\G(z)$. In previous sections, we had $\cp=\eg_1+\bhp{N^{-2/3}}$ for $h\sim N^{-1/6}$ and $\cp=\eg_1+ \bhp{N^{-1}}$ for $h=0$. For $h\sim N^{-1/2}$, it turns out that $\cp=\eg_1+ \bhp{N^{-1}}$. We make the ansatz that 
\beq \label{eq:sohca}
	\cp = \eg_1 + \so N^{-1}
\eeq 
with $\so=\bhp{1}$. Then, the critical point equation becomes 
\beq\label{eq:ext1/2sdef}
	%\G'(\cp) = 
	\beta  - \frac1{N} \sum_{i=1}^N \frac{1}{\eg_1+\so N^{-1}-\eg_i} - \frac{H^2\beta}{N^2} \sum_{i=1}^N \frac{n_i^2}{(\eg_1+\so N^{-1}-\eg_i)^2} =0 .
\eeq
Separating out $i=1$ in both sums and using \eqref{eq:res_edge} and \eqref{lem:specialsum} for the remaining sums, 
the equation becomes
\beq\label{eq:shN1/2}
	\beta - 1 - \frac{1}{\so} - \frac{\efres^2 \beta n_1^2}{\so^2} + \bhp{N^{-1/3}} =0. 
\eeq
The solution is 
\beq
	\so = \frac{1 + \sqrt{1 + 4(\beta - 1)\efres^2 \beta n_1^2}}{2(\beta - 1)} + \bhp{N^{-1/3}}.
\eeq
Hence, $\so=\bhp{1}$, which is consistent with the ansatz. 

Now consider the critical point of $\Gmgn(z)$. 
Due to the scale $\ef=\efres N^{-1/2}$, the function $\Gmgn(z)$ is the same as $\G(z)$ with $H$ replaced by $H+\xi$. 
Thus, we find that 
\beq
	\cpmgn = \eg_1 + \so_\mgn N^{-1}
\eeq
where $\so_\mgn>0$ solves the equation 
\beq
\label{eq:smgnhN1/2}
	\beta - 1 - \frac{1}{\so_\mgn} - \frac{(\efres + \xi)^2 \beta n_1^2}{\so_\mgn^2} + \bhp{N^{-1/3}} = 0.
\eeq
%whose solution is given by the same formula for $s$ with $\efres$ replaced by $\efres+\xi$. 

\subsubsection*{Exponential terms} 

We evaluate $N(\Gmgn(\cpmgn)-\G(\cp))$ using \eqref{eq:diffGpGoher}.
For $A_1$, the sum with $i\ge 2$, using a Taylor approximation, is $\bhp{N^{-2/3}}$. Hence, 
%The first term is equal to, separating out $i=1$ and using the Taylor approximation for $i\ge 2$, 
\beqq
  A_1 %= -\log(1+\frac{s_\mgn-s}{s}) + \frac{s_\mgn-2}{s} + \frac1{2N^{2/3}} \sum_{i=2}^N \bhp{\frac{(s_\mgn-s)^2}{(\egres_1+sN^{-1/3}-\egres_i)^2}}
	= 	-\log(\frac{\so_\mgn}{\so}) + \frac{\so_\mgn}{\so} - 1 + \bhp{N^{-2/3}}.
\eeqq
The sum with $i\ge 2$ for $A_2$ is $\bhp{N^{-1}}$ and we obtain %The second term is, separating out $i=1$, 
\beqq\begin{split}
%	&\frac{\efres^2 \beta n_1^2 (s_\mgn - s)^2}{s_\mgn s^2} + \frac{\efres^2\beta (s_\mgn - s)^2}{N} \sum_{i=2}^N  \frac{n_i^2}{(\egres_1+s_\mgn N^{-1/3}-\egres_i)(\egres_1+s N^{-1/3}-\egres_i)^2} \\
	A_2= \frac{\efres^2 \beta n_1^2 (\so_\mgn - \so)^2}{\so_\mgn \so^2}  + \bhp{N^{-1}}.
\end{split} \eeqq
Finally, again separating the term with $i=1$ and using  \eqref{eq:weightedsum1} for the rest of the sum, %the third term is
\beq
	A_3 %= \frac{(2\xi H+\xi^2)\beta}{N} \left( \frac{n_1^2N}{s_\mgn}+ \sum_{i=2}^N \frac{n_i^2}{\eg_1+s_\mgn N^{-1} -\eg_i} \right)
	= (2\xi H+\xi^2)\beta \left(\frac{n_1^2}{\so_\mgn} +1 \right) +  \bhp{N^{-1/3}}.
\eeq
Therefore, 
\beq
\begin{aligned}\label{eq:ovl12NGm-G}
	&N(\Gmgn(\cpmgn)-\G(\cp)) \\
	&=   -\log(\frac{\so_\mgn}{\so}) + \frac{\so_\mgn}{\so} - 1 +\frac{\efres^2 \beta n_1^2 (\so_\mgn - \so)^2}{\so_\mgn \so^2} 
	+ (2\xi \efres  + \xi^2 )\beta \left(\frac{n_1^2}{\so_\mgn} + 1 \right) + \bhp{N^{-1/3}} . 
\end{aligned}
\eeq
Using the equations \eqref{eq:shN1/2} and \eqref{eq:smgnhN1/2} satisfied by $\so$ and $\so_\mgn$, the equation \eqref{eq:ovl12NGm-G} can be written as 
\beq
\begin{aligned}
	N(\Gmgn(\cpmgn)-\G(\cp))
	= & - \log(\frac{\so_\mgn}{\so}) + 2(\beta - 1) (\so_\mgn - \so) + (2\efres \xi + \xi^2) \beta + \bhp{N^{-1/3}}.
\end{aligned}
\eeq

\subsubsection*{Integrals}

We now consider the integrals in the formula \eqref{eq:ovwefdn}. The ratio of the integrals in this regime turns out to give a non-trivial contribution. 
We first show that we cannot use a Taylor series approximation. 
Consider the numerator; the denominator is the same as the numerator with $\xi=0$. For $k\ge 2$, we use the formula for $\Gmgn(z)$ to get 
\beqq \begin{split}
	\frac{\Gmgn^{(k)}(\cpmgn)}{(-1)^k (k-1)! }
	=&  \frac{1}{N} \sum_{i = 1}^N \frac{N^{\frac23 k}}{(\egres_1 +\so_\mgn N^{-1/3}- \egres_i)^k} + \frac{k(H+\xi)^2\beta }{N^2} \sum_{i = 1}^N \frac{n_i^2 N^{\frac23 (k+1)}}{(\egres_1 +\so_\mgn N^{-1/3}- \egres_i)^{k+1}} \\
	=& N^{\frac23 k-1} \left( \frac{N^{\frac13k}}{\so_\mgn^k} + \bhp{1} \right) 
	+ k (H+\xi)^2\beta N^{\frac23k-\frac43} \left( \frac{N^{\frac13(k+1)}}{\so_\mgn^{k+1}} + \bhp{1} \right) 
	= \bhp{N^{k-1}}.
\end{split} \eeqq
Since $\Gmgn^{(2)}= \bhp{N}$, the main contribution to the integral comes from a neighborhood of radius $N^{-1}$ around the critical point. 
If we use the new variable $z=\cpmgn + u N^{-1}$ and the Taylor series 
\beqq
	N \left( \Gmgn(\cpmgn+uN^{-1}) - \Gmgn(\cpmgn) \right) 
	= \sum_{k=2}^\infty  \frac{N^{-k+1}}{k!} \Gmgn^{(k)}(\cpmgn) u^k, 
\eeqq
we find that all terms in the series are $\bhp{1}$ for finite $u$. Since all terms in the Taylor series contribute to the integral, this method will not work and we instead proceed as follows. 
Using $\Gmgn'(\cpmgn)=0$, we have 
\beqq \begin{split}
	& N( \Gmgn(\cpmgn+w) - \Gmgn(\cpmgn)) = N( \Gmgn(\cpmgn+w) -  \Gmgn(\cpmgn)- \Gmgn'(\cpmgn)w)  \\
	&\quad  
	= -  \sum_{i=1}^N \left[ \log \left( 1+ \frac{w}{\cpmgn-\eg_i} \right) -  \frac{w}{\cpmgn-\eg_i}  \right] + \left( \ef + \frac{\xi}{\sqrt{N}} \right)^2\beta \sum_{i=1}^N  \frac{n_i^2 w^2}{(\cpmgn+w-\eg_i)(\cpmgn-\eg_i)^2} . 
\end{split} \eeqq
Separating out $i=1$, using a Taylor approximation of the log function, and using \eqref{lem:specialsum}, 
\beqq\begin{split}
	&N \left( \Gmgn(\cpmgn+uN^{-1}) - \Gmgn(\cpmgn) \right)  
%	&=-\log\left(1+\frac{u}{s_\mathfrak{M}}\right)+\frac{u}{s_\mathfrak{M}}
%+\cO\left(\sum_{i=2}^N\frac{u^2}{N^2(\cpmgn-\lambda_i)^2}\right)
%+\frac{(H+\xi)^2\beta n_1^2u^2}{(s_\mathfrak{M}+u)s_\mathfrak{M}^2}
%+\cO\left(\sum_{i=2}^N\frac{u^2}{N^3(\gamma-\lambda_i)^3}\right)\\
	=-\log\left(1+\frac{u}{\so_\mathfrak{M}}\right)+\frac{u}{\so_\mathfrak{M}}
	+\frac{(H+\xi)^2\beta n_1^2u^2}{(\so_\mathfrak{M}+u)\so_\mathfrak{M}^2} + \bhp{u^2N^{-2/3}}.
\end{split} \eeqq
We temporarily write the middle two terms with $x:= (H+\xi)^2\beta n_1^2$ and get
\beqq
	\frac{u}{\so_\mgn} + \frac{x u^2}{(\so_\mgn+u)\so_\mgn^2} 
	=u \left(\frac{1}{\so_\mgn}+\frac{x}{\so_\mgn^2} \right) - \frac{x}{\so_\mgn} + \frac{x}{\so_\mgn+x}.
\eeqq
Using \eqref{eq:smgnhN1/2} twice, the above formula can be written as
\beq\begin{split}
	&N \left( \Gmgn(\cpmgn+uN^{-1}) - \Gmgn(\cpmgn) \right)  
	\simeq  -\log\left(1+\frac{u}{\so_\mathfrak{M}}\right)+(\beta-1)(u-\so_\mathfrak{M})+1+\frac{(H+\xi)^2\beta n_1^2}{(\so_\mathfrak{M}+u)}. 
\end{split}\eeq
Thus, % the denominator becomes 
\beq\label{eq:ext1/2contourcalc}
\begin{aligned}
	\int_{\cpmgn-\ii\infty}^{\cpmgn+\ii\infty}   e^{\frac{N}{2}(\Gmgn(z) - \Gmgn(\cpmgn))} \dd z 
	\simeq & \frac1{N} \int_{-\ii \infty}^{\ii \infty} \sqrt{\frac{\so_\mgn}{\so_\mgn + u}} e^{\frac{(\beta - 1)(u - \so_\mgn)}{2} + \frac12 + \frac{(\efres + \xi)^2 \beta n_1^2}{2(\so_\mgn + u)}} \dd u  \\
	= &\frac{\so_\mgn ^{1/2}e^{-(\beta - 1)\so_\mgn + \frac12 }}{N} \int_{-\ii \infty}^{\ii\infty} \frac{e^{\frac{(\beta - 1)(\so_\mgn + u)}{2} + \frac{(\efres + \xi)^2 \beta n_1^2}{2(\so_\mgn + u)}}}{\sqrt{\so_\mgn + u}}  \dd u .
\end{aligned}
\eeq
%where the contour goes around the origin on the right-half plane. 
The last integral is an integral formula of a modified Bessel function which can be evaluated explicitly (see e.g. \cite{Handbook}): 
\beq	\label{eq:ntif}
	\int_{0_++\ii\R} \frac{e^{aw+\frac{b}{w}}}{\sqrt{w}}  \dd w = 2 \pi \ii \left( \frac{b}{a} \right)^{1/4} I_{-\frac12}(2\sqrt{ab}) = \frac{2\ii\sqrt{\pi}}{\sqrt{a}} \cosh(2\sqrt{ab}).
\eeq
%Since this integral converges, the integral in the last line of equation \eqref{eq:ext1/2contourcalc} must converge to the same value. 
Hence, we obtain 
\beq	
\label{eq:int_const}
	\int_{\cpmgn-\ii\infty}^{\cpmgn+\ii\infty}  e^{\frac{N}2 (\Gmgn(z)- \Gmgn(\cpmgn))} \dd z 
	\simeq \frac{2\ii \sqrt{2 \pi \so_\mgn}  e^{- (\beta-1)\so_\mgn +\frac12 } }{N\sqrt{\beta-1}} 
	\cosh \left( (\efres+\xi) |n_1| \sqrt{\beta (\beta -1)} \right). 
\eeq
Note that the integral depends on $\xi$, unlike in the cases $h>0$ and $h\sim N^{-1/6}$. 
The denominator is the case when $\xi=0$. Thus, 
\beq
	\frac{\int e^{\frac{N}{2}(\Gmgn(z) - \Gmgn(\cpmgn))} \dd z}{\int e^{\frac{N}{2}(\G(z) - \G(\cp))} \dd z} 
	\simeq \sqrt{\frac{\so_\mgn}{\so}} e^{-(\beta - 1)(\so_\mgn - \so)} \frac{\cosh \left( (\efres+\xi) |n_1| \sqrt{\beta (\beta -1)} \right)}{\cosh \left(\efres |n_1| \sqrt{\beta (\beta -1)} \right) }.
\eeq

Combining all terms together, replacing $\beta \xi$ by $\xi$ and using $T=1/\beta$, we obtain the following.

\begin{result} \label{result:mgn12}
For $h=HN^{-1/2}$ and $0<T<1$, 
\beq	
	\langle e^{ \xi \sqrt{N} \mgn} \rangle 
	\simeq 
	e^{ \efres \xi + \frac{ T \xi^2}{2} }
	\frac{\cosh \left( (\efres+T\xi) |n_1| \frac{\sqrt{1 - T}}{T} \right)}{\cosh \left(\efres |n_1| \frac{\sqrt{1 - T}}{T} \right) }
\eeq
as $N\to\infty$ for asymptotically almost every disorder sample. 
\end{result}

%The moment generating function depends on the disorder sample, but only on $|n_1|=|\vu_1\cdot\efv_1|$, the absolute value of the cosine angle of the ground state and the external field.

The right-hand side is the product of two terms, implying that $\sqrt{N} \mgn$ is a sum two independent random variables. 
The exponential term on right-hand side is the moment generating function of a Gaussian distribution, while the ratio of the cosh functions is the moment generating function of a shifted Bernoulli distribution. 
Indeed, if $\P(X=1)=p$ and $\P(X=-1)=1-p$ with $p= \frac{e^{a}}{e^a+e^{-a}}$, then 
\beqq
	\E[e^{\xi X}]=p e^{\xi} + (1-p) e^{-\xi}= \frac{\cosh(a+\xi)}{\cosh(a)} .
\eeqq
Hence, %$\sqrt{N} \mgn$ is the independent sum of a Gaussian random variable and a shifted Bernoulli random variable. 
%Therefore, 
we deduce the following result. 

%\Cb{[Change the notation $\gib Y$ to $\gib B(H,T)$ for the Bernoulli rv in the next Result.]}

\begin{result}\label{thm:ext1/2} For $h=HN^{-1/2}$ and $0<T<1$, 
\beq
\label{eq:mgnhN1/2}
	\mgn \simeqidsgibbs  \ef + \frac{ |n_1|\sqrt{1-\tmp} \gib B(\alpha) + \sqrt{\tmp} \gib N}{\sqrt{N}} %\quad \text{for $\ef = \efres N^{-1/2}$ and $T<1$}
\eeq
as $N\to\infty$ for asymptotically almost every disorder sample. Here, $\gib B(c)$ is a shifted Bernoulli thermal random variable with the probability mass function $P(\gib B(c) =1)=c$ and $P(\gib B(c) =-1)=1-c$ and $\alpha$ in \eqref{eq:mgnhN1/2} is given by
\beq
\label{eq:probY}
%	P(\gib B(H,T) =1) = \frac{ e^{\frac{H |n_1| \sqrt{1-T}}{T} } }{ e^{\frac{H |n_1| \sqrt{1-T}}{T} } + e^{- \frac{H |n_1| \sqrt{1-T}}{T} }} , 
%	\qquad
%	P(\gib B(H,T) = -1) = 1- P(\gib B(H,T) =1) , 
	%P(\gib B(\alpha) =1)=\alpha, \qquad P(\gib B(\alpha) =-1)=-\alpha, \qquad 
	\alpha:= \frac{ e^{\frac{H |n_1| \sqrt{1-T}}{T} } }{ e^{\frac{H |n_1| \sqrt{1-T}}{T} } + e^{- \frac{H |n_1| \sqrt{1-T}}{T} }}.
	%= \frac{ e^{ - \frac{H |n_1| \sqrt{1-T}}{T} } }{ e^{\frac{H |n_1| \sqrt{1-T}}{T} } + e^{- \frac{H |n_1| \sqrt{1-T}}{T} }} , 
\eeq
The thermal random variable $\gib N$ has the standard Gaussian distribution and it is independent of $\gib B(\alpha)$.
\end{result}
 %\Cbr{[Changed $\gib B(H,T)$ to $\gib B(\alpha)$ to uniformize the Bernoulli notations]}

%%%%%%%
\subsubsection{Matching with $h\sim N^{-1/6}$ and $h=0$}

As $\efres\to \infty$, the random variable $\gib B(\alpha) \to 1$.
The formal limit of \eqref{eq:mgnhN1/2} as $\efres\to \infty$ is 
\beq
	\mgn \simeqidsgibbs \ef + \frac{1}{\sqrt{N}} \left[|n_1| \sqrt{1 - T} + \sqrt{T} \gib N\right],
\eeq
which is the same as \eqref{eq:mgnhN1/6H0} from the $\ef=\efres N^{-1/6}$ regime. On the other hand, if we take $\efres\to 0$, then $\gib B(\alpha)\xrightarrow{D} \gib B(1/2)$.
% where $\gib B(1/2)$ is the shifted random variable  taking values $1$ and $-1$ with equal probability. 
Hence, the formal limit of \eqref{eq:mgnhN1/2} as $H\to 0$ is the same as the $h=0$ case \eqref{eq:h0folt}. 
Therefore, the result \eqref{eq:mgnhN1/2} matches with both the $h\sim N^{-1/6}$ and $h=0$ results.

%%%%%%%%%%%%%%%%%%%%%%%%%%%%%%%%%
\subsection{Susceptibility} \label{sec:suscep}

In this subsection, we discuss properties of the susceptibility, defined as the magnetization per external field strength. In the next subsection we discuss differential susceptibility 
\beq
	\scp= \frac{\mage}{\ef} = \frac{\langle \mgn \rangle}{\ef} = \frac1{h} \frac{\dd \fe_N}{\dd h}. 
\eeq
%The magnetization $\mage$ for the regimes $h=O(1)$ and $h\sim N^{-1/6}$ was computed in Subsection \ref{sec:theavfrfe}. 
%The result for the regime $h\sim N^{-1/2}$ can be obtained from Subsection \ref{sec:mgnlowtmn12}. 
We denote by $\bar \scp$ or $\E_s [ \scp]$ the sample average of $\scp$. We denote by $\Var_s$ the sample variance. 
As described in Section \ref{sec:RMT}, we use the font $\simeqids$ to denote an asymptotic expansion in distribution with respect to the disorder sample.

%%%%%%%%%%%%%%%%%%%%%%%%%%%%%%%%%
\subsubsection{Macroscopic field $\ef=O(1)$}

From Subsection \ref{sec:mgnlowtmn12}, for fixed $h>0$ and $T>0$, 
%Taking the thermal average of \eqref{eq:mgnphc} and dividing by $\ef$, we find that 
\beq \label{eq:gascpehfae}
	\scp \simeq   \scpz+ \frac{\scpo}{\sqrt{N}} 
\eeq	 
%with high probability (with respect to the disorder variables) where 
for asymptotically almost every disorder sample, where 
\beq
	\scpz= s_1(\cp_0) \quad \text{and} \quad \scpo= \go_N(\cp_0;1)- \frac{\ef^2 s_2(\cp_0)  \go_N(\cp_0; 2)}{T s_2(\cp_0) + 2\ef^2  s_3(\cp_0)} 
\eeq	 
and $\cp_0$ is the solution of the equation \eqref{eq:cp0h>0} and $\go_N(z;k )$ is defined in  \eqref{eq:go}.

The leading term $\scpz$ is deterministic and satisfies:
\begin{itemize}
\item $\scpz$ is a decreasing function of $\ef$,
\item As $h\to \infty$, 
\beq
	\scpz(h, T)= \frac1{h}- \frac{T}{2h^2} + O(h^{-3}) \quad\text{for all $T>0$}
\eeq 
\item As $h\to 0$, 
\beq \label{eq:susehp}
	\scpz(h, T) \simeq 
	\begin{cases}
	\frac{1}{T} - \frac{\ef^2}{T(T^2-1)}   \qquad &\text{for $T>1$}\\ 
	1 - \frac{\ef^2}{2(1 - \tmp)} \qquad & \text{for $0<T<1$.} 
	\end{cases}
\eeq
\end{itemize}
See Figure \ref{fig:scpzvsh} for the graph of $\scpz$ as a function of $\ef$.

\begin{comment}
The subleading term $\scpo$ depends on the disorder sample. 
%Consider the subleading term $\scpo$. % in the expansion \eqref{eq:gascpehfae}. It depends on the disorder variables. 
Taking the thermal average of the results in Subsubsections \ref{sec:magnhla}, \ref{sec:limiofahtp}, and \ref{sec:limiofahtl}, we find the following results. 
First, \eqref{eq:malha} implies that 
\beq 
	\scpo \simeq  \frac{\sum_{i = 1}^N (n_i^2-1)}{2\ef }    \qquad \text{as $\ef\to \infty$ for all $T>0$.} 
\eeq
Second,  \eqref{eq:724} implies that 
\beq 
	\scpo \simeq \go_N(T+T^{-1} ;1) \qquad \text{as $\ef\to 0$ for $T>1$.} 
\eeq
For $T<1$, the computation of Subsection \ref{sec:limiofahtl} indicates that
$\scpo\to \infty$ as $\ef\to 0$. 
\end{comment} 

\medskip

%\Cb{[Introduced notation $\sigma_s$ for sample variance. Thermal variance of the previous sections are still $\sigma$]}

The subleading term $\scpo$ depends on the disorder sample. We consider its sample-to-sample fluctuations. 
From \eqref{eq:go}, $\go_N(\cp_0;1)$ and $\go_N(\cp_0;2)$ converge to the centered bivariate Gaussian distribution with 
\beq
	\mathrm{Var}_s[\go_N(\cp_0; 1)] \to 2s_2(\cp_0) ,  \qquad   \mathrm{Var}_s[ \go_N (\cp_0; 2)] \to 2s_4(\cp_0),
\eeq
and 
\beq
	\mathrm{Cov}_s(\go_N(\cp_0; 1), \go_N(\cp_0; 2)) = \E_s \left[\frac{1}{N} \sum_{i = 1}^N \frac{(n_i^2 - 1)^2}{(\cp_0 - \widehat \eg_i)^3}\right] \to 2s_3(\cp_0). 
\eeq
as $N \rightarrow \infty$.
Hence, as $N\to \infty$, 
\beq
	\scpo \simeqids \NN(0, \sva^2) 
\eeq
where the sample variance is
\beq
	%\overline{ (\scpo)^2- ( \overline{\scpo})^2} \simeq  
	\sva^2 =   \frac{ 2 s_2(\cp_0)^2 \left( T^2 s_2(\cp_0)+ 2T \ef^2 s_3(\cp_0)+ h^4 s_4(\cp_0)\right) }{\left( T s_2(\cp_0) + 2\ef^2  s_3(\cp_0) \right)^2}.
\eeq 
%where $s_k(z)$ is defined in \eqref{eq:def_sk} with its asymptotics for various $k$ values given in \eqref{eq:stjaspt}.
See Figure \ref{fig:scpsigvsh} for the graph of $\sva^2$. The graph shows that $\sva^2$ is a monotonically decreasing function of $\ef$. %\Cb{Proof?} 
%Therefore,  
%\beq \label{eq:suschpo}
%	\scp \simeqids  s_1(\cp_0) + \frac{1}{\sqrt{N}} \NN(0, \sigma^2) \qquad\text{for $\ef>0$ and $\ef=O(1)$} 
%\eeq
%as $N\to \infty$. 
%Recall that $\cp_0>0$ is the solution of the equation  \eqref{eq:qeqfoc}.
It is easy to check that:
\begin{itemize}
\item As $\ef\to \infty$,  
\beq
	\sigma_s^2\simeq \frac{1}{2\ef^2} \qquad \text{for all $T>0$.}
\eeq
\item As $h\to 0$, 
\beq
	\sigma_s^2 \simeq \begin{dcases}
	\frac2{T^2-1} \qquad &\text{for $T>1$,}\\
	\frac{1-T}{\ef^2} \qquad &\text{for $T<1$.} 
	\end{dcases}
\eeq
\end{itemize} 

The above formula suggests that there is an interesting transition as $T$ approaches the critical temperature $T=1$ in the case where $h\to 0$. 
The behavior near the $(T,h)=(1,0)$ is worth studying, but we leave this subject for the future. % \Cb{[How about this?]}

\begin{figure}[H]
\centering
\begin{subfigure}{0.45\textwidth}
\includegraphics[width = 0.9\textwidth]{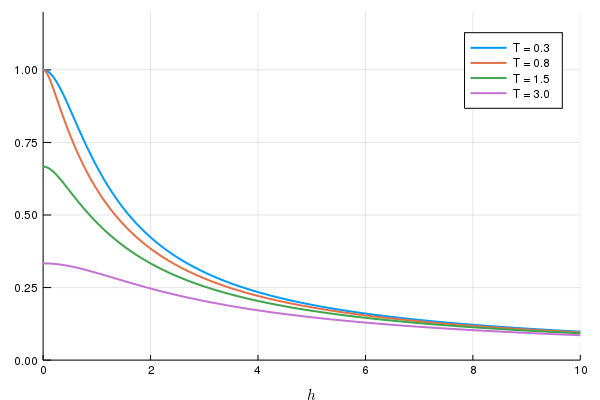}
\caption{Graph of $\scpz(\ef)$}
\label{fig:scpzvsh}
\end{subfigure}
\begin{subfigure}{0.45\textwidth}
\includegraphics[width = 0.9\textwidth]{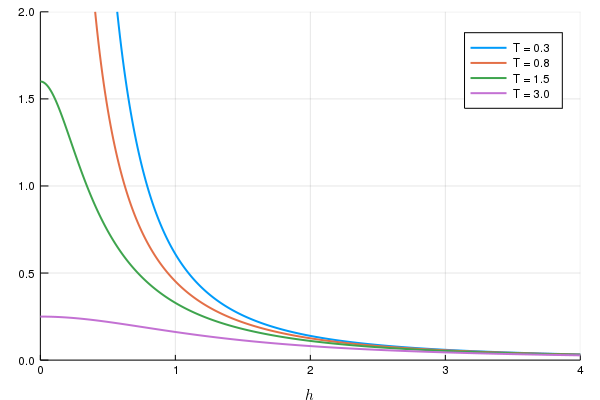}
\caption{Graph of $\sva^2(\ef)$}
\label{fig:scpsigvsh}
\end{subfigure}
\caption{Graph of $\scpz(\ef)$ and $\sva^2(\ef)$ as function of $\ef$ for various values of  $\tmp$}
\end{figure}

%%%%%%%%%%%%%%%%%%%%%%%%%%%%%%%%%%%%
\subsubsection{Mesoscopic external field: $\ef\sim N^{-1/6}$ and $T<1$}

From Subsection \ref{sec:mgnlowtmn12}, for $h=HN^{-1/6}$ with fixed $H>0$ and $0<T<1$, 
\beq	
	\scp \simeq  1 + \frac{\crv(\stild)}{N^{1/3}} 
\eeq
for asymptotically almost every disorder sample, where $\crv(\stild)$ is given in \eqref{eq:mgfmgn}.

The behavior of $\crv(\stild)$ as $H\to \infty$ and $H\to 0$ is discussed in Subsubsections \ref{sec:16maef0} and \ref{sec:formlimh1/6mng}.
%In particular, as the limit as $H\to 0$ does not depend on $N$ and find from  \eqref{eq:popop} that 
%\beq
%	\scpo\simeq \frac{|n_1| \sqrt{1 - T}}{H} \qquad \text{as $H\to 0$.} 
%\eeq
The sample-to-sample fluctuation of $\crv(\stild)$ is shown in Subsection \ref{sec:fetranltmp} and we see that
%, we showed $\crv(\stild)$ converges in distribution to the random variable $\crvlim(\slim)$ as $N \rightarrow \infty$ 
%where $\slim$ and $\crvlim(\slim)$ are defined in \eqref{eq:dG_t23_limit} and \eqref{eq:def_crv_lim}. 
%Therefore, 
$\crv(\stild)  \simeqids  \crvlim(\slim)$ where  
\beq	
	\crvlim(\slim)= \lim_{n \rightarrow \infty} \left(\sum_{i = 1}^n \frac{\evg_i^2}{\varsigma +\airy_1 - \airy_i} - \frac1{\pi} \int_0^{\left(\frac{3\pi n}{2}\right)^{2/3}} \frac{\dd x}{ \sqrt{x}}\right)
\eeq
and $\slim > 0$ solves $1 - T = \efres^2 \sum_{i = 1}^\infty \frac{\evg_i^2}{(\varsigma + \airy_1 - \airy_i)^2}$. 
Here, $\airy_i$ is the GOE Airy point process and $\evg_i$ are i.i.d standard normal random variables independent of $\airy_i$.

%%%%%%%%%%%%%%%%%%%%%%%%%%%%%%%%%%%%%%%%%%%
\subsubsection{Microscopic external field: $h\sim N^{-1/2}$ and $T<1$} 

The thermal average of \eqref{eq:mgnhN1/2} implies that for $h=HN^{-1/2}$ with fixed $H>0$ and $0<T<1$, 
\beq \label{eq:scpcmicro}
	\scp \simeq  1 + \frac{|n_1| \sqrt{1 - \tmp}}{H} \tanh \left(\frac{H |n_1|\sqrt{1 - T}}{T}\right) =: \chimicro
\eeq
for asymptotically almost every disorder sample. 
The function $\chimicro$ is a decreasing function in both $H$ and $T$ (see Figures \ref{fig:scpHTn1} and \ref{fig:scpHTn122}). % for the graph of $\scpz$ as a function of $\efres$ and Figure \ref{fig:scpHTn122}  for the graph as a function of $\tmp$. 
From the formula for $\chimicro$, we conclude that
\beq 
	\chimicro \simeq 	1+ \frac{|n_1| \sqrt{1 - \tmp}}{H} \qquad \text{as $H\to \infty$} 
\eeq
and
\beq \label{eq:sushohalf}
	\chimicro \simeq 	
	 1 +  \frac{n_1^2  (1 - \tmp)}{\tmp}  - \frac{H^2 n_1^4(1-T)^2}{3 T^3}  \qquad \text{as $H\to 0$.}
\eeq

\begin{figure}
\centering
\begin{subfigure}{0.45\textwidth}
\includegraphics[width = 0.9\textwidth]{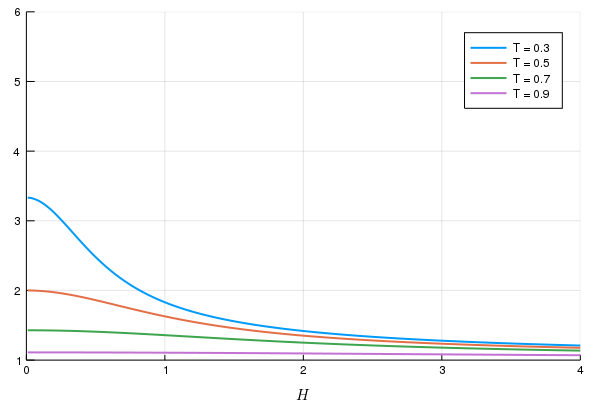}
\caption{$\chimicro$ as a function of $H$  when $|n_1| = 1$}
\end{subfigure}
\begin{subfigure}{0.45\textwidth}
\includegraphics[width = 0.9\textwidth]{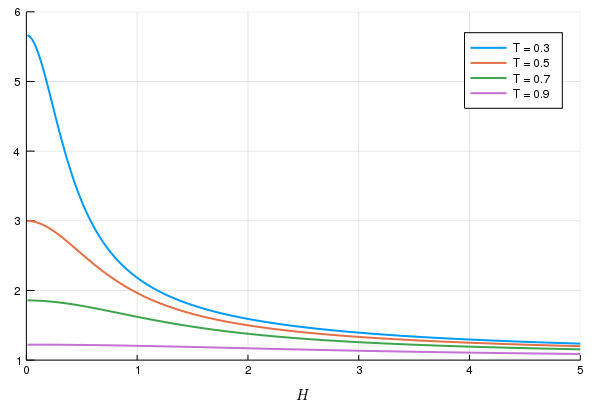}
\caption{$\chimicro$ as a function of $H$ when $|n_1| = \sqrt{2}$}
\end{subfigure}
\caption{Graph of $\chimicro(\efres, T)$ as a functions of $\efres$ for various values of $0<\tmp<1$.}
\label{fig:scpHTn1}
\end{figure}

\begin{figure}
\centering
\begin{subfigure}{0.45\textwidth}
\includegraphics[width = 0.9\textwidth]{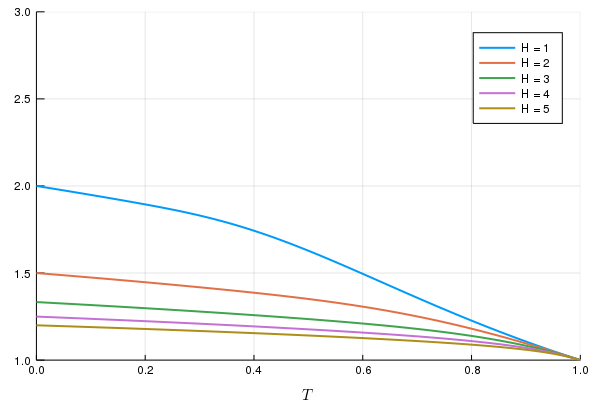}
\caption{$\chimicro$ as a function of $T$ when $|n_1| = 1$}
\end{subfigure}
\begin{subfigure}{0.45\textwidth}
\includegraphics[width = 0.9\textwidth]{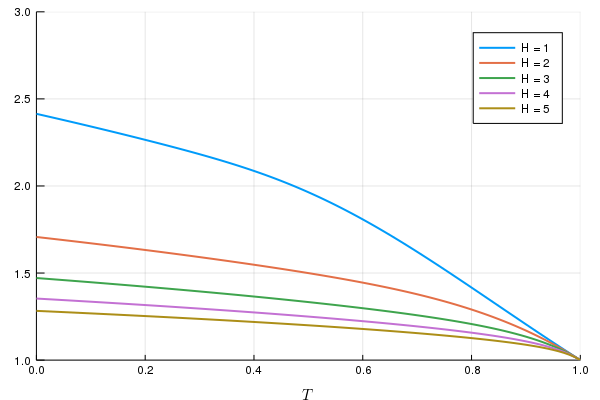}
\caption{$\chimicro$ as a function of $T$ when $|n_1| = \sqrt{2}$}
\end{subfigure}
\caption{Graph of $\chimicro(\efres, T)$ as a functions of $\tmp$ for various values of $\efres$.}
\label{fig:scpHTn122}
\end{figure}

%%%%%%%%%%%
\subsubsection{The zero external field limit of the susceptibility}

%We consider the limit of the susceptibility as $\ef\to 0$. 
We consider two different limits of the susceptibility depending on how $\ef\to 0$ and $N\to \infty$ are taken.  
The first limit is obtained from \eqref{eq:susehp}:
\beq \label{eq:zerosunaive}
	\lim_{\ef\to 0} \lim_{\substack{N\to \infty \\ \ef>0}} \scp = 
	\begin{cases}
	\frac{1}{T}     \qquad &\text{for $T>1$}\\ 
	1 \qquad & \text{for $T<1$.} 
	\end{cases}
\eeq
See Figure \ref{fig:susvarvsh} (a). 
This result \eqref{eq:zerosunaive} was previously obtained in \cite{kosterlitz1976spherical}, and also in \cite{cugliandolo2007nonlinear}. The limit does not depend on the disorder sample.
%For $T>1$, there are no transitional regimes as $\ef\to 0$ and hence it is expected that the limit $\scp\to \frac1{T}$ holds even if we take $\ef\to 0$ and $N\to \infty$.  

The second limit is obtained from \eqref{eq:sushohalf} for $0<T<1$: %. Then, \eqref{eq:sc2}
\beq \label{eq:zerosucsoph}
	\lim_{H\to 0} \lim_{\substack{N\to \infty \\ h=HN^{-1/2}}} \scp = 1 +  \frac{ n_1^2  (1 - \tmp)}{\tmp}
\qquad \text{for $0<T<1$.}	
\eeq
See Figure \ref{fig:susvarvsh} (b). 
This limit depends on the disorder sample, but only on one variable, $n_1^2$. 
Observe that this limit blows up at $T=0$ while the limit \eqref{eq:zerosunaive} is finite at $T=0$.  
The sample-to-sample average of \eqref{eq:zerosucsoph} satisfies
%$\bar{ \scp}\to \frac1{T}$ for $T<1$ as $N\to \infty$ with $h=HN^{-1/2}$ first and then $H\to 0$. But, this limit $\frac1{T}$ is the same as the one for $T>1$ in \eqref{eq:zerosunaive}. Hence, we find that 
\beq \label{eq:zerosucsampleavsoph}
	\lim_{H\to 0} \lim_{\substack{N\to \infty \\ h=HN^{-1/2}}} \bar{ \scp}  = \frac1{T}
\qquad \text{for $0<T<1$.}	
\eeq
%where the limit is taken as $N\to \infty$ and $h\to 0$ simultaneously. 
%This limit also holds for $T>1$: see \eqref{eq:zerosunaive}.  

%%%%%%%%%%%%%%%%%%%%%%%%%%%%%%%%%%%%%%%%%
\subsection{Differential susceptibility}\label{sec:diffsusc}

%\Cb{[This is a new section. Is definition $\chid$ correct with $\frac1{T}$?]} \Cr{[P: Yes !]}

We also consider the differential susceptibility given by 
\beq
	\chid = \frac{\dd}{\dd h} \langle \mgn \rangle = \frac{\dd^2 \fe_N}{\dd h^2}= \frac{N}{T} \left( \langle\mgn^2\rangle - \langle \mgn \rangle^2 \right) .
\eeq

The results \eqref{eq:mgnphc}, \eqref{eq:mgn_H1/6}, and \eqref{eq:mgnhN1/2} imply the following formulas. All formulas hold for asymptotically almost every disorder sample. 
\begin{enumerate}[(a)]
\item For fixed $h>0$ and $T>0$, 
\beq
	\chid \simeq  s_1(\cp_0)  - \frac{2 \ef^2 s_2(\cp_0)^2} {\tmp s_2(\cp_0) + 2\ef^2 s_3(\cp_0)  } =:\chidma.
\eeq
\item For $h=HN^{-1/6}$ with fixed $H>0$ and $0<T<1$, 
\beq
	\chid\simeq 1.
\eeq
\item For $h=HN^{-1/2}$ with fixed $H>0$ and $0<T<1$, 
\beq \label{eq:chidmicro}
	\chid \simeq 1+ \frac{n_1^2(1-T)}{T \cosh^2 \left( \frac{H |n_1| \sqrt{1-T}}{T} \right) }=:\chidmic.
\eeq
\end{enumerate}

The limits for the macroscopic and mescopic regimes do not depend on the disorder samples, but the limit for the microscopic regime depends on the disorder variable $n_1^2$.
The macroscopic limit satisfies the following property as $h\to 0$: 
\beq
	\chidma\simeq \begin{dcases} \frac1{T} - \frac{3h^2}{T(T^2-1)} + O(h^4)\qquad &T>1, \\
	1- \frac{3h^2}{2(1-T)^2} + O(h^4) \qquad &0<T<1. 
	\end{dcases}
\eeq
On the other hand the microscopic limit satisfies, for $0<T<1$, 
\beq
	\chidmic \simeq \begin{dcases} 1+ O(e^{-\frac{2H |n_1| \sqrt{1-T}}{T} }) \qquad &\text{as $H\to \infty$.} \\
1+ \frac{n_1^2(1-T)}{T} - \frac{H^2n_1^4(1-T)^2}{T^3} \qquad &\text{as $H\to 0$.}
\end{dcases} \eeq

The zero external field limit is the same as the susceptibility of the last section even though the subleading terms differ by a factor of $3$. 
In both cases the limit is
\beq
	\lim_{H\to 0} \chidmic= \lim_{H\to 0} \chimicro= 1+ \frac{n_1^2(1-T)}{T} %\quad \text{and}
	% \quad
	%\lim_{H\to 0} \bar{\scp}_{d,\text{micro}}= \lim_{H\to 0} \bar{\scp}_{\text{micro}}= \frac{1}{T} 
\eeq
and this value depends on the disorder variable $n_1^2$. Note that the sample-to-sample average of $n_1^2$ is $1$. This result shows that both susceptibilities satisfy Curie's law in the sample-to-sample average sense, but not in the quenched disorder sense. 

We note that if we take $T\to 0$ with $H>0$ fixed in \eqref{eq:scpcmicro} and \eqref{eq:chidmicro}, then
\beq \label{eq:sucatt0}
	 \chimicro \simeq 1+ \frac{|n_1|}{H} \quad \text{and} \quad \chidmic \simeq 1 \qquad \text{at $T=0$.}
\eeq
This shows that $\chidmic(T=0)$ does not diverge as $H\to 0$ but $\chimicro(T=0)$ does.

%%%%%%%%%%%
\section{Overlap with the ground state}\label{sec:ground}

Recall that $\pm \vu_1$ denote the unit eigenvectors corresponding to the largest eigenvalue of $M$. The overlap of the spin with the ground state and the squared overlap are defined as
\beq
	\OG= \frac{|\vu_1\cdot \sphv|}{\sqrt{N}}, \qquad 
	\OM =\OG^2 = \frac1{N} (\vu_1\cdot \sphv)^2,
\eeq
respectively. 
 The overlap $\OG=1$ when $T=h=0$ since the Hamiltonian is maximized when $\sphv$ is parallel to $\pm \vu_1$. 
The overlap measures how close the spin is to the ground state. 
Since it is more convenient to analyze, we consider $\OM$ in this section.

As with the overlap with the external field, there are no transitions when $T>1$ as $\ef\to 0$. 
However, when $T<1$, there are two interesting transitional regimes given by $\ef\sim N^{-1/6}$ and $\ef\sim N^{-1/3}$. The second regime did not appear for the overlap with the external field. On the other hand, the regime $\ef\sim N^{-1/2}$, which we studied for the free energy and the overlap with the external field, does not reveal any new features of $\OM$.  Instead, $\OM$ has the same properties for $h\sim N^{-1/2}$ as it does for $\ef=0$.

%We evaluate the distribution of $\OM$ by considering its Gibbs moment generating function $\langle e^{\beta \eta \OM}\rangle$ where $\eta$ is the variable for the moment generating function. 
%The variable $\eta$ will be scaled with $N$ in different ways depending on the regime. The rescaled variable will be denoted by $\xi$.

The moment generating function of $\OM$ has the integral formula given in Lemma \ref{lem:contour}, 
\beq\label{eq:mgf_omcopy}
	\langle e^{\beta \eta \OM} \rangle = e^{\frac N2(\Gom(\cpm) - \G(\cp))} \frac{\int^{\cpm + \ii\infty}_{\cpm - \ii\infty} e^{\frac N2 (\Gom (z) - \Gom(\cpm))} \dd z}{\int_{\cp - \ii \infty}^{\cp + \ii \infty} e^{\frac N2 (\G(z) - \G(\cp))} \dd z} 
\eeq
where
\beq
\begin{aligned}
	\Gom(z) & =  \beta z - \frac1N \log\left(z - \eg_1  - \bpm \right)- \frac{1}{N}\sum_{i = 2}^N \log(z - \eg_i) + \frac{\ef^2\beta n_1^2}{N( z - \eg_1 -\bpm) } + \frac{\ef^2\beta}{N} \sum_{i = 2}^N \frac{n_i^2}{z - \eg_i}.
\end{aligned}
\eeq
We take take $\cpm$ and $\cp$ to be the critical points of $\Gom$ and $\G$ respectively, and we use the notation
\beq \label{eq:bnotation}
	\bpm := \frac{2\eta}{N}. 
\eeq
The difference between $\Gom$ and $\G$ is that, in the case of $\Gom$, $\eg_1$ is changed to $\eg_1+ \bpm$. 

\medskip

The following two formulas will be used in the analysis below. 
First, we have 
\beq \label{eq:diffGomGher} 
	N( \Gom(\cpm) - \G(\cp) )  = N(\Gom(\cpm) - \G(\cp) - \G'(\cp)(\cpm - \cp))= D_1 + D_2 + D_3 + D_4
\eeq
where
\beq \label{eq:def_D}
\begin{aligned}
&D_1 = -\log \left(1 + \frac{\cpm - \cp - \bpm}{\cp - \eg_1}\right) + \frac{\cpm - \cp}{\cp - \eg_1}, \quad D_2 = -\sum_{i = 2}^N \left[\log \left(1 +  \frac{\cpm - \cp}{\cp - \eg_i}\right) - \frac{\cpm - \cp}{\cp - \eg_i}\right] , \\
	&D_3 = \ef^2 \beta n_1^2\left[\frac{1}{\cpm - \eg_1 - \bpm} - \frac{1}{\cp - \eg_1} + \frac{\cpm - \cp}{(\cp - \eg_1)^2}\right] , \quad D_4 = \ef^2\beta (\cpm - \cp)^2\sum_{i = 2}^N \frac{n_i^2 }{(\cpm - \eg_i)(\cp - \eg_i)^2}.
\end{aligned}
\eeq
Second, we can show from the equation $\Gom'(\cpm)-\G'(\cp)=0$ that 
\beq \label{eq:cpmcpdif} \begin{split}
	&(\cpm-\cp) \bigg[  \frac1{N(\cpm - \eg_1-\bpm ) (\cp - \eg_1 ) } + \frac{1}{N} \sum_{i=2}^N \frac1{(\cpm - \eg_i ) (\cp-\eg_i)}
	+ \frac{h^2\beta n_1^2}{N} \frac{\cp+\cpm-2\eg_1-\bpm}{(\cpm - \eg_1-\bpm )^2 (\cp - \eg_1 )^2 } \\
	&+  \frac{h^2\beta }{N}\sum_{i=2}^N  \frac{n_i^2( \cp+\cpm-2\eg_i)}{(\cpm - \eg_i)^2 (\cp - \eg_i )^2 }
	\bigg] = b\bigg[ \frac1{N(\cpm - \eg_1-\bpm ) (\cp - \eg_1 ) }+ \frac{h^2\beta n_1^2}{N} \frac{\cp+\cpm-2\eg_1-\bpm}{(\cpm - \eg_1-\bpm )^2 (\cp - \eg_1 )^2}
	\bigg].
\end{split}\eeq

%%%%%
\subsection{Macroscopic external field: $\ef=O(1)$} \label{sec:om_h>0}

\subsubsection{Analysis}

Fix $h>0$. 
The fluctuations of $\OM$ turn out to be of order $N^{-1}$.
Thus we set 
\beq
	\eta = \xi N \quad\text{so that} \quad 	\bpm= 2\xi.
\eeq
The critical point of $\G(z)$ is obtained in Subsection \ref{sec:freeenergypof} and is given by $\cp = \cp_0 + \bhp{N^{-1/2}}$ where $\cp_0$ solves the equation \eqref{eq:cp0h>0}.
We do not need an explicit formula for the term $\bhp{N^{-1/2}}$ in this section.  
%We assume that $\xi < (\cp_0 - 2)/2$ such that $\eg_1 + 2\xi < \cp_0$ with high probability.
%This includes $\xi = 0$ with high probability.
Since $\Gom'(z) = \G'(z) + \bhp{N^{-1}}$ for $z>2$, a perturbation argument implies that the critical point of $\Gom(z)$ is given by 
\beq
	\cpm = \cp + \bhp{N^{-1}}.
\eeq

We use \eqref{eq:diffGomGher} to compute $N( \Gom(\cpm) - \G(\cp) )$. 
From the semi-circle law,  we have $D_2=\bhp{(\cpm-\cp)^2 N}= \bhp{N^{-2}}$ and $D_4=\bhp{N^{-1}}$. 
On the other hand, $D_1$ and $D_3$ are easy to compute and we find that 
\beq
	N(\Gom(\cpm) - \G(\cp)) = - \log\left(1 - \frac{2\xi}{\cp_0 - 2}\right) + \frac{2\ef^2\beta n_1^2 \xi}{(\cp_0 - 2)^2 (1 - \frac{2\xi}{\cp_0 - 2})} + \bhp{N^{-1/2}}.
\eeq

Since $\Gom^{(k)}(\cpm) = \bhp{1}$ for all $k\ge 2$, the ratio of the integrals \eqref{eq:mgf_omcopy} can be evaluated using the method of steepest descent.  
For $k=2$, 
\beqq
	\Gom''(\cpm) \simeq s_2(\cp_0) +  \ef^2\beta s_3(\cp_0) ,
\eeqq
which does not depend on $\xi$. 
Since $\G(\cp)$ is the special case of $\Gom(\cp)$ when $\xi=0$, we conclude that 
\beqq
	 \frac{\int^{\cpm + \ii\infty}_{\cpm - \ii\infty} e^{\frac N2 (\Gom (z) - \Gom(\cpm))} \dd z}{\int_{\cp - \ii \infty}^{\cp + \ii \infty} e^{\frac N2 (\G(z) - \G(\cp))} \dd z} 
	 \simeq \sqrt{\frac{\G''(\cp)}{\Gom''(\cpm)} } \simeq 1. 
\eeqq

%These calculations imply that 
%\beq
%	\langle e^{\beta \xi N \OM}\rangle \simeq \left(1 - \frac{2\xi}{\cp_0 - 2}\right)^{-1/2} e^{\frac{\ef^2\beta n_1^2 \xi}{(\cp_0 - 2)^2 (1 - \frac{2\xi}{\cp_0 - 2})}}
%\eeq
Inserting these results into \eqref{eq:mgf_omcopy}, replacing $\xi$ with $(\cp_0 - 2)\xi$, and using $\beta = 1/\tmp$, we obtain the following. 

\begin{result}
For $h=O(1)$ and $T>0$, 
\beq \label{eq:hhhh}
	\langle e^{\frac{\cp_0 - 2}{\tmp}\xi N \OM}\rangle \simeq \left(1 - 2\xi\right)^{-1/2} e^{\frac{\ef^2 n_1^2 \xi}{\tmp(\cp_0 - 2) (1 - 2\xi)}}
\eeq
as $N\to \infty$ for asymptotically almost every disorder, where
$\cp_0>2$ is the solution of the equation \eqref{eq:cp0h>0}.
%\beq \label{eq:copoas}
%	1 - T s_1(\cp_0) - \ef^2 s_2(\cp_0) =0, \qquad \cp_0>2. 
%\eeq
\end{result}

%The moment generating function depends on the disorder sample only through $|n_1|$. 

Note that if $X$ is a non-central Gaussian random variable  $\mu+ \gib N$, i.e. if $X^2$ is a non-centered chi-squared distribution with $1$ degree of freedom, then  %\Cr{[P: $\xi$ should be replaced by $t$]} \Cbr{[Changed]} \Cb{
\beq \label{eq:noncechi}
	E[ e^{\xi X^2}]  = (1 - 2\xi)^{-1/2} e^{\frac{\mu^2 \xi}{1 - 2\xi }}.
\eeq
Therefore, we obtain the next result from the one above. 

\begin{result}\label{thm:groundh>0} For $h=O(1)$ and $T>0$, 
\beq
\label{eq:om_h>0}
	\OM\simeqidsgibbs \frac{\OMz}{N} \quad \text{where} \quad \OMz=  \frac{\tmp}{\cp_0-2}  \left \lvert \frac{\ef |n_1|}{\sqrt{\tmp(\cp_0 - 2)}}+  \gib N \right \rvert^2 
%	\OM \simeqidsgibbs \frac{\tmp}{(\cp_0-2)  N}  \left \lvert \gib N \left(\frac{\ef |n_1|}{\sqrt{\tmp(\cp_0 - 2)}},1\right) \right \rvert^2 
\eeq
as $N\to \infty$ for asymptotically almost every disorder,  where the thermal random variable $\gib N$ has the standard Gaussian distribution.
\end{result}

%This result shows that the overlap $\OG$ is of order $N^{-1/2}$ when $h>0$, i.e. the spin is almost perpendicular to the ground state. 

%%%%%%%%
\subsubsection{Limits as $\ef\to \infty$ and $\ef\to 0$} 

Consider the formal limit of \eqref{eq:om_h>0} as $h\to \infty$. From \eqref{eq:cpvalueforhplh}, we find that  if we take $h>0$ and let $N\to \infty$ first and then $h\to \infty$, we get 
\beq
	\OM \simeqids \frac{1}{N} \left[ n_1^2 + \frac{2|n_1|\sqrt{T} }{\sqrt{\ef}} \gib N \right] 
%	\quad \text{as $N\to \infty$ first and then $h\to \infty$} 
\eeq
for all $T>0$. 
%\Cr{you say $N\to\infty$ first ..and $h\to\infty$ the way it is written it looks like thereis non commutation of limits, is there?? }\Cbr{We did not study the simultaneous limit $h\to \infty, T\to\infty$. Is it interesting to consider?}
On the other hand, the equation \eqref{eq:cpohs} implies that if we take $h>0$ and let $N\to \infty$ first and then $h\to 0$, we obtain  
\beq \label{eq:915ama}
	\OM \simeqidsgibbs \frac{T^2}{N(T-1)^2} \left[ \gib N^2 + \frac{2\ef |n_1| }{T-1} \gib N \right]  \qquad \text{for $T>1$}
\eeq
and  
\beq \label{eq:sqrom_h1/6_h0} \begin{split}
	\OM \simeqidsgibbs \frac{16}{N} \left[ \frac{(1-T)^4 n_1^2}{h^6}  + \frac{\sqrt{T}(1-T)^3|n_1| }{h^5} \gib N  \right]   \qquad \text{for $0<T<1$. }
\end{split} \eeq
For $0<T<1$, the above result indicates that the overlap is of order $1$ when $h\sim N^{-1/6}$. 
We study this regime in the next subsection. 

%\Cb{Removed the subsection on Thermal Average since it is repeated in a later subsection.}

%%%%%%%%%%%%%%%%%%%%%%%%%%%%%%%%%%%%%%%%%%%
\subsection{Mesoscopic external field: $\ef \sim N^{-1/6}$ and $T<1$}
\label{sec:gs1/6}

\subsubsection{Analysis}

We set 
\beq
	\ef = H N^{-1/6}
\eeq
for fixed $H>0$. 
If we insert $h=HN^{-1/6}$ in to the formula, the equation \eqref{eq:sqrom_h1/6_h0}  indicates that the fluctuations are of order $N^{-1/6}$.
Thus, we set 
\beq
	\eta = \xi N^{1/6} \quad \text{so that} \quad \bpm =  2\xi N^{-5/6} %= \frac{2 \xi}{N^{5/6}}
\eeq
in \eqref{eq:mgf_omcopy} and \eqref{eq:bnotation}.

The critical point $\cp$ of $\G(z)$ is obtained in Subsection \ref{sec:fetranltmp} and it is given by $\cp = \eg_1 + sN^{-2/3}$ where $s>0$ is the solution of the equation \eqref{eq:bsao13}.
%\beqq
%	0 = \beta - 1 - \efres^2 \beta \sum_{i = 1}^N \frac{n_i^2}{(s + \egres_1 - \egres_i)^2} + \bhp{N^{-1/3}}.
%\eeqq
We now consider the critical point of $\Gom(z)$. 
From the formula, we see that $\Gom'(z)$ is an increasing function of $z$ for $z>\eg_1+b$. Using $b>0$ and the explicit formula of the functions, we can easily check that 
$\Gom'(\gamma)< \G'(\cp)=0$ and $\Gom'(\gamma+b)>\G'(\cp)=0$.
Hence, we find that $\cp<\cpm<\cp+b$, and thus, $\cpm-\gamma=\cO(N^{-5/6})$. 
We now set 
\beq 
	\cpm = \cp + \Delta N^{-5/6}
\eeq
and determine $\Delta$ using \eqref{eq:cpmcpdif}. 
The right-hand side of the equation \eqref{eq:cpmcpdif} is equal to 
\beqq
	\frac{2\xi}{N^{5/6}} \left[ \frac{N^{1/3}}{s^2}+ \frac{2H^2\beta n_1^2 N^{2/3}}{s^3} \right] = \frac{4\xi H^2\beta n_1^2}{N^{1/6} s^3} \left( 1+ \bhp{N^{-1/3}}\right). 
\eeqq
For the left-hand side of the equation, the first two terms are of smaller order than the last two terms. Using $\cpm = \cp + \bhp{N^{-5/6}}$ and $b=\bhp{N^{-5/6}}$ for the other two sums, %where the error terms are smaller than the eigenvalue fluctuations, 
the left-hand side is equal to 
\beqq
	 \frac{\Delta}{N^{5/6}} \left[ \frac{2H^2\beta n_1^2 N^{2/3}}{s^3} + 2H^2\beta N^{2/3} \sum_{i=2}^N \frac{n_i^2}{(s+a_1-a_i)^3}  + \bhp{N^{1/3}} \right] .
\eeqq
Therefore, 
\beq \label{eq:deltaomh1/6}
	\Delta = \frac{2\xi n_1^2 s^{-3} }{\sum_{i = 1}^N n_i^2 (s + \egres_1 - \egres_i)^{-3}} + \bhp{N^{-1/6}}.
\eeq

\medskip

We now evaluate $N(\Gom(\cpm) - \G(\cp))$ using \eqref{eq:diffGomGher}.
%We use $b=2\xi N^{-5/6}$, $\cp=\eg_1+s N^{-2/3}$, and $\cpm= \cp+\Delta N^{-5/6}= \eg_1+s N^{-2/3}+\Delta N^{-5/6}$. 
It is easy to check that  $D_1= \bhp{N^{-1/6}}$ and $D_2=\bhp{N^{-1/3}}$.
Evaluating the first two leading terms, 
\beqq
	D_3 %& = \frac{\efres^2 \beta n_1^2}{N^{1/3}} \left[ \frac{2\xi N^{-5/6}}{(\cp - \eg_1)^2} + \frac{(\cpm - \cp - 2\xi N^{-5/6})^2}{(\cpm - \eg_1 - 2\xi N^{-5/6})(\cp - \eg_1)^2}\right] \\
	%& = \efres^2 \beta n_1^2 \left[\frac{2\xi N^{1/6}}{s^2} + \frac{(\Delta - 2\xi)^2}{s^3}\right] + \bhp{N^{-1/6}} \\
	= \efres^2 \beta n_1^2 \left[ \frac{2 \xi}{s^2} N^{1/6} +  \frac{(\Delta -2\xi)^2}{s^3}+  \bhp{N^{-1/6}} \right] .
\eeqq
Finally,  
\beqq
	D_4 = \efres^2 \beta \Delta^2\sum_{i = 2}^N \frac{n_i^2}{(s + \egres_1 - \egres_i)^3} + \bhp{N^{-1/6}}.
\eeqq
Putting these together and also using the explicit formula of $\Delta$, we obtain 
\beq \label{eq:toqsposod} 
	N(\Gom(\cpm) - \G(\cp)) =   \frac{2\efres^2 \beta n_1^2 }{s^2} \xi N^{1/6}
	+  \frac{4 \efres ^2\beta n^2_1 \left[ \sum_{i = 2}^N n^2_i (s+\egres_1 - \egres_i)^{-3} \right] }{s^3 \left[ \sum_{i = 1}^N n_i^2(s + \egres_1 - \egres_i)^{-3} \right] }\xi^2 
	+ \bhp{N^{-1/6}}. 
\eeq
%Since $s= \stild + \bhp{N^{-1/3}}$ where $\stild$ is the solution of the equation  $0 = \beta - 1 - \efres^2 \beta \sum_{i = 1}^N \frac{n_i^2}{(s + \egres_1 - \egres_i)^2}$, we can replace $s$ by $\stild$ in \eqref{eq:toqsposod} without changing the error. 

It remains to consider the integrals in \eqref{eq:mgf_omcopy}. 
The scale $h= H N^{-1/6}$ is the same as the one in Subsection \ref{sec:fetranltmp}. 
Since $\cpm=  \eg_1+s N^{-2/3}+\Delta N^{-5/6} = \eg_1+s N^{-2/3}+\bhp{ N^{-5/6} }$ and $b=O(N^{-5/6})$, 
the calculation from Subsection \ref{sec:fetranltmp} applies with only small changes. We find from the explicit formulas that 
$\Gom^{(k)}(\cpm) = \bhp{N^{\frac23k-\frac23}}$ for all $k\ge 2$ and 
\beqq % \label{eq:apapap}
	\Gom''(\cpm) =H^2\beta t^2\sum_{i=1}^N\frac{n_i^2}{(s+a_1-a_i)^3} + \bhp{N^{-1/6}}.
\eeqq
Thus, as in  Subsection \ref{sec:fetranltmp}, the main contribution to the integral comes from a neighborhood of radius $N^{-5/6}$ around the critical point, and the numerator can be evaluated using a Gaussian integral. 
Since the leading term of $\Gom''(\cpm)$ does not depend on $\xi$ and the denominator is the case of the numerator with $\xi=0$, we find that 
\beq
	 \frac{\int^{\cpm + \ii\infty}_{\cpm - \ii\infty} e^{\frac N2 (\Gom (z) - \Gom(\cpm))} \dd z}{\int_{\cp - \ii \infty}^{\cp + \ii \infty} e^{\frac N2 (\G(z) - \G(\cp))} \dd z}  \simeq 1. 
\eeq

From the above computations, we obtain an asymptotic formula for $\langle e^{\beta \xi N^{1/6} \OM} \rangle$. 
%\beq	
%	\langle e^{\beta \xi N^{1/6} \OM} \rangle \simeq  	e^{ \frac{\efres^2 \beta n_1^2 }{\stild^2} \xi N^{1/6}	+  \frac{2 \efres ^2\beta n^2_1 \left[ \sum_{i = 2}^N n^2_i (\stild+\egres_1 - \egres_i)^{-3} \right] }{\stild^3 \left[ \sum_{i = 1}^N n_i^2(\stild + \egres_1 - \egres_i)^{-3} \right] }\xi^2}.
%\eeq
Moving a term of order $N^{1/6}$ to the left, changing $\beta \xi$ to $\xi$, replacing $\beta$ by $1/\tmp$,
and replacing $s$ by $\stild$, which solves the equation \eqref{eq:stildeq}, we arrive at the following result.

\begin{result}
For $\ef=\efres N^{-1/6}$ and $0<T<1$, 
\beq
	\langle e^{ \xi N^{1/6} (\OM -  \frac{\efres^2  n_1^2 }{\stild^2} )} \rangle \simeq e^{\frac{2 \efres ^2 \tmp n^2_1 \left[ \sum_{i = 2}^N n^2_i (\stild+\egres_1 - \egres_i)^{-3} \right] }{\stild^3 \left[ \sum_{i = 1}^N n_i^2(\stild + \egres_1 - \egres_i)^{-3} \right] }\xi^2}
\eeq
as $N\to \infty$ for asymptotically almost every disorder sample, 
where $\stild>0$ is the solution of the equation \eqref{eq:stildeq}. 
%\beq
%\label{eq:s0hN1/6copy}
%	1  -  \tmp -\efres^2 \sum_{i = 1}^N \frac{n_i^2}{(\stild + \egres_1 - \egres_i)^2} = 0.
%\eeq
\end{result}

The right-hand side depends  on the disorder sample heavily, as the formula involves all of the $a_i$ and $n_i$.  
%It implies that the overlap of the spin with the ground state has large sample-to-sample fluctuations. 
The above result implies the following.

\begin{result}\label{thm:ground1/6} 
For $h=HN^{-1/6}$ and $0<T<1$, 
\beq
\label{eq:om_h1/6}
	\OM \simeqidsgibbs \frac{\efres^2 n_1^2}{\stild^2} + \frac{\sigmaom \gib N}{N^{1/6}}
	= \left[ 1- T-  \efres^2 \sum_{i = 2}^N \frac{n_i^2}{(\stild + \egres_1 - \egres_i)^2} \right]
	+ \frac{\sigmaom \gib N}{N^{1/6}}
\eeq
as $N\to \infty$ for asymptotically almost every disorder sample, 
where the thermal random variable $\gib N$ has the standard normal distribution and $\sigmaom>0$ satisfies 
\beq
\label{eq:sigma_h1/6}
	\sigmaom^2 = \frac{4 \efres^2\tmp  n^2_1 \left[ \sum_{i = 2}^N n^2_i (\stild+a_1 - a_i)^{-3} \right] }{ \stild^3 \left[ \sum_{i = 1}^N n_i^2(\stild+\egres_1 - \egres_i)^{-3} \right] } .
\eeq
The equality of the leading terms in the two formulas of \eqref{eq:om_h1/6} follows from the equation \eqref{eq:stildeq} that $\stild$ satisfies. %\Cr{[P: Maybe give the equation number for that equation for $s_0$?]} \Cbr{[Added]}
\end{result}

%%%%%%%%%%%%%%%%%%%%%%%%%%%%%
\subsubsection{Matching with $\ef=O(1)$}

%\Cr{[P:Maybe indicate that the ratio of the sums goes to unity in that limit?]} \Cbr{[Added an equation to clarify]}

We consider the $H\to \infty$ limit. 
%If we take $H\to \infty$, then $\stild \simeq \frac{H^4}{4(1-T)^2}$ 
From \eqref{eq:sqrs0whenehf}, we have $\stild \simeq \frac{H^4}{4(1-T)^2}$. Hence, the term\eqref{eq:om_h1/6} satisfies 
%If we take $t\to\infty$ in \eqref{eq:sigma_h1/6}, then 
\beqq
	\sigmaom^2\simeq \frac{4 \efres^2\tmp  n^2_1}{\stild^3} \simeq \frac{4^4 Tn_1^2(1-T)^6}{H^{10}}.
\eeqq
%The formal limit of \eqref{eq:om_h1/6} % \Cr{[P:Label incorrect, or you mean Result?]} \Cbr{[Corrected]}
%as $\efres \to \infty$, using \eqref{eq:sqrs0whenehf} for $\stild$, implies that 
Therefore, the first formula of \eqref{eq:om_h1/6} implies that if we take $h=HN^{-1/6}$ and let $N\to\infty$ first and then $H\to \infty$, we get 
%\beq
%	\OMz\simeq \frac{16(1 - T)^4 n_1^2}{\efres^6}, \qquad \sigma^2 \simeq \frac{4\efres^2 \tmp n_1^2}{\stild^3} = \frac{256 \tmp n_1^2 (1 - T)^6}{\efres^{10}}.
%\eeq
%Hence, \eqref{eq:om_h1/6} becomes  \Cb{Repetition; remove}
\beq	\label{eq:omh1/6Hlarge}
	\OM \simeqidsgibbs   \frac{16}{N} \left[ \frac{(1-T)^4 n_1^2}{h^6}  + \frac{\sqrt{T}(1-T)^3|n_1| }{h^5} \gib N  \right] .
\eeq
This formula matches the formal limit given in \eqref{eq:sqrom_h1/6_h0}. Thus this regime matches with the $h=O(1)$ regime.
%, which is obtained by taking $N\to\infty$ with $\ef>0$ fixed and then taking $\ef\to 0$.

%%%%%%%%%%%%%%%%%%%%%%%%%%%%%
\subsubsection{Formal limit as $H\to 0$}

Using \eqref{eq:slimH0} for $\stild$, the denominator of \eqref{eq:sigma_h1/6} becomes $n_1^2+\bhp{H^3}$ as $H\to 0$. 
%we have, using \eqref{eq:slimH0} for $\stild$, 
%\beq \label{eq:formallimitashtozoee}
%	 \sigma^2 \simeq 4 \efres^2 T    \sum_{i=2}^N \frac{n_i^2}{(\egres_1-\egres_i)^3}.
%\eeq
Thus, if we take $h=HN^{-1/6}$ and let $N\to\infty$ first and then take $H\to 0$, we get 
\beq \label{eq:heq16limHla}
	\OM \simeqidsgibbs 1-T  - \efres^2 \sum_{i=2}^N \frac{n_i^2}{(a_1-a_i)^2} + \frac{2\efres \sqrt{T}}{N^{1/6}} \left[   \sum_{i=2}^N \frac{n_i^2}{(a_1-a_i)^3} \right]^{1/2} \gib N . 
\eeq
%The leading term converges to $\OMz\to 1-T$. 
%This is in agreement with the fact that when $\ef=0$ and $T<1$, the absolute value of the overlap of two spins converges to $1-T$. 
%However, it was shown in \cite{LandonSosoe} that the variance of the overlap of two spins is of order $N^{-2/3}$ when $\ef=0$ and $T<1$. 
%It was shown in \cite{LandonSosoe} that when $h=0$ and $0<T<1$, the variance of $\OM$ is $\bhp{N^{-2/3}}$. We will also see this fact in Subsection \ref{sec:groundstatehzerolow} below. If we match $HN^{-1/6}$ and $N^{-1/3}$, $H\sim N^{-1/6}$, which is the same as $h\sim N^{-1/3}$. We study this transitional regime in the next section. 
The last two terms of \eqref{eq:heq16limHla} are of orders $H^2=h^2N^{1/3}$ and $HN^{-1/6}=h$, respectively. 
%In terms of $h=HN^{-1/6}$, they are of orders $h^2N^{1/3}$ and $h$. 
These two terms have the same order if $h\sim N^{-1/3}$. We study this regime in the next subsection. 
Note that, in this regime, the two terms are of order $N^{-1/3}$.

%%%%%%%%%%%%%%%%%%%%%%%%%%%%%

\subsection{Microscopic external field: $\ef \sim N^{-1/3}$ and $T<1$}
\label{sec:gs1/3}

%\Cb{[Subsection title: In other overlaps, $h\sim N^{-1/2}$ is a microscopic field. This is different. Still ok to call microscopic?]} \Cr{[P:Yes it is a very good name!]} 

\subsubsection{Analysis}\label{sec:gs1/3analysis}

Set
\beq
	\ef = H N^{-1/3}
\eeq
for fixed $H>0$. 
In the last part of the previous sub-subsection, a formal calculation indicated that the order of fluctuation in this regime is $N^{-1/3}$. 
%This turns out to be correct, as we will show in this subsection. We set  
We set 
\beq
	\eta = \xi N^{1/3} \quad \text{so that} \quad \bpm = 2\xi N^{-2/3}. %\frac{2\xi}{N^{2/3}}.
\eeq

The regime $\ef\sim N^{-1/3}$ did not appear in previous sections. Hence, we first find the critical point $\cp$ of $\G(z)$.
Previously we saw that $\cp=\eg_1+ \bhp{N^{-2/3}}$ when $\ef\sim N^{-1/6}$ and  $\cp=\eg_1+ \bhp{N^{-1}}$ when $\ef\sim N^{-1/2}$. 
%Since as $\ef$ decreases, $\cp$ becomes closer to $\eg_1$, we expect that when $\ef\sim N^{-1/3}$, $\cp$ is between the above two cases. 
We expect that, in this regime, $\cp$ is between the above two cases, so we set $\cp=\eg_1+w$ for some $w$ and we assume $N^{-1}\ll w \ll N^{-2/3}$. 
The equation for the critical point is, using \eqref{eq:res_edge}, 
\beq \label{eq:ggmacpezq}
	\G'(\cp)= \beta- \frac1{N} \sum_{i=1}^N \frac1{\gamma-\eg_i} - \frac{H^2\beta}{N^{5/3}} \sum_{i=1}^N \frac{n_i^2}{(\gamma-\eg_i)^2}
	= \beta - \frac1{Nw} - 1 + \bhp{N^{-1/3}} - \frac{H^2\beta n_1^2}{N^{5/3}w^2}  =0. 
\eeq
Under the assumption for $w$, we see that $\frac1{NW} \ll \frac1{N^{5/3}w^2}$, and hence $w=\bhp{N^{-5/6}}$. 
Explicitly solving the equation $\beta  - 1 - \frac{H^2\beta n_1^2}{N^{5/3}w^2}   =0$, we find that 
\beq\label{eq:ground1/3sdef}
	\cp= \eg_1+ \sgr N^{-5/6} \quad \text{where} \quad \sgr = \sqrt{\frac{H^2\beta n_1^2}{\beta-1}}	 + \bhp{N^{-1/6}} . 
\eeq
For later use, we record that, upon inserting $\cp=\eg_1+ \sgr N^{-5/6}$ into the equation \eqref{eq:ggmacpezq}, $\sgr$ satisfies the following more detailed equation, using the notation $\Xi_N$ defined in  \eqref{eq:def_crvbp}:
\beq\label{eq:somhN1/3}
	\beta - \frac{1}{\sgr N^{1/6}} - 1 - \frac{\Xi_N}{N^{1/3}} 
	+ \bhp{N^{-1/2}} - \frac{H^2\beta n_1^2}{\sgr^2} 
	- \frac{H^2\beta }{N^{1/3}} \sum_{i=2}^N \frac{n_i^2}{(a_1-a_i)^2} =0.
\eeq

The critical point $\cpm$ of $\Gom(z)$ is easy to obtain since $\bpm= \frac{2\xi}{N^{2/3}}$ has the same order as the fluctuations of the eigenvalues $\eg_i$. 
The critical point equation is the same as in the case of $\G(z)$ except that $\eg_1$ is changed to $\eg_1+\bpm$. 
Thus we have 
\beq
	\cpm = \eg_1 + \bpm+ \sgr_\OM N^{-5/6} \quad \text{where} \quad \sgr_\OM= \sgr	 + \bhp{N^{-1/6}} . 
\eeq
For our computation, it turns out that we need an improved estimate for $\sgr_\OM -\sgr$. 
The equation $\Gom'(\cpm)=0$ is, in terms of $\sgr_\OM$, 
\beqq
	 \beta - \frac{1}{\sgr_\OM N^{1/6}}  - 1  
 - \frac{\efres^2 \beta n_1^2}{\sgr_{\OM}^2} + \bhp{N^{-1/3}} =0. 
\eeqq
This equation is the same as the equation \eqref{eq:somhN1/3} up to order $N^{-1/6}$. Therefore, we obtain an improved estimate
$\sgr _\OM = \sgr  + \bhp{N^{-1/3}}$. As a consequence, 
\beq	
	\cpm - \cp = \bpm+ \bhp{N^{-7/6}} = 2\xi N^{-2/3} + \bhp{N^{-7/6}}.
\eeq

\medskip

We now evaluate $N(\Gom(\cpm) - \G(\cp))$ using \eqref{eq:diffGomGher}. % where we set $\eta = \xi N^{1/3}$ and $\ef = \efres N^{-1/3}$.
We have
\beqq
	D_1 = \frac{2\xi N^{1/6}}{\sgr }+ \bhp{N^{-1/3}} , 
	\qquad
	D_2 = -  \sum_{i= 2}^N \left[ \log\left(1 + \frac{2\xi}{\egres_1 - \egres_i} \right)  -  \frac{2\xi}{\egres_1-\egres_i} \right]  + \bhp{N^{-1/6}} , 
\eeqq
\beqq
	D_3  =  \frac{2\xi \efres^2 \beta n_1^2}{\sgr ^2}  N^{1/3} + \bhp{N^{-1/6}}, 
	\qquad 
	D_4  = 4\xi^2  \efres^2 \beta \sum_{i = 2}^N \frac{n_i^2}{(a_1 + 2\xi - \egres_i)(\egres_1 - \egres_i)^2} + \bhp{N^{-1/6}} . 
\eeqq
Note that $\sgr $ appears only in $D_1$ and $D_3$. 
Using the equation \eqref{eq:somhN1/3}, the sum $D_1+D_3$ can be expressed without using $\sgr $:
\beq
	D_1+D_3= 2\xi N^{1/3} \left[ \beta  - 1 - \frac{\Xi_N}{N^{1/3}}  -  \frac{\efres^2 \beta}{N^{1/3}} \sum_{i = 2}^N \frac{n_i^2}{(\egres_1 - \egres_i)^2}  \right] + \bhp{N^{-1/6}} . 
\eeq
On the other hand, using the notation $\Xi_N$ in  \eqref{eq:def_crvbp} again, we can write
\beq	\begin{split}
	D_2=  - \left[ \sum_{i= 2}^N \log\left(1 + \frac{2\xi}{\egres_1 - \egres_i} \right)  -  2\xi N^{1/3} \right] + 2\xi \Xi_N+ \bhp{N^{-1/6}}.
\end{split} \eeq
Adding $D_1, D_2, D_3$, and $D_4$, and combining two sums that are multiplied by $H^2\beta$, we find that 
\beq
\begin{aligned}
	N(\Gom(\cpm) - \G(\cp)) = 
	& \;2\xi (\beta  - 1) N^{1/3} + \left[ 2\xi N^{1/3} - \sum_{i = 2}^N \log \left(1 + \frac{2\xi}{\egres_1 - \egres_i}\right)\right]  \\
	& \quad - 2\xi \efres^2\beta   \sum_{i = 2}^N \frac{n_i^2}{(\egres_1 + 2\xi - \egres_i)(\egres_1 - \egres_i)} + \bhp{N^{-1/6}}
%	& \quad + \efres^2\beta  \sum_{i = 2}^N n_i^2\left(\frac{1}{\egres_1 + 2\xi - \egres_i} - \frac{1}{\egres_1 - \egres_i}\right).
\end{aligned}
\eeq
We note that the term in brackets is $\bhp{1}$ due to \eqref{eq:def_crvbp}. 

\medskip

Finally, we consider the integrals in \eqref{eq:mgf_omcopy}, beginning with the numerator. 
Using $\cpm = \eg_1 + \bpm+ \sgr  N^{-5/6} + \bhp{N^{7/6}}$ and the explicit formula for $\Gom(z)$, we find that
\beqq
	\Gom^{(k)}(\cpm)= \bhp{N^{\frac56k - \frac56}}
\eeqq
for $k\ge 2$. Since $\Gom''(\cpm)=\bhp{N^{\frac56}}$, the main contribution to the integral comes from a neighborhood of radius $N^{-\frac{11}{12}}$ about the critical point. 
For $k=2$, we find explicitly that
\beqq
	\Gom''(\cpm)= \frac{2H^2 \beta n_1^2}{\sgr ^3} N^{-5/6} + \bhp{N^{-1}}. 
\eeqq
Hence, 
\beq\label{eq:mkxld}
	N(\Gom(\cpm+wN^{-\frac{11}{12}})-\Gom(\cpm)) 
	= \sum_{k=2}^\infty \frac{N^{1- \frac{11}{12}k} \Gom^{(k)}(\cpm) w^k }{k!} 
	= 	\frac{H^2 \beta n_1^2}{\sgr ^3} w^2 + \bhp{N^{-\frac{1}{12}}}
\eeq
for finite $w$, and the integral can be evaluated as a Gaussian integral. 
Since the leading term of \eqref{eq:mkxld} does not depend on $\xi$, we find that the ratio of the integrals in \eqref{eq:mgf_omcopy} is asymptotically equal to $1$. 

Combining the computations above, we obtain an asymptotic formula for $\left \langle e^{\beta \xi N^{1/3} \OM}\right \rangle$. 
%\beq	
%	\left \langle e^{\beta \xi N^{1/3} \OM}\right \rangle \simeq e^{\xi (\beta - 1) N^{1/3} + \frac12 \left(2\xi N^{1/3} -\sum_{i = 2}^N \log \left(1 + \frac{2\xi}{\egres_1 - \egres_i}\right)\right)
%	- \xi\efres^2 \beta  \sum_{i = 2}^N \frac{n_i^2}{(\egres_1 + 2\xi - \egres_i)(\egres_1 - \egres_i)}}. 
%\eeq
Moving a term %, changing $\beta \xi$ to $\xi$ 
and using $\beta =1/T$, we arrive at the following result.

\begin{result}
For $\ef= \efres N^{-1/3}$ and $0<T<1$, 
\beq \label{eq:alalsppp}
	\left \langle e^{ \frac{\xi}{T} N^{1/3} \left( \OM - (1 - T) \right)}\right \rangle 
	\simeq e^{\xi N^{1/3}}
	\prod_{i=2}^N \frac{e^{  - \frac{\xi H^2 n_i^2}{T (\egres_1 + 2\xi - \egres_i)(\egres_1 - \egres_i)}}}{\sqrt{1 + \frac{2 \xi}{\egres_1 - \egres_i}}}
\eeq
as $N\to \infty$ for asymptotically almost every disorder sample. 
\end{result} 

%\Cb{[The statement of the next result is changed completely in terms of a simpler expression. The notation of $\gib W_N$ now represents a different quantity. ]}

We remark that the right-hand side is $\bhp{1}$ since
\beqq
	\xi N^{1/3} - \frac12 \sum_{i=2}^N \log \left( 1 + \frac{2 \xi}{\egres_1 - \egres_i } \right) = \bhp{1}.
\eeqq

The formula \eqref{eq:alalsppp} is a product of the moment generating functions of non-centered chi-squared distributions (see \eqref{eq:noncechi}). Hence, we obtain the following. 

\begin{result}\label{thm:ground1/3} For $h=HN^{-1/3}$ and $0< T<1$, 
\beq 
\label{eq:OMresulth13}
	\OM \simeqidsgibbs 1-T + \frac{T}{N^{1/3}} \gib W_N , \qquad \gib W_N= N^{1/3} - \sum_{i=2}^N\frac{ \big| \frac{H|n_i|}{\sqrt{T(a_1-a_i)}} + \gib n_i \big|^2} {a_1-a_i}  
\eeq
as $N\to\infty$  for asymptotically almost every disorder sample, where the thermal random variables $\gib n_i$ are independent standard normal random variables. 
\end{result}

Here, we emphasize that $n_i$ are sample random variables (given by the dot product of each eigenvector of $M$ with the external field) while $\gib n_i$ are thermal random variables. 
Note that $\gib W_N=\bhp{1}$ since $N^{1/3} - \sum_{i=2}^N\frac{  \gib n_i^2} {a_1-a_i} = \bhp{1}$.

%%%%%%%
\subsubsection{Matching with the mesocopic field, $\ef\sim N^{-1/6}$}

We take the formal limit $H\to \infty$ of \eqref{eq:OMresulth13} and compare with \eqref{eq:heq16limHla}. 
Then, using $N^{1/3}- \sum_{i=2}^N \frac1{a_1-a_i}=\bhp{1}$ from \eqref{eq:res_edge}, 
% \Cr{[P: it seems the index $i$ is missing $\bhp{1} \to \bhp{i}$]} \Cbr{[I thin it is still $\bhp{1}$. Added a short explanation]}
\beq \label{eq:13tempo}
	\gib W_N = -  \frac{H^2}{T} \sum_{i=2}^n \frac{n_i^2}{(a_1-a_i)^2} -\frac{2H}{\sqrt{T}} \sum_{i=2}^N \frac{|n_i| \gib n_i}{(a_1-a_i)^{3/2}}
	+ \bhp{1}.
\eeq
The second sum is a sum of independent (thermal) Gaussian random variables,and hence it has a Gaussian distribution. 
%Since a sum of independent Gaussian distributions is a Gaussian distribution, \Cr{[P:no need for that equation, it is relatively obvious in words, you can shorten]}
%\beqq
%	-\sum_{i=2}^N \frac{|n_i| \gib n_1}{(a_1-a_i)^{3/2}} \simeqidsgibbs \left[ \sum_{i=2}^N  \frac{n_i^2}{(a_1-a_i)^{3}} \right]^{1/2}\gib N(0, 1 ) 
%\eeqq
%for a standard normal thermal random variable $\gib N(0,1)$. 
Therefore, if take $h=HN^{-1/3}$ and let $N\to \infty$ first and then take $H\to \infty$, we get
\beq \label{eq:OMlimat} \begin{split}
	& \OM \simeqidsgibbs 1-T - \frac{\efres^2 }{N^{1/3}}\sum_{i=2}^N \frac{n_i^2}{(a_1-a_i)^2} 
	+ \frac{2\efres \sqrt{T}}{N^{1/3}} \left[ \sum_{i=2}^N \frac{n_i^2}{(a_1-a_i)^3}\right]^{1/2} \gib N .
\end{split} \eeq
In order to compare this with the result \eqref{eq:heq16limHla}, 
we use the notation $h=H_\micro N^{-1/3}= H_\meso N^{-1/6}$. 
The equations \eqref{eq:OMlimat} and \eqref{eq:heq16limHla} are same once we set $H=H_\micro$ and $H=H_\meso$, respectively.

%%%%%%%
%\subsubsection{Formal limit as $\efres\to 0$}

%The formal limit $H\to 0$ of Result \ref{thm:ground1/3} is obtained by simply substituting $H=0$. 
%Indeed, the calculations in Subsubsection \ref{sec:gs1/3analysis} are applicable even to the $H=0$ case. 

%%%%%%%%%%%
\subsection{No external field: $\ef=0$} \label{sec:groundstatehzerolow}

%\Cb{[Combined two short subsections, $T<1$ and $T>1$ for $h=0$ into one section.]}

For $0<T<1$, the calculations of the previous subsection for $h=HN^{-1/3}$ go through; we obtain the result by setting  $H=0$ in \eqref{eq:OMresulth13}.  For $T>1$, the computations in Subsection \ref{sec:om_h>0} for $\ef=O(1)$ also apply to $\ef=0$; 
see \eqref{eq:915ama}.  %\Cr{[P: Maybe refer to formula 9.15]}

\begin{result}\label{result:grounds0}
For $h=0$, 
\beq \label{eq:mainresutforovolhs} %\label{eq:mainresutforovolhs2}
	\OM \simeqidsgibbs \begin{dcases}
	\frac{\tmp^2 }{N (\tmp-1)^2 } \gib N^2 \quad &\text{for $\tmp>1$,} \\
		1 - \tmp + \frac{\tmp}{N^{1/3}} \left(  N^{1/3} - \sum_{i=2}^N \frac{  \gib n_i^2} {a_1-a_i} \right)\quad &\text{for $0<T<1$.} 
	\end{dcases}
\eeq
where the thermal random variable $\gib N$ has the standard normal distribution, and $\gib n_i$ are independent standard normal thermal random variables. 
\end{result}

%%%%%%%%%%%%%%%%%%%
\subsection{The thermal average} 

We use the notation
\beq
	\OP = \langle \OM \rangle= \langle \OG^2 \rangle
\eeq
to denote the thermal average of the squared overlap of a spin with the ground state. Previous subsections imply the following results. 

\begin{enumerate}[(i)]
\item For $\ef\ge 0$ and $T>1$, or for $\ef=O(1)$ with $h>0$ and $0<T<1$, 
\beq
	\OP \simeq \frac{\OP^0}{N}, \qquad \OP^0=\frac{\tmp}{\cp_0 - 2}\left[\frac{\ef^2 n_1^2}{T(\cp_0 - 2)} + 1\right]. 
\eeq
From the asymptotic formulas \eqref{eq:cpvalueforhplh} and \eqref{eq:cpohs} of $\cp_0$,  
\beq \label{eq:olwgssahl}
	\OP^0\simeq n_1^2 + \frac{T-(T-4)n_1^2}{\ef} \qquad \text{as $\ef\to \infty$ for all $T>0$} 
\eeq
and
\beq
	\OP^0 \simeq \begin{dcases}
	 \frac{T^2}{(T-1)^2} + \frac{\ef^2 (n_1^2-1)T^2 }{(T-1)^4}  \quad &\text{as $\ef\to 0$ for $T>1$}\\
	  \frac{16 n_1^2 (1-T)^4}{h^6} + \frac{4T(1-T)^2+32 n_1^2 (1-T)^4}{h^4}  \quad &\text{as $\ef\to 0$ for $0<T<1$.}
	  \end{dcases}
\eeq
See  Figure \ref{fig:om0_vs_hTn12} for graphs of $\OP^0$.

\item For $\ef= HN^{-1/6}$ with $0<T<1$, 
\beq	\label{eq:ophltmpN1/6}
	\OP \simeq \frac{\efres^2 n_1^2}{\stild^2} = 1 - \tmp - \efres^2 \sum_{i = 2}^N \frac{n_i^2}{(\stild + a_1 - a_i)^2}.
\eeq
%where $\stild>0$ is the solution of the equation $1 - \tmp - \efres^2 \sum_{i = 1}^\infty \frac{n_i^2}{(\stild + a_1 - a_i)^2} = 0$. 

\item For $\ef= HN^{-1/3}$ with $T<1$ (including the case when $H=0$), 
\beq
	\OP \simeq 1 - \tmp + \frac1{N^{1/3}} \left[ T \left( N^{1/3} - \sum_{i=2}^N \frac{1}{\egres_1-\egres_i} \right) - H^2 \sum_{i=2}^N \frac{n_i^2}{(\egres_1-\egres_i)^2} \right] . 
\eeq
\end{enumerate}

If we collect only the order $1$ terms, then as $N\to\infty$ with $T<1$, 
\beq
	\OP \to \begin{cases}
		0 & \text{for $h>0$}\\
	1 - \tmp - \efres^2 \sum_{i = 2}^N \frac{n_i^2}{(\stild + a_1 - a_i)^2} & \text{for $h=HN^{-1/6}$} \\
	1-T \qquad & \text{for $h=HN^{-1/3}$ (including $H=0$).} \\
	\end{cases}
\eeq
The sample-to-sample standard deviation of the thermal average of squared overlap satisfies for $0<T<1$, 
\beq
	\sqrt{ \overline{\OP^2} - (\overline{\OP})^2 }  = \begin{cases}
		O(N^{-1}) & \text{for $h=O(1)$}\\
	O(1) & \text{for $h\sim N^{-1/6}$} \\
	O(N^{-1/3}) \qquad & \text{for $h\sim N^{-1/3}$ (including $h=0$).} \\
	\end{cases}
\eeq 
The order is largest when $h\sim N^{-1/6}$.

\begin{comment}
When $\ef=0$ or $\ef=O(N^{-1/3})$, a typical spin variable is located on the intersection of the sphere and a (double) cone about the ground state $\vu_1$ with half vertex angle $\cos^{-1}(\sqrt{1-T})$. \Cb{[Here, we discuss the geometry of spin. Keep here? Move to other place?]}
Note that the half vertex angle of the cone does not depend on the sample (i.e. disorder variables). 
If we increase $\ef$ so that $\ef=O(N^{-1/6})$, the above formula implies that a typical spin variable is still on the intersection of the sphere and a cone about the ground state. 
But this time the half vertex angle of the cone do depend on the sample. 
Furthermore, the half vertex angle is strictly greater than the case of $\ef=0$. 
The above computation does not indicate if the cone in this case is a double cone or a single cone. See the next section for further discussion in this matter. 
If we increase $\ef$ further so that $\ef>0$, then the half vertex angle becomes $\pi/2$. 
\end{comment}
\begin{figure}[H]
\centering
\includegraphics[width = 0.45\textwidth]{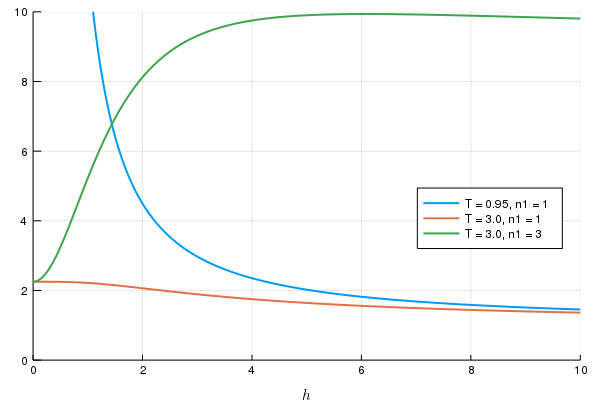}
\caption{Graphs of $\OP^0$ for $h=O(1)$ as function of $\ef$ for different combinations of $\tmp$ and $n_1$.}
\label{fig:om0_vs_hTn12}
\end{figure}

%%%%%%%%%%%%%%%%%%%
\subsection{Order of thermal fluctuations}

For $0<T< 1$, the standard deviation of the thermal fluctuations satisfies 
\beq
	\sqrt{ \langle \OM^2 \rangle - \langle \OM \rangle^2 }  = \begin{cases}
		\bhp{N^{-1}}& \text{for $h=O(1)$}\\
	\bhp{N^{-1/6}} & \text{for $h\sim N^{-1/6}$} \\
	\bhp{N^{-1/3}} \qquad & \text{for $h\sim N^{-1/3}$ (including $h=0$).}
	\end{cases}
\eeq 
for asymptotically almost every disorder sample. The thermal fluctuations are largest when $h\sim N^{-1/6}$.

%%%%%%%%%%%%%%%%%%%%%%%%%%%%%%%%%%
%%%%%%%%%%%%%%%%%%%
\section{Overlap with a replica}\label{sec:2ol}

Let
\beq
	\ovl= \ovl^{1,2}= \frac{\spin^{(1)}\cdot \spin^{(2)}}{N} 
\eeq
be the overlap of a spin $\spin^{(1)}$ and its replica $\spin^{(2)}$, chosen independently from $S_{N-1}$ using the Gibbs measure with the same disorder sample. 
From Lemma \ref{lem:contour}, we have 
\beq \label{eq:ovlftsf}
	\langle e^{\eta \ovl} \rangle = e^{\frac{N}2 (\Govl(\cpovl, \cpovl;  a) - 2\G(\cp))} \frac{\iint e^{\frac{N}2 ( \Govl(z, w; a) - \Govl(\cpovl, \cpovl; a)) } \dd z \dd w}{\left( \int e^{\frac{N}2 (\G(z) - \G(\cp))} \dd z \right)^2 } 
\eeq
where
\beq
	\Govl(z, w;a) = \beta (z+w) - \frac1{N} \sum_{i=1}^N \log \left( (z-\eg_i)(w-\eg_i) - a^2 \right) 
	+ \frac{\ef^2 \beta}{N} \sum_{i=1}^N \frac{n_i^2 (z+w-2\eg_i + 2a) }{(z-\eg_i)(w-\eg_i) - a^2}
\eeq
and we set
\beq
	a= \frac{\eta}{\beta N}. 
\eeq
We take $\cp$ to be the critical point of $\G(z)$ and we chose $\cpovl>\lambda_1+|a|$ such that $(\cpovl,\cpovl)$ is a critical point of $\Govl(z,w;a)$.  We calculate $\cpovl$ below.

The partial derivative of $\Govl$ with respect to $z$ is 
\beq
	\frac{\partial \Govl}{\partial z} = \beta  - \frac1{N}  \sum_{i=1}^N \frac{w-\eg_i}{(z-\eg_i)(w-\eg_i) - a^2} 
	- \frac{\ef^2 \beta}{N} \sum_{i=1}^N \frac{n_i^2 (w-\eg_i+ a)^2 }{((z-\eg_i)(w-\eg_i) - a^2)^2} 
\eeq
and $\frac{\partial \Govl}{\partial w}$ is similar. Since $\frac{\partial \Govl}{\partial z}$ is an increasing function for real $z$ (and similarly with $\frac{\partial \Govl}{\partial w}$), there exists a critical point of the form $(z, w)=(\cpovl, \cpovl)$ where $\cpovl$ solves the equation 
\beq\label{eq:cpovldef}
	\beta  - \frac1{N}  \sum_{i=1}^N  \frac{\cpovl-\eg_i}{(\cpovl-\eg_i- a)(\cpovl-\eg_i+ a)} 
	- \frac{\ef^2 \beta}{N} \sum_{i=1}^N \frac{n_i^2  }{(\cpovl-\eg_i - a)^2} = 0  , \quad \cpovl>\eg_1+|a|.
\eeq
There may be other critical points, but $(\cpovl,\cpovl)$ is the one that we use for our steepest descent analysis.  For simplicity, we refer to this critical point as $\cpovl$ rather than $(\cpovl,\cpovl)$. 
For $a=0$, $\Govl(z,w;0)=\G(z)+\G(w)$, and in this case, the critical point is $(z,w)=(\cp, \cp)$.

\medskip

We use the following two formulas in this section.  
The first formula is 
\beq  \label{eq:Gdiffggaa} \begin{split}
	& N \left( \Govl(\cpovl, \cpovl;  a) - 2\G(\cp) \right)
	 = N (\Govl(\cpovl, \cpovl;  a) - 2\G(\cp) - 2 \G'(\cp) (\cpovl-\cp)) = B_1+B_2 
\end{split} \eeq
where
\beqq \begin{split}
	B_1= - \sum_{i=1}^N \left[ \log \left(  1+ \frac{2(\cpovl-\cp)}{\cp-\eg_i} + \frac{(\cpovl-\cp)^2 - a^2}{(\cp-\eg_i)^2} \right)  -\frac{2(\cpovl-\cp)}{\cp-\eg_i} \right] 
\end{split} \eeqq
and
\beqq
	B_2=  2\ef^2 \beta \sum_{i=1}^N n_i^2  \left[ \frac{1}{\cpovl-\eg_i - a}
	- \frac1{\cp-\eg_i} + \frac{\cpovl-\cp}{(\cp-\eg_i)^2} \right].
\eeqq
The second formula is 
\beq \label{eq:diffggga} \begin{split}
	& (\cpovl-\cp- a) \left[ \sum_{i=1}^N \frac{\cpovl-\eg_i}{(\cpovl-\eg_i- a)(\cpovl-\eg_i +a)(\cp-\eg_i)} 
	+ \ef^2 \beta \sum_{i=1}^N \frac{n_i^2 (\cp+\cpovl-2\eg_i- a) }{(\cpovl-\eg_i -  a)^2(\cp-\eg_i)^2}  \right] \\
	&\qquad =  - a \sum_{i=1}^N \frac{1}{(\cpovl-\eg_i + a)(\cp-\eg_i)}  ,
\end{split} \eeq
which follows from subtracting the critical point equations for $\cpovl$ and $\cp$.

\medskip

We also make use of the following lemma.

\begin{lemma}\label{lem:2olboundingcpovl} The point $\cpovl$ satisfies $\gamma<\cpovl<\gamma+a$.
\end{lemma}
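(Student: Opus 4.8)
The plan is to exploit monotonicity of the two partial derivatives $\partial_z \Govl$ and $\partial_w \Govl$ in the real variables, together with the two ``boundary'' evaluations at $\cpovl = \gamma$ and $\cpovl = \gamma + a$, reducing everything to the critical point equation $\G'(\gamma)=0$. Concretely, recall that $(\cpovl,\cpovl)$ solves \eqref{eq:cpovldef}, which I will abbreviate as $\Phi(\cpovl)=0$ where
\[
	\Phi(x) := \beta - \frac1N \sum_{i=1}^N \frac{x-\eg_i}{(x-\eg_i-a)(x-\eg_i+a)} - \frac{\ef^2\beta}{N}\sum_{i=1}^N \frac{n_i^2}{(x-\eg_i-a)^2}.
\]
Since $\partial_z\Govl(z,z;a)$ is increasing in $z$ on $(\eg_1+|a|,\infty)$ (as noted in the text), the function $\Phi$ is increasing there, so it suffices to show $\Phi(\gamma)<0<\Phi(\gamma+a)$; the intermediate value theorem and monotonicity then pin $\cpovl$ strictly between $\gamma$ and $\gamma+a$. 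I will treat the case $a>0$; the case $a<0$ is symmetric after relabeling (and $a=0$ is trivial since then $\cpovl=\gamma$).

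First I would evaluate $\Phi(\gamma+a)$. At $x=\gamma+a$ one has $x-\eg_i-a = \gamma-\eg_i$, so the rational pieces simplify: $\frac{x-\eg_i}{(x-\eg_i-a)(x-\eg_i+a)} = \frac{\gamma+a-\eg_i}{(\gamma-\eg_i)(\gamma+2a-\eg_i)}$ and $\frac{n_i^2}{(x-\eg_i-a)^2} = \frac{n_i^2}{(\gamma-\eg_i)^2}$. Comparing with $\G'(\gamma)=0$, i.e. $\beta = \frac1N\sum \frac1{\gamma-\eg_i} + \frac{\ef^2\beta}{N}\sum\frac{n_i^2}{(\gamma-\eg_i)^2}$, the $\ef^2$-terms cancel exactly, and $\Phi(\gamma+a)$ equals $\frac1N\sum_i\bigl[\frac1{\gamma-\eg_i} - \frac{\gamma+a-\eg_i}{(\gamma-\eg_i)(\gamma+2a-\eg_i)}\bigr]$. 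Writing $d_i := \gamma-\eg_i>0$, each bracket is $\frac1{d_i} - \frac{d_i+a}{d_i(d_i+2a)} = \frac{(d_i+2a)-(d_i+a)}{d_i(d_i+2a)} = \frac{a}{d_i(d_i+2a)} > 0$, so $\Phi(\gamma+a)>0$. Symmetrically, evaluating $\Phi(\gamma)$: now $x-\eg_i = d_i$, $x-\eg_i-a = d_i - a$, $x-\eg_i+a = d_i+a$, so the first sum becomes $\frac1N\sum\frac{d_i}{d_i^2-a^2}$ and the second $\frac{\ef^2\beta}N\sum\frac{n_i^2}{(d_i-a)^2}$. Subtracting the identity $\beta=\frac1N\sum\frac1{d_i}+\frac{\ef^2\beta}N\sum\frac{n_i^2}{d_i^2}$ gives $\Phi(\gamma) = -\frac1N\sum_i\bigl(\frac{d_i}{d_i^2-a^2}-\frac1{d_i}\bigr) - \frac{\ef^2\beta}N\sum_i n_i^2\bigl(\frac1{(d_i-a)^2}-\frac1{d_i^2}\bigr)$; the first sum has terms $\frac{d_i^2 - (d_i^2-a^2)}{d_i(d_i^2-a^2)} = \frac{a^2}{d_i(d_i^2-a^2)}>0$ (using $d_i>|a|$, which holds since $\gamma>\eg_1+|a|$... \emph{wait}: actually here we only know $\gamma>\eg_1$, so I must be careful — see below), and the second sum is nonnegative since $(d_i-a)^{-2}\ge d_i^{-2}$ when $a>0$ and $d_i-a$ has the same sign considerations. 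Hence $\Phi(\gamma)<0$, and the lemma follows.

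The step I expect to be the main obstacle is the sign bookkeeping in $\Phi(\gamma)$ when $a>0$: we need $d_i = \gamma-\eg_i > 0$ (true, since $\gamma>\eg_1\ge\eg_i$) but also that the denominators $d_i^2-a^2$ and $(d_i-a)^2$ do not vanish or flip sign, i.e. we need $d_1 = \gamma-\eg_1 > a$. This is exactly the content of Lemma~\ref{lem:2olboundingcpovl} restricted to the lower bound combined with the requirement $\cpovl>\eg_1+|a|$, so there is a potential circularity. I would resolve it by first establishing, directly from the critical point equation for $\gamma$, a quantitative lower bound $\gamma - \eg_1 \gtrsim N^{-1}$ (which holds in every regime analyzed, since separating the $i=1$ term in $\G'(\gamma)=0$ forces $\gamma-\eg_1$ to be at least of order $N^{-1}$), and noting that $a = \eta/(\beta N)$ with the relevant $\eta$ scaled so that $|a| = o(\gamma-\eg_1)$ in each regime; thus $d_1 > |a|$ holds automatically for large $N$ and the sign computations above go through. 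For the case $a<0$, one swaps the roles of the two boundary points (the interval is $(\gamma+a,\gamma)$, rewritten as ``$\gamma<\cpovl<\gamma+a$'' read with $a<0$ meaning $\cpovl$ lies between them), and the same cancellation-of-$\ef^2$-terms trick applies verbatim. Finally, strictness of the inequalities comes from the strict positivity of each summand $\frac{a}{d_i(d_i+2a)}$ and $\frac{a^2}{d_i(d_i^2-a^2)}$ (nonempty sum), and strict monotonicity of $\Phi$ gives uniqueness of the crossing.
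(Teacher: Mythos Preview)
Your proposal is correct and takes essentially the same approach as the paper's proof: define $g(z)=\Phi(z)$, show $g(\gamma)<0<g(\gamma+a)$ by termwise comparison with $\G'(\gamma)=0$, and invoke monotonicity. The domain concern you flag (whether $\gamma-\eg_1>a$) is glossed over in the paper as well; note however that your claimed resolution ``$|a|=o(\gamma-\eg_1)$ in each regime'' is not quite right in the $h\sim N^{-1/6}$ and $h\sim N^{-1/2}$ regimes, where $a$ and $\gamma-\eg_1$ are of the \emph{same} order in $N$, so the correct fix is simply to take the MGF parameter $\xi$ small enough relative to the leading coefficient of $\gamma-\eg_1$.
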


\begin{proof} 
Let  
\beqq
	g(z)= \beta-\frac 1N\sum_{i=1}^N\frac{z-\lambda_i}{(z-\lambda_i-a)(z-\lambda_i+a)}-\frac{h^2\beta}{N}\sum_{i=1}^N\frac{n_i^2}{(z-\lambda_i-a)^2}.
\eeqq
Since $g(\cpovl)=0$, it is enough to show that $g(\cp)<0$ and $g(\cp+a)>0$.
Using $a>0$, we see that 
\beqq
	g(\cp)< \beta-\frac 1N\sum_{i=1}^N\frac{1}{\gamma-\lambda_i}-\frac{h^2\beta}{N}\sum_{i=1}^N\frac{n_i^2}{(\gamma-\lambda_i)^2}=\G'(\cp)=0. 
\eeqq 
On the other hand, 
\beqq\begin{split}
	g(\cp+a)= & \beta-\frac 1N\sum_{i=1}^N\frac{\gamma-\lambda_i+a}{(\gamma-\lambda_i)(\gamma-\lambda_i+2a)}-\frac{h^2\beta}{N}\sum_{i=1}^N\frac{n_i^2}{(\gamma-\lambda_i)^2}\\
>&\beta-\frac 1N\sum_{i=1}^N\frac{1}{\gamma-\lambda_i}-\frac{h^2\beta}{N}\sum_{i=1}^N\frac{n_i^2}{(\gamma-\lambda_i)^2}= \G'(\cp)= 0. 
\end{split}\eeqq
\end{proof}

%%%%%%%%%%%%%%%%%%%
\subsection{Macroscopic external field: $h=O(1)$} \label{sec:ovlpos}

%%%%%%%%%%%%%%%%
\subsubsection{Analysis}

Fix $h>0$. 
It turns out that the fluctuations are of order $N^{-1/2}$. Hence, we set 
\beq
	\eta= \beta \xi \sqrt{N} \quad \text{so that} \quad a= \xi N^{-1/2}. %\frac{\xi}{\sqrt{N}} .
\eeq

The critical point of $\G(z)$ is given in \eqref{eq:cp_h>0} by $\cp= \cp_0+ \cp_1 N^{-1/2}  + \bhp{N^{-1}}$. 
% is the solution of the equation \eqref{eq:cp0h>0}. The next term $\cp_1$ is proportional to $\go_N(\cp_0; 2)$ (see \eqref{eq:cp1h>0}.)  
%Recall \eqref{eq:go00} for the definition
%\beq
%	\go_N(z;k) := \frac{1}{\sqrt{N}} \sum_{i = 1}^N \frac{n_i^2 - 1}{(z - \widehat \eg_i)^k} .
%\eeq
%This quantity is $\bhp{1}$ for $z>2$ and $k\ge 1$. 
Consider the critical point $\cpovl$. By Lemma \ref{lem:2olboundingcpovl}, $\cpovl= \cp+ O(N^{-1/2})$. 
We now use the equation \eqref{eq:diffggga}. Using the semi-circle law approximation, we find that
\beq
	\cpovl-\cp-a = - \frac{a\left( s_2(\cp_0)+O(N^{-\frac12})\right)}{s_2(\cp_0)+ 2\ef^2 \beta s_3(\cp_0)+O(N^{-\frac12})} .
\eeq
Thus, 
\beq \label{eq:oclVaAf}
	\cpovl = \cp + \frac{\xi A}{\sqrt{N}} + O(N^{-1})
	\quad \text{where} \quad
	A= \frac{ 2\ef^2 \beta s_3(\cp_0)}{s_2(\cp_0)+ 2\ef^2 \beta s_3(\cp_0)}   .
\eeq

We evaluate $N (\Govl(\cpovl, \cpovl;  a) - 2\G(\cp))$ using \eqref{eq:Gdiffggaa}. From a Taylor approximation, 
\beq \begin{split}
	B_1 &= \sum_{i=1}^N \frac{(\cpovl-\cp)^2 + a^2}{(\cp-\eg_i)^2} + \bhp{N^{-1/2}} 
	= \left(  (\cpovl-\cp)^2 + a^2\right) N s_2(\cp)+\bhp{N^{-1/2}}. 
\end{split} \eeq
On the other hand, using the geometric series for $\frac{1}{\cpovl-\eg_i - a}= \frac{1}{(\cp-\eg_i)+(\cpovl -\cp - a)}$ and using \eqref{eq:weightls}, 
\beq \begin{split}
	B_2&= \sum_{i=1}^N n_i^2  \left[ \frac{a}{(\cp-\eg_i)^2} + \frac{(\cpovl-\cp-a)^2}{(\cp-\eg_i)^3} + O\left( \frac{(\cpovl-\cp-a)^3}{(\cp-\eg_i)^4} \right) \right] \\ 
	&=  a \left(  s_2(\cp)+ N^{-1/2}\go_N(\cp;2) \right) + (\cpovl-\cp-a)^2 s_3(\cp) + \bhp{N^{-1/2}}
\end{split} \eeq
where $\go_N(z;k)$ is defined in \eqref{eq:go00}. 
The leading term is $as_2(\cp)$ which is $O(N^{1/2})$ and the rest is $\bhp{1}$. Inserting $\cp=\cp_0+\cp_1 N^{-1/2}+ \bhp{N^{-1}}$ and using $s_2'(z)=-2s_3(z)$, we find that 
\beq \begin{split}
	& N (\Govl(\cpovl, \cpovl;  a) - 2\G(\cp)) = \xi^2 (1+A^2) s_2(\cp_0) \\
	&\quad + 2\ef^2 \beta \left( \xi^2 (A-1)^2 s_3(\cp_0)
	 + \xi \go_N(\cp_0; 2)  + \xi \sqrt{N} s_2(\cp_0) -2 \xi s_3(\cp_0) \cp_1\right) +\bhp{ N^{-\frac12} }. 
\end{split} \eeq

We now consider the integrals in  \eqref{eq:ovlftsf}. Since all partial derivatives of $\Govl(z,w)$ evaluated at the critical point $(z,w)=(\cpovl, \cpovl)$ are $\bhp{1}$, the two dimensional method of steepest descent applies.  Since the second derivatives evaluated at the critical point do not depend on $\xi$, we find that the ratio of the integrals in \eqref{eq:ovlftsf} is asymptotically equal to $1$.

Combining the computations above, we find that 
\beq \begin{split}
	\log \langle e^{\beta \xi \sqrt{N} \ovl} \rangle 
	\simeq 
	& \frac12 \xi^2 (1+A^2) s_2(\cp_0) \\
	& \qquad + \ef^2 \beta \left( \xi^2 (A-1)^2 s_3(\cp_0) 
	+ \xi \go_N(\cp_0; 2)  + \xi \sqrt{N} s_2(\cp_0) -2 \xi s_3(\cp_0) \cp_1\right) 
\end{split} \eeq
where $A$ is given by \eqref{eq:oclVaAf}. 
Using the formula \eqref{eq:cp1h>0} of $\cp_1$, we obtain 
\beq
	\go_N(\cp_0; 2)   -2 s_3(\cp_0) \cp_1
	= \frac{T s_2(\cp_0)  }{T s_2(\cp_0)+2h^2 s_3(\cp_0)} \go_N(\cp_0; 2) .
\eeq
Hence, we conclude the following.

\begin{result} 
For $h>0$ and $T>0$, 
\beq
	\log \langle e^{\xi \sqrt{N} (\ovl- \ef^2 s_2(\cp_0))} \rangle 
	\simeq 
	 \frac{ \ef^2 T s_2(\cp_0) \go_N(\cp_0; 2)  }{T s_2(\cp_0)+2h^2 s_3(\cp_0)}  \xi
	+ \frac{ T^2 s_2(\cp_0)( T s_2(\cp_0) + 4\ef^2 s_3(\cp_0))   }{2 (T s_2(\cp_0)+2h^2 s_3(\cp_0))} \xi^2
\eeq
as $N\to \infty$ for asymptotically almost every disorder sample, where $\cp_0>2$ is the solution of the equation 
$1 - T s_1(\cp_0) - \ef^2  s_2(\cp_0) = 0$, and $\go_N(z;k)$ is defined in \eqref{eq:go00}. % and depends on $n_1, \cdots, n_N$. 
\end{result}

As a consequence, we obtain the following.

\begin{result} \label{result:replica1}
For $h>0$ and $T>0$, 
\beq	\label{eq:ovlresh>0}
	\ovl \simeqidsgibbs h^2 s_2(\cp_0) + \frac1{\sqrt{N}} \left[  \frac{ \ef^2 T s_2(\cp_0) \go_N(\cp_0; 2)  }{T s_2(\cp_0)+2h^2 s_3(\cp_0)} 
	+ \sigmaovl \gib N \right] 
\eeq
as $N\to \infty$ for asymptotically almost every disorder sample, where the thermal random variable $\gib N$ has the standard normal  distribution and $\sigmaovl>0$ satisfies  
\beq	\label{eq:ovlvarh>0}
	\sigmaovl^2= \frac{T^2 s_2(\cp_0)( T s_2(\cp_0) + 4\ef^2 s_3(\cp_0))   }{T s_2(\cp_0)+2h^2 s_3(\cp_0)} . 
\eeq
\end{result}

%%%%%%%%%%%%%%%%
\subsubsection{Discussion of the leading term}

The leading term 
\beq \label{eq:ovreplho1}
	\ovlz = \ovlz(T, h)= \ef^2 s_2(\cp_0)=1-Ts_1(\cp_0)
\eeq 
in \eqref{eq:ovlresh>0} depends on neither the choice of spin configuration nor the disorder sample. 
See Figure \ref{fig:r0_vs_h} for the graph of $\ovlz$ as a function of $\ef$. 

%\Cr{Here insert discussion that it recovers the result from the replica calculations - see details in  my email} 

The value \eqref{eq:ovreplho1} for $\ovlz$ reproduces the prediction $q_0$ for the overlap
obtained in \cite{crisanti1992sphericalp, FyodorovleDoussal} from 
the replica saddle methods which predicts that $q_0$ is determined by \eqref{eq:q0overlap}, 
%\beqq
%	q_0 + h^2 = T^2 \frac{q_0}{(1-q_0)^2}. 
%\eeqq
The equivalence is checked using that $s_2(z)=s_1(z)^2/(1-s_1(z)^2)$ and $q_0=1-T s_1(\cp_0)$. 
%\Cbr{[See \eqref{eq:q0overlap}. I changed a little bit there]} 

It is easy to check the following properties using a computation similar to the one in Subsection \ref{sec:mgnefpos}: 
%\Cr{[It is possible to check that this is also the result of $h^2s_2(\cp_0)=1-Ts_1(\cp_0)$?]}  \Cb{[I don't see how.]}
\begin{itemize}
%\item For every $\ef>0$, $\ovlz$ is a decreasing function of $T>0$.
%\item 
%For every $\ef>0$, 
%\beq \label{eq:oplhsoz11} \begin{split}
%	\ovlz\simeq \begin{cases}	1 \qquad &\text{as $T\to 0$,} \\
%	\frac{\ef^2}{T^2} \qquad &\text{as $T\to\infty$.}
%\end{cases} \end{split}
%\eeq

\item For every $T>0$, $\ovlz$ is an increasing function of $\ef>0$.
 
\item As $h\to\infty$, 
\beq
	\ovlz= 1- \frac{T}{\ef} + O(\ef^{-2})  \qquad \text{for all $T>0$.}
\eeq
\item 
As $\ef\to 0$, 
\beq \label{eq:oplhsoz}
	\ovlz= \begin{dcases}
	\frac{\ef^2}{T^2-1} - \frac{2T^2 \ef^4}{(T^2-1)^2} + O(\ef^6) \qquad &\text{for $T>1$,} \\
	1-T + \frac{T\ef^2}{2(1-T)}+ O(\ef^4) \qquad &\text{for $0<T<1$.}
	\end{dcases}
\eeq
\end{itemize}

\begin{figure}
\centering
\begin{subfigure}[t]{0.45\textwidth}
\includegraphics[width = 0.9\textwidth]{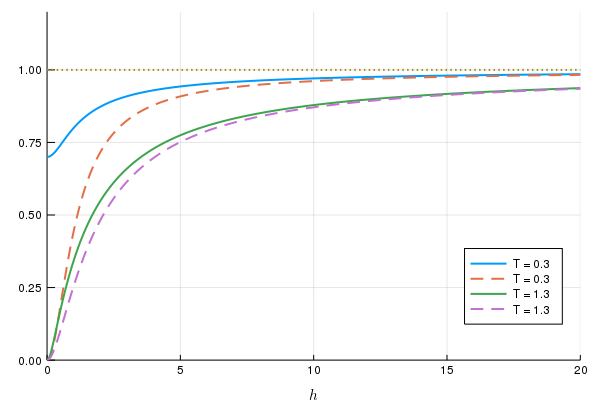}
\caption{Graph of $\ovlz$ (solid line) and $(\mgn^0)^2$ (dashed line) as a function of $\ef$ for $T=0.3$ and $T=1.3$}
\label{fig:r0_vs_h}
\end{subfigure}
\begin{subfigure}[t]{0.45\textwidth}
\includegraphics[width = 0.9\textwidth]{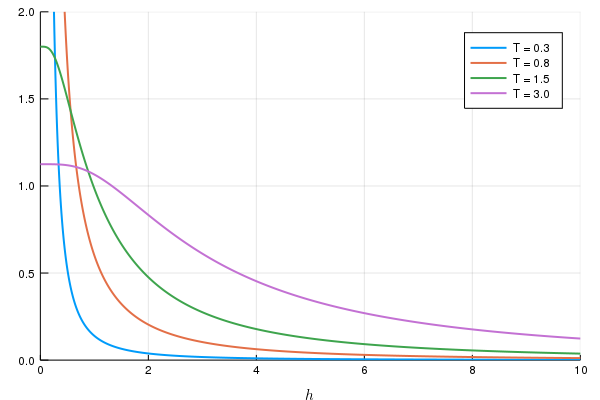}
\caption{Graph of $\sigmaovl^2$ as a function of $\ef$.}
\label{fig:ovlvar_vs_h}
\end{subfigure}
\caption{Graphs of $\ovlz$ and $\sigmaovl^2$.}
\end{figure}

\subsubsection{Discussion of the thermal variance}

The thermal variance of $\ovl$ satisfies 
\beq
	\langle \ovl^2 \rangle - \langle \ovl \rangle^2 \simeq \frac{\sigmaovl^2}{N}
\eeq
for $\sigmaovl^2$ given in \eqref{eq:ovlvarh>0} and it does not depend on the disorder sample. See Figure \ref{fig:ovlvar_vs_h} for the graph. It is a decreasing function of $\ef$, and satisfies%, but it is  not monotonic in $T$. We have
\beq 
	\sigmaovl^2 = \frac{2T^2}{h^2} - \frac{5T^3}{2\ef^3}+ O(h^{-4}) \quad\text{as $h\to \infty$ for all $T>0$}
\eeq 
and 
\beq \label{eq:ovvash}
	\sigmaovl^2(h, T) = 
	\begin{dcases}
	\frac{\tmp^2}{\tmp^2 - 1}    + O(\ef^4)\qquad &\text{as $\ef\to 0$ for $T>1$,}\\ 
	\frac{2\tmp^2(1 - \tmp)}{\ef^2} + O(1) \qquad & \text{as $\ef\to 0$ for $0<T<1$.} 
	\end{dcases}
\eeq
%Note that for small $\ef$, $\sigma^2$ is not a monotone function of $T$ for $T<1$ due to the term $T^2(1-T)$. 

%For $0<T<1$, \eqref{eq:ovvash} shows that the variance of $\ovl$ is of order $\frac{1}{\ef^2N}$ for small $\ef$. 
%However, $|\ovl|\le 1$ by definition, and hence the variance of $\ovl$ is bounded for all $\ef$. 
%Therefore, the above computation indicates that there should be a transition when $\ef= O(N^{-1/2})$ for $T<1$. 

%%%%%%%%%%%%%%%%
\subsubsection{Limit as $\ef \rightarrow \infty$}

As $\ef \to \infty$, using \eqref{eq:cpvalueforhplh} and $s_k(z)= z^{-k} + O(z^{-k-2})$ as $z\to \infty$, we find that
\beq
	 \frac{ \ef^2 T s_2(\cp_0) \go_N(\cp_0; 2)  }{T s_2(\cp_0)+2h^2 s_3(\cp_0)} 
	 \simeq \frac{\tmp  \sum_{i = 1}^N (n_i^2 - 1)}{2\ef\sqrt{N}}.
\eeq
Thus, we see that, for every $T>0$, if we take $N\to \infty$ with $h>0$ and then take $h\to \infty$,  
\beq 
	\ovl \simeqidsgibbs 1- \frac{T}{h}+ \frac{T}{\ef\sqrt{N} } \left[ \frac{  \sum_{i = 1}^N (n_i^2 - 1) }{2\sqrt{N}}
	+  \sqrt{2} \gib N \right] . 
\eeq

\subsubsection{Limit as $\ef \rightarrow 0$ when $\tmp > 1$}

Using \eqref{eq:cpohs}, if we take $N\to \infty$ with $h>0$ and then take $h\to 0$, we see that, for $T>1$, 
\beq
	\ovl \simeqidsgibbs \frac{\ef^2}{\tmp^2 - 1} - \frac{2\tmp^2 \ef^4}{(\tmp^2 - 1)^2} + \frac{1}{\sqrt{N}} \left[ \frac{\tmp}{\sqrt{\tmp^2 - 1}}\gib N + \ef^2 \go_N(\tmp + \frac1\tmp; 2) \right] .
\eeq

\subsubsection{Limit as $\ef \rightarrow 0$ when $\tmp < 1$}

Similarly, from \eqref{eq:cpohs}, if we take $N\to \infty$ with $h>0$ and then take $h\to 0$, we see that, for $0<T<1$, 
\beq	\label{eq:ovlh>0ltmp}
	\ovl \simeqidsgibbs (1- \tmp) + \frac{T  \ef^2}{2(1-T)}+ \frac{\tmp}{\ef \sqrt{N}} \left[\frac{ \ef^5 \go_N(\cp_0;2)}{2(1 - \tmp)^2} +  \sqrt{2(1 - \tmp)} \gib N \right].
\eeq
From the discussions around the equation \eqref{eq:goNh0ltmp}, we expect that $\ef^5 \go_N(\cp_0;2)=\bhp{1}$ as $\ef\to 0$ if $h\gg N^{-1/6}$.
This indicates that there may be a transition when $h\sim N^{-1/6}$. We study this regime in the next subsection. 
On the other hand, the thermal fluctuation term becomes of order $1$ if $h^{-1}N^{-1/2}=O(1)$. This indicates a new regime $h\sim N^{-1/2}$, which we study in a later section.
%On the other hand, the thermal fluctuation term is of order $h^{-1}N^{-1/2}$. As discussed in the thermal variance sub-subsection, this indicates a transitional regime at $h\sim N^{-1/2}$, which we discuss in a later subsection. 

%%%%%%%%%%%%%%%%%%%
\subsection{Mescoscopic external field: $h\sim N^{-1/6}$ and $T<1$}
\label{sec:ovl1/6}

%%%%%%%%%%%%%%%%%%%
\subsubsection{Analysis}

%\Cb{[Changed the formulation of the result. It is simpler now.]}

Set
\beq
	h=HN^{-1/6}
\eeq
for fixed $H>0$. 
It turns out that the order of the fluctuations of $\ovl$ is $N^{-1/3}$. 
Hence, we set 
\beq
	\eta= \beta \xi N^{1/3} \quad \text{so that} \quad 	a= \xi N^{-2/3}. %\frac{\xi}{N^{2/3}} .
\eeq

The critical point of $\G(z)$ is given by $\cp= \eg_1 + sN^{-2/3}$ 
where $s>0$ solves the equation \eqref{eq:seqgeneral}. %$\beta -1 - H^2\beta \sum_{i=1}^N \frac{n_i^2}{(s+a_1-a_i)^2} =0$.
Inserting $h=HN^{-1/6}$, the equation takes the form
\beq \label{eq:Gpriaan10101s}
	 \beta- \frac1{N^{1/3}} \sum_{i=1}^N \frac{1}{s+a_1-a_i}  -  \efres^2 \beta \sum_{i=1}^N \frac{n_i^2}{(s + \egres_1 -\egres_i)^2} =0 . 
\eeq
The solution satisfies $s=\stild+ \bhp{N^{-1/3}}$ where $\stild$ solves the equation \eqref{eq:stildeq}.
 
For the critical point of $\Govl$, Lemma \ref{lem:2olboundingcpovl}) shows that $\cp< \cpovl< \cp+a$. Hence, $\cpovl-\cp-a= O(N^{-2/3})$. However, we can get a sharper bound on this difference.  The right-hand side of \eqref{eq:diffggga} is $\bhp{aN^{4/3}}$ and the bracket term of the left-hand side of the same equation is $\bhp{N^{5/3}}$, with the leading contribution coming from the second sum.  Hence, we find that 
%$\cpovl - \gamma - a=\bhp{N^{-1}}$. Setting 
\beq
	\cpovl = \gamma+a - \epsilon, \qquad \epsilon= \bhp{N^{-1}}. %\epsilon:= \Delta N^{-1}, 
\eeq
%and inserting it back in \eqref{eq:diffggga}, we find that 
%\beq
%	\Delta= \frac{\xi \sum_{i=1}^N \frac1{(a_1+s-a_i+2\xi)(a_1+s-a_i)}}{2H^2\beta \sum_{i=1}^N  \frac{n_i^2}{(a_1+s-a_i)^2} } + \bhp{N^{-1/3}}. 
%\eeq

We now evaluate \eqref{eq:Gdiffggaa}.
The first sum $B_1$ is 
\beqq \begin{split}
	 -  \sum_{i=1}^N \left[ \log \left( 1+ \frac{2(a-\epsilon)}{\cp-\eg_i} - \frac{(2a-\epsilon)\epsilon}{(\cp-\eg_i)^2} \right)  - \frac{2(a-\epsilon)}{\cp-\eg_i} \right] 
	\simeq  - \sum_{i=1}^N \left[ \log \left( 1+ \frac{2\xi}{s+a_1-a_i} \right)  - \frac{2\xi}{s+a_1-a_i}\right]  %+2\Delta 
	%+ \frac{2\Delta}{N^{1/3}} \sum_{i=1}^N \frac1{a_1+s-a_i+2\xi}	+\cO(N^{-1/3}).
\end{split} \eeqq
and this sum is $\bhp{1}$. % and the middle term is $2\Delta + \bhp{N^{-1/3}}$. 
For the second sum, we get 
\beqq
	B_2 %= 2 \xi N^{1/3} H^2\beta  \sum_{i=1}^N \frac{n_i^2}{(s+a_1-a_i)^2} + \bhp{\epsilon^2N^{5/3} }  
	= 2 \xi N^{1/3} H^2\beta  \sum_{i=1}^N \frac{n_i^2}{(s+a_1-a_i)^2}+ \bhp{N^{-1/3}}.
\eeqq
Therefore, $N (\Govl(\cpovl, \cpovl;  a) - 2\G(\cp))$ is equal to 
\beq \begin{split}
	- \sum_{i=1}^N \left[ \log \left( 1+ \frac{2\xi}{s+a_1-a_i} \right)  - \frac{2\xi}{s+a_1-a_i}\right] %+2 \Delta 
	 + 2 \xi N^{1/3}  H^2\beta \sum_{i=1}^N \frac{n_i^2}{(s+a_1-a_i)^2} +\cO(N^{-1/3}).
\end{split} \eeq
Using the equation \eqref{eq:Gpriaan10101s} for $s$, we can write  
\beq \label{eq:poiiuur}
%	N (\Govl(\cpovl, \cpovl;  a) - 2\G(\cp)) = 
	N (\Govl(\cpovl, \cpovl;  a) - 2\G(\cp))
	= 2\xi \beta N^{1/3}  - \sum_{i=1}^N  \log \left( 1+ \frac{2\xi}{s+a_1-a_i} \right) %+2\Delta
	+\cO(N^{-1/3}).
	%	2\xi \beta N^{1/3}  - 2\xi \sum_{i=1}^N \frac1{s+a_1-a_i} - \sum_{i=1}^N \left[ \log \left( 1+ \frac{2\xi}{s+a_1-a_i} \right)  - \frac{2\xi}{s+a_1-a_i}\right] +\cO(N^{-1/3}).
\eeq
%We may replace $s$ by $\stild$ since the term of order $N^{1/3}$ does not depend on $s$ and $s-\stild=\cO(N^{-\frac13})$ (see lemma \ref{lem:fe1/6sbound}). 

Finally, we compute the integrals in \eqref{eq:ovlftsf}. A calculation similar to the one from Subsection \ref{sec:fetranltmp} shows that the $k$th partial derivatives of $\Govl$ evaluated at $(z,w)=(\cpovl, \cpovl)$ are $\bhp{N^{\frac23k-\frac23}}$. 
Since the second derivatives are $\bhp{N^{\frac23}}$, the main contribution to the integral comes from a neighborhood of radius $N^{-5/6}$ around the critical point. 
Moreover, from explicit computations, we find that
\beqq
	\frac{\partial^2 \Govl}{\partial z^2}(\cpovl, \cpovl)= \frac{\partial^2 \Govl}{\partial w^2}(\cpovl, \cpovl)\simeq x N^{2/3}, 
	\qquad
	\frac{\partial^2 \Govl}{\partial z \partial w}(\cpovl, \cpovl)\simeq y  N^{2/3}
\eeqq
where
\beqq
	x= 2H^2\beta \sum_{i = 1}^N  \frac{n_i^2 (s + \egres_1 - \egres_i + \xi)}{(s + \egres_1 - \egres_i)^3(s + \egres_1 - \egres_i + 2\xi)} , 
	\qquad
	y= 2H^2\beta \sum_{i = 1}^N  \frac{n_i^2 \xi }{(s + \egres_1 - \egres_i)^3(s + \egres_1 - \egres_i + 2\xi)} .
\eeqq
Using the method of steepest descent with the change of variables $z=\cpovl+ u N^{-5/6}$ and $w=\cpovl+ v N^{-5/6}$, the integral becomes 
\beq\begin{split}
	&\int_{\cpovl+\ii \R} \int_{\cpovl+\ii \R} e^{\frac{N}{2} (\Govl(z, w; a) - \Govl(\cpovl, \cpovl;  a))} \dd z \dd w 
	\simeq \frac1{N^{5/3}} \int_{\ii\R} \int_{\ii\R} e^{\frac14 (xu^2+xv^2+2yuv) } \dd u \dd v . 
\end{split}\eeq
Evaluating the Gaussian integral, inserting the formulas of $x$ and $y$, and noting that the denominator is the same as the numerator when $\xi=0$, the ratio of the integrals becomes
\beq
	\frac{\int  \int e^{ \frac{N}{2} (\Govl(z, w; a) - \Govl(\cpovl, \cpovl;  a)) } \dd z \dd w}{\left( \int e^{\frac{N}2 (\Govl(z) - \G(\cp))} \dd z \right)^2 } \simeq 	\sqrt{ \frac{\sum_{i=1}^N \frac{n_i^2}{(s+a_1-a_i)^3}}{\sum_{i=1}^N \frac{n_i^2}{(s+a_1-a_i)^2(s+a_1-a_i+2\xi)} } } . 
\eeq
%We may replace $s$ by $\stild$. 

Combining the above calculations and replacing $s$ by $\stild$, we obtain the following result after moving a term of order $N^{1/3}$. 

\begin{result}
For $h=H N^{-1/6}$ and $0<T<1$, 
\beq \label{eq:resmgnovlN1/6} \begin{split}
	\langle e^{\frac1{T}  \xi N^{1/3} \left( \ovl - (1 - \tmp)  \right) } \rangle 
	&\simeq e^{\xi  N^{1/3} - \frac12 \sum_{i=1}^N  \log \left( 1+ \frac{2\xi}{\stild+a_1-a_i} \right)  }
	\sqrt{ \frac{\sum_{i=1}^N \frac{n_i^2}{(\stild+a_1-a_i)^3}}{\sum_{i=1}^N \frac{n_i^2}{(\stild+a_1-a_i)^2(\stild+a_1-a_i+2\xi)} } } 
\end{split} \eeq
%\beq \label{eq:resmgnovlN1/6} \begin{split}
%	\langle e^{\frac1{T}  \xi N^{1/3} \left( \ovl - (1 - \tmp)  \right) } \rangle 	&\simeq 
%	 e^{ - \xi \sumws_N(\stild)}	\prod_{i=1}^N \frac{e^{\frac{\xi}{\stild+a_1-a_i}}}{\sqrt{1+\frac{2\xi}{\stild+a_1-a_i}} } 
%	\sqrt{ \frac{\sum_{i=1}^N \frac{n_i^2}{(\stild+a_1-a_i)^3}}{\sum_{i=1}^N \frac{n_i^2}{(\stild+a_1-a_i)^2(\stild+a_1-a_i+2\xi)} } } 
%\end{split} \eeq
as $N\to \infty$ for asymptotically almost every disorder sample, where
%\beq 
%	\sumws_N (w): = \sum_{i=1}^N \frac1{w+a_1-a_i}- N^{1/3}
%\eeq  
%and 
$\stild>0$ is the solution of the equation \eqref{eq:stildeq}. % $1 - \tmp - \efres^2 \sum_{i = 1}^\infty \frac{n_i^2}{(\stild + a_1 - a_i)^2} = 0$. 
\end{result}

%\Cb{New definition: $\gib T_N(w)$}

The term in the exponent on the right-hand side is $\bhp{1}$. 
%Using the moment generating function \eqref{eq:noncechi} for the chi-squared distribution, we find the following.

\begin{result} \label{result:replica16}
For $h=H N^{-1/6}$ and $0<T<1$, 
\beq \label{eq:ovconcwhnn1}
%	\ovl \simeqidsgibbs 1- T + \frac{\tmp}{N^{1/3}} (\gib T_N(\stild) + \sumws_N(\stild)) 
	\ovl \simeqidsgibbs 1- T + \frac{\tmp}{N^{1/3}} \sumws_N(\stild)
\eeq
as $N\to \infty$ for asymptotically almost every disorder sample, where $\stild>0$ is the solution of the equation \eqref{eq:stildeq}
and $\sumws_N(\stild)$ is a random variable defined by the generating function given by the right-hand side of \eqref{eq:resmgnovlN1/6}. 
\end{result}

\begin{comment}%%%%%%%%%%%%%%%%%%%
Here, $\gib T_N(w)$ and $\sumws_N(w)$ are independent thermal random variables where 
\beq \label{eq:Gannofog}
	\gib T_N(w) : = N^{1/3} -   \sum_{i=2}^N \frac{\gib n_i^2}{w+ \egres_1-\egres_i},  \qquad w>0
\eeq 
for independent standard normal thermal random variables $\gib n_i$ and $\sumws_N(w)$ is defined by the moment generating function
\beq \label{eq:mgfofx}
	\langle e^{\xi \sumws_N(w) } \rangle := \frac1{\sqrt{1+ \frac{2\xi}{w}}}
	\sqrt{ \frac{\sum_{i=1}^N \frac{n_i^2}{(w+a_1-a_i)^3}}{\sum_{i=1}^N \frac{n_i^2}{(w+a_1-a_i)^2(w+a_1-a_i+2\xi)} } }  .
\eeq
\end{result}

The random variable $\gib T_N(w)$ is similar to $-\crv(w)= N^{1/3}-  \sum_{i=1}^N \frac{n_i^2}{w+ \egres_1-\egres_i}$ in \eqref{eq:crvdnf}. The difference is that the first involves thermal normal random variables $\gib n_i$ and the second involves sample random variables $n_i$ and the sum starts with $i=2$. 
\end{comment}%%%%%%%%%%%%%%%%%%%

%\Cr{[P: Maybe you can make a note that $\gib T_N(w)$ is similar to $- \crv(w)$ in \eqref{eq:crvdnf} except that the $\gib n_i$ are thermal here. They have the same total distribution (thermal+disorder)]. Oh I see now that you say it below. Maybe you can move it here, or it is fine then as it is.}
	
%	\Cr{[P: also not obvious to see that this random variable exists, for $N=1$ it looks strange but of course here	$N$ is large]}

%%%%%%%%%%%%%%%%
\subsubsection{Matching with $\ef=O(1)$}

We take the formal limit of the result \eqref{eq:ovconcwhnn1} as $H\to \infty$. From \eqref{eq:sqrs0whenehf}, $\stild\to\infty$. 
The big square root term of the generating function on the right-hand side of \eqref{eq:resmgnovlN1/6} is approximately  is approximately $1$. 
%\beqq
%	\sqrt{ \frac{\sum_{i=1}^N \frac{n_i^2}{(\stild+a_1-a_i)^3}}{\sum_{i=1}^N \frac{n_i^2}{(\stild+a_1-a_i)^2(\stild+a_1-a_i+2\xi)} } } 	\simeq 1
%\eeqq
On the other hand, 
\beqq
	\xi  N^{1/3} - \frac12 \sum_{i=1}^N  \log \left( 1+ \frac{2\xi}{\stild+a_1-a_i} \right) 
	\simeq  \xi \left( N^{1/3} - \sum_{i=1}^N \frac1{\stild + a_1-a_i} \right) + \xi^2 \sum_{i=1}^N \frac1{(\stild+a_1-a_i)^2}
\eeqq
Setting $x=\eg_1+\stild N^{-2/3}$, we have, using a formal application of the semi-circle law, 
\beqq
	N^{1/3} - \sum_{i=1}^N \frac1{\stild + a_1-a_i} = N^{1/3} \left(1- \frac1{N} \sum_{i=1}^N \frac1{x-\eg_i} \right) 
	\simeq N^{1/3} \left( 1- s_1(x) \right). 
\eeqq
Using \eqref{eq:stjaspt}, the above equation becomes
 \beqq
	N^{1/3} - \sum_{i=1}^N \frac1{\stild + a_1-a_i} \simeq N^{1/3} \sqrt{x-2} \simeq \sqrt{\stild}. 
\eeqq
For the other term, 
\beqq
	\sum_{i=1}^N \frac1{(\stild+a_1-a_i)^2} = \frac1{N^{4/3}} \sum_{i=1}^N \frac1{(x-\eg_i)^2} 
	\simeq \frac1{N^{1/3}} s_2(x) \simeq \frac1{N^{1/3}2\sqrt{x-2}}\simeq \frac1{2\sqrt{\stild}}.  
\eeqq
Hence, the generating function on the right-hand side of \eqref{eq:resmgnovlN1/6} is approximately $e^{\sqrt{\stild}\xi + \frac{\xi^2}{2\sqrt{\stild}}}$.
Therefore, 
\beqq
	\sumws_N(\stild) \simeqidsgibbs \sqrt{\stild} + \stild^{-1/4} \gib N 
\eeqq
for a thermal standard normal random variable $\gib N$. 
Inserting the large $H$ formula \eqref{eq:sqrs0whenehf} for $\stild$ and replacing $H=\ef N^{1/6}$, we find that if we take $h=HN^{-1/6}$ and let $N\to\infty$ first and then take $H\to \infty$, we get 
\beq	\begin{split}
	\ovl 
	&  \simeqidsgibbs 1- \tmp + \frac{\tmp\ef^2}{2(1 - T)} + \frac{\tmp}{\ef N^{1/2}}\left[\frac{ \ef^5 \go_N(\cp_0; 2)}{2(1 - \tmp)^2} + \sqrt{ 2(1 - \tmp)} \gib N \right].
\end{split} \eeq
This is the same as \eqref{eq:ovlh>0ltmp} which is obtained by first taking $N\to \infty$ with $h>0$ fixed and then taking $h\to 0$. 
Therefore, the result matches with the $h=O(1)$ case.

%%%%%%%%%%%%%%%%%%%

%%%%%%%%%%%%%%%%%%%
\subsubsection{Limit as $\efres \rightarrow 0$}

From \eqref{eq:slimH0}, $\stild=O(H)\to 0$ as $H\to 0$. 
The generating function on the right-hand side of \eqref{eq:resmgnovlN1/6} converges to 
\beqq
	e^{\xi  N^{1/3} - \frac12 \sum_{i=2}^N  \log \left( 1+ \frac{2\xi}{\stild+a_1-a_i} \right)  }
\eeqq
where the term $i=1$ cancels out with the limit of the big square root term. 
Using the moment generating function \eqref{eq:noncechi} for the chi-squared distribution, we find that if we take $h=HN^{-1/6}$ and $N\to \infty$ and then take $H\to 0$, then 
\beq \label{eq:hhhhw}
	\ovl \simeqidsgibbs 1- T + \frac{\tmp}{N^{1/3}}\left( N^{1/3} -   \sum_{i=2}^N \frac{\gib n_i^2}{ \egres_1-\egres_i} \right) . 
\eeq
for independent thermal standard Gaussian random variables $\gib n_i$.

\begin{comment}
From \eqref{eq:slimH0}, $\stild=O(H)\to 0$ as $H\to 0$. Since $\langle e^{\xi \sumws_N(w) } \rangle  \simeq 1$ as $w\to $, we find that 
$\sumws_N(\stild)\simeqidsgibbs 0$. On the other hand $\gib T_N(w)$ has a limit as $w\to 0$. Thus,
we find that if we take $h=HN^{-1/6}$ and $N\to \infty$ and then take $H\to 0$, then 
\beq \label{eq:hhhhw}
	\ovl \simeqidsgibbs 1- T + \frac{\tmp}{N^{1/3}}\left( N^{1/3} -   \sum_{i=2}^N \frac{\gib n_i^2}{ \egres_1-\egres_i} \right) . 
\eeq
\end{comment}
\begin{comment}
We separate out the $i=1$ term from the sum in the formula of $\gib T_N(w)$:
\beqq
	\gib T_N(w) = \left(  N^{1/3} -   \sum_{i=2}^N \frac{\gib n_i^2}{w+ \egres_1-\egres_i} \right) . %- \frac{\gib n_1^2}{w}. 
\eeqq
The parenthesis term is has a limit as $w\to 0$. We add the remaining term to $\sumws_N(w)$. 
The thermal moment generating function \Cr{I have a question, see email} 
\beq 
%	\langle e^{\xi (\sumws_N(w) - \frac{\gib n_1^2}{w}) } \rangle = \frac1{\sqrt{1+ \frac{2\xi}{w}}}
%	\sqrt{ \frac{\sum_{i=1}^N \frac{n_i^2}{(w+a_1-a_i)^3}}{\sum_{i=1}^N \frac{n_i^2}{(w+a_1-a_i)^2(w+a_1-a_i+2\xi)} } }  
		\langle e^{\xi \sumws_N(w) } \rangle  \simeq 1
\eeq
as $w\to 0$ after separating out $i=1$ terms in two sums. 
Hence, we find that $\sumws_N(w) - \frac{\gib n_1^2}{w} \simeqidsgibbs 0$. 
Inserting the calculations into \eqref{eq:ovconcwhnn1}, we find that if we take $h=HN^{-1/6}$ and $N\to \infty$ and then take $H\to 0$, then 
\beq \label{eq:hhhhw}
	\ovl \simeqidsgibbs 1- T + \frac{\tmp}{N^{1/3}}\left( N^{1/3} -   \sum_{i=2}^N \frac{\gib n_i^2}{ \egres_1-\egres_i} \right) . 
\eeq
\end{comment}

%%%%%%%%%%%%%%%%%%%
\subsection{Microscopic external field: $h\sim HN^{-1/2}$ and $T<1$} \label{sec:ovl12}

%%%%%%%%%%%%%%%%%%%
\subsubsection{Analysis}

Set
\beq
	h= H N^{-1/2}
\eeq
for fixed $H>0$. 
It turns out that the fluctuations are of order $\bhp{1}$. In other words, the leading term of $\ovl$ converges to a random variable. 
We set 
\beq
	\eta= \beta \xi \quad \text{so that} \quad a= \xi N^{-1}. %  \frac{\xi}{N}. 
\eeq

The critical point of $\G(z)$ is $\cp= \eg_1 + \so N^{-1}$ from \eqref{eq:sohca}.
Consider the critical point of $\Govl$. Lemma \ref{lem:2olboundingcpovl} implies that $\cpovl= \eg_1 + \bhp{N^{-1}}$. 
We set 
\beq
	\cpovl= \eg_1 + \sovl N^{-1}, \qquad \sovl> |\xi|, 
\eeq
for some $\sovl$. Separating $i=1$ in the equation \eqref{eq:cpovldef}, we find that $\sovl$ is the solution of the equation 
\beq \label{eq:eqaforsov}
	\beta - 1 - \frac{\sovl}{\sovl^2-\xi^2} - \frac{H^2\beta n_1^2}{(\sovl - \xi)^2} + \bhp{N^{-1/3}}=0. 
\eeq
When $\beta= T^{-1}>1$, the equation $\beta - 1 - \frac{x}{x^2-\xi^2} - \frac{H^2\beta n_1^2}{(x - \xi)^2}=0$ has a unique solution and  $\sovl$ is approximated by this solution with error $\bhp{N^{-1/3}}$.

Using \eqref{eq:Gdiffggaa} and separating out the $i=1$ term, we find that $N (\Govl(\cpovl, \cpovl;  a) - 2\G(\cp))$ is equal to 
\beq \begin{split}
	 -  \log \left( \frac{\sovl^2 - \xi^2}{\so^2} \right)  + \frac{2(\sovl-\so)}{\so}  
	 + 2 H^2 \beta n_1^2  \left[ \frac{1}{\sovl - \xi}
	- \frac1{\so} + \frac{\sovl-\so}{\so^2} \right] + \bhp{N^{-1/3}} .
\end{split} \eeq
Using the equation for $\so$, this can be written as
\beq \begin{split}
	&N (\Govl(\cpovl, \cpovl;  a) - 2\G(\cp))  \\
	& = -  \log \left( \frac{\sovl^2 - \xi^2}{\so^2} \right)  + 2(\beta-1) (\sovl-\so)  
	 + 2 H^2 \beta n_1^2  \left[ \frac{1}{\sovl - \xi} 
	- \frac1{\so}\right] + \bhp{N^{-1/3}}.
\end{split} \eeq

We now consider the integrals in \eqref{eq:ovlftsf}.
As in Subsection \ref{sec:mgnlowtmn12} of the overlap with the external field when $h\sim N^{-1/2}$, the main contribution to the integral comes from a neighborhood of radius $N^{-1}$ around the critical point in both variables. 
Changing variables to $z=\cpovl+ uN^{-1}$ and $w=\cpovl+ v N^{-1}$, we find that all terms of the Taylor series are of the same order, so we see, as in Subsection \ref{sec:mgnlowtmn12}, that the 
integral is not approximated by a Gaussian integral. 
Therefore, we proceed by writing 
\beqq \begin{split}
	& N (\Govl(z, w; a) - \Govl(\cpovl, \cpovl;  a)) \\
	& = N (\Govl(z, w; a)- \Govl(\cpovl, \cpovl;  a) - (\Govl)_z (\cpovl, \cpovl; a) (z-\cpovl) - (\Govl)_w (\cpovl, \cpovl; a) (w-\cpovl) ) \\
	&= - \sum_{i=1}^N \left[ \log \left( \frac{(z-\eg_i)(w-\eg_i)-a^2}{(\cpovl-\eg_i)^2 - a^2} \right)  
	- \frac{(\cpovl-\eg_i)(z+w-2\cpovl)}{(\cpovl-\eg_i)^2-a^2} \right]  \\
	&\qquad  
	+ \ef^2 \beta \sum_{i=1}^N n_i^2  \left[ \frac{z+w-2\eg_i+ 2a}{(z-\eg_i)(w-\eg_i)-a^2} - \frac{2}{\cpovl-\eg_i - a}
	+ \frac{z+w-2\cpovl}{(\cpovl-\eg_i -a)^2}  \right]  .
\end{split} \eeqq
Inserting the change of variables and separating $i=1$ out,
\beqq \begin{split}
	N (\Govl(z, w; a) - \Govl(\cpovl, \cpovl;  a)) 
	\simeq& - \log \left( \frac{(u+\sovl) (v+\sovl)-\xi^2}{\sovl^2 - \xi^2} \right)   
	+ \frac{\sovl (u+v)}{\sovl^2-\xi^2}  \\
	&+ H^2 \beta n_1^2 \left[ \frac{u+v +2\sovl+ 2\xi}{ (u+\sovl) (v+\sovl)-\xi^2}  - \frac{2}{\sovl - \xi} 
	+ \frac{u+v}{(\sovl - \xi)^2} \right]  %+\cO\left(\frac{|u|+|v|}{N^{1/3}}\right). 
\end{split} \eeqq
for finite $u$ and $v$. 
Using the equation \eqref{eq:eqaforsov}, this can be written as  
\beqq \begin{split}
	&N (\Govl(z, w; a) - \Govl(\cpovl, \cpovl;  a)) \\
	&\simeq - \log \left( \frac{(u+\sovl) (v+\sovl)-\xi^2}{\sovl^2 - \xi^2} \right)   
	+ (\beta-1) (u+v)   
	 + H^2 \beta n_1^2 \left[ \frac{u+v +2\sovl+ 2\xi}{ (u+\sovl) (v+\sovl)-\xi^2} - \frac{2}{\sovl- \xi}  \right]  .
	 % +\cO\left(\frac{|u|+|v|}{N^{1/3}}\right). 
\end{split} \eeqq
Thus, the numerator integral in \eqref{eq:ovlftsf} is asymptotically equal to 
\beqq\begin{split}
	%&\int_{\cpovl+\ii \R} \int_{\cpovl+\ii \R} e^{\frac{N}{2} (\Govl(z, w; a) - \Govl(\cpovl, \cpovl;  a))} \dd z \dd w  
	%\simeq 
	\frac{\sqrt{\sovl^2-\xi^2}}{N^2}  \int \int \frac{e^{\frac12 (\beta-1)(u+v) + \frac{H^2 \beta n_1^2}{2} \left[ \frac{u+v +2\sovl+ 2\xi}{ (u+\sovl) (v+\sovl)-\xi^2} - \frac{2}{\sovl- \xi}  \right]} }{\sqrt{(u+\sovl) (v+\sovl)-\xi^2}}  \dd u \dd v
\end{split}\eeqq
where the contours are from $-\ii\infty$ to $\ii\infty$ such that all singularities lie on the left of the contours.
The denominator integral is the same with $\xi=0$. 
%Thus, 
%\beqq\begin{split}
%	&\int_{\cpovl+\ii \R} \int_{\cpovl+\ii \R} e^{\frac{N}{2} (\Govl(z, w; a) - \Govl(\cpovl, \cpovl;  a))} \dd z \dd w  
%	\simeq \frac{1}{N^2}  \int \int \frac{e^{\frac12 (\beta-1)(u+v+2\sovl) + \frac{H^2\beta n_1^2(u+v +2\sovl+ 2\xi) }{2((u+\sovl) (v+\sovl)-\xi^2)} }}{\sqrt{(u+\sovl) (v+\sovl)-\xi^2}}  \dd u \dd v
%\end{split}\eeqq
%where the contours are from $-\ii\infty$ to $\ii\infty$ such that all singularities lie on the left of the contours.
%The numerator is the same with $\sovl=0$.  

Combining the above calculations into \eqref{eq:ovlftsf} and making simple translations for the integral, we find that
\beq \label{eq:wsnithe}
	\langle e^{\beta \xi \ovl} \rangle 
	\simeq 
	\frac{\int \int  \frac1{\sqrt{uv-\xi^2}} e^{\frac12 (\beta-1)(u+v) + \frac{H^2\beta n_1^2(u+v + 2\xi) }{2(uv-\xi^2)} } \dd u \dd v}
	{\left( \int \frac1{\sqrt{u}} e^{\frac12 (\beta-1) u + \frac{H^2\beta n_1^2}{2u}} \dd u \right)^2 }  
\eeq
where the contours are vertical lines such that the points $\xi$ or $0$ lie on the left of the contours. 
We now evaluate the integrals using (recall \eqref{eq:ntif}) 
\beq \label{eq:icuJB}
	\int \frac{e^{au+ \frac{b}{u}}}{\sqrt{u}}  \dd u = \frac{2\ii \sqrt{\pi}}{\sqrt{a}} \cosh(2\sqrt{ab}) .
\eeq 
Consider the double integral in the numerator. 
For each $v$, we change the variable $u$ to $z$ by setting $uv-\xi^2=z$. We can define the branch cut appropriately such that the contour for $z$ does not cross the branch cut. %, although we note that the branch cut will change depending on the value of $v$. 
The numerator becomes
\beqq
	\iint \frac1{v\sqrt{z}} e^{\frac{\beta-1}2( \frac{z+\xi^2}{v}+v) + \frac{H^2\beta n_1^2}{2z} (\frac{z+\xi^2}{v}+ v + 2\xi) } \dd z \dd v .
\eeqq
The $z$-integral can be evaluated using \eqref{eq:icuJB}. 
% and the double integral becomes
%\beq
%	\frac{2\ii \sqrt{2\pi}}{\sqrt{\beta-1}} \int \frac1{\sqrt{v}} e^{\frac{\beta-1}2( \frac{\xi^2}{v}+v) + \frac{H^2\beta n_1^2}{2v}  } 	\cosh \left( \frac{\sqrt{(\beta-1)\beta} H  |n_1| (v+ \xi)}{v} \right) \dd v .
%\eeq
Writing the resulting $\cosh$ term as the sum of two exponentials, we can %Writing $\cosh$ by the sum of two exponentials, the $v$-integral becomes the sum of two integrals which we can evaluate again using \eqref{eq:icuJB}.   
evaluate the $w$-integral again using \eqref{eq:icuJB}. 
The above double integral becomes
%We find that the double integral is equal to 
\beqq \begin{split}
	- \frac{2\pi}{\beta-1} \bigg[
	& e^{\sqrt{(\beta-1)\beta} H |n_1|} \cosh\left( \sqrt{(\beta-1)\beta} H |n_1| + (\beta-1)\xi \right)  \\
	&\quad + e^{-\sqrt{(\beta-1)\beta} H |n_1|} \cosh\left( \sqrt{(\beta-1)\beta} H |n_1| - (\beta-1)\xi \right) 
	\bigg].
\end{split} \eeqq
Writing $\cosh$ as the sum of two exponentials again, the expression above becomes a linear combination of $e^{(\beta-1)\xi}$ and $e^{-(\beta-1)\xi}$. 
The denominator in \eqref{eq:wsnithe} is the same as the numerator when $\xi=0$. 
%We thus find that
%\beq
%	\langle e^{\beta \xi \ovl} \rangle 	\simeq 	\frac{\cosh\left(  2\sqrt{(\beta-1)\beta} H |n_1|\right)  e^{(\beta-1)\xi} + e^{- (\beta-1)\xi} }	{\cosh\left( 2 \sqrt{(\beta-1)\beta} H |n_1|\right) +1} 
%\eeq
Thus, using $\beta=1/T$ and re-scaling $\xi$, we obtain the following

\begin{result}
For $h=HN^{-1/2}$ and $0<T<1$, 
\beq
	\langle e^{ \xi \frac{\ovl}{1-T} } \rangle 
	\simeq
	\frac{ \cosh\left(   \frac{2\sqrt{1-T}H |n_1| }{T} \right) e^{\xi} + e^{-\xi} }
	{\cosh\left( \frac{2\sqrt{1-T}H |n_1| }{T} \right) +1} 
\eeq
as $N\to \infty$ for asymptotically almost every disorder sample. 
\end{result}

Recognizing that the right-hand side is the moment generating function of a shifted Bernoulli random variable, we obtain the following result. 

\begin{result}\label{result:ovl12mainr}
For $h=HN^{-1/2}$ and $0<T<1$,  %\Cbr{[Change the notation of Bernoulli random variables. We use $\Bern$ everywhere]}
\beq \label{eq:ovl1/2main}
	\frac{\ovl}{1-T} \simeqidsgibbs \Bern (\theta) , 
	\qquad 
	\theta:=\frac{ \cosh\left(  \frac{2\sqrt{1-T}H |n_1| }{T} \right) }{\cosh\left(  \frac{2\sqrt{1-T}H |n_1| }{T}  \right) +1} 
\eeq
as $N\to \infty$ for asymptotically almost every disorder sample, where the thermal random variable $\Bern(c)$ is the (shifted) Bernoulli distribution taking values $1$ and $-1$ with probability $c$ and $1-c$, respectively. 
%Note that when $\ef=H N^{-1/2}$, the leading term itself does not converge to a deterministic constant. Instead, it converges to a random variable. 
\end{result}

\subsubsection{Limits as  $H\to\infty$}

If we formally take the limit as $H\to \infty$ of the result \eqref{eq:ovl1/2main}, then  
\beq \label{eq:ovlHN1/2inf}
	\ovl \simeqidsgibbs 1-T.
\eeq
This is the same as the leading term of \eqref{eq:hhhhw} which is obtained by taking $h=HN^{-1/6}$ 
%\Cr{P:Do you mean $N^{-1/6}$ ?]} 
and letting $N\to \infty$ first and then taking $H\to 0$.

%On the other hand, if we take the limit as $H\to 0$ of the result \eqref{eq:ovl1/2main}, then $\ovl \simeqidsgibbs (1-T) \Bern (1/2)$. 
%%which matches the result when $h=0$ (see section \ref{sec:2olh=0}).

%%%%%%%%%%%%%%%%%%%
\subsection{No external field: $h=0$}\label{sec:2olh=0}

For $0<T<1$, the analysis in Subsection \ref{sec:ovl12} for $h=HN^{-1/2}$ extends to $H=0$ case as well. 
For $T>1$, the analysis in Subsection \eqref{sec:ovlpos} applies to all $h\ge 0$. We note that, for $h=0$ and $T>1$, $\cp_0=T+T^{-1}$ and $s_2(\cp_0)=\frac1{T^2-1}$. 
We have the following result.

\begin{result}  \label{result:replica0}
For $h=0$, 
\beq \label{eq:ovlh=0}
	\ovl\simeqidsgibbs \begin{cases}
	\frac{T}{\sqrt{N (T^2-1)}} \gib N  \qquad &\text{for $T>1$,}\\
	(1-T) \Bern (1/2) \qquad & \text{for $0<T<1$.} 
	\end{cases}
\eeq
\end{result} 

\section{Geometry of the spin configuration} \label{sec:Geometry}

The results on three types of overlaps tell us how the spin variables are distributed on the sphere. We discuss the geometry of the spin configuration vector $\spin=(\sigma_1, \cdots, \sigma_N)$ from the Gibbs measure in this section. 
Recall that $\vu_1$ is a unit vector which is parallel to the eigenvector corresponding to the largest eigenvalue of the disorder matrix. 
In this section, we choose $\vu_1$, among two opposite directions, as the one satisfying $\vu_1\cdot \efv\ge 0$. 
Recall the notation $n_1= \efv\cdot \vu_1$ and that the external field $\efv$ is a standard Gaussian vector. Note that $n_1=|n_1|$ because of the choice of $\vu_1$. 
The normalized spin vector can be decomposed as 
\beq
	\hsigma:= \frac{\spin}{\sqrt{N}}= a \vu_1 + b \frac{\efv-n_1 \vu_1}{\|\efv-n_1 \vu_1\|} + \mv, \qquad \mv\cdot \vu_1=\mv\cdot \efv=0,
%	\qquad a^2+b^2+c^2=1, 
\eeq
where $a$ and $b$ are components of the normalized spin vector in the $\vu_1$ and $\efv_1-n_1 \vu_1$ directions, respectively. 
The vector $\mv$ is perpendicular to both $\vu_1$ and $\efv$, and it satisfies 
\beq
	\|\mv\|^2=1 - a^2-b^2. 
\eeq
%for scalars $a, b, c$, and a unit vector $\mathbf{v}$ which is on the subspace that is perpendicular to $\vu_1$ and $\efv$. 
Note that $\|\efv-n_1 \vu_1\|^2=\|\efv\|^2-n_1^2 \simeq N + \bhp{N^{1/2}}$ and $n_1=\bhp{1}$. 
Thus, if we ignore subleading terms from each component, the above decomposition becomes
\beq
	\hsigma\simeq a\vu_1 + b \frac{\efv}{\sqrt{N}} + \mv =  a\vu_1 + b \hefv + \mv, \qquad \hefv:= \frac{\efv}{\sqrt{N}}. 
\eeq
The components $a$ and $b$ are related to the overlaps by the formulas
\beq \label{eq:abfrov}
	\OM = (\hsigma\cdot \vu_1)^2 = a^2  , \qquad \mgn = \hsigma\cdot \hefv = \frac{an_1}{\sqrt{N}} + b \frac{\|\efv-n_1 \vu_1\|}{\sqrt{N}}
	\simeq   \frac{an_1}{\sqrt{N}} + b 
\eeq
up to $\bhp{N^{-1}}$ terms. 
Furthermore, $\mv$ satisfies the equation 
\beq \label{eq:cfrov}
	\ovl= \hsigma^{(1)}\cdot \hsigma^{(2)} = a_1a_2+ b_1b_2 + \mv^{(1)} \cdot \mv^{(2)}.
\eeq
%We will see that for all regimes, $a$ and $b$ do not depend on the thermal variable $\sigma$, and hence $a_1=a_2$ and $b_1=b_2$. 

\subsection{The signed overlap with a replica for microscopic field, $h\sim N^{-1/2}$ and $T<1$}

%For $h=0$ and $0<T<1$, we have $a\simeqidsgibbs \sqrt{1-T} \gib B(1/2)$ and $c\simeq \sqrt{T}$ \Cr{[P:maybe say where the previous statement comes from]}. We now describe 
Consider the decomposition for $h= HN^{-1/2}$ and $0<T<1$. 
The overlap with the ground state is given in Result \ref{thm:ground1/3} for $h\sim N^{-1/3}$ and Result \ref{result:grounds0} for $h=0$. 
Since the leading terms of the both results are same, given by $1-T$, the leading term holds also for $h\sim N^{-1/2}$. 
Thus, we find that $a^2 \simeq 1-T$ in this regime, and hence $|a|\simeq \sqrt{1-T}$. 
On the other hand, Result \ref{thm:ext1/2} on $\mgn$ implies that 
\beq \label{eq:b12c}
	\frac{an_1}{\sqrt{N}} + b \simeqidsgibbs h + \frac{|n_1| \sqrt{1-T} \gib B(\alpha)}{\sqrt{N}} + \frac{\sqrt{T} \gib N}{\sqrt{N}}. 
\eeq
Noting $h\sim N^{-1/2}$, we find that $b= \bhp{ N^{-1/2}}$. 
From the formulas of $a$ and $b$, we also find that $\|\mv\|^2=1-a^2-b^2 \simeq T$. Finally, Result \ref{result:ovl12mainr} implies that 
\beq \label{eq:vv12}
	a_1a_2+ b_1b_2 + \mathbf{v}^{(1)} \cdot \mathbf{v}^{(2)} \simeqidsgibbs (1-T) \gib B(\theta).
\eeq
Here, $\theta$ is given in \eqref{eq:ovl1/2main} and $\alpha$ in \eqref{eq:b12c} is given by \eqref{eq:probY}. They satisfy the relation $\theta= \alpha^2 +(1-\alpha)^2$. 
Now, we make the following ansatz on $a$. For $h=0$ and $0<T<1$, the spin configurations are equally likely to be on either of the double cones around $\vu_1$ with the cosine of the angle given by $\sqrt{1-T}$. This means that $a\simeqidsgibbs \sqrt{1-T} \Bern(1/2)$ for $h=0$ and $0<T<1$. 
For $h\sim N^{-1/2}$, %we have $a^2\simeq 1-T$ and hence $|a|\simeq\sqrt{1-T}$. To determine $a$ itself, 
we make the ansatz that 
\beq
	a= \hsigma \cdot \vu_1 \simeqidsgibbs \sqrt{1-T} \Bern(\varphi)
\eeq
for some $\varphi$ which we determine now. 
Note that if $X_1$ and $X_2$ are independent (thermal) random variables distributed as $\gib B(\varphi)$, then their product $X_1X_2$ is $\gib B(\varphi^2 +(1-\varphi)^2)$-distributed. 
Thus, the equations \eqref{eq:b12c} and \eqref{eq:vv12} become
\beqq
	\frac{|n_1|\sqrt{1-T} \Bern(\varphi)}{\sqrt{N}} + b \simeqidsgibbs h + \frac{|n_1| \sqrt{1-T} \gib B(\alpha)}{\sqrt{N}} + \frac{\sqrt{T} \gib N}{\sqrt{N}}
\eeqq
and
\beqq
	(1-T) \Bern(\varphi^2+(1-\varphi)^2) + \bhp{N^{-1}}+ \mathbf{v}^{(1)} \cdot \mathbf{v}^{(2)} \simeqidsgibbs (1-T) \gib B(\theta).
\eeqq
Since $\theta= \alpha^2 +(1-\alpha)^2$, it is reasonable to assume that the solutions are $\varphi=\alpha$,  and
\beqq
	a \simeqidsgibbs \sqrt{1-T} \Bern(\alpha), \qquad  b \simeqidsgibbs h + \frac{\sqrt{T} \gib N}{\sqrt{N}}.
\eeqq
This calculation leads us to the following conjecture on the signed overlap of the spin variable with a replica. 

\begin{conjecture} \label{conj:signedoverlap}
For a given disorder sample, let $\vu_1$ be the unit vector corresponding to the ground state such that $\vu_1\cdot \mathbf{g}\ge 0$. 
Then, for $h=H N^{-1/2}$ and $0<T<1$, the signed overlap with the ground state satisfies % \Cr{should it be $1/\sqrt{N}$ ?}
\beq \label{eq:signedoverlap}
	\frac{\spin\cdot \vu_1}{\sqrt{N}} \simeqidsgibbs \sqrt{1-T}  \Bern(\alpha), \qquad 	\alpha= \frac{e^{ \frac{ H |n_1| \sqrt{1-T} }{T} }}{e^{ \frac{ H |n_1|\sqrt{1-T} }{T} }+e^{-\frac{ H |n_1| \sqrt{1-T} }{T}}} ,
\eeq
as $N\to \infty$ for asymptotically almost every disorder sample. %, where $\alpha$ is given in \eqref{eq:ubnpa}. 
%\Cbr{[The equation \eqref{eq:signedoverlap} should have $\sqrt{1-T}$, not $1-T$, and I changed. In the paragraph above, we discuss that we model $\spin\cdot \vu_1$ as $\sqrt{1-T}$ times Bernoulli.]}
\end{conjecture}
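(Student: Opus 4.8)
The plan is to turn the decomposition heuristic above into a direct computation of the thermal moment generating function of the signed overlap $\frac{\spin\cdot\vu_1}{\sqrt N}$, following Subsection~\ref{sec:mgnlowtmn12} almost verbatim. The key observation is that inserting the factor $e^{\xi \frac{\spin\cdot\vu_1}{\sqrt N}}$ into the Gibbs average is the same as replacing the external field $h\efv$ by $h\efv+\frac{T\xi}{\sqrt N}\vu_1$; in the eigenbasis of $M$ this changes the single coefficient $hn_1$ to $hn_1+\frac{T\xi}{\sqrt N}$ and leaves $hn_i$, $i\ge 2$, untouched. Hence, by the same Laplace transform computation as in the proof of Lemma~\ref{lem:contour} (Appendix~\ref{sec:integraloverlapproof}),
\beq \label{eq:signedmgf}
\begin{aligned}
	\langle e^{\xi \frac{\spin\cdot\vu_1}{\sqrt N}}\rangle &= \frac{\int e^{\frac N2 \Phi(z;\xi)}\,\dd z}{\int e^{\frac N2\G(z)}\,\dd z}, \\
	\Phi(z;\xi) &= \beta z - \frac1N\sum_{i=1}^N\log(z-\eg_i) + \frac\beta N\left[\frac{(hn_1+\frac{T\xi}{\sqrt N})^2}{z-\eg_1} + h^2\sum_{i=2}^N\frac{n_i^2}{z-\eg_i}\right],
\end{aligned}
\eeq
with the contours being vertical lines to the right of all singularities. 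Since $\big|\frac{\spin\cdot\vu_1}{\sqrt N}\big|\le 1$ deterministically, it is enough to control \eqref{eq:signedmgf} for $\xi$ in a fixed real neighborhood of the origin and to recognize the limit as a moment generating function.

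The second step is to run the steepest descent analysis of Subsection~\ref{sec:mgnlowtmn12} with $h=HN^{-1/2}$ and with $\vu_1$ oriented so that $n_1=\vu_1\cdot\efv=|n_1|\ge 0$. The function $\Phi(z;\xi)$ is exactly $\G$ with the $i=1$ field coefficient $Hn_1$ replaced by $Hn_1+T\xi$ and nothing else changed, so the computations of that subsection apply with the substitution $(H+\xi)n_1\mapsto Hn_1+T\xi$ in the $i=1$ contributions and with the $i\ge 2$ field contributions dropped. Concretely: the critical point of $\Phi(\cdot;\xi)$ is $\eg_1+\so_\pm N^{-1}$, where $\so_\pm>0$ solves $\beta-1-\so_\pm^{-1}-\beta(Hn_1+T\xi)^2\so_\pm^{-2}+\bhp{N^{-1/3}}=0$; the prefactor $\frac N2(\Phi(\eg_1+\so_\pm N^{-1};\xi)-\G(\eg_1+\so N^{-1}))$, with $\so$ as in \eqref{eq:sohca}, simplifies via the two critical-point equations exactly as in the passage leading to \eqref{eq:ovl12NGm-G} to $\frac12\big(-\log(\so_\pm/\so)+2(\beta-1)(\so_\pm-\so)\big)+\bhp{N^{-2/3}}$ --- crucially with \emph{no} residual $\xi^2$ term, because the one-dimensional field perturbation contributes nothing from the eigenvalues $\eg_i$, $i\ge 2$ (this is the analog of the ``$+1$'' term in the $\mgn$ computation being absent); and the two contour integrals are evaluated, as in \eqref{eq:ext1/2contourcalc}--\eqref{eq:int_const}, using the modified Bessel identity \eqref{eq:ntif}, which produces $\sqrt{\so_\pm/\so}\,e^{-(\beta-1)(\so_\pm-\so)}$ times a ratio of $\cosh$'s. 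The $\so_\pm$-dependent factors cancel exactly, leaving
\beq \label{eq:signedmgflimit}
\begin{aligned}
	\langle e^{\xi \frac{\spin\cdot\vu_1}{\sqrt N}}\rangle &\simeq \frac{\cosh\big((H|n_1|+T\xi)\sqrt{\beta(\beta-1)}\big)}{\cosh\big(H|n_1|\sqrt{\beta(\beta-1)}\big)} \\
	&= \frac{\cosh\big(\tfrac{H|n_1|\sqrt{1-T}}{T}+\xi\sqrt{1-T}\big)}{\cosh\big(\tfrac{H|n_1|\sqrt{1-T}}{T}\big)},
\end{aligned}
\eeq
using $\sqrt{\beta(\beta-1)}=\tfrac{\sqrt{1-T}}{T}$. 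The right-hand side of \eqref{eq:signedmgflimit} is the moment generating function of $\sqrt{1-T}\,\Bern(\alpha)$ with $\alpha$ as in \eqref{eq:signedoverlap}, and since $\frac{\spin\cdot\vu_1}{\sqrt N}$ is bounded this yields convergence in thermal distribution, proving the conjecture. (Here the orientation convention is essential: with $\vu_1$ replaced by $-\vu_1$ the numerator $\cosh$ becomes $\cosh\big((T\xi-H|n_1|)\sqrt{\beta(\beta-1)}\big)$, i.e. $\Bern(\alpha)$ is replaced by $\Bern(1-\alpha)$, consistent with the spin configuration favoring the cone nearer to $\efv$.)

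Two consistency checks confirm the picture. Differentiating \eqref{eq:signedmgflimit} at $\xi=0$ gives the thermal mean $\langle \frac{\spin\cdot\vu_1}{\sqrt N}\rangle\simeq\sqrt{1-T}\tanh\big(\tfrac{H|n_1|\sqrt{1-T}}{T}\big)$, which, combined with $\OM=(\frac{\spin\cdot\vu_1}{\sqrt N})^2=1-T+\bhp{N^{-1/3}}$ from Results~\ref{result:grounds0}--\ref{thm:ground1/3} and the leading-order identity $\mgn=\tfrac{n_1}{\sqrt N}\cdot\frac{\spin\cdot\vu_1}{\sqrt N}+(\text{transverse part})$, reproduces the Bernoulli part of Result~\ref{thm:ext1/2}; and applying \eqref{eq:signedmgflimit} to two independent replicas reproduces $\ovl/(1-T)\simeqidsgibbs\Bern(\theta)$ with $\theta=\alpha^2+(1-\alpha)^2$ of Result~\ref{result:ovl12mainr}, because a product of two independent $\Bern(\alpha)$'s is $\Bern(\alpha^2+(1-\alpha)^2)$. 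The main obstacle is the one flagged throughout the paper for the microscopic-field regime: the critical point sits at distance $\bhp{N^{-1}}$ from $\eg_1$, the Taylor expansion of $N\Phi$ about it does not truncate (every term is $\bhp{1}$), and the local integral is a modified Bessel function rather than a Gaussian, so making the steepest descent rigorous requires a careful deformation of the vertical contour together with rigidity-type control of the tails of the $z$-integral and of the errors in the critical-point equations --- precisely the type of analysis carried out for the companion $\mgn$ statement in \cite{landon2020fluctuations, CollinsWoodfin}. One also uses $n_1=\vu_1\cdot\efv\ne 0$, which holds almost surely.
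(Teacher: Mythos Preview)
Your approach is genuinely different from the paper's, and in fact goes further. The paper does \emph{not} prove this statement: it is labeled a \textbf{Conjecture}, and the surrounding text in Section~\ref{sec:Geometry} only offers a consistency argument --- assuming the ansatz $a=\frac{\spin\cdot\vu_1}{\sqrt N}\simeqidsgibbs\sqrt{1-T}\,\Bern(\varphi)$, the parameter $\varphi$ is pinned down to $\alpha$ by matching against Result~\ref{thm:ext1/2} (for $\mgn$) and Result~\ref{result:ovl12mainr} (for $\ovl$), using $\theta=\alpha^2+(1-\alpha)^2$. No direct computation of the signed overlap is attempted.

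Your proposal instead derives a contour integral for $\langle e^{\xi\frac{\spin\cdot\vu_1}{\sqrt N}}\rangle$ by observing that the perturbation acts only on the single mode $i=1$ (replacing $Hn_1$ by $Hn_1+T\xi$), and then reruns the Bessel-function analysis of Subsection~\ref{sec:mgnlowtmn12}. The crucial algebraic point --- that the prefactor $N(\Phi(\gamma_\pm)-\G(\gamma))$ reduces to $-\log(p_\pm/p)+2(\beta-1)(p_\pm-p)$ with \emph{no} residual $\xi$-dependent term --- is correct: the computation is identical to the passage from \eqref{eq:ovl12NGm-G} to the line below it, with $K_p=H^2\beta n_1^2$ and $K_\pm=\beta(Hn_1+T\xi)^2$ playing the roles of the $i=1$ field constants, and the ``$+1$'' contribution from $i\ge2$ in $A_3$ is indeed absent. (A minor slip: the error in that prefactor should be $\bhp{N^{-1/3}}$, matching \eqref{eq:ovl12NGm-G}, not $\bhp{N^{-2/3}}$; this is harmless.) The integral ratio then yields the pure cosh quotient, giving the $\Bern(\alpha)$ moment generating function directly.

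At the heuristic level of the paper's other ``Results'', your argument upgrades the conjecture to a result; the paper's consistency reasoning becomes, as you note, a check rather than a derivation. What your approach buys is directness and the elimination of the ansatz; what the paper's route buys is that it requires no new contour-integral computation, only recombination of already-established overlaps. The obstacle you flag --- rigorous control of the non-Gaussian local integral and contour deformation in the microscopic regime --- is exactly the same one the paper leaves open for Result~\ref{thm:ext1/2} and is addressed in \cite{CollinsWoodfin}.
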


The above conjecture implies that for $h=H N^{-1/2}$ the spin configuration vector concentrates on the intersection of the sphere and the double cone around $\vu_1$ 
where the cosine of the angle is $\sqrt{1-T}$, just like the $h=0$ case. However, while for $H=0$ the spin vector is equally likely to be on either of the cones, for $H>0$ the spin prefers the cone that is closer to $\efv$ than the other cone. As $H\to \infty$, the polarization parameter $\alpha\to 1$ and hence for $h\gg N^{-1/2}$, the spin vector is concentrated on one of the cones. 

%The above calculations also imply that $\mathbf{v}^{(1)} \cdot \mathbf{v}^{(2)} \simeq o(1)$. 
%Furthermore, since $\|\mv\|\simeq \sqrt{T}$, we find that the overlap of the unit transversal vector $\hmv=\frac{\hmv}{\|\hmv\|}$ with its replica satisfies $\hmv^{(1)} \cdot \hmv^{(2)} \simeq o(1)$.
%This result has the interpretation that the spin vectors are more or less uniformly distributed on the cone. 

% interpretation that two vectors $\mv^{(1)}$ and $\mv^{(2)}$ are more or less uniformly distributed on the  $\hsigma- a\vu_1$ is uniformly distributed  overlap with 

\subsection{Spin decompositions in various regimes} 

The results of the overlaps give us information about the decomposition of the spin for other regimes of $h$ as well. 
From the first equation of \eqref{eq:abfrov}, we find $a^2$, and hence $|a|$. 
The discussion of the previous subsection implies that for $h\gg N^{-1/2}$, the spin vector concentrates on one of the cones. 
Thus, %once we find $a^2$, 
we expect that $a=|a|$ for such $h$. % is simple the square root. 
Using this formula of $a$, we then obtain $b$ from the second equation of \eqref{eq:abfrov}, from which we also find $\|\mv\|^2=1-a^2-b^2$. 
Finally, the equation \eqref{eq:cfrov} implies $\mv^{(1)}\cdot \mv^{(2)}$, and hence, the overlap $\hmv^{(1)} \cdot \hmv^{(2)}$ of the unit transversal vector $\hmv= \frac{\mv}{\|\mv\|}$ with its replica.  
We summarize the findings  in Table \ref{table:geometrysummary}. The result for the last row follows from the last subsection. 

\begin{table}
\renewcommand{\arraystretch}{1.9}
\centering
\begin{tabular}{l|l|l|l|l}
Case & $a=\hsigma \cdot \vu_1$ & $b\simeq \hsigma \cdot \hefv$& $\|\mv\|$ & $\hmv^{(1)} \cdot \hmv^{(2)}$    \\
\hline
$h\to\infty$  & $0$ & $1$ & 0 & 0 \\
$h=O(1)$  & %$ \frac1{\sqrt{N}} \left \lvert \frac{\ef n_1}{\cp_0 - 2}+  \frac{\sqrt{T} \gib N(0,1)}{\sqrt{\cp_0-2}} \right \rvert$ 
$\frac{\sqrt{\OMz}}{\sqrt{N}}$ 
&  $hs_1 (\cp_0) $ & $\sqrt{1-h^2s_1(\cp_0)^2}$ & $\frac{h^2 s_1(\cp_0)^4}{(1-s_1(\cp_0)^2)(1-h^2 s_1(\cp_0)^2)}$ \\
$h\to 0$, $hN^{\frac16}\to\infty$ & $\frac{4(1-T)^2 |n_1|}{h^3\sqrt{N} }$ & $h$ &  $1$ %$\sqrt{1-h^2}$ 
 & $1-T$ \\
$h\sim N^{-\frac16}$ & $ \mathcal A(T, hN^{1/6}) $ & $h$ & $\sqrt{1-\mathcal A^2}$ & $ \frac{1-T - \mathcal A^2}{1-\mathcal A^2}$ \\
$hN^{\frac16}\to 0$, $hN^{\frac12}\to\infty$ & $\sqrt{1-T}$ & $h$& $ \sqrt{T} $ &  $o(1)$ \\ 
% $\frac{h^2N^{1/3}}{T} \sum_{i=2}^N \frac{n_i^2}{(a_1-a_i)^2} $  \\
$h\sim N^{-\frac12}$ (and $h=0$) & $ \sqrt{1-T}  \Bern(\alpha)$ & $h+ \frac{\sqrt{T} \gib N}{\sqrt{N}}$ & $\sqrt{T}$ & $ o(1)$  \\
\end{tabular}
\caption{This table summarized the findings of the decomposition of the spin variable $\hsigma \simeq a\vu_1 + b \hefv+ \mv$ in different regimes for $0<T<1$. We indicate the leading order terms, except that we have $o(1)$ at two places. The $o(1)$ term in the fifth row is complicated to state and the $o(1)$ term in the last row is not determined from our analysis. The unit transversal vector is $\hmv=\frac{\mv}{\|\mv\|}$.}
\label{table:geometrysummary}
\end{table}

The result for the regime $h\sim N^{-1/6}$ (fourth row) follows from Results \ref{thm:ground1/6}, \ref{thm:ext1/6}, and \ref{result:replica16}. 
The term $\mathcal A=\mathcal A(T, hN^{1/6})$ is given by the leading term in Result \ref{thm:ground1/6}, 
\beq \label{eq:mathcaA161}
	\mathcal A= \sqrt{1- T-  h^2N^{1/3}  \sum_{i = 2}^N \frac{n_i^2}{(\stild + \egres_1 - \egres_i)^2} } = \frac{hN^{1/6}|n_1|}{\stild}, 
\eeq
where $\stild>0$ is the number that makes the two formulas of $\mathcal A$ equal. 
For every disorder sample, $\mathcal A$ is a decreasing function of $H=hN^{1/6}$, changing from $\sqrt{1-T}$ for $H=0$ to $0$ as $H\to \infty$. 

The result for the regime $h=O(1)$ (second row) follows from Result \ref{thm:groundh>0}, \ref{thm:exth>0}, and \ref{result:replica1}. 
The variable $\cp_0=\cp_0(T, h)>2$ is the solution of the equation \eqref{eq:apprxcpeqhp}. 
It satisfies $\cp_0\simeq  h + \frac{T}{2}$ as $h\to \infty$ and $\cp_0\simeq 2 + \frac{h^4}{4(1-T)^2}$ as $h\to 0$: See Lemma \ref{lem:cp0behro}.
The function $s_1(z)$ is the Stieltjes transform of the semicircle law. It satisfies $s_1(z) = z^{-1} + O(z^{-3})$ as $z\to \infty$ and $s_1(z)\simeq 1-\sqrt{z-2}$ as $z\to 2$: see \eqref{eq:stjaspt}. 
See Sub-subsection \ref{sec:mgnhplt} for properties of $\mgn^0= hs_1(\cp_0)$. 
The term $\frac{\sqrt{\OMz}}{\sqrt{N}}$ is from Result \ref{thm:groundh>0} and is given by 
\beq
	\frac{\sqrt{\OMz}}{\sqrt{N}} =  \frac1{\sqrt{N}}\left \lvert \frac{\ef |n_1|}{\cp_0 - 2}+  \frac{\sqrt{T} \gib N}{\sqrt{\cp_0-2}} \right \rvert. 
\eeq
For the last column, Result \ref{result:replica1} and the formula $b\simeq hs_1(\cp_0)$ imply that $\mv^{(1)}\cdot \mv^{(2)}\simeq h^2 s_2(\cp_0)-h^2s_1(\cp_0)^2$. We use the identity $s_2(z)= s_1(z)^2/(1-s_1(z)^2)$ for $z>0$ to simplify the formula.

The third row follows either from the fourth row or from the second row. Starting from the fourth row, we use \eqref{eq:omh1/6Hlarge}, which shows that
\beq \label{eq:mathAh16lar}
	\mathcal A^2 \simeq \frac{16(1-T)^4 n_1^2}{h^6N} 
\eeq
as $hN^{1/6} \to \infty$. We can also see this formula from \eqref{eq:mathcaA161} because $\stild \simeq \frac{h^4N^{2/3}}{4(1-T)^2}$  (see \eqref{eq:sqrs0whenehf}). Note that $\mathcal A=o(1)$ in this regime. 
On the other hand, if we start from the second row, we use \eqref{eq:sqrom_h1/6_h0} to find the same formula for $a$. Other columns can be found from $s_1(\cp_0)\simeq 1 - \frac{h^2}{2(1-T)}$ as $h\to 0$. 
Note that the two components $a$ and $b$ are comparable in size for $h\sim N^{-1/8}$.

%in this regime $b \simeq h$, the two components $a$ and $b$ are comparable in size.

The quantity $a$ in the fifth row follows either from the fourth row or from the last row. The formula \eqref{eq:heq16limHla} shows that 
$\mathcal A^2\simeq 1-T$ as $hN^{1/6}\to 0$. We also see this formula from \eqref{eq:mathcaA161} by dropping the $o(1)$ term. If we start from the last row, the polarization parameter $\alpha$  satisfies $\alpha\to 1$ as $hN^{1/2}\to \infty$, and hence $a\simeq \sqrt{1-T}$, giving the same formula for $a$. The other columns follow from this result. 
One can show using Result \ref{thm:ground1/3} and \eqref{eq:13tempo} that the subleading term in $a$ (not shown in Table \ref{table:geometrysummary}) is comparable to the leading term of $b$, which is $h$, when $h\sim N^{-1/3}$.

\subsection{Summary}\label{sec:GeometrySummary}

Three quantities contain thermal random variables: $a$ for  the regimes $h=O(1)$ and $h\sim N^{-1/2}$, and $b$ for the regime $h\sim N^{-1/2}$. 
Among those, $a$ for  the regime $h\sim N^{-1/2}$ is $\bhp{1}$ but the other two quantities are of smaller order $\bhp{N^{-1/2}}$.

%Only one term $a$ for $h\sim N^{-1/2}$ is an $O(1)$ quantity that depends on a thermal random variable. 

The table shows that $a=O(1)$ for $h\le O(N^{-1/6})$ and $b=O(1)$ for $h\ge O(1)$. 
%In particular, in the regime $N^{-1/6}\ll h\ll O(1)$, both quantities are $o(1)$. 
As $h$ increases, the $\vu_1$ component of a typical spin vector decreases while the $\hefv$ component increases. 
The above result shows that the crossover occurs in the regime $N^{-1/6}\ll h\ll O(1)$ in which both components are $o(1)$. 
%In this regime, the rest eigenvectors have strong influence of the spin vector distribution. 

%This result shows that a typical spin vector has a large $\vu_1$ component for $h\le O(N^{-1/6})$ but this small $\vu_1$ and $\hefv$ components. 
 
The last column of the table is the overlap of the unit  transversal vector $\hmv$ with its replica. 
This overlap is $o(1)$ for $h\ll N^{-1/6}$. 
If the error were $O(N^{-1/2})$, it would give a strong indication that the thermal distribution of $\hmv$ is uniform on the transverse space (i.e. the set of unit vectors that are perpendicular to $\vu_1$ and $\efv$). 
The above result does not show the error, but we expect that the distribution on the transverse space is close to being uniform. 
%, which gives a strong indication that the thermal distribution of $\hmv$ is uniform on the transverse space (i.e. the set of unit vectors that are perpendicular to $\vu_1$ and $\efv$).  
On the other hand, for $h\ge O(N^{-1/6})$, the overlap of the unit transversal vector is non-zero and $\bhp{1}$. 
This implies that $\hmv$ is not uniformly distributed on the transverse space. 
% showing the influence of the all and also indicates that the spin distribution depends not only on the ground state and the external field, but also on the other eigenvalues and eigenvectors. 

Overall, for $0<T<1$, as we increase the external field, we expect the following geometry of the spin vector that is randomly chosen using the Gibbs (thermal) measure for a quenched disorder, i.e. for asymptotically almost every disorder sample. %,  for $0<T<1$. The statements hold for asymptotically almost every disorder sample. 
%We use the terminology the transverse space to denote the set of unit vectors that are perpendicular to both $\vu_1$ and $\efv$, i.e. the set of $\hat \mv$. 
%The distribution of the spin variable can be decomposed to the one in the $\vu_1$ direction, one in $\hefv$ direction, and one on the transverse space. 
\begin{itemize}
\item For $h\ll N^{-1/6}$, the spin vector is on a double cone around $\vu_1$ (possibly preferring one cone to the other), and the thermal  distribution on the transverse space is close to being uniform.
\item For $h\sim N^{-1/6}$, the spin vector is polarized to a single cone around $\vu_1$, but the cone itself depends non-trivially on the disorder sample. The thermal distribution on the transverse space is not uniform and %also non-trivial and 
depends on the disorder sample.
\item For $N^{-1/6}\ll h\ll O(1)$, the spin vector entirely lies on the transverse space with only $o(1)$ components on the ground state and external field directions. Although the thermal distribution is not uniform, it does not depend on the disorder sample. 
\item For $h=O(1)$, the spin vector is on a cone around $\efv$ and the thermal distribution on the transverse space is not uniform. 
%also non-trivial. 
The cone and the distribution on the transverse space do not depend on the disorder sample. 
\item For $h\to \infty$, the spin vector is parallel to $\efv$. 
\end{itemize}

The result of this paper does not describe the distribution of $\hmv$ on the transverse space in detail. This can be achieved by studying the overlaps $\sigma\cdot \vu_i$ with other eigenvectors. This analysis can be done using the method of this paper and we leave this work as a future project. 

\medskip

The items in the table can be written in an uniform formula across all regimes as the following decomposition formula of the spin configuration vector: 
\beq
	\hsigma  \simeqidsgibbs\mathcal A \gib B(\alpha) \vu_1 + hs_1(\cp_0) \hefv + \sqrt{1- \mathcal A^2- h^2s_1(\cp_0)^2} \hmv
	+ \bhp{N^{-1/2}}
\eeq
where $\hmv$ is a unit vector in the transverse space, i.e. $\hmv\cdot \vu_1=\hmv\cdot \hefv=0$ and $\|\hmv\|=1$. 
All items in the middle three columns of the table other than two items, $a$ for the regime $h=O(1)$ and $b$ for the regime  $h\sim N^{-1/2}$,  are of order greater than $\bhp{N^{-1/2}}$. 
Hence, the above formula is meaningful for all items except those two. 

\begin{appendices} 

%%%%%%%%%%%%%%%%%%%
\section{Proof of Lemma \ref{lem:contour}}
% the contour integral representations for overlaps} 
\label{sec:integraloverlapproof}

%\Cbr{[Switched two appendices since the proof of integral formula comes first in the main text.]}

We prove Lemma \ref{lem:contour}. 
First, 
\beqq
	\langle e^{\beta \eta \mgn}\rangle = \frac1{\pat_N(\ef)} \int_{S_{N - 1}} e^{\beta\frac{ \eta}{N} \efv\cdot \sphv}  e^{\beta \left( \frac 12 \sphv\cdot \sGOE \sphv  +h \efv\cdot \sphv \right)} \dd \omega_N(\sphv) = \frac{\pat_N(\ef + \eta N^{-1})}{\pat_N(\ef)}. 
\eeqq
Secondly, by definition, 
\beq \label{eq:appx2ovm}
	\langle e^{\beta \eta \OM} \rangle = \frac{1}{\pat_N} \int _{S_{N - 1}} e^{\beta \frac{\eta}{N} (\vu_1 \cdot \sigma)^2} e^{\beta \left( \sphv\cdot \sGOE \sphv  +h \efv\cdot \sphv  \right) } \dd \omega_N(\sphv).
\eeq
Since
\beqq
	\frac 12 \sphv\cdot \sGOE \sphv + \frac{\eta}{N} (\vu_1\cdot \sigma)^2=
	\frac12 \sum_{i=1}^N \eg_i (\vu_i\cdot \sigma)^2+ \frac{\eta}{N} (\vu_1\cdot \sigma)^2,
\eeqq
the integral in \eqref{eq:appx2ovm} is the same as that of $\pat_N$ with $\eg_1\mapsto \eg_1+ \frac{2\eta}{N}$. 
Finally, using the eigenvalue-eigenvector decomposition $M=O\Lambda O^T$ and changing variables $\frac1{\sqrt{N}} O^T\sigma =x$ and $\frac1{\sqrt{N}} O^T \tau =y$, we find that
\beq	\label{eq:mgnovlratio}
	\langle e^{\eta \ovl} \rangle = \frac{J(\frac{\beta N}{2}, \frac{\beta N}{2}; \frac{\eta}{N \beta}, \frac{\sqrt{\beta}\ef}{\sqrt{2}}, \frac{\sqrt{\beta}\ef}{\sqrt{2}})}{J(\frac{\beta N}{2}, \frac{\beta N}{2}; 0, \frac{\sqrt{\beta}\ef}{\sqrt{2}}, \frac{\sqrt{\beta}\ef}{\sqrt{2}})}.
\eeq
where we use the notation
\beqq
	J(u,v; a,b,c) = (uv)^{\frac{N}{2} - 1}\displaystyle{\int \int}  e^{ 2a\sqrt{uv}\sum\limits_{i = 1}^N x_i y_i  +u \sum\limits_{i=1}^N \eg_i x_i^2 + 2b\sqrt{u} \sum\limits_{i=1}^N n_i x_i  + v \sum\limits_{i=1}^N \eg_i y_i^2 + 2c\sqrt{v} \sum\limits_{i=1}^N n_i y_i } \dd \Omega^{\otimes 2}_{N-1}(x, y) . 
\eeqq
We evaluate the Laplace transform of $J(u,v,a,b,c)$. Changing of variable as $u = r^2$, $v = s^2$ and $rx \mapsto x$, $sy \mapsto y$, 
the Laplace transform 
\beqq
	Q(z,w)  = \int_0^\infty  \int_0^\infty e^{-zu - wv} J(u,v) \dd u \dd v 
\eeqq
becomes a 2-dimensional Gaussian integral which evaluates to   
\beqq
	Q(z,w)  = 4 \prod\limits_{i = 1}^N \frac{\pi}{\sqrt{(z - \eg_i)(w - \eg_i) - a^2}} 
	e^{\frac{n_i^2((w - \eg_i)b^2 + 2abc + (z - \eg_i) c^2}{(z - \eg_i)(w - \eg_i) - a^2}}.
\eeqq
The inverse Laplace transform gives a double integral formula for $J(u,v)$.

%%%%%%%%%%%%%%%%%%%%%%%%%%%%
\section{A perturbation argument} \label{sec:pert}

The following perturbation lemma is used to obtain \eqref{eq:Gcp_h0}, \eqref{eq:Gcph>0} and \eqref{eq:GcphNhalf}.

\begin{lemma} \label{lemma:Gcp} 
Let $I$ be a closed interval of $\R$. 
Let  $G(z;N)$ be a sequence of random  $C^{4}$-functions for $z \in I$. 
Let $\epsilon = \epsilon(N) := N^{-\delta}$ for some $\delta > 0$ and assume that 
\beq
	G(z; N)= G_0(z; N) + G_1(z; N) \epsilon + G_2(z; N) \epsilon^2 + \bhp{\epsilon^3}
\eeq
and
\beq
	G'(z; N) =  G'_0(z; N) + G'_1(z; N) \epsilon + G'_2(z; N) \epsilon^2 + \bhp{\epsilon^3}
\eeq
for random $C^{4}$-functions $G_k(z; N)$.
Suppose that 
\beq
	G_k^{(\ell)}(z;N) = \bhp{1}
\eeq
uniformly for $z\in I$ for all $k=0,1,2$, $0\le \ell\le 4$ and also assume that there is a $\cp_0 \in I$ satisfying
\beq \label{eq:gamma0zand1}
	G_0'(\cp_0;N)=0, \qquad \lvert G_0''(\cp_0;N) \rvert \geq C > 0
\eeq
for a positive constant $C$.
Then there is a critical point $\cp=\cp(N)$ of $G(z;N)$ admitting the asymptotic expansion
\beq  \label{eq:cpexp}
	\cp= \gamma_0+ \gamma_1 \epsilon +  \cp_2 \epsilon^2 + \bhp{\epsilon^3}
\eeq
where 
\beq  \label{eq:cp1cp2}
	\cp_1 = -\frac{G_1'(\cp_0;N)}{G_0''(\cp_0;N)}, \qquad \cp_2 = -\frac{G_2'(\cp_0;N) + G_1''(\cp_0;N)\cp_1 + \frac12 G_0'''(\cp_0;N)\cp_1^2}{G_0''(\cp_0;N)}.
\eeq
Furthermore, 
\beq 
\label{eq:Gexpan}
\begin{split}
	G(\cp;N) = G_0(\cp_0;N) + G_1(\cp_0;N) \epsilon 
	+ \left( \frac12 G_1'(\cp_0;N)\cp_1 + G_2(\cp_0;N) \right) \epsilon^2 + \bhp{\epsilon^3}.
\end{split} 
\eeq
\end{lemma}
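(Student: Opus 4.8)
\textbf{Proof proposal for Lemma \ref{lemma:Gcp}.}

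The plan is to apply the standard implicit-function/Taylor-expansion machinery to the critical-point equation $G'(z;N)=0$, treating $\epsilon=N^{-\delta}$ as a small parameter and tracking the $\bhp{\cdot}$ error bounds carefully since everything is random but controlled uniformly on $I$. First I would set up an ansatz: seek $\cp$ in the form \eqref{eq:cpexp}, namely $\cp = \cp_0 + \cp_1\epsilon + \cp_2\epsilon^2 + \rho$ with $\rho=\bhp{\epsilon^3}$ to be justified a posteriori, and substitute into $G'(\cp;N)=0$. Using the hypothesis $G'(z;N)=G_0'(z;N)+G_1'(z;N)\epsilon+G_2'(z;N)\epsilon^2+\bhp{\epsilon^3}$ together with the Taylor expansion of each $G_k'$ around $\cp_0$ (valid because $G_k^{(\ell)}=\bhp{1}$ uniformly on $I$ for $\ell\le 4$, and $\cp$ stays in $I$ for large $N$), one collects terms by powers of $\epsilon$. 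The order-$\epsilon^0$ term is $G_0'(\cp_0;N)=0$ by \eqref{eq:gamma0zand1}. The order-$\epsilon^1$ term is $G_0''(\cp_0;N)\cp_1 + G_1'(\cp_0;N)=0$, which, using the nondegeneracy $|G_0''(\cp_0;N)|\ge C>0$, forces $\cp_1 = -G_1'(\cp_0;N)/G_0''(\cp_0;N)$, matching \eqref{eq:cp1cp2}. The order-$\epsilon^2$ term gives $G_0''(\cp_0;N)\cp_2 + G_1''(\cp_0;N)\cp_1 + \tfrac12 G_0'''(\cp_0;N)\cp_1^2 + G_2'(\cp_0;N)=0$, again solvable for $\cp_2$ by the same nondegeneracy, yielding the stated formula.

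To make this rigorous rather than merely formal, I would first establish \emph{existence} and \emph{localization} of a genuine critical point: since $G_0''(\cp_0;N)$ is bounded away from zero and all relevant derivatives are $\bhp{1}$, the function $G'(z;N)$ is, on a shrinking neighborhood of $\cp_0$ of radius $\sim N^{-\delta/2}$ (say), a small perturbation of the affine function $z\mapsto G_0''(\cp_0;N)(z-\cp_0)$; hence by a sign-change (intermediate value) argument $G'$ has a zero $\cp$ in that neighborhood, so $\cp-\cp_0 = \bhp{\epsilon}$ as a crude first bound. Feeding this crude bound back into the Taylor expansion bootstraps successively to $\cp-\cp_0-\cp_1\epsilon = \bhp{\epsilon^2}$ and then $\cp - \cp_0 - \cp_1\epsilon - \cp_2\epsilon^2 = \bhp{\epsilon^3}$; each bootstrap step uses only that the next-order Taylor remainder is controlled by $\sup_{I}|G^{(4)}|\cdot(\cp-\cp_0)^{\text{(power)}} = \bhp{1}\cdot\bhp{\epsilon^{\text{(power)}}}$ together with division by $G_0''(\cp_0;N)$, whose reciprocal is $\bhp{1}$ by the lower bound $C$. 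This yields \eqref{eq:cpexp} with \eqref{eq:cp1cp2}.

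For the expansion \eqref{eq:Gexpan} of $G(\cp;N)$ itself, I would Taylor-expand $G(\cdot;N)$ around $\cp_0$ and use that $G_0'(\cp_0;N)=0$ kills the would-be order-$\epsilon^1$ and part of the order-$\epsilon^2$ contributions coming from the displacement $\cp-\cp_0$. Explicitly, $G(\cp;N) = G_0(\cp_0;N) + G_0'(\cp_0;N)(\cp-\cp_0) + \tfrac12 G_0''(\cp_0;N)(\cp-\cp_0)^2 + G_1(\cp_0;N)\epsilon + G_1'(\cp_0;N)(\cp-\cp_0)\epsilon + G_2(\cp_0;N)\epsilon^2 + \bhp{\epsilon^3}$; substituting $\cp-\cp_0 = \cp_1\epsilon + \bhp{\epsilon^2}$ and then $\tfrac12 G_0''(\cp_0;N)\cp_1^2\epsilon^2 + G_1'(\cp_0;N)\cp_1\epsilon^2 = \cp_1\epsilon^2\big(\tfrac12 G_0''\cp_1 + G_1'\big) = -\tfrac12 G_1'(\cp_0;N)\cp_1\epsilon^2 + G_1'(\cp_0;N)\cp_1\epsilon^2 = \tfrac12 G_1'(\cp_0;N)\cp_1\epsilon^2$ (where the first equality uses the defining relation $G_0''\cp_1 = -G_1'$), the order-$\epsilon^2$ coefficient collapses to $\tfrac12 G_1'(\cp_0;N)\cp_1 + G_2(\cp_0;N)$, which is \eqref{eq:Gexpan}. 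The main obstacle I anticipate is purely bookkeeping: ensuring that the $\bhp{\cdot}$ notation (which absorbs $N^\e$ factors and holds only asymptotically almost surely) is propagated consistently through the bootstrap, in particular that the neighborhood on which the sign-change argument runs can be chosen uniformly and that no step secretly needs a fifth derivative bound; the $C^4$ hypothesis is exactly enough because we only ever need the second-order Taylor remainder of $G'$ (equivalently the third of $G$) and the zeroth-order remainder of $G_2'$.
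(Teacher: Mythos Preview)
Your proposal is correct and follows essentially the same approach as the paper: a sign-change/intermediate-value argument to locate $\cp$, Taylor expansion of $G'$ around $\cp_0$ to determine $\cp_1,\cp_2$ by matching powers of $\epsilon$, and then Taylor expansion of $G$ at $\cp_0$ combined with the relation $G_0''(\cp_0)\cp_1=-G_1'(\cp_0)$ to obtain \eqref{eq:Gexpan}. The only cosmetic difference is that the paper bypasses your bootstrap and does the sign change in a single step by evaluating $G'$ directly at $\cp_\pm=\cp_0+\cp_1\epsilon+\cp_2\epsilon^2\pm\epsilon^3 N^t$, obtaining $G'(\cp_\pm)=\pm G_0''(\cp_0)\epsilon^3 N^t+\bhp{\epsilon^3}$.
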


\begin{proof}
This lemma is standard when $G(z;N)$ is deterministic. The proof for the random $G(z;N)$ does not change. For simplicity, we suppress the dependence on $N$ in the notations; for example we write $G_0(z)$ instead of $G_0(z;N)$. 
In order to prove \eqref{eq:cpexp}, it is enough to show that for any $0 < t < \delta$, $G'(\cp_+)G'(\cp_-) < 0$ with $\cp_\pm = \cp_0 + \cp_1 \epsilon + \cp_2 \epsilon^2 \pm \epsilon^3N^t$.
From the Taylor expansion, 
\beq \begin{split}
	G'(\cp_\pm) & = G'_0(\cp_0) + (G_0''(\cp)\cp_1 + G_1'(\cp_0)) \epsilon \\
& \quad + \left(G_0''(\cp_0)\cp_2 + G_2'(\cp_0) + G_1''(\cp_0)\cp_1 + \frac12 G_0'''(\cp_0)\cp_1^2 \right)\epsilon^2 \pm G''_0(\cp_0) \epsilon^3 N^t + \bhp{\epsilon^3}.
\end{split} \eeq 
The definitions of $\cp_0, \cp_1$, and $\cp_2$ imply that 
\beq \begin{split}
	G'(\cp_\pm) & =  \pm G''_0(\cp_0) \epsilon^3 N^t + \bhp{\epsilon^3}
\end{split} \eeq 
Thus, $G'(\cp_+)G'(\cp_-) <0$ for all large enough $N$ and  we obtain \eqref{eq:cpexp}. 
The equation \eqref{eq:Gexpan} follows from 
\beq \begin{split}
	G(\cp) &= G_0(\cp) + G_1(\cp) \epsilon + G_2(\cp) \epsilon^2 + \bhp{\epsilon^3} = G_0(\cp_0) + (G_0'(\cp_0)\cp_1 + G_1(\cp_0)) \epsilon \\
	& + \left(G_0'(\cp_0)\cp_2 + \frac12 G_0''(\cp_0)\cp_1^2 + G_1'(\cp_0)\cp_1 + G_2(\cp_0) \right) \epsilon^2 + \bhp{\epsilon^3},
\end{split} \eeq
together with $G'_0(\cp_0) = 0$ and \eqref{eq:cp1cp2}.
\end{proof}

\begin{remark}
Here, we consider the asymptotic expansion of $G(z)$ up to the third order term. One can also consider the case where the expansion is up to the second order, then \eqref{eq:Gexpan} is still valid up to the second order. 
\end{remark}

\end{appendices}

%%%%%%%%%%%%%%%%%%%

%\bibliographystyle{abbrv}
%\bibliography{ssk_ext}

\end{document}